\newcommand{\Sym}{\textup{Sym}}
\DeclareMathOperator{\mult}{mult}
\newcommand{\inj}{\ar@{^{(}->}}
\newcommand{\surj}{\ar@{->>}}
\newcommand{\la}{\lambda}
\renewcommand{\det}{\mathrm{det}}
\def\al{\alpha}
\newcommand{\IC}{\ensuremath{\mathbb{C}}}
\newcommand{\IA}{\ensuremath{\mathbb{A}}}
\newcommand{\IN}{\ensuremath{\mathbb{N}}}
\newcommand{\aS}{\ensuremath{\mathfrak{S}}}
\DeclareMathOperator{\diag}{diag}
\DeclareMathOperator{\sgn}{sgn}
\newcommand{\pl}[3]{a_{#1}({#2[#3]})}
\newcommand{\Ch}{\textup{\textsf{Ch}}}
\newcommand{\Pow}{\textup{\textsf{Pow}}}
\newcommand{\tensor}{\smash{\textstyle\bigotimes}}
\newcommand{\GL}{\mathsf{GL}}
\newcommand{\HWV}{\mathsf{HWV}}
\newcommand{\at}{\,@\;}
\newcommand{\partinto}[1][]{\smash{\mathord{\mathchoice{%
  \xymatrix@=0.4em@1{%
  \ar@{|-}[rr]_-*--{\scriptstyle #1}
  &*{\phantom{\scriptstyle{#1}}}&}
}{
  \xymatrix@=0.25em@1{%
  \ar@{|-}[rr]_-*--{\scriptstyle #1}
  &*{\phantom{\scriptstyle{#1}}}&}
}{
  \xymatrix@=0.2em@1{%
  \ar@{|-}[rr]_-*--{\scriptscriptstyle #1}
  &*{\phantom{\scriptscriptstyle{#1}}}&}
}{}}}}
\newcommand{\partintonosmash}[1][]{\mathord{\mathchoice{%
  \xymatrix@=0.4em@1{%
  \ar@{|-}[rr]_-*--{\scriptstyle #1}
  &*{\phantom{\scriptstyle{#1}}}&}
}{
  \xymatrix@=0.25em@1{%
  \ar@{|-}[rr]_-*--{\scriptstyle #1}
  &*{\phantom{\scriptstyle{#1}}}&}
}{
  \xymatrix@=0.2em@1{%
  \ar@{|-}[rr]_-*--{\scriptscriptstyle #1}
  &*{\phantom{\scriptscriptstyle{#1}}}&}
}{}}}
\numberwithin{equation}{section}
\newtheorem{theorem}[equation]{Theorem}
\newtheorem{corollary}[equation]{Corollary}
\newtheorem{lemma}[equation]{Lemma}
\newtheorem{proposition}[equation]{Proposition}
\theoremstyle{definition}
\newtheorem{remark}[equation]{Remark}
\title{On geometric complexity theory: Multiplicity obstructions are stronger than occurrence obstructions}
\author{Julian D\"orfler\footnote{Saarland University, s8judoer$@$stud.uni-saarland.de}, Christian Ikenmeyer\footnote{Max Planck Institute for Informatics, Simons Institute for the Theory of Computing, cikenmey$@$mpi-inf.mpg.de}, Greta Panova\footnote{University of Southern California, University of Pennsylvania, Simons Institute for the Theory of Computing, gpanova$@$usc.edu}}
\date{\today}
\begin{document}
\sloppy

\maketitle
\begin{abstract}
Geometric Complexity Theory as initiated by Mulmuley and Sohoni in two papers (SIAM J Comput 2001, 2008) aims to separate algebraic complexity classes via representation theoretic multiplicities in coordinate rings of specific group varieties.
The papers also conjecture that the vanishing behavior of these multiplicities would be sufficient to separate complexity classes (so-called \emph{occurrence obstructions}).
The existence of such strong occurrence obstructions has been recently disproven in 2016 in two successive papers, Ikenmeyer-Panova (Adv.\ Math.) and B\"urgisser-Ikenmeyer-Panova (J.\ AMS).

This raises the question whether separating group varieties via representation theoretic multiplicities is stronger than separating them via occurrences.
This paper provides for the first time a setting where separating with multiplicities can be achieved, while the separation with occurrences is provably impossible.
Our setting is surprisingly simple and natural: We study the variety of products of homogeneous linear forms (the so-called Chow variety) and the variety of polynomials of bounded border Waring rank (i.e.\ a higher secant variety of the Veronese variety).

As a side result we prove a slight generalization of Hermite's reciprocity theorem, which proves Foulkes' conjecture for a new infinite family of cases.
\end{abstract}

\section{Introduction}
In two landmark papers \cite{gct1, gct2} Mulmuley and Sohoni suggested the use of representation theoretic multiplicities
to separate group varieties that correspond to complexity classes.
The goal of this approach, which is called \emph{geometric complexity theory}, is to achieve complexity lower bounds that lead to the separation of algebraic complexity classes such as VP and VNP (see \cite{bcs:97} or \cite{sap:16} for the precise definitions, which will not be important in this paper).
At the heart of the approach was the hope that so-called \emph{occurrence obstructions} (see Section~\ref{sec:reprthobs}) would be sufficient to separate VP and VNP.
In \cite{IP:17, BIP:19} it was shown that occurrence obstructions are too weak to provide the necessary separation, at least for the group varieties that were originally proposed by Mulmuley and Sohoni.
But representation theoretic multiplicities might still be able to separate VP and VNP when we look at the finer separation criterion via \emph{multiplicity obstructions} (see also Section~\ref{sec:reprthobs}).
Unfortunately, so far all known separations of group varieties via multiplicity obstructions could also in fact be obtained via occurrence obstructions,
or at least there is no setting in which multiplicity obstructions are provably stronger than occurrence obstructions, see e.g.~\cite{bi:10, bi:13}.
Indeed, little is known about multiplicity obstructions in general, as the required multiplicities are often \#P-hard to compute, see e.g.\ \cite{nara:06, bi:08, bor:09},
which implies that a polynomial time algorithm for their computation can only exist if P=NP.

Scott Aaronson raised the question about the existence of a setting where multiplicity obstructions are provably more powerful than occurrence obstructions.
In this paper we give the first example of such a situation, see Theorem~\ref{thm:main} below.
Our separation result is extremely modest, which indicates that such behaviour is expected in many more cases.

As a side result we prove a slight generalization of Hermite's reciprocity theorem, which proves Foulkes' conjecture (see \eqref{eq:foulkes}) for a new infinite family of cases, see Theorem~\ref{thm:plethineq}.

\section{Representation theoretic obstructions}\label{sec:reprthobs}
In this section we review how to separate group varieties via representation theoretic multiplicities.
The setup is in complete analogy to the geometric complexity theory approach of Mulmuley and Sohoni.
We then list our main result, see Theorem~\ref{thm:main}.

Consider the space $\IA_m^n := \IC[x_1,\ldots,x_m]_n$ of complex homogeneous polynomials of degree $n$ in $m$ variables.
Let $V := \IA_m^1$ be the space of homogeneous degree 1 polynomials.
In this paper we compare two subvarieties of $\IA_m^n$.
The first is the so-called \emph{Chow variety}
\[
\Ch_m^n := \{\ell_1 \cdots \ell_n \mid \ell_i \in V \} \subseteq \IA_m^n,
\]
which is the set of polynomials that can be written as a product of homogeneous linear forms, see e.g.\ \cite[\S8.6]{lan:11}.
The second variety is called a \emph{higher secant variety of the Veronese variety} and can be written as
\[
\Pow_{m,k}^n := \overline{\{\ell_1^n +  \cdots + \ell_k^n \mid \ell_i \in V \}} \subseteq \IA_m^n,
\]
which is the closure of the set of all sums of $k$ powers of homogeneous linear forms.
Note that from a general principle it follows that the Zariski closure equals the Euclidean closure in this case,
see e.g.\ \cite[\S 2.C]{mum:complprojvars} where this is shown for every constructible set.
The polynomials in $\Pow_{m,k}^n$ are exactly those that have \emph{border Waring rank at most $k$}, see e.g.~\cite[\S5.4]{lan:11}.

$\IA_m^n$ is generated as a vector space by the powers $v^n$, $v \in V$, see e.g.~\cite[Ex.~2.6.6.2]{lan:11}.
Given two elements $g_1, g_2 \in \GL_m := \GL(V)$, and given $v \in V$, we clearly have $g_1(g_2 v) = (g_1 g_2) v$.
Thus we say that $V$ admits a $\GL_m$-action.
This natural action of $\GL_m$ on $V$ lifts canonically to $\IA_m^n$ via $g(v^n) := (gv)^n$, $g \in \GL_m$, $v \in V$, and linear continuation.
Both varieties $\Ch_m^n$ and $\Pow_{m,k}^n$ are closed under this action, i.e., for $g \in \GL_m$ and $v \in \Ch_m^n$ we have $gv \in \Ch_m^n$,
and analogously $v \in \Pow_{m,k}^n$ implies $gv \in \Pow_{m,k}^n$.
A variety that is closed under the action of $\GL_m$ is called a $\GL_m$-variety.

Let $\IC[\IA_m^n]$ denote the coordinate ring of $\IA_m^n$, i.e., the polynomial ring in $\dim \IA_m^n = \binom{n+m-1}{n}$ many variables,
where these variables are in 1:1 correspondence to the monomials in $\IA_m^n$.
The action of $\GL_m$ on $\IA_m^n$ lifts to a linear action of $\GL_m$ on $\IC[\IA_m^n]$ via the canonical pullback as follows:
\[
(gf)(h) := f(g^{-1} h), \ g \in \GL_m, \ f \in \IC[\IA_m^n], \ h \in \IA_m^n.
\]
Moreover, the action respects the natural grading of $\IC[\IA_m^n]$, so that each homogeneous degree $d$ part $\IC[\IA_m^n]_d$ is a finite dimensional vector space that is closed under the action of $\GL_m$.

Recall that a finite dimensional vector space $W$ that is closed under a linear action of $\GL_m$ is called a $\GL_m$-\emph{representation}.
This is equivalent to the existence of a group homomorphism $\varrho:\GL_m \to \GL(W)$.
If we choose bases, then we can interpret
$\GL_m \subseteq \IC^{m \times m}$ and $\GL(W) \subseteq \IC^{\dim W \times \dim W}$ and $\varrho$ is described by $(\dim W)^2$ many coordinate functions,
which are functions in $m^2$ many variables.
If these functions are polynomials, then we call $W$ a \emph{polynomial representation}.
Our main representation of interest, $\IC[\IA_m^n]_d$, is a polynomial representation.
A linear subspace of $W$ that is closed under the action of $\GL_m$ is called a subrepresentation.
Subrepresentations of polynomial representations are clearly polynomial representations again.
For every $\GL_m$-representation $W$ we have that $W$ and $0$ are two trivial subrepresentations.
If $W$ has no other subrepresentations, then we call $W$ \emph{irreducible}.
A linear map $\varphi: W_1 \to W_2$ between two $\GL_m$-representations is called \emph{equivariant} if $g \varphi(f) = \varphi(gf)$ for all $f\in W_1$, $g\in\GL_m$.
If there exists an equivariant vector space isomorphism from $W_1$ to $W_2$, then we say that $W_1$ and $W_2$ are \emph{isomorphic} $\GL_m$-representations.
An \emph{$m$-partition} of $D$ is a nonincreasing list of $m$ nonnegative integers that sum up to $D$.
Every irreducible polynomial $\GL_m$-representation has an associated isomorphism type, which is an $m$-partition, see e.g.\ \cite[Ch.~8]{fult:97}.
Two irreducible $\GL_m$-representations are isomorphic iff their isomorphism types coincide.
We denote by $\{\la\}_m$ the irreducible $\GL_m$-representation corresponding to the $m$-partition $\la$.
We write $\{\la\} = \{\la\}_m$ is $m$ is clear from the context.

The group $\GL_m$ is \emph{linearly reductive}, which means that
every $\GL_m$-representation $W$ decomposes into a direct sum of irreducible $\GL_m$-representations, see e.g.\ \cite[AII.5, Satz 4]{Kra:85}.
The number of times an irreducible representation of type $\la$ occurs in the decomposition is called the \emph{multiplicity} of $\la$ in $W$, written $\mult_\la(W)$.
Even though this decomposition is usually not unique, the notation $\mult_\la(W)$ makes sense,
because the multiplicities are independent of the actual decompositions.

The multiplicity $\pl \la d n := \mult_\la(\IC[\IA_m^n]_d)$ is the infamous \emph{plethysm coefficient},
which is the object of study in Foulkes' conjecture and also in Problem 9 in Stanley's famous list of open problems \cite{sta:00}.
If we pad an $m$-partition $\la$ with $m'-m$ many zeros to obtain the $m'$-partitions $\la'=(\la_1,\ldots,\la_m,0,\ldots,0)$,
then $\mult_\la(\IC[\IA_m^n]_d) = \mult_{\la'}(\IC[\IA_{m'}^n]_d)$, see e.g.~\cite[Lem.~4.3.2]{ike:12b}.
For the sake of simplicity we identify $m$-partitions with $m'$-partitions that arise from padding zeros.
This justifies leaving out the parameter $m$ in the notation $\pl \la d n$ by assuming that $m$ is large enough.
Foulkes' conjecture states that
\begin{equation}\label{eq:foulkes}
\text{Conjecture}: \quad  \pl \la n d \leq \pl \la d n \ \text{ for all $d \geq n$}.
\end{equation}
Conjecture \eqref{eq:foulkes} is known to be true (moreover, equality holds: $\pl \la d n = \pl \la n d$) for all 2-partitions $\la$,
which is often called \emph{Hermite reciprocity}~\cite{her:1854}.
We make modest progress on this conjecture by proving it for many families of 3-partitions, see Corollary~\ref{cor:keyinequality}.

Let $Z$ be a $\GL_m$-variety, e.g., $Z=\Ch_m^n$ or $Z=\Pow_{m,k}^n$.
Then the vanishing ideal $I(Z):=\{f \in \IC[\IA_m^n] \mid \forall h \in Z : f(h)=0 \}$ is also closed under the action of $\GL_m$,
which is easy to verify: If $f(h) = 0$ for all $h \in Z$, then also $(gf)(h) = f(g^{-1} h) = 0$, because $g^{-1} h \in Z$.
Since the action respects the grading, each homogeneous degree $d$ part $I(Z)_d$ is a $\GL_m$-representation.
The coordinate ring $\IC[Z]$ is defined as the quotient algebra $\IC[\IA_m^n]/I(Z)$ and each homogeneous part $\IC[Z]_d = \IC[\IA_m^n]_d/I(Z)_d$ is a $\GL_m$-representation.
Equivalently, we can define $\IC[Z]$ as the set of restrictions of functions in $\IC[\IA_m^n]$ to $Z$.

For most sets of parameters we have $\Pow_{m,k}^n \not\subseteq \Ch_m^n$, but there are some exceptions.
Clearly $\Pow_{m,1}^n \subseteq \Ch_m^n$.
Moreover, $\Pow_{1,k}^n = \Ch_1^n$ for all $n\geq 1$, $k \geq 1$; and $\Pow_{m,k}^1 = \Ch_m^1$ for all $m\geq 1$, $k \geq 1$.
It is also easy to see that $\Pow_{2,2}^2 \subseteq \Ch_2^2$, because $\ell_1^2 + \ell_2^2 = (\ell_1 + i \ell_2)(\ell_1 - i \ell_2)$, where $i^2 = -1$.
More generally, $(\ell_1 + \zeta \ell_2)(\ell_1 + \zeta^2 \ell_2)\cdots (\ell_1 + \zeta^n \ell_2) = \ell_1^n +\zeta^{\frac{n(n+1)}{2}}\ell_2^n$ for $\zeta^n=1$,
which implies $\Pow_{m,2}^n \subseteq \Ch_m^n$.
For $m=2$, $k\geq 1$, $n\geq 1$, we have $\Pow_{m,k}^n \subseteq \Ch_m^n$ by the fundamental theorem of algebra.
These are the only exceptions, as
for $n\geq 2$, $m\geq 3$, $k\geq 3$ we have $\Pow_{m,k}^n \not\subseteq \Ch_m^n$:
the polynomial $x^n + y^n + z^n$ of the Fermat curve is in $\Pow_{m,k}^n$ and its irreducibility implies (since $n \geq 2$) that $x^n + y^n + z^n \notin \Ch_m^n$.

We will see that for specific settings of parameters
there exist \emph{multiplicity obstructions} that prove $\Pow_{m,k}^n \not\subseteq \Ch_m^n$,
but there do not exist \emph{occurrence obstructions} that prove this fact (see the definitions below).
Our approach works as follows and is in complete analogy to the approach proposed in \cite{gct1,gct2} to separate group varieties arising from algebraic complexity theory.
If $\Pow_{m,k}^n \subseteq \Ch_m^n$, then the restriction of functions gives a canonical $\GL_m$-equivariant surjection
\[
\IC[\Ch_m^n]_d \twoheadrightarrow \IC[\Pow_{m,k}^n]_d.
\]
In this case, Schur's lemma (e.g.~\cite[Lemma 4.1.4]{gw:09}) implies that
\begin{equation}\label{eq:obsineq}
\mult_\la(\IC[\Ch_m^n]_d) \geq \mult_\la(\IC[\Pow_{m,k}^n]_d).
\end{equation}
for all $m$-partitions $\la$.
Therefore, a partition $\la$ that violates \eqref{eq:obsineq} proves that $\Pow_{m,k}^n \not\subseteq \Ch_m^n$. Such a $\la$ is called a \emph{multiplicity obstruction}.
If additionally $\mult_\la(\IC[\Ch_m^n]_d)=0$, then $\la$ is called an \emph{occurrence obstruction}.

Since $\Ch_m^n$ and $\Pow_{m,k}^n$ are subvarieties of $\IA_m^n$ and since all $\la$ for which $\mult_\la(\IC[\IA_m^n]_d)>0$ are $m$-partitions of $dn$,
it follwos that if $\mult_\la(\IC[\Ch_m^n]_d)>0$ or $\mult_\la(\IC[\Pow_{m,k}^n]_d)>0$, then $\la$ is an $m$-partition of $dn$.

\begin{theorem}[Main Theorem]\label{thm:main}
~\\\textup{(1)} Let $m\geq 3$, $n \geq 2$, $k=d=n+1$, $\la=(n^2-2,n,2)$. We have
$\mult_\la(\IC[\Ch_m^n]_d) < \mult_\la(\IC[\Pow_{m,k}^n]_d)$,
i.e., $\la$ is a multiplicity obstruction that shows $\Pow_{m,k}^n \not\subseteq \Ch_m^n$.

\noindent\textup{(2)} In two finite settings we can show a slightly stronger separation:
\begin{compactitem}
\item[\textup{(a)}] Let $k=4$, $n=6$, $m=3$, $d=7$, $\la=(n^2-2,n,2)=(34,6,2)$. Then
$\mult_\la(\IC[\Ch_m^n]_d) = 7 < 8 = \mult_\la(\IC[\Pow_{m,k}^n]_d)$, i.e., $\la$ is a multiplicity obstruction that shows $\Pow_{m,k}^n \not\subseteq \Ch_m^n$.
\item[\textup{(b)}] Similarly, for $k=4$, $n = 7$, $m = 4$, $d = 8$, $\la=(n^2-2,n,2)=(47,7,2)$ we have $\mult_\la(\IC[\Ch_m^n]_d) < 11 = \mult_\la(\IC[\Pow_{m,k}^n]_d)$, i.e., $\la$ is a multiplicity obstruction that shows $\Pow_{m,k}^n \not\subseteq \Ch_m^n$.
\end{compactitem}
Both separations \textup{(a)} and \textup{(b)} cannot be achieved using occurrence obstructions, even for arbitrary $k$:
for all $m$-partitions $\mu$ 
that satisfy $\pl \mu d n >0$
we have $\mult_\la(\IC[\Ch_m^n]_{d'})>0$ in these settings.
\end{theorem}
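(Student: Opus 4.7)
The plan is to express both $\mult_\la(\IC[\Ch_m^n]_d)$ and $\mult_\la(\IC[\Pow_{m,k}^n]_d)$ as plethysm-style quantities, to separate them using the paper's generalization of Hermite reciprocity, and to handle the occurrence-obstruction part by a finite enumeration.

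First I would invoke the classical identification $\mult_\la(\IC[\Ch_m^n]_d)=a_\la(n[d])$, which comes from realizing $\Ch_m^n$ as the image of the product map $V^n\to\Sym^n V$ and identifying its coordinate ring with the $S_n$-invariant part of the appropriate tensor power of $\Sym^d V^*$. This pins down the Chow side exactly. On the power side, the strategy is to show that for $\la=(n^2-2,n,2)$ and $k=d=n+1$ the $\{\la\}$-isotypic component of the vanishing ideal $I(\Pow_{m,k}^n)_d$ is zero, which then forces
\[
\mult_\la(\IC[\Pow_{m,k}^n]_d)=a_\la(d[n]).
\]
With these two identifications in place, part (1) reduces to the strict plethysm inequality $a_\la(n[d])<a_\la(d[n])$ for $d=n+1\geq n$, which is exactly what the generalization of Hermite reciprocity (Theorem~\ref{thm:plethineq}, Corollary~\ref{cor:keyinequality}) is designed to supply for this three-row family.

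For the finite cases (a) and (b) I would instantiate the same scheme numerically: directly compute $a_\la(d[n])$ and $a_\la(n[d])$ for $(n,d,m)=(6,7,3)$ and $(7,8,4)$ with a symmetric-function package, then verify vanishing of the $\{\la\}$-isotypic piece of $I(\Pow_{m,k}^n)_d$ by an explicit apolarity/catalecticant calculation, reading off the advertised values $7<8$ and the strict inequality bounded by $11$. The impossibility of an occurrence obstruction in these settings then amounts to enumerating the (few) $m$-partitions $\mu$ of $dn$ with $a_\mu(d[n])>0$ and checking, one by one, that $\mult_\mu(\IC[\Ch_m^n]_d)=a_\mu(n[d])>0$; with $m,n,d$ so small this is a bounded computer-algebra verification independent of $k$.

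The step I expect to be the main obstacle is the vanishing of the $\{\la\}$-isotypic piece of $I(\Pow_{m,k}^n)_d$ in the infinite setting of (1), since the ideal of a higher secant variety of a Veronese is in general only understood through partial lists of equations. A plausible route is to construct an explicit highest-weight vector of type $\la$ in $\Sym^d\Sym^n V^*$ -- for instance via a semistandard tableau of shape $\la$ with content adapted to $d=n+1$ -- and then verify that it does not vanish on a generic $\ell_1^n+\cdots+\ell_k^n$. The rigid shape of $\la=(n^2-2,n,2)$ together with the balanced constraint $k=d=n+1$ should leave just enough room for such a construction to succeed, and this is the point at which genuine geometric input, beyond pure plethysm combinatorics, is required.
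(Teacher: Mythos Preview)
Your overall shape is right for part~(1), but two of your identifications are off. The ``classical identification'' $\mult_\la(\IC[\Ch_m^n]_d)=a_\la(n[d])$ is not an equality: the product map $V^n\to\Ch_m^n$ yields only an injection $\IC[\Ch_m^n]_d\hookrightarrow\Sym^n\Sym^d V^*$, so you get the upper bound $\le a_\la(n[d])$ (which is all that is needed here, and which the paper derives via the orbit $\GL_n(x_1\cdots x_n)$ and algebraic Peter--Weyl in Lemma~\ref{lem:chowupperbound}); surjectivity onto the $S_n$-invariants is essentially the Foulkes--Howe conjecture and is not available. Conversely, you have misplaced the difficulty on the $\Pow$ side: once $k\ge d$, the \emph{entire} degree-$d$ ideal $I(\Pow_{m,k}^n)_d$ vanishes by a one-line polarization argument (a degree-$d$ form on $\Sym^n V$ is determined by its values on sums of $d$ elements of any spanning set, and the powers $\ell^n$ span). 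This is Proposition~\ref{pro:lowdegreepowersums}, and it gives $\mult_\la(\IC[\Pow_{m,k}^n]_d)=a_\la(d[n])$ for every $\la$ at once. Your proposed route of exhibiting a single non-vanishing highest-weight vector would only yield $\mult_\la\ge1$, which is not enough since the plethysm gap in Theorem~\ref{thm:plethineq} is exactly~$1$.

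Your plan for the no-occurrence-obstruction clause of~(2) inherits the first problem and adds another. The claim ranges over \emph{all} degrees, not just the fixed $d$ of (a) or (b), so there are infinitely many $\mu$ to handle; the paper reduces to finitely many via a semigroup argument (both $\{\mu:a_\mu(d[n])>0\}$ and $\{\mu:\mult_\mu(\IC[\Ch_m^n]_d)>0\}$ are closed under addition, so it suffices to treat the finite list of generators in Propositions~\ref{pro:listofgenerators} and~\ref{pro:listofgenerators74}). And since the Chow-side equality is unavailable, checking $a_\mu(n[d])>0$ does not establish $\mult_\mu(\IC[\Ch_m^n]_d)>0$; the paper instead verifies positivity for each generator by explicitly evaluating a tableau-built highest-weight vector at a random point of $\Ch_m^n$ (Proposition~\ref{pro:computercalc}), and this direct evaluation is the substantial computational content of part~(2).
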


One would like to show that there are no occurrence obstructions in all cases (1),
but this is wrong if $n$ is not large enough with respect to $m$, see Prop.~\ref{pro:occobsdoexist}.
Even for $m=3$ or $m=4$ ruling out occurrence obstructions as in (2) is done by a large-scale computer calculation which is only suitable for a finite case, but not for sequences as in (1).
The papers \cite{IP:17, BIP:19} rule out occurrence obstructions for families,
but only in ranges where they would give very strong new algebraic circuit lower bounds,
so that we expect it to be difficult to find multiplicity obstructions in those cases.
Note also that \cite{IP:17, BIP:19} are only dealing with \emph{padded polynomials},
for which \cite{kl:14} guarantees $\la$ to have a very restricted shape.

We expect multiplicity obstructions to be more powerful than occurrence obstructions in most cases relevant for geometric complexity theory,
and Theorem~\ref{thm:main} resolves the challenge of finding a setting in which
the corresponding multiplicities and occurrences could actually be computed in a reasonable amout of time,
while the setting is also involved enough so that a difference between occurrence obstructions and multiplicity obstructions could be witnessed.

\begin{remark}
The partition $(n^2-2,n,2)$ is known to be the type of one of Brill's classical set-theoretic equations for $\Ch_m^n$, see \cite{Gua:18}.
\end{remark}

\section{Proof of the main theorem}
The main theorem (Theorem~\ref{thm:main})
makes a statement about the finite situations $k=4$, $n=6$, $m=3$, $d=7$ and $k=4$, $n=7$, $m=4$, $d=8$,
as well as the general situation $m\geq 3$, $n \geq 2$, $k=d=n+1$.
As a first step, in all these cases we show that
\begin{equation}\label{eq:equality}
\mult_\la(\IC[\Pow_{m,k}^n]_d) = \pl \la d n.
\end{equation}
In the finite cases the following computer calculation suffices to prove \eqref{eq:equality}.
\begin{proposition}\label{pro:smallcomputercalc}
$\mult_{(34,6,2)}(\IC[\Pow_{3,4}^6]_7) = 8 = \pl {(34,6,2)} 7 6$ and
$\mult_{(47,7,2)}(\IC[\Pow_{3,4}^7]_8) = 11 = \pl {(47,7,2)} 8 7$.
\end{proposition}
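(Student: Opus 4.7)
The proposition asserts two numerical equalities, both of which reduce to explicit finite computations; my plan is to organize these around the pullback of the parametrization map.

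The right-hand equalities $\pl{(34,6,2)}{7}{6}=8$ and $\pl{(47,7,2)}{8}{7}=11$ are standard plethysm evaluations: expand $s_d[s_n]$ in the Schur basis and read off the coefficient of $s_\la$, either via an existing symmetric-function implementation (Sage, LiE, SF) or by a direct evaluation of Macdonald's character formula for $s_d[s_n]$ on each conjugacy class of $\aS_{dn}$.

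For the left-hand equalities we use that $\IC[\Pow_{m,k}^n]_d$ is a $\GL_m$-equivariant quotient of $\IC[\IA_m^n]_d$, so Schur's lemma gives the upper bound $\mult_\la(\IC[\Pow_{m,k}^n]_d)\le \pl \la d n$, and only the matching lower bound requires work. The key tool is the parametrization
\[
\phi\colon V^{\oplus k}\to\IA_m^n,\qquad (\ell_1,\ldots,\ell_k)\mapsto \ell_1^n+\cdots+\ell_k^n,
\]
whose image has closure $\Pow_{m,k}^n$. The pullback $\phi^*\colon \IC[\IA_m^n]_d\to \IC[V^{\oplus k}]_{dn}$ is $\GL_m$-equivariant, its kernel on degree $d$ is precisely $I(\Pow_{m,k}^n)_d$, and $\phi^*$ sends highest weight vectors of type $\la$ to highest weight vectors of type $\la$. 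Hence
\[
\mult_\la(\IC[\Pow_{m,k}^n]_d) \;=\; \dim\phi^*\bigl(\HWV_\la(\IC[\IA_m^n]_d)\bigr),
\]
and the whole question collapses to computing the rank of $\phi^*$ on an $8$-dimensional (resp.\ $11$-dimensional) vector space.

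Concretely, one builds a basis of $\HWV_\la(\IC[\IA_m^n]_d)\subseteq \Sym^d(\Sym^nV^*)$ by solving the linear system "weight $\la$ and annihilated by the raising operators" in a convenient monomial or tableau basis of $\Sym^d(\Sym^nV^*)$, applies $\phi^*$ by formal substitution of $\sum_i\ell_i^n$ into the dual coordinates on $\IA_m^n$, and checks the rank of the resulting matrix over $\IQ$. The main obstacle is the sheer size of the ambient target: the full space $\IC[V^{\oplus k}]_{dn}$ has dimension in the tens of millions for the second parameter set, so a naive implementation is hopeless. Two observations tame this. First, each summand $\ell_i^n$ contributes degree $n$ to the $i$th block and $0$ to the others, so the image of $\phi^*$ lies in the subspace of block-multidegrees $(nc_1,\ldots,nc_k)$ with $\sum_ic_i=d$. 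Second, one may further project onto the $\la$-weight Borel-invariants of the target, whose dimension is given by the Cauchy identity and is comfortably small. Within this restricted target the rank check is an ordinary matrix computation; in both parameter sets it produces the value $\pl \la d n$, which settles both equalities simultaneously.
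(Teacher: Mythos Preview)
Your approach is correct and follows the same outline as the paper: compute the plethysm coefficient with standard software, construct a basis of $\HWV_\la(\IC[\IA_m^n]_d)$, and show that none of these highest weight vectors lies in $I(\Pow_{m,k}^n)_d$, so that the multiplicity equals the plethysm coefficient.

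The paper differs from your proposal only in two implementation choices, both of which make the computation lighter. First, instead of solving a linear system for the highest weight vectors, it uses the explicit tableau basis coming from Schur--Weyl duality (as in Section~\ref{sec:computer}): eight concrete semistandard tableaux of shape $(34,6,2)$ and content $1^6\cdots 7^6$ (resp.\ eleven of shape $(47,7,2)$) are listed, and the associated tensor contractions give the basis directly. Second, and more importantly, it bypasses the symbolic pullback $\phi^*$ entirely: rather than expanding $\phi^*(f_i)$ as a polynomial in the coordinates of $V^{\oplus k}$ and worrying about the size of that target, it simply evaluates each $f_i$ at eight (resp.\ eleven) random points $p_j=\ell_{1,j}^n+\cdots+\ell_{k,j}^n\in\Pow_{m,k}^n$ and checks that the resulting square matrix is nonsingular. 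This random evaluation is exactly a numerical specialization of your $\phi^*$, so the logic is the same, but the dimension blow-up you flag never arises.
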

\begin{proof}
The plethysm coefficient computations were performed with the \textsc{LiE} software.
The rest is a small computer calculation completely analogous to the ones in \cite[Sec.~6]{BIP:19}.
The details can be found in Section~\ref{sec:computer_resultssmallcomputercalc}.
\end{proof}
For the general situation the equality \eqref{eq:equality} is a consequence of the following result on power sums proved in \cite[Prop.~3.2]{BIP:19}:
\begin{proposition}\label{pro:lowdegreepowersums}
If $\la$ is an $m$-partition of $dn$ and $k\geq d$, then
$\mult_\la(\IC[\Pow_{m,k}^n]_d) = \pl \la d n$.
\end{proposition}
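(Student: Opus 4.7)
The plan is to show that under the hypothesis $k \geq d$ the natural pullback
\[
\phi^*\colon \IC[\IA_m^n]_d \longrightarrow \IC[V^{\oplus k}]_{nd}
\]
induced by the parametrization $\phi(\ell_1,\ldots,\ell_k) = \ell_1^n + \cdots + \ell_k^n$ is injective. Since $\Pow_{m,k}^n = \overline{\phi(V^{\oplus k})}$, one has $I(\Pow_{m,k}^n) = \ker\phi^*$, so injectivity of $\phi^*$ in degree $d$ forces $I(\Pow_{m,k}^n)_d = 0$, i.e.\ a $\GL_m$-equivariant isomorphism $\IC[\Pow_{m,k}^n]_d \cong \IC[\IA_m^n]_d$. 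Taking $\la$-multiplicities on both sides then yields $\mult_\la(\IC[\Pow_{m,k}^n]_d) = \pl \la d n$, which is the desired identity.

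For the injectivity I would use polarization. Identify $\IC[\IA_m^n]_d$ with $\Sym^d(\Sym^n V^*)$, and view $(\Sym^n V^*)^{\otimes d}$ as the multi-degree $(n,\ldots,n)$ component of $\IC[V^{\oplus d}]$ via
\[
q_1 \otimes \cdots \otimes q_d \;\longmapsto\; \bigl((\ell_1,\ldots,\ell_d)\mapsto q_1(\ell_1)\cdots q_d(\ell_d)\bigr).
\]
Under this identification, the polarization map
\[
\Sym^d(\Sym^n V^*) \hookrightarrow (\Sym^n V^*)^{\otimes d}, \qquad p_1 \cdots p_d \longmapsto \sum_{\sigma \in \aS_d} p_{\sigma(1)} \otimes \cdots \otimes p_{\sigma(d)},
\]
realizes $\Sym^d(\Sym^n V^*)$ as the $\aS_d$-symmetric tensors and is a $\GL_m$-equivariant injection.

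The key calculation is then to expand
\[
\phi^*(p_1 \cdots p_d)(\ell_1,\ldots,\ell_k) = \prod_{j=1}^d \sum_{i=1}^k p_j(\ell_i) = \sum_{(i_1,\ldots,i_d)\in [k]^d} \prod_{j=1}^d p_j(\ell_{i_j})
\]
and to single out the multi-degree $(n,n,\ldots,n,0,\ldots,0)$ component, with the $d$ copies of $n$ placed in the first $d$ slots; this component exists exactly because $k \geq d$. The contributing tuples $(i_1,\ldots,i_d)$ are precisely the permutations of $(1,\ldots,d)$, so this component equals $\sum_{\sigma \in \aS_d} \prod_{j} p_j(\ell_{\sigma(j)})$, which under the identification above coincides (after the harmless relabelling $\sigma \leftrightarrow \sigma^{-1}$) with the polarization image of $p_1 \cdots p_d$. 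Since polarization is injective, the composition of $\phi^*$ with the projection onto this multi-degree component is injective, and therefore $\phi^*$ itself is injective.

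The only delicate step is the bookkeeping that identifies the chosen multi-degree component of $\phi^*(p_1 \cdots p_d)$ with the polarization image; everything else is formal. The hypothesis $k \geq d$ enters precisely at this point, because one needs at least $d$ distinct indices in $[k]$ for the multi-degree $(n^d,0^{k-d})$ component to be nonzero.
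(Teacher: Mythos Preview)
Your argument is correct. The paper does not give its own proof of this proposition; it simply cites \cite[Prop.~3.2]{BIP:19}. What you have written is precisely the standard polarization argument that underlies that cited result: the pullback $\phi^*$ along the parametrization $\phi(\ell_1,\dots,\ell_k)=\sum_i \ell_i^n$ has kernel $I(\Pow_{m,k}^n)$, and extracting the multi-degree $(n^d,0^{k-d})$ component of $\phi^*$ recovers the injective polarization map $\Sym^d(\Sym^n V^*)\hookrightarrow (\Sym^n V^*)^{\otimes d}$, forcing $I(\Pow_{m,k}^n)_d=0$ when $k\ge d$. Your bookkeeping is clean, and you correctly isolate where the hypothesis $k\ge d$ is used. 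One small remark: you verify the identity only on products $p_1\cdots p_d$, but since these span $\Sym^d(\Sym^n V^*)$ and all maps involved are linear, the conclusion extends; you allude to this with ``everything else is formal,'' which is fine.
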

As a second step we will use the following lemma for $\la=(n^2-2,n,2)$.
\begin{lemma}[{see also \cite[Sec.~9.2.3]{Lan:17}}]\label{lem:chowupperbound}
Let $\la$ be an $m$-partition and $n\geq m$. Then $\mult_{\la}(\IC[\Ch_m^n]_{d}) \leq \pl \la n d$.
\end{lemma}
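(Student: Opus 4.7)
My plan is to construct an explicit $\GL_m$-equivariant injection $\IC[\Ch_m^n]_d \hookrightarrow \IC[\IA_m^d]_n$ and conclude by Schur's lemma. The starting point is the multiplication map
$$\mu : V\times\cdots\times V \longrightarrow \IA_m^n, \qquad (\ell_1,\ldots,\ell_n)\longmapsto \ell_1\cdots\ell_n,$$
with $n$ factors of $V$ on the left. This map is $\GL_m$-equivariant, invariant under the $\aS_n$-permutation of the factors, and its image is $\Ch_m^n$ by definition.

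Pullback along $\mu$ gives a $\GL_m$-equivariant ring homomorphism $\mu^*\colon \IC[\IA_m^n]\to \IC[V^n]$ whose kernel equals $I(\Ch_m^n)$, since $\Ch_m^n$ is the Zariski closure of the image of $\mu$. Thus $\mu^*$ descends to an equivariant injection $\IC[\Ch_m^n]\hookrightarrow \IC[V^n]^{\aS_n}$. The multi-linearity of $\mu$ in the $n$ factors of $V$ implies that a degree-$d$ form $f\in\IC[\IA_m^n]_d$ pulls back to multi-degree $(d,\ldots,d)$ on $V^n$, so $\mu^* f\in \Sym^n(\IC[V]_d)$. As polynomial $\GL_m$-representations one has $\IC[V]_d\cong\IA_m^d$ and hence $\Sym^n(\IC[V]_d)\cong\IC[\IA_m^d]_n$, which yields a $\GL_m$-equivariant injection
$$\IC[\Ch_m^n]_d \ \hookrightarrow\ \IC[\IA_m^d]_n.$$
Schur's lemma then delivers the claimed bound
$$\mult_\la(\IC[\Ch_m^n]_d) \ \leq\ \mult_\la(\IC[\IA_m^d]_n)\ =\ \pl\la n d.$$

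The main point that demands care is verifying that the kernel of $\mu^*$ is exactly $I(\Ch_m^n)$ (i.e.\ that $\Ch_m^n$ is the scheme-theoretic image of $\mu$ with its reduced structure), and that $\mu^*$ actually surjects onto $\Sym^n(\IC[V]_d)$ within the $\aS_n$-invariant multi-degree $(d,\ldots,d)$ component (rather than a strict subspace). Both facts are standard once one knows that, generically, $\mu\colon V^n\to \Ch_m^n$ is a torsor for $\aS_n$ together with the $(n-1)$-dimensional torus of simultaneous scalings preserving the product. The hypothesis $n\geq m$ enters precisely at this step, ensuring that $\mu$ has the cleanest possible generic fiber structure and that no extra correction factor is needed; this is exactly the setting of the proof sketched in \cite[Sec.~9.2.3]{Lan:17}, and I would follow it verbatim to complete the argument.
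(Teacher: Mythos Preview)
Your core argument is correct and is genuinely different from the paper's proof. The paper proceeds by embedding $\IA_m^n\subseteq\IA_n^n$ (this is where the hypothesis $n\geq m$ is actually used), identifying $\Ch_m^n$ with a slice of the orbit closure $\overline{\GL_n(x_1\cdots x_n)}$, passing from the orbit closure to the open orbit, and then computing $\mult_\la(\IC[\GL_n(x_1\cdots x_n)]_d)$ via the algebraic Peter--Weyl theorem and the stabilizer $H$ of $x_1\cdots x_n$, arriving at $\dim\{\la\}^H=\pl\la n d$. Your approach via the pullback along the parametrization $\mu\colon V^n\to\Ch_m^n$ is considerably more elementary: it avoids GIT and Peter--Weyl entirely, and in fact does \emph{not} need the hypothesis $n\geq m$ at all.

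Two points to clean up. First, the intermediate claim ``$\IC[V]_d\cong\IA_m^d$'' is false as $\GL_m$-representations: $\IC[V]_d=\Sym^d(V^*)$ whereas $\IA_m^d=\Sym^d V$. Fortunately your conclusion $\Sym^n(\IC[V]_d)\cong\IC[\IA_m^d]_n$ is still correct, since both sides equal $\Sym^n\Sym^d(V^*)$; just fix the justification. Second, your entire final paragraph can be dropped. You only need \emph{injectivity} of $\mu^*$ modulo $I(\Ch_m^n)$, which you already have from $\ker\mu^*=I(\Ch_m^n)$; whether $\mu^*$ surjects onto the full $\aS_n$-invariant piece is irrelevant for the inequality. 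Likewise, no torsor analysis or generic fiber computation is needed, and the hypothesis $n\geq m$ plays no role in your argument --- it enters only in the paper's orbit-closure route.
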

\begin{proof}
Let $\GL_n(x_1 \cdots x_n) := \{g(x_1\cdots x_n) \mid g \in \GL_n\} \subseteq \IA_n^n$ denote the $\GL_n$-orbit of $x_1 \cdots x_n$.
We denote by $\overline{\GL_n(x_1 \cdots x_n)}$ the Zariski closure of this orbit, which equals its Euclidean closure by the same principles as in Section~\ref{sec:reprthobs}.
Choose bases and embed $\IA_m^n \subseteq \IA_n^n$, so that $\Ch_m^n$ is the intersection of $\IA_m^n$ and $\overline{\GL_n(x_1 \cdots x_n)}$.
This implies (via argument analogous to that for the plethysm coefficient (\cite[Lem.~4.3.2]{ike:12b}))
that the multiplicity of the irreducible $\GL_m$-representation $\{\la\}_m$
in $\IC[\Ch_m^n]_{d}$ equals the multiplicity of the irreducible $\GL_n$-representation $\{\la\}_n$
in $\IC[\overline{\GL_n (x_1 \cdots x_n)}]$. In other words
\[
\mult_\la(\IC[\Ch_m^n]_{d}) = \mult_\la(\IC[\overline{\GL_n (x_1 \cdots x_n)}]_d).
\]
The vector space $\IC[\overline{\GL_n (x_1 \cdots x_n)}]_d$ consists of exactly the restrictions of polynomials in $\IC[\IA_n^n]_d$ to the orbit $\GL_n (x_1 \cdots x_n)$.
The coordinate ring $\IC[\GL_n (x_1 \cdots x_n)]$ is also graded
and its homogeneous degree $d$ part $\IC[\GL_n (x_1 \cdots x_n)]_d$
consists of \emph{all} homogeneous degree $d$ regular functions on $\GL_n (x_1 \cdots x_n)$, in particular
\[
\mult_\la(\IC[\overline{\GL_n (x_1 \cdots x_n)}]_d) \leq \mult_\la(\IC[\GL_n (x_1 \cdots x_n)]_d).
\]
The right-hand side can be understood via geometric invariant theory as follows (see \cite[Sec.~3.4(A)]{ike:12b}):
\[
\mult_\la(\IC[\GL_n (x_1 \cdots x_n)]_d) = \mult_{\la^*}(\IC[\GL_n]^{H}_d),
\]
where $H = \{\diag(\alpha_1,\ldots,\alpha_n)\mid \prod_{i=1}^n\alpha_1 = 1\} \rtimes \aS_n \subseteq \GL_n$
is the stabilizer of $x_1 \cdots x_n$.
The algebraic Peter-Weyl theorem (see e.g.\ \cite[II.3.1 Satz 3]{Kra:85}, \cite[Thm.~4.2.7]{gw:09}, or~\cite[Ch.~7, 3.1 Thm.]{Pro:07}) states that
\[
\IC[\GL_n] = \bigoplus_{\la} \{\la\} \otimes \{\la^*\}
\]
and we conclude
\[
\mult_\la(\IC[\GL_n]^{H}_d) = \dim\{\la\}^H.
\]
There are several ways of seeing that $\dim\{\la\}^H = \pl \la n d$, see e.g.\ \cite[Sec.~9.2.3]{Lan:17} or \cite[Prop.~3.3]{ike:18}.
This proves the lemma.
\end{proof}
Now an argument using symmetric functions is used to prove the following theorem.
\begin{theorem}\label{thm:plethineq}
$\pl {(n^2-2,n,2)}{n+1}n=1+\pl {(n^2-2,n,2)}n{n+1}$.
\end{theorem}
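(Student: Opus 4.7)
Via the Hall inner product on symmetric functions, $\pl{\lambda}{d}{n}=\langle s_\lambda, s_d[s_n]\rangle$, so the claim is equivalent to showing that the coefficient of $s_{(n^2-2,n,2)}$ in the virtual symmetric function $\Delta := s_{n+1}[s_n] - s_n[s_{n+1}]$ equals $1$. Classical Hermite reciprocity---the two-row case of \eqref{eq:foulkes}, where equality holds---yields $s_{n+1}[s_n]|_{\GL_2}=s_n[s_{n+1}]|_{\GL_2}$; hence $\Delta$ restricts to zero in two variables and expands purely in Schur functions of length at least three. The whole problem is thus a three-row computation, and this is the reduction I would exploit.

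My plan is to make this three-row computation explicit via power-sum expansions: using $s_d[s_n]=\sum_{\rho\vdash d}z_\rho^{-1}p_\rho[s_n]$ together with $p_k[s_n]=\sum_{\mu\vdash n}z_\mu^{-1}p_{k\mu}$, both $s_{n+1}[s_n]$ and $s_n[s_{n+1}]$ acquire parallel expansions indexed by pairs $(\rho,(\mu^i))$ consisting of an outer partition and a tuple of inner partitions. Pairing these expansions with $s_{(n^2-2,n,2)}$ reduces the coefficient in question to an explicit alternating sum whose entries are values of the irreducible $S_{n(n+1)}$-character $\chi^{(n^2-2,n,2)}$, which are computable by Murnaghan--Nakayama. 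The core step is then to exhibit an involution on these index pairs that matches contributions coming from $s_{n+1}[s_n]$ with those of $s_n[s_{n+1}]$ so that they cancel, except for one residual pair whose total contribution is exactly $1$.

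A more conceptual variant of the plan, motivated by the remark just before the theorem, is to interpret the "$+1$" as a one-dimensional isotypic piece in the cokernel of the Foulkes--Howe map $\Sym^n\Sym^{n+1}V \to \Sym^{n+1}\Sym^n V$ spanned by Brill's equation of type $(n^2-2,n,2)$: one would exhibit this highest weight vector explicitly and argue that it generates a multiplicity-one piece of the cokernel. The main obstacle in either approach is the same: certifying that exactly one copy of $\{(n^2-2,n,2)\}$ survives. In the power-sum route this is a delicate near-complete cancellation whose sign and multiplicity bookkeeping must be executed with care; in the Foulkes--Howe route one must control the full three-row cokernel, a regime in which Hermite reciprocity provides no direct leverage and where one might therefore expect to invoke the explicit Brill construction together with a dimension count to rule out any additional $(n^2-2,n,2)$-component.
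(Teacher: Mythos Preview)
Your proposal is a strategy outline, not a proof: you identify the target coefficient correctly and sketch two possible routes (power-sum/Murnaghan--Nakayama with a conjectural sign-reversing involution, or the Foulkes--Howe cokernel via Brill's equation), but you explicitly leave the decisive step---``certifying that exactly one copy of $\{(n^2-2,n,2)\}$ survives''---undone. Neither the involution nor the cokernel dimension count is supplied, so as written there is a genuine gap: the heart of the argument is missing.

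The paper's proof takes a quite different and much more direct route that sidesteps the cancellation problem entirely. Rather than expanding in power sums, it uses Jacobi--Trudi together with the $h_\nu$/$m_\nu$ duality to write $a_\lambda(d[n])$ for any $\lambda=(L,r,2)$ as a signed sum of monomial coefficients $c_\nu=\langle h_\nu,h_d[h_n]\rangle$, and then packages these $c_\nu$ as coefficients in products of $q$-binomial coefficients. After simplification one obtains a closed-form generating function for the difference
\[
a_\lambda(d[n])-a_\lambda(n[d]) \;=\; (q^r)\at \binom{n+d-2}{n-1}_q\,(q^n-q^d)\,\frac{(1-q^{d-1})(1-q^{n-1})}{(1-q^d)(1-q^n)},
\]
which for $d=n+1$ collapses to $(q^r)\at \binom{2n-1}{n-2}_q(q^n-q^{n+1})$. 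At $r=n$ this is simply the constant term of $\binom{2n-1}{n-2}_q$, which is $1$. Thus the ``delicate near-complete cancellation'' you anticipate is absorbed into a single $q$-binomial identity, and the residual $1$ appears transparently as a lowest-degree coefficient rather than as the outcome of a bijective argument. If you wish to pursue your own route, the missing ingredient is precisely the construction of the involution (or the explicit cokernel analysis); the paper's method shows that neither is necessary.
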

Theorem~\ref{thm:plethineq} is a corollary of more general results, see Corollary~\ref{cor:keyinequality} in the appendix.

This finishes the proof that $(n^2-2,n,2)$ is a multiplicity obstruction in all cases of Theorem~\ref{thm:main}.

\subsection*{No occurrence obstructions}
To finish the proof of Theorem~\ref{thm:main}(2), it remains to show that there are no occurrence obstructions in the finite situation $n=6$, $m=3$ and $n=7$, $m=4$.
We will primarily go into more detail for the first case and the second one will be proven similarly.
We will do this by showing that
\begin{equation}\label{eq:posplethimpliesposchow}
\pl \mu d n >0 \text{ implies } \mult_\mu(\IC[\Ch_m^n]_d)>0 \text{ \ for } n=6\text{, } m=3.
\end{equation}
Note that this claim is independent of $k$.
We start proving \eqref{eq:posplethimpliesposchow} by giving a complete classification of when $\pl \mu d n>0$ for the case $n=6$, $m=3$.

First, the following lemma states that for a few special $\mu$ the plethysm coefficient always vanishes.
\begin{lemma}\label{lem:vanishingpleth}
Let $\bar\la := (\la_2,\la_3,\ldots)$ denote $\la$ without its first row.
If $\la$ is an $m$-partition of $dn$ and $\bar\la \in \{(3,3),(3,1),(2,1),(1,1),(1)\}$, then $\pl \la d n=0$.
\end{lemma}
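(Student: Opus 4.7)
The plan is to reduce each vanishing statement to an invariant-theoretic claim and verify them case by case. Recall that $h_d[h_n]$ is the Frobenius characteristic of the permutation $\aS_{dn}$-module on $\aS_{dn}/H$ with $H := \aS_n \wr \aS_d$, so Frobenius reciprocity gives
\[
\pl{\la}{d}{n} = \langle s_\la, h_d[h_n]\rangle = \dim (S^\la)^H,
\]
where $S^\la$ denotes the Specht module of $\aS_{dn}$. It therefore suffices to prove $(S^\la)^H = 0$ for each of the five listed shapes $\bar\la$; this is what I would sketch below.

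The two smallest cases fall out directly. For $\bar\la = (1)$, $S^\la$ is the standard reflection representation $\{a \in \IC^{dn} : \sum_i a_i = 0\}$. The subgroup $\aS_n^d \subseteq H$ forces any invariant vector to be constant on each of the $d$ size-$n$ blocks, the outer $\aS_d$ forces equality across blocks, and the traceless condition then forces it to be zero. For $\bar\la = (1,1)$, $S^\la \cong \Lambda^2 V$ where $V$ is the standard module; an invariant is an $H$-equivariant antisymmetric form $f$ on $\{1,\ldots,dn\}$. A transposition inside a single block lies in $H$, so antisymmetry forces $f$ to vanish on within-block ordered pairs, while the simultaneous transitivity of the two block copies of $\aS_n$ on pairs across two blocks, combined with the fact that the block-swap acts as $-\Id$ on the $\Lambda^2$ of the span of those two blocks, forces $f$ to vanish on cross-block pairs as well.

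For $\bar\la \in \{(2,1),(3,1),(3,3)\}$ the same principle applies but the Specht module needs a more careful model. My plan is to realize $S^\la$ as the image of a Young symmetrizer acting on $V^{\otimes k}$ with $k = |\bar\la| \in \{3,4,6\}$, partition the $H$-orbits on $k$-tuples of indices by their block pattern (how many indices lie in which blocks), and in each pattern combine the within-block transpositions of $H$ with the antisymmetrization along the length-$\bar\la_1'$ columns of $\bar\la$ to force the coefficient of an invariant at that pattern to vanish, exactly as in the $(1,1)$ argument. Equivalently, and perhaps more cleanly, one may expand $\chi^{(dn-|\bar\la|,\bar\la)}$ as an explicit polynomial in the cycle-count functions $m_k$ via Murnaghan--Nakayama, and then average this polynomial over $H$ using the fact that the cycle index of $H$ is exactly $h_d[h_n]$; the identities $\frac{1}{|H|}\sum_{h\in H}\chi^\la(h) = 0$ then become finite symbolic computations. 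The most delicate subcase is $\bar\la = (3,3)$, where six indices must be tracked across two antisymmetrizing columns of length three, and this is where I expect the main bookkeeping obstacle to lie.
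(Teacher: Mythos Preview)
Your reduction $\pl{\la}{d}{n}=\dim(S^\la)^{H}$ with $H=\aS_n\wr\aS_d$ is correct, and your explicit arguments for $\bar\la=(1)$ and $\bar\la=(1,1)$ are fine (in the $(1,1)$ case your phrase ``block-swap acts as $-\Id$'' is slightly loose, but the underlying computation $c_{AB}=c_{BA}=-c_{AB}$ is exactly right). The paper, by contrast, gives no argument at all here: it simply invokes a finite calculation carried out in \cite{IP:17} for $\bar\la\in\{(1),(1,1),(2,1),(3,1)\}$ and asserts that the same calculation handles $(3,3)$. Your Murnaghan--Nakayama alternative---writing $\chi^{(dn-|\bar\la|,\bar\la)}$ as a fixed polynomial in the cycle-count functions and pairing it against the cycle index of $\aS_n\wr\aS_d$---is precisely this kind of finite symbolic computation, so in that branch you are re-deriving what the paper cites rather than doing something new.

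Where your first branch needs care is the sentence ``realize $S^\la$ as the image of a Young symmetrizer acting on $V^{\otimes k}$''. For $\bar\la=(1^k)$ this gives $\Lambda^k V$ and everything works as you wrote, but for $\bar\la=(2,1)$, $(3,1)$, $(3,3)$ the Specht module $S^{(dn-k,\bar\la)}$ is \emph{not} literally a Young symmetrizer applied to $V^{\otimes k}$; one must instead use the stable description of $S^{(N-|\mu|,\mu)}$ as a virtual combination of permutation modules on tuples (equivalently, the determinantal/Jacobi--Trudi expression in the $M^\nu$'s), or simply fall back on the character computation you already outlined. So your plan is sound, but the bookkeeping you anticipate is real, and as written the three harder cases remain a sketch rather than a proof. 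If you want a self-contained argument, the cleanest route is the one you mention second: fix $\bar\la$, write down the character polynomial once, and verify that its average over $H$ vanishes identically in $d$ and $n$.
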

\begin{proof}
This is proved by a finite calculation for all cases but $(3,3)$ as Thm~1.10(a) in \cite{IP:17}.
Exactly the same calculation can be used to also prove the result for the additional partition $(3,3)$.
\end{proof}

For characterizing the set of all $\mu$ for which $\pl \mu d n$ is positive,
we observe that they form a finitely generated \emph{semigroup} and hence we only need to find the semigroup's generators:
\begin{equation}\label{eq:plethsemigroup}
\text{If $\pl \mu d n>0$ and $\pl \nu {d'} n > 0$, then $\pl {\mu+\nu} {d+d'} n >0$.}
\end{equation}
A detailed proof of \eqref{eq:plethsemigroup} can be found for example in \cite[Prop.~21.2.6]{BI:18July25}.

\begin{proposition}\label{pro:listofgenerators}
Define the set
\begin{equation*}
\begin{minipage}{15cm}
$X := \{(6)$, $(6,6)$, $(8,4)$, $(10,2)$, $(6,6,6)$, $(8,6,4)$, $(10,4,4)$, $(9,6,3)$, $(8,8,2)$, $(10,6,2)$, $(11,5,2)$, $(10,7,1)$, $(12,4,2)$, $(11,6,1)$, $(10,8)$, $(14,2,2)$, $(13,4,1)$, $(13,5)$, $(15,3)$, $(8,8,8)$, $(10,8,6)$, $(11,7,6)$, $(10,9,5)$, $(11,8,5)$, $(10,10,4)$, $(12,7,5)$, $(11,9,4)$, $(13,6,5)$, $(12,8,4)$, $(11,10,3)$, $(13,7,4)$, $(12,9,3)$, $(13,8,3)$, $(12,10,2)$, $(15,5,4)$, $(14,7,3)$, $(13,9,2)$, $(13,10,1)$, $(16,5,3)$, $(15,7,2)$, $(14,9,1)$, $(17,4,3)$, $(15,8,1)$, $(15,9)$, $(19,3,2)$, $(18,5,1)$, $(17,7)$, $(10,10,10)$, $(11,10,9)$, $(12,10,8)$, $(13,9,8)$, $(12,11,7)$, $(13,10,7)$, $(14,9,7)$, $(13,11,6)$, $(15,8,7)$, $(13,12,5)$, $(16,7,7)$, $(15,9,6)$, $(14,11,5)$, $(13,13,4)$, $(15,10,5)$, $(15,11,4)$, $(14,13,3)$, $(16,11,3)$, $(15,13,2)$, $(15,14,1)$, $(17,13)$, $(13,12,11)$, $(14,11,11)$, $(13,13,10)$, $(15,11,10)$, $(14,13,9)$, $(16,11,9)$, $(15,13,8)$, $(15,14,7)$, $(18,9,9)$, $(15,15,6)$, $(17,17,2)$, $(18,17,1)$, $(26,5,5)$, $(15,14,13)$, $(16,13,13)$, $(15,15,12)$, $(17,17,8)$, $(18,15,15)$, $(17,17,14)$, $(25,23)$, $(45,45)\}$.
\end{minipage}
\end{equation*}
Here we truncated trailing zeros from the 3-partitions.
The set $X$ is the set of generators of the semigroup of 3-partitions $\mu$ that have \mbox{$\pl \mu d 6 >0$}.
\end{proposition}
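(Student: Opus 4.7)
The plan is to verify the characterization in two directions: positivity for each element of $X$, and completeness of $X$ as a generating set for the semigroup $S := \{\mu : \pl{\mu}{|\mu|/6}{6} > 0\}$.

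First, for each $\mu \in X$ one computes $\pl{\mu}{|\mu|/6}{6}$ using LiE (analogously to Proposition~\ref{pro:smallcomputercalc}) and checks that the value is strictly positive; this immediately shows $X \subseteq S$.

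Second, to show that $X$ generates $S$, I would proceed by a finite enumeration combined with a stabilization argument. Let $D$ be the maximum value of $|\mu|/6$ over $\mu \in X$, so $D = 15$, realized by $(45,45)$. For each $d \leq D$ one enumerates every 3-partition $\mu$ of $6d$, computes $\pl{\mu}{d}{6}$, and when this is positive solves a small integer linear program to verify that $\mu$ decomposes as a non-negative integer combination of elements of $X$. This is a large but mechanical finite computation. What remains is to exhibit a stabilization result showing that for every $d > D$ and every $\mu \in S$ of degree $6d$ one can write $\mu = \mu' + \nu$ with $\nu \in X$ and $\mu' \in S$; combined with the semigroup property \eqref{eq:plethsemigroup} and induction on $d$, this closes the argument.

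The main obstacle is this stabilization step. Lemma~\ref{lem:vanishingpleth} already excludes $\bar\mu \in \{(3,3),(3,1),(2,1),(1,1),(1)\}$, which cuts down the cases considerably. I would split the remaining analysis by the number of parts of $\mu$. For $1$- and $2$-part partitions, Hermite reciprocity reduces the problem to the classical description of $2$-part plethysm positivity, which is handled by the $2$-part generators $(6), (6,6), (8,4), (10,2), (10,8), (13,5), (15,3), (15,9), (17,7), (17,13), (25,23), (45,45)$ listed in $X$. For strict $3$-part partitions, one exhibits a small $3$-part generator $\nu \in X$ such as $(6,6,6)$ or $(8,6,4)$ such that $\mu - \nu$ remains a valid partition with positive plethysm coefficient. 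Justifying the latter for all sufficiently large $d$ is the technically delicate point; a natural route is to combine asymptotic positivity results for $3$-part plethysm coefficients with a finite residual check ruling out the boundary cases not yet covered by $X$.
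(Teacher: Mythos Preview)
Your overall architecture---verify $X \subseteq S$ by computation, finite enumeration up to some degree, then a stabilization/induction for larger $d$---matches the paper's. The serious gap is exactly the step you flag as ``technically delicate'': you propose to show that $\mu - \nu \in S$ by appealing to unspecified ``asymptotic positivity results for 3-part plethysm coefficients'', and this is not a proof.

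The paper sidesteps this difficulty entirely. The key observation is that, by Lemma~\ref{lem:vanishingpleth}, a 3-partition $\la$ of $6d$ can fail to lie in $S$ \emph{only if} $\bar\la \in \{(3,3),(3,1),(2,1),(1,1),(1)\}$. Hence instead of certifying positivity of $\pl{\mu'}{d'}{6}$ after subtracting a generator, it suffices to keep $\bar{\mu'}$ out of this finite bad list---a purely combinatorial condition. Proposition~\ref{pro:generators} then carries out the induction: for $d \geq 18$ a pigeonhole count on the column-length profile $(c_1,c_2,c_3)$ forces $c_1 \geq 6$, $c_2 \geq 10$, or $c_3 \geq 10$, and in each case subtracting the corresponding rectangle $(6)$, $(6,6)$, or $(6,6,6)$ leaves $\mu'$ with $\bar{\mu'}$ still not bad (e.g.\ if $c_2 \geq 10$ then $\mu'_2 \geq 4$). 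The range $15 \leq d \leq 17$ is done by a direct finite check. No separate treatment of 1-, 2-, and 3-part partitions or appeal to Hermite reciprocity is needed; the argument is uniform.

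The logical structure is therefore: (i) every not-bad 3-partition of $6d$ is a sum of elements of $X$ (combinatorics only), (ii) hence lies in $S$ by the semigroup property~\eqref{eq:plethsemigroup} and $X \subseteq S$, and (iii) the bad ones are excluded from $S$ by Lemma~\ref{lem:vanishingpleth}. Your inductive framing reverses this, demanding $\mu' \in S$ at each step, and that is precisely what makes your version hard to close.
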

The proof of Proposition~\ref{pro:listofgenerators} proceeds in several steps.

A direct computation with the \textsc{LiE} software
verifies $\pl \mu d 6 >0$ for all $\mu \in X \setminus \{(45,45)\}$.
The case $d=15$ runs into memory problems when using \textsc{LiE}. Other software such as \textsc{Schur} stops working when $d=8$.
We used the formula \cite[Cor.~4.2.8]{Stu:08} to verify $\pl {(45,45)} {15} 6 >0$.

We call the number of nonzero parts the \emph{length} of a partition.
We use a brute-force computer verification and a direct computation with \textsc{LiE} to show that for $d\leq 26$
every partition $\mu$ of length $\leq 2$ with $\pl \mu d 6 >0$ is a sum of partitions from the set $X$.
The same computation is done for all 3-partitions, but only up to $d \leq 14$.
The following proposition states that these finite computations completely describe all cases.
\begin{proposition}\label{pro:generators}
If $\la$ is a $3$-partition of $6d$, $d \geq 15$,
and $\bar\la \notin\{(3,3),(3,1),(2,1),(1,1),(1)\}$, then $\la$ is a sum of partitions from $X$.
\end{proposition}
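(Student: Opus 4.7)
The plan is to proceed by induction on $d$, with the cases $d \leq 14$ taken as base cases (already established by the direct \textsc{LiE} computation described just before the proposition statement). For the inductive step, given a 3-partition $\la = (\la_1, \la_2, \la_3)$ of $6d$ with $d \geq 15$ and $\bar\la \notin \{(3,3),(3,1),(2,1),(1,1),(1)\}$, the strategy is to subtract one of the three smallest elements of $X$, namely $(6)$, $(6,6)$, or $(6,6,6)$, in such a way that the remainder is still a 3-partition whose tail avoids the excluded set. By the inductive hypothesis (or the base case if the resulting $d'$ satisfies $d' \leq 14$), the remainder is then a sum of partitions from $X$, and adding the subtracted piece back produces the desired decomposition of $\la$.

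First I would check when each subtraction preserves the partition order: $\la - (6)$ is a valid partition iff $\la_1 \geq \la_2 + 6$; $\la - (6,6)$ iff $\la_2 \geq \la_3 + 6$; and $\la - (6,6,6)$ iff $\la_3 \geq 6$. If none of these three conditions holds, then $\la_3 \leq 5$, $\la_2 \leq \la_3 + 5 \leq 10$, and $\la_1 \leq \la_2 + 5 \leq 15$, forcing $|\la| \leq 30$ and contradicting $6d \geq 90$. Hence at least one of the three subtractions produces a valid partition.

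Next I would verify that the new tail also avoids the excluded set. The $(6)$ subtraction leaves $\bar\la$ unchanged and is always safe. For the $(6,6)$ subtraction the new tail is $(\la_2 - 6, \la_3)$; if this lies in the excluded set then $(\la_2, \la_3)$ is forced into one of five explicit pairs, and combining this with $\la_1 < \la_2 + 6$ (the regime where one would prefer $(6,6)$ to $(6)$) bounds $|\la| \leq 26$. An analogous enumeration for the $(6,6,6)$ subtraction, whose new tail is $(\la_2 - 6, \la_3 - 6)$, yields $|\la| \leq 32$. Both bounds contradict $6d \geq 90$, so whenever a subtraction would produce an excluded tail, some other valid subtraction must already be available.

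The main obstacle is the finite enumeration in the previous paragraph: five excluded tails against the two nontrivial subtraction types, giving ten elementary cases to check in order to confirm $|\la| \leq 32$ in each scenario. Once that verification is in place, at least one of the three candidate subtractions is both geometrically valid and produces a non-excluded tail, so the induction closes and the proposition follows.
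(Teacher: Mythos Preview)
Your inductive step is the same idea as the paper's: peel off one of $(6)$, $(6,6)$, $(6,6,6)$ and check that the remainder is still a valid partition with non-excluded tail. The paper phrases this via column counts $c_1=\la_1-\la_2$, $c_2=\la_2-\la_3$, $c_3=\la_3$, demanding $c_1\ge 6$, $c_2\ge 10$, or $c_3\ge 10$ so that after subtraction the remaining $\la'_2$ or $\la'_3$ is at least $4$, which immediately rules out every excluded tail; your row-difference analysis with a small case enumeration reaches the same conclusion.

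The gap is the base case. The $d\le 14$ computation you cite establishes only that every $3$-partition $\mu$ with $a_\mu(d[6])>0$ is a sum of elements of $X$; it does \emph{not} establish that every $3$-partition with $\bar\la\notin\{(3,3),(3,1),(2,1),(1,1),(1)\}$ is such a sum. These are genuinely different statements: for instance $(4,4,4)$ at $d=2$ has tail $(4,4)$ outside the excluded set, yet is not a sum from $X$ since every generator of size $\le 12$ has third part $0$. Your induction starting at $d=15$ can land at $d'\in\{12,13,14\}$ after subtracting $(6,6,6)$, and for those values the statement you need (non-excluded tail $\Rightarrow$ sum from $X$) has not been supplied. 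The paper handles this by running a \emph{separate} computer verification for $d\in\{15,16,17\}$, checking directly that every $3$-partition of $6d$ with non-excluded tail is a sum from $X$, and only then beginning the induction at $d\ge 18$, so the reduced $d'$ never drops below $15$. To repair your argument you would need either that same separate check, or an additional computation confirming the stronger claim for $d'\in\{12,13,14\}$.
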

\begin{proof}
For $15 \leq d \leq 17$ we use a computer calculation to show that we can write every such partition $\la$ as a sum of partitions from $X$.
For $d > 17$ we prove this inductively by showing that we can write every $3$-partition $\la$ of $6d$ with $\bar\la \notin\{(3,3),(3,1),(2,1),(1,1),(1)\}$ as a sum of one of the partitions $(6), (6,6) \text{\ or\ } (6,6,6)$ and a smaller $\la'$ with again $\bar{\la'} \notin\{(3,3),(3,1),(2,1),(1,1),(1)\}$.

Let $c_i$ denote the number of columns in $\la$ with exactly $i$ boxes for $i \in \{1,2,3\}$.
Since we have at least $108$ boxes in $\la$,
the pigeonhole principle implies that at least one must be true: $c_1 \geq 6$, $c_2 \geq 10$ or $c_3 \geq 10$.

In the case $c_1 \geq 6$ we have $\la = \la' + (6)$ with $\la'$ being a sum of elements from $X$ since $\bar{\la'} = \bar{\la}$.
In the case $c_2 \geq 10$ we have $\la = \la' + (6,6)$ with $\la'$ being a sum of elements from $X$ as $\la'_2 \geq 4$.
In the case $c_3 \geq 10$ we have $\la = \la' + (6,6,6)$ with $\la'$ being a sum of elements from $X$ as $\la'_3 \geq 4$.
\end{proof}
This finishes the proof of Proposition~\ref{pro:listofgenerators}.

To prove \eqref{eq:posplethimpliesposchow} it is sufficient (and necessary) to show that $\mult_\mu(\IC[\Ch_m^n]_d)>0$ for all $\mu \in X$,
because a semigroup property analogous to \eqref{eq:plethsemigroup} holds (the same proof applies, e.g.\ \cite[Prop.~21.2.6]{BI:18July25}):
\begin{equation}
\text{If $\mult_\mu(\IC[\Ch_m^n]_d)>0$ and $\mult_\nu(\IC[\Ch_m^n]_{d'})>0$, then $\mult_{\mu+\nu}(\IC[\Ch_m^n]_{d+d'})>0$.}
\end{equation}
If the length of $\mu$ is at most 2, we use the following general result.
\begin{proposition}\label{pro:fewrows}
Let $\mu$ be a 3-partition of length at most 2.
If $\pl \mu d n >0$, then $\mult_{\mu}(\IC[\Ch_m^n]_d)>0$.
\end{proposition}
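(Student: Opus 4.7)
The plan is to reduce to the case $m = 2$, where the fundamental theorem of algebra forces $\Ch_2^n = \IA_2^n$, so the Chow multiplicity coincides with the full plethysm coefficient. The reduction rests on a standard stabilization argument that applies precisely because $\mu$ has length at most two.

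First I would establish the geometric identity $\Ch_m^n \cap \IA_2^n = \Ch_2^n$, where $\IA_2^n \subseteq \IA_m^n$ denotes the subspace of polynomials in $x_1, x_2$ only. For the nontrivial direction, if $f = \ell_1 \cdots \ell_n \in \Ch_m^n$ happens to involve only $x_1$ and $x_2$, then applying the ring homomorphism $\pi : \IC[x_1,\ldots,x_m] \to \IC[x_1,x_2]$ that kills $x_3, \ldots, x_m$ preserves $f$ while exhibiting it as the product $\pi(\ell_1) \cdots \pi(\ell_n) \in \Ch_2^n$.

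Second I would invoke the standard stabilization of multiplicities for partitions of small length, identical in spirit to \cite[Lem.~4.3.2]{ike:12b} (which is the same argument already quoted in the proof of Lemma~\ref{lem:chowupperbound} above, there applied to a different pair of embedded varieties). The outcome is that for any partition $\mu$ with at most two nonzero parts,
\[
\mult_\mu(\IC[\Ch_m^n]_d) \;=\; \mult_\mu(\IC[\Ch_2^n]_d),
\]
where on the right-hand side the multiplicity is taken with respect to the $\GL_2$-action. The proof in the cited lemma extends verbatim once one knows $\Ch_m^n \cap \IA_2^n = \Ch_2^n$: a $\GL_m$-highest-weight vector of two-row weight in $\IC[\Ch_m^n]_d$ is, by the torus action killing the dependence on the coordinates dual to $x_3, \ldots, x_m$, pulled back from $\IC[\Ch_2^n]_d$ along $\pi^*$, and conversely every $\GL_2$-highest-weight vector in $\IC[\Ch_2^n]_d$ is such a pullback.

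Third, for $m = 2$ the fundamental theorem of algebra gives $\Ch_2^n = \IA_2^n$, hence
\[
\mult_\mu(\IC[\Ch_2^n]_d) \;=\; \mult_\mu(\IC[\IA_2^n]_d) \;=\; \pl \mu d n,
\]
which is positive by hypothesis. The only step demanding real care is the stabilization in the second paragraph; once the geometric identity $\Ch_m^n \cap \IA_2^n = \Ch_2^n$ is in hand, this is a direct transcription of a familiar GCT argument and poses no further obstacle.
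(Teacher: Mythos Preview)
Your proposal is correct and follows essentially the same approach as the paper: reduce via a stabilization/inheritance argument (the paper cites \cite[Lem.~4.3.2]{ike:12b} as well) from $\Ch_m^n$ to $\Ch_2^n$, then use the fundamental theorem of algebra to identify $\Ch_2^n = \IA_2^n$ so that the $\GL_2$-multiplicity equals the plethysm coefficient. The only difference is cosmetic: the paper states the inheritance as the implication $\mult_\mu(\IC[\Ch_2^n]_d)>0 \Rightarrow \mult_\mu(\IC[\Ch_m^n]_d)>0$, whereas you assert the stronger equality of multiplicities---both are adequate here.
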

\begin{proof}
We use an inheritance result: If for a 2-partition $\mu$ we have $\mult_{\mu}(\IC[\Ch_2^n]_d)>0$,
and $\nu$ is the 3-partition that arises from $\mu$ by adding a single 0,
then $\mult_{\nu}(\IC[\Ch_3^n]_d)>0$. The proof is completely analogous to other inheritance results, see e.g.\ \cite[Lemma 4.3.2 or Sec.~5.3]{ike:12b}.
Now for 2-partitions $\mu$ we have $\pl \mu d n = \mult_{\mu}(\IC[\Ch_2^n]_d)$, because every homogeneous polynomial in 2 variables decomposes as a product of homogeneous linear polynomials by the fundamental theorem of algebra,
see also e.g.~\cite[Exa.\ 9.1.1.8]{Lan:17}.
This is how the Hermite reciprocity can be proved.
An even simpler argument works if $\mu$ has length 1.
\end{proof}

We finish the proof of \eqref{eq:posplethimpliesposchow} by
using a computer calculation to verify that for all 3-partitions $\mu \in X$ of length 3 we have $\mult_{\mu}(\IC[\Ch_3^6])>0$, see Proposition~\ref{pro:computercalc}.

This finishes the proof of Theorem~\ref{thm:main}(2a).
The proof of Theorem~\ref{thm:main}(2b) is completely analogous as follows. Let $m=4$, $n=7$.

\begin{lemma}\label{lem:vanishingpleth74}
Let $\bar\la := (\la_2,\la_3,\ldots)$ denote $\la$ without its first row.
If $\la$ is an $m$-partition of $dn$ and $\bar\la \in Y$
for $Y := \{(1)$, $(1, 1)$, $(1, 1, 1)$, $(2, 1)$, $(2, 1, 1)$, $(2, 2, 1)$, $(3, 1)$, $(3, 1, 1)$, $(3, 2, 1)$, $(3, 3)$, $(3, 3, 1)$, $(3, 3, 2)$, $(3, 3, 3)$, $(4, 1, 1)$, $(4, 3, 3)$, $(5, 1, 1)$, $(5, 5, 5)$, $(6, 1, 1) \}$, then $\pl \la d n=0$.
\end{lemma}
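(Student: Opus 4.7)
The plan is to follow the strategy of Lemma~\ref{lem:vanishingpleth} essentially verbatim, simply adapting it from $n=6$ to $n=7$. The underlying principle is that once $\bar\la$ is fixed and $n$ is fixed, the plethysm coefficient $\pl\la d n$ is governed, as a function of the single remaining free parameter $\la_1$ (equivalently $d = |\la|/n$), by a stability/bounding phenomenon, so proving vanishing for all admissible $d$ reduces to a finite computer verification once a uniform upper bound on $\la_1$ has been established.

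First I would sort the partitions in $Y$ according to whether they are covered directly by \cite[Thm.~1.10(a)]{IP:17}. The small and hook-like cases $(1), (1,1), (1,1,1), (2,1), (2,1,1), (2,2,1), (3,1), (3,1,1), (3,2,1), (4,1,1), (5,1,1), (6,1,1)$ should fall under its hypotheses without modification; for these, no additional work is required beyond citing the theorem. Second, for the remaining cases $(3,3), (3,3,1), (3,3,2), (3,3,3), (4,3,3), (5,5,5)$, I would run the same finite calculation underlying that theorem --- either a direct evaluation of $\pl \la d n$ in \textsc{LiE} or \textsc{Schur} for each of the finitely many relevant values of $d$, or, equivalently, an explicit enumeration of $\GL$-highest weight vectors of weight $\la$ in $\IC[\IA_m^n]_d$ together with a sign-reversing involution on them. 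This is exactly how the proof of Lemma~\ref{lem:vanishingpleth} disposes of the extra case $\bar\la = (3,3)$ for $n=6$, so the same extension strategy applies here.

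The main obstacle is purely bookkeeping: for each new $\bar\la \in Y$ one must produce, before running any finite check, a uniform bound on $\la_1$ above which the vanishing is automatic, so that the finite verification genuinely covers every partition $\la$ with that second-and-later row shape. This is the stability ingredient that the method of \cite[Thm.~1.10(a)]{IP:17} supplies; the list $Y$ has clearly been tailored to the cases where such a bound exists and is small enough to make the resulting computation feasible. No new combinatorial input beyond \cite{IP:17} is required, and the proof can legitimately be summarized as ``a finite computer calculation following \cite[Thm.~1.10(a)]{IP:17}''.
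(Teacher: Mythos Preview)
Your proposal is correct and takes essentially the same approach as the paper, whose entire proof is the single line ``This is proven exactly like Lemma~\ref{lem:vanishingpleth}.'' You have simply unpacked what that means: invoke \cite[Thm.~1.10(a)]{IP:17} for the $\bar\la$ it already covers and rerun its underlying finite computation for the remaining shapes in $Y$.
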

\begin{proof}
    This is proven exactly like Lemma~\ref{lem:vanishingpleth}.
\end{proof}

The semigroup of $4$-partitions $\la$ that have $\pl \la d 7 > 0$ has 948 generators, listed in Proposition~\ref{pro:listofgenerators74}.
They form a set that we call $X$.

We again use a direct computation with the \textsc{LiE} software to
verify $\pl \mu d 7 >0$ for all \mbox{$\mu \in X \setminus \{(49,49), (24,24,23,23)\}$}.
For both the remaining partitions $\mu \in \{(49, 49), (24,24,23,23)\}$ we prove $\mult_\mu(\IC[\Ch_4^7]_d)>0$ using our computer calculations which also implies $\pl \mu d 7 > 0$.

To prove those are all the generators we use the following proposition which is proved completely analogously to Proposition~\ref{pro:generators}.
\begin{proposition}\label{pro:generators74}
If $\la$ is a $4$-partition of $7d$, $d \geq 14$,
and $\bar\la \notin Y$, then $\la$ is a sum of partitions from $X$.
\end{proposition}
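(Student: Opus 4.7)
The plan is to mirror the proof of Proposition~\ref{pro:generators} almost verbatim, adjusting only for the additional fourth row and the substantially richer forbidden set $Y$. The argument combines a small computer verification in a base range of $d$ with an inductive step in which one subtracts off a basic rectangle of $7$-columns.

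First, a brute-force enumeration (analogous to the $15 \le d \le 17$ computation in the three-row case) handles an initial range $14 \le d \le d_0$ for some small constant $d_0$: one lists all $4$-partitions $\la$ of $7d$ with $\bar\la \notin Y$ and exhibits for each one a decomposition as a sum of elements of $X$.

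For $d > d_0$, one inducts by writing $\la = \la' + \mu$ with $\mu \in \{(7), (7,7), (7,7,7), (7,7,7,7)\}$ and $\la'$ a $4$-partition of $7(d-j)$ (where $\mu = (7^j)$) satisfying $\bar{\la'} \notin Y$, so that the inductive hypothesis applies. Each of these four rectangles lies in $X$, since the corresponding $\bar\mu \in \{\emptyset, (7), (7,7), (7,7,7)\}$ is not in $Y$. Letting $c_i$ denote the number of columns of $\la$ of length exactly $i$, one has $c_1 + 2c_2 + 3c_3 + 4c_4 = 7d \ge 98$, and one fixes thresholds $t_1, t_2, t_3, t_4$ so that by pigeonhole at least one $c_i \ge t_i$. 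The subtraction rule is: if $c_1 \ge 7$, subtract $(7)$, which leaves $\bar{\la'} = \bar\la \notin Y$; and if $c_i \ge t_i$ for some $i \in \{2,3,4\}$ with $t_i$ calibrated so that the resulting $\la'_2 \ge 7$, then $\bar{\la'}_1 \ge 7$, which forces $\bar{\la'} \notin Y$ by the observation below.

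The main obstacle is the bookkeeping: $Y$ is considerably larger than the three-row forbidden set $\{(3,3),(3,1),(2,1),(1,1),(1)\}$, containing $18$ partitions of up to three nonzero rows such as $(5,5,5)$, $(6,1,1)$, and $(4,3,3)$, so preserving $\bar{\la'} \notin Y$ after subtraction is a priori more delicate. The crucial saving observation is that every partition in $Y$ has first part at most $6$, so it suffices in every case to ensure $\la'_2 \ge 7$; this reduces the inductive step to a single uniform pigeonhole on the column counts, exactly matching the structure of the three-row argument. The only real work is calibrating the thresholds $t_i$ (and hence $d_0$) so that the base-range computer enumeration and the inductive step meet, which is a finite bookkeeping task rather than a conceptual difficulty.
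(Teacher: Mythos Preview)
Your overall strategy is correct and matches the paper's intent, but there is a concrete error in the inductive step. You assert that each of the rectangles $(7)$, $(7,7)$, $(7,7,7)$, $(7,7,7,7)$ lies in $X$, justified by the observation that $\bar\mu \notin Y$. This reasoning is invalid: Lemma~\ref{lem:vanishingpleth74} only says $\bar\la \in Y \Rightarrow a_\la(d[7])=0$; the converse direction (which is essentially what Proposition~\ref{pro:generators74} establishes) holds only for $d \geq 14$, whereas your rectangles correspond to $d \leq 4$. And in fact $(7,7)$ is \emph{not} in $X$, nor is it in the semigroup generated by $X$: from the classical decomposition $h_2[h_n] = \sum_{k \text{ even}} s_{(2n-k,k)}$ one has $a_{(7,7)}(2[7]) = 0$. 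The only element of $X$ of size $7$ is $(7)$, and $(7)+(7)=(14,0)$, so $(7,7)$ cannot be written as a sum of generators. The same issue arises for $(7,7,7)$ and $(7,7,7,7)$, neither of which appears in $X$.

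The repair is easy once the problem is identified. Inspecting $X$ one finds that $(7)$, $(14,14)$, $(14,14,14)$, and $(14,14,14,14)$ all lie in $X$. Replacing your four rectangles by these, the same pigeonhole on the column counts $c_i$ works with larger thresholds (e.g.\ $c_1 \geq 7$, or $c_i \geq 21$ for $i \in \{2,3,4\}$, so that after subtracting $(14^i)$ one still has $\la'_2 \geq 7$ and hence $\bar{\la'} \notin Y$); this pushes $d_0$ up but keeps the base-range verification finite. Alternatively one can subtract non-rectangular small generators such as $(8,6)$, $(9,6,6)$, $(10,6,6,6)$ depending on the column profile. Either way, the structure of your argument survives; only the specific building blocks need to be chosen from $X$ rather than assumed to lie there.
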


For the next finite case ($n=7$, $k=d=8$, $m=5$) we reached the computational limit of our implementation.
Here we were able to find $5016$ generating partitions of the semigroup of $4$-partitions $\mu$ that have $\pl \mu d 7 > 0$. Unfortunately these do not generate everything excluding the exceptions yet.
We were able to verify for $5000$ generating partitions $\mu$ that $\mult_\mu(\IC[\Ch_m^n]_d) > 0$.
For the remaining ones, we used up to 200 GB of RAM, but this was not sufficient.

\subsection*{Some occurrence obstructions}
As we degenerate the parameter settings and let $n$ get closer to $m$, multiplicity obstructions tend to become occurrence obstructions.
More precisely,
for $m=3$ and values of $n<6$, and for $(m,n)=(4,6)$, some multiplicity obstructions are actually also occurrence obstructions, as the following proposition shows.
\begin{proposition}\label{pro:occobsdoexist}
The following partitions give occurrence obstructions that show $\Pow_{m,d}^n \not\subseteq \Ch_m^n$.
\end{proposition}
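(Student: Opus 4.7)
The plan is to exhibit, for each parameter setting listed in the proposition, an explicit partition $\la$ and verify case by case the two defining properties of an occurrence obstruction: $\mult_\la(\IC[\Pow_{m,d}^n]_d) > 0$ and $\mult_\la(\IC[\Ch_m^n]_d) = 0$. The natural candidates are the partitions $\la = (n^2-2, n, 2)$ (padded with trailing zeros when $m > 3$) from Theorem~\ref{thm:main}, which are already known to be multiplicity obstructions; the claim of the proposition is that in these degenerate parameter regimes the multiplicity obstruction actually upgrades to an occurrence obstruction, and there may be further partitions of similar shape that work too.

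The positivity half is clean: Proposition~\ref{pro:lowdegreepowersums}, applied with $k = d$, gives $\mult_\la(\IC[\Pow_{m,d}^n]_d) = \pl \la d n$, so it suffices to verify $\pl \la d n > 0$ with \textsc{LiE}. For the vanishing, in the cases $(m,n) \in \{(3,3),(3,4),(3,5),(4,6)\}$ we have $n \geq m$, so Lemma~\ref{lem:chowupperbound} yields $\mult_\la(\IC[\Ch_m^n]_d) \leq \pl \la n d$, and it then suffices to check $\pl \la n d = 0$, once again a direct plethysm computation. Operationally, the verification is: enumerate $m$-partitions $\la$ of $dn$, compute both $\pl \la d n$ and $\pl \la n d$, and record those $\la$ for which the first is positive and the second vanishes. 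Such a $\la$ is then witnessed to be an occurrence obstruction essentially by strict failure of Foulkes' inequality $\pl \la n d \leq \pl \la d n$ in the most extreme way (right-hand side positive, left-hand side zero).

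The main obstacle is the case $(m,n) = (3,2)$ (if it is among the listed settings), where $n < m$ and Lemma~\ref{lem:chowupperbound} is not applicable, so the vanishing of $\mult_\la(\IC[\Ch_m^n]_d)$ has to be certified directly. Here the approach mirrors the computer verification underlying Proposition~\ref{pro:smallcomputercalc} and \cite[Sec.~6]{BIP:19}: one computes a spanning set of highest weight vectors of type $\la$ in $\IC[\IA_m^n]_d$ and evaluates each on sufficiently many generic points of $\Ch_m^n$, namely products $\ell_1 \cdots \ell_n$ of random homogeneous linear forms, checking that every such vector restricts to the zero function. Since all the relevant dimensions are tiny in the listed cases, the underlying linear algebra is comfortably within reach, and the proposition reduces to tabulating the outputs of these two finite computations.
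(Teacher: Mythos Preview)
Your verification strategy---Proposition~\ref{pro:lowdegreepowersums} for positivity, Lemma~\ref{lem:chowupperbound} for vanishing, both reduced to \textsc{LiE} plethysm checks---is exactly the paper's three-line proof. Two remarks. First, your guess that the partitions are all $(n^2-2,n,2)$ with $d=n+1$ holds only for the rows $(m,n)=(3,2)$ and $(3,3)$; the actual table lists $(11,9,8)$ at $(m,n,d)=(3,4,7)$, $(12,9,9)$ at $(3,5,6)$, and $(14,14,13,13)$ at $(4,6,9)$, so no search or guessing is needed once the table is in hand---one simply feeds each row to \textsc{LiE}. Second, you correctly flag that Lemma~\ref{lem:chowupperbound} is stated under the hypothesis $n\geq m$, which fails for the row $(m,n)=(3,2)$; the paper simply applies the lemma uniformly without comment, so your proposed direct highest-weight-vector computation is the more scrupulous route. (For that particular row one can also argue by hand: the unique $\{(2,2,2)\}$ line in $\IC[\IA_3^2]_3$ is spanned by the determinant of the symmetric matrix representing the quadratic form, which vanishes on all rank~$\leq 2$ forms and hence identically on $\Ch_3^2$.)
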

\begin{center}
\begin{tabular}{c|c|c|c|c|c}
$m$ & $n$ & $\la$ & $d$ & $\pl \la d n$ & $\pl \la n d$\\\hline
3 & 2 & $(2,2,2)$ & 3 & 1 & 0\\
3 & 3 & $(7,3,2)$ & 4 & 1 & 0\\
3 & 4 & $(11,9,8)$ & 7 & 1 & 0\\
3 & 5 & $(12,9,9)$ & 6 & 1 & 0\\
4 & 6 & $(14,14,13,13)$ & 9 & 11 & 0
\end{tabular}
\end{center}
\begin{proof}
The plethysm coefficient computations were performed with the \textsc{LiE} software.
Lemma~\ref{lem:chowupperbound} implies that $\mult_\la(\IC[\Ch_m^n]_d) \leq \pl \la n d = 0$.
Proposition~\ref{pro:lowdegreepowersums} implies $\mult_\la(\IC[\Pow_{m,d}^n]_d)>0$.
\end{proof}

See \cite[Prop.~4]{bhi:15} for additional occurrence obstructions in the case $n=3$.

\section{Plethysm inequalities}
We are interested in the plethysm coefficients $a_\la(d[m])$ for certain values of $\la$ and $d,m$. Here we compute such values for infinite families of parameters and in particular, prove Theorem~\ref{thm:plethineq}.

We will work over the ring of symmetric functions $\Lambda$, defined as the ring of formal power series (in finitely or infinitely many variables) which are invariant under any transposition of the variables. For the definitions and main identities see e.g.~\cite{sta:99}. Plethysms of symmetric functions are described also there in Appendix 2 of Chapter 7, here we review the necessary definitions. 

The characters of the irreducible $GL_r$--module $W_\lambda$ are the Schur functions $s_\lambda(x_1,\ldots,x_r)$, where $x_1,\ldots,x_r$ correspond to the eigenvalues of the conjugacy class representative from $GL_r$. Their combinatorial interpretation is as the generating function over all semi-standard Young tableaux with entries $1,...,r$, but we will use certain determinantal formulas as described below. The complete homogeneous symmetric functions $h_\ell$ are defined as $s_{(\ell)}$ and are the characters of the $\Sym^\ell$ module. The $\Sym^d(\Sym^n(\mathbb{C}^r))$ module is obtained as the composition of the two representations. The image in $\Sym^n(\mathbb{C}^r)$ of a diagonal matrix from $GL_r$ with entries (i.e. eigenvalues) $x_1,\ldots,x_r$ on the diagonal has eigenvalues all the $N:=\binom{n+r-1}{r-1}$ degree $n$ monomials in $x_1,\ldots,x_r$. Hence, the character of the representation $\Sym^d(\Sym^n(\mathbb{C}^r))$ of $GL_{N}$ can be obtained by evaluating the character $h_d$ of $\Sym^d$ at the monomials, i.e. the eigenvalues above.  This gives us the definition of the \emph{symmetric function plethysm} $h_d[h_n(x_1,\ldots,x_r)]$, that is, the evaluation of $h_d$ on the variables consisting of all degree $n$ monomials, i.e.
$$h_d[h_n(x_1,\ldots,x_r)] := h_d(x_1^n,x_1^{n-1}x_2,x_1^{n-1}x_3,\ldots,x_1^{\alpha_1}\cdots x_r^{\alpha_r},\ldots),$$
where $\alpha=(\alpha_1,\ldots,\alpha_r)$ runs over all compositions of $n$. 

In general, knowing the character of a representation contains all the information to obtain the multiplicities of the irreducible decomposition via the inner product of characters. As the Schur functions $s_\lambda$ are the irreducible characters for $GL_r$, the inner product is equivalent to an inner product in the ring $\Lambda$, where $\{ s_{\lambda}\}_{\lambda}$ is an orthonormal basis. 
In other words, the multiplicity of the Weyl module of weight $\lambda$ is given by the multiplicity of the Schur function $s_\lambda$ in the expansion of $h_d[h_m]$. We will now compute this via the inner product in the ring $\Lambda$ of symmetric functions, using some basic properties of this ring as found in~\cite{sta:99} and~\cite{macdonald:95}.

We have that $a_\la(d[n])$ is the multiplicity of $\{\lambda\}$ in $\Sym^d\Sym^n$, translated into characters this is also the coefficient at $s_\lambda$ of the expansion of $h_d[h_n]$ in Schur function. By their orthonormalitiy, this is the same as
\begin{equation}\label{eq:a_inner}
a_\la(d[n])=\langle s_\la,h_d[h_n]\rangle
\end{equation}
We now invoke various symmetric function identities in order to compute the above inner product.
The Schur functions $s_\la$ can be expressed via the Jacobi-Trudi formula (see again~\cite[Ch. 7]{sta:99}) as a signed sums of homogeneous symmetric functions, namely
\begin{equation}\label{eq:Jacobi-Trudi}
s_\la = \det\left[ h_{\la_i-i+j} \right]_{i,j=1}^{\ell(\la)},
\end{equation}
the inner product \eqref{eq:a_inner} can then be computed via a signed sum of inner products of the form $\langle h_\mu,h_d[h_n]\rangle$.
We remark that the orthogonal dual basis for the complete homogeneous symmetric functions  is the monomial symmetric functions, i.e. 
$$\langle h_\mu,m_\nu\rangle = \delta_{\mu,\nu},$$ so we need to express $h_d[h_n]$ in terms of the monomial symmetric functions, defined by 
$$m_\nu(x_1,\ldots,x_r):= \sum_{\sigma \in S_r(\nu)} x_1^{\nu_{\sigma(1)}}x_2^{\nu_{\sigma(2)}}\cdots x_r^{\nu_{\sigma(r)}},$$
where the sum ranges over all distinct permutations of $(\nu_1,\nu_2,\ldots,\nu_r)$ and $\nu$ is completed with 0s to the length $r$. 
Since he monomial symmetric functions form a basis for $\Lambda$, we can expand any symmetric function in it uniquely. 
Let
$$h_d[h_m] = \sum_{\nu} c_\nu m_\nu$$, for some constants $c_\nu$ (i.e. the coefficients in this expansion). Since each $m_\nu$ has a unique leading monomial (in the lexicographic order) $x_1^{\nu_1}x_2^{\nu_2}\cdots$, finding $c_\nu$ is equivalent to extracting the coefficient at the single monomial $x_1^{\nu_1}\cdots$ from the monomial expansion of the corresponding symmetric function as a polynomial, i.e.
$$c_\nu = (x_1^{\nu_1} x_2^{\nu_2}\ldots) @ h_d[h_n(x_1,x_2,\ldots)],$$ 
where to avoid confusion with the plethysm notation we denote by $(X)@f$ the coefficient of the monomial $X$ in the monomial expansion of the polynomial $f$. 

Let $\nu$ be a partition of length $\ell$. By the above remarks we need to consider only the truncated expansion $h_d[h_n(x_1,\ldots,x_\ell)]$ as only the monomials in $x_1,\ldots,x_\ell$ will be relevant.

We have the following formula for the $h$'s, see e.g. \cite{sta:99}:

\begin{equation*}
h_N(x_1,\ldots,x_r) = \sum_{(b): b_1+b_2+\cdots = N} x_1^{b_1}x_2^{b_2}\cdots,
\end{equation*}
where $(b)=(b_1,b_2,\ldots,b_r)$ runs over all (weak) compositions of $N$.
Hence, assuming some total ordering for compositions $\alpha^i $ of $n$, we have
\begin{equation*}
h_d[h_n(x_1,\ldots,x_r)] = h_d[ \ldots, x^{\al^i},\ldots ] = \sum_{(b): |b|=d}  x^{\sum_i b_i \alpha^i} 
\end{equation*}
Thus for the coefficients $c_\nu$ we have:
\begin{equation}\label{eq:mon_coef}
c_\nu(d,n) := (x^\nu)@ h_d[h_n] = \langle h_\nu,h_d[h_n]\rangle = \# \{ (b): |b| = d, \sum_i b_i \alpha^i = \nu\}
\end{equation}
By the Jacobi-Trudi identity \eqref{eq:Jacobi-Trudi} this gives a formula for computing the plethysm coefficients as
\begin{equation}\label{eq:pleth_det}
a_\la(d[n])= \langle \det\left[ h_{\la_i-i+j} \right]_{i,j=1}^{\ell(\la)}, h_d[h_n] \rangle = \sum_{\pi \in S_{\ell(\la)} } sgn(\pi) c_{ \la +\pi - (1,2,\ldots)}(d,n),
\end{equation}
where the permutations $\pi$ are viewed as vectors with entries $1,2,\ldots,\ell(\la)$

We now turn towards the proof of Theorem~\ref{thm:plethineq} and consider $s_\la$ for $\la = (\la_1,\la_2,2)$ for some $k\geq 2$. By the Jacobi-Trudi identity~\eqref{eq:pleth_det} we need to compute only $c_\nu$ for $\nu$ having at most 3 parts, with $\nu_3 = 0,1,2$. 
Let $p_r(a,b)$ denote the number of partitions of $r$ which fit inside an $a\times b$ rectangle, it's generating function is the $q$-binomial coefficient (see~\cite{sta:11}):
$$\binom{a+b}{a}_q = \frac{ (1-q) \cdots (1-q^{a+b})}{(1-q)\cdots(1-q^a) (1-q)\cdots (1-q^b)}= \sum_{r=0}^{ab} p_r(a,b)q^r$$

\begin{proposition}\label{prop:c_coeffs}
We have the following generating function identities for $c_\nu(d,n)$, where $\ell(\nu)\leq 3$ and $\nu_3 \leq 2$:

$c_{(L,k,2)} = (q^k)@ \left(  \binom{n}{1}_q\binom{n+d-2}{n}_q +  \binom{n-1}{1}_q \binom{n+d-1}{n}_q +q \binom{n}{2}_q \binom{n+d-2}{n}_q\right) $

$c_{(L,k,1)} = (q^k)@\binom{n}{1}_q\binom{n+d-1}{n}_q$

$c_{(L,k,0)} =(q^k)@ \binom{n+d}{n}_q =p_k(n,d)$
 
\end{proposition}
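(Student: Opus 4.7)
The approach is a direct combinatorial count, starting from
\[
c_\nu(d,n) = \#\bigl\{(b) : |b|=d,\ \textstyle\sum_i b_i \alpha^i = \nu\bigr\}
\]
of (\ref{eq:mon_coef}), where $\alpha^i$ ranges over all compositions of $n$ into at most three parts. The third coordinate of each $\alpha^i$ contributes $b_i\alpha^i_3$ to $\nu_3$, and since $\nu_3\in\{0,1,2\}$ I would split into cases according to which compositions with $\alpha^i_3>0$ appear and with what multiplicities.

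For $\nu_3=0$, only compositions of the shape $(n-a,a,0)$, $a\in\{0,\dots,n\}$, appear; a multiset of size $d$ of these with second coordinate summing to $k$ is precisely a partition of $k$ in an $n\times d$ box, giving $c_{(L,k,0)}=p_k(n,d)$ with generating function $\binom{n+d}{n}_q$. For $\nu_3=1$, nonnegativity forces exactly one composition $(n-1-j,j,1)$, $j\in\{0,\dots,n-1\}$, to be used with multiplicity one, while the remaining $d-1$ compositions have $\alpha^i_3=0$; summing $q^j$ over $j$ yields $\binom{n}{1}_q$ and the remainder contributes $\binom{n+d-1}{n}_q$.

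For $\nu_3=2$ there are exactly three subcases: (i) one composition $(n-2-j,j,2)$, $j\in\{0,\dots,n-2\}$, used once; (ii) one composition $(n-1-j,j,1)$, $j\in\{0,\dots,n-1\}$, used twice; (iii) two distinct compositions $(n-1-j,j,1)$ and $(n-1-j',j',1)$ with $0\le j<j'\le n-1$, each used once. In each subcase the remaining compositions have $\alpha^i_3=0$ and contribute $\binom{n+d-1}{n}_q$ in (i) or $\binom{n+d-2}{n}_q$ in (ii) and (iii). Subcase (i) yields the middle term $\binom{n-1}{1}_q\binom{n+d-1}{n}_q$ immediately. Subcase (iii) produces $\sum_{0\le j<j'\le n-1}q^{j+j'}$, which via the bijection $(j,j')\leftrightarrow(\lambda_2,\lambda_1+1)$ with a partition $\lambda$ fitting inside an $(n-2)\times 2$ box equals $q\binom{n}{2}_q$, giving the third term.

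The one technical step is subcase (ii), which contributes $\bigl(\sum_{j=0}^{n-1}q^{2j}\bigr)\binom{n+d-2}{n}_q$ and must be identified with the first term $\binom{n}{1}_q\binom{n+d-2}{n}_q$ of the stated formula. The cleanest route I see is to merge (ii) and (iii) into a single count over unordered pairs (multisets of size two) from $\{0,\dots,n-1\}$ weighted by the sum of entries, whose generating function is $\binom{n+1}{2}_q$, and then invoke the $q$-Pascal identity $\binom{n+1}{2}_q=\binom{n}{1}_q+q^2\binom{n}{2}_q$ to split it back into the desired pieces. This $q$-binomial bookkeeping (reconciling $q^{2j}$-contributions coming from the ``used twice'' subcase against the $q^{j+j'}$-contributions of the ``two distinct'' subcase) is the only real obstacle; every other step is a direct enumeration of the multiset of compositions summing to $\nu$.
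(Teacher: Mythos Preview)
Your case analysis into (i), (ii), (iii) is exactly the paper's approach, and your computation of each subcase is correct.  In particular you are right that subcase~(ii) contributes $\sum_{j=0}^{n-1}q^{2j}$ (times $\binom{n+d-2}{n}_q$), not $\binom{n}{1}_q$.  However, your proposed reconciliation cannot work: merging (ii) and (iii) gives $\binom{n+1}{2}_q$, and the $q$-Pascal identity you cite splits this as $\binom{n}{1}_q+q^{2}\binom{n}{2}_q$, whereas the stated formula has $\binom{n}{1}_q+q\,\binom{n}{2}_q$.  These are genuinely different polynomials, so no bookkeeping will recover the first and third terms as printed.

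The reason is that the $\nu_3=2$ line of the proposition is misstated.  A direct check at $n=d=2$ already shows it: one finds $c_{(2,0,2)}=c_{(1,1,2)}=c_{(0,2,2)}=2$, while the printed generating function evaluates to $2+3q+q^2$.  For $n=3$, $d=2$ the true sequence is $2,3,4,3,2$ versus the formula's $2,4,4,3,1$.  The correct expression replaces $\binom{n}{1}_q$ by $\sum_{j=0}^{n-1}q^{2j}$, or equivalently replaces the combined factor $\binom{n}{1}_q+q\binom{n}{2}_q$ by $\binom{n+1}{2}_q$ --- exactly what your merge produced.  The paper's own proof slips at the same spot: it treats the ``$b_i=2$'' subcase as reducing to the $(L,k,1)$ count, which forgets that the second coordinate picks up $2r$ rather than $r$.

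So your argument is correct and structurally identical to the paper's; the ``obstacle'' you flagged is not a bookkeeping subtlety but an error in the statement you were asked to prove.
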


\begin{proof}
By formula \eqref{eq:mon_coef}, we have the following
$$c_{(L,k,0)} = \# \{ (b): |b|=d, \sum b_i \alpha^i = (L,k)\}$$

Hence, the only $\alpha^i$ involved are of the form $\alpha^i=(n-a_i,a_i)$, and after renumerating, we can assume $a_i=i$. So we are counting compositions $b$ of $d$, s.t. $\sum_i b_i i =k$ for $i=0\ldots n$. This is exactly the same as specifying an integer partition $\gamma$ of $k$ by the number of its parts, i.e. $\gamma=(0^{b_0},1^{b_1},\ldots,n^{b_n})$, such that $b_0+\cdots+b_n=d$. These restrictions are equivalent to $\gamma$ fitting inside an $n \times d$ box, and the number of such $\gamma$ is exactly $p_k(d,n)$.

Next, when the second part in $\nu$ is 1, we have the following. Since $\nu_3=1$, the condition
$\sum_i b_i \alpha^i_3 = \nu_3=1$ implies that there is a single $i$, such that $b_i \alpha^i_3 \neq 0$, and in fact must be 1, so $b_i=\alpha^i_3 =1$. After renumeration, we can assume that $i=0$ (for separation purposes) with $b_0=\alpha^0_3=1$ and $\alpha^0=(n-1-r,r,1)$ for $r=0\ldots n-1$. For the remaining $b$'s and $\alpha$s we have 
$$\sum_i b_i \alpha^i = (L,k) - (n-1-r,r) = (L+r-n+1,k-r)$$
 with $b_1+\cdots = d-1$, and $|\alpha^i|=n$. This number is now, by the previous case, $(q^{k-r})@\binom{n+d-1}{n}_q$. The total number is thus
 
$$c_{(L,k,1)} = \sum_{r=0}^{n-1} (q^{k-r})@ \binom{n+d-1}{n}_q = (q^{k})@ \sum_{r=0}^{n-1}q^r \binom{n+d-1}{n}_q = (q^k)@ \binom{n}{1}_q\binom{n+d-1}{n}_q.$$

Finally, when $\nu_3=2$ we have the following two distinct options:

Either there is an index $i$, such that $b_i \alpha^i_3 =\nu_3=2$, or $i<j$ with $b_i\alpha^i_3=1$ and $b_j\alpha^j_3=1$.

In the first case we have $b_i \alpha^3_i=2$ -- either $b_i=2$, in which case $\alpha^i = (n-1-r,1)$ and the rest of the $b$'s sum to $d-b_i=d-2$, which brings us to the previous case (of $(L,k,1)$), so the number is $$(q^k)@ \binom{n}{1}_q\binom{n+d-2}{n}_q.$$

Otherwise, $b_i=1$ and $\alpha^i_3=2$. As in the case $\nu_3=1$, let $i=0$ and $\alpha^0 = (n-2-r,r,2)$, $b_0=1$, so we are looking for the number of $(b_1,\ldots)$ with $|b|=d-1$ and such that 
$\sum_i b_i \alpha^i = (L-n+r+2,k-r)$ for all possible $r=0,\ldots,n-2$. 
So this is $$\sum_{r=0}^{n-2} (q^{k-r})@ \binom{n +d-1}{n}_q =(q^k)@ \binom{n-1}{1}_q \binom{n+d-1}{n}_q$$

Last, when there are $i<j$ with $b_i \alpha^i_3=1$ and $b_j \alpha^j_3 =1$, let $i=-1$, $j-0$ (again, renumerating for simplicity), with $\alpha^{-1} = (n -1 -r_1,r_1,1)$ and $\alpha^0 = (n-1-r_2,r_2,1)$ with $0\leq r_1<r_2\leq n-1$. We thus have for the remaining $\alpha$ and $b$s that $b_1+\cdots = d-2$, and $\sum_i b_i \alpha^i = (L - (n-1-r_1) - (n-1-r_2), k-r_1-r_2)$. By the first case, this is 

$(q^{k-r_1-r_2})@ \binom{n+d-2}{n}_q$. Summing over all possible $0<r_1<r_2\leq n-1$, we have
\begin{eqnarray*}
(q^k)\at \sum_{0 \leq r_1 < r_2 \leq n-1} q^{r_1+r_2} \binom{n+d-2}{n}_q &=& (q^k)@ q \sum_{0 \leq r_1 \leq r_2-1 \leq n-2} q^{r_1+(r_2-1)} \binom{n+d-2}{n}_q\\
&=& (q^k)@ q \binom{n-2+2}{2}_q \binom{n+d-2}{n}_q,
\end{eqnarray*}
where the last identity follows from interpreting $(r_2-1,r_1)$ as a partition in the $2 \times n-2$ rectangle. Summing over all the cases considered here, we get the desired total coefficient.
\end{proof}

\begin{proposition}\label{prop:pleth_3row}
The plethysm coefficient for $\la = (L,r,2)$ is equal to
\begin{align*}
a_\la(d[n]) =(q^{r+1})\at \bigg(  \binom{n+d-2}{n}_q \frac{q(1-q^n)(1-q^2+q-q^n)}{1-q^2} \\
+ \binom{n+d-1}{n}_q (q^{n+1}-1)  + (1-q)\binom{n+d}{n}_q \bigg)
\end{align*}
\end{proposition}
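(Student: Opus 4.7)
The plan is to plug the Jacobi-Trudi determinant for $s_{(L,r,2)}$ into \eqref{eq:pleth_det} and then substitute the three closed forms for $c_{(*,k,j)}$ from Proposition~\ref{prop:c_coeffs} with $j\in\{0,1,2\}$. Writing $\pi=(\pi(1),\pi(2),\pi(3))\in\aS_3$ and using $\la=(L,r,2)$, the six summands $\la+\pi-(1,2,3)$ are
\[
(L,r,2),\ (L{+}1,r{-}1,2),\ (L,r{+}1,1),\ (L{+}1,r{+}1,0),\ (L{+}2,r{-}1,1),\ (L{+}2,r,0),
\]
with signs $+,-,-,+,+,-$. This gives $a_{(L,r,2)}(d[n])$ as a signed sum of six $c_\nu(d,n)$ values. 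A key observation is that the formulas in Proposition~\ref{prop:c_coeffs} depend only on $k=\nu_2$ and $j=\nu_3$, not on $L=\nu_1$, so the signed sum reduces to an expression purely in $r$, $n$, $d$.

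Next I would absorb everything into a single coefficient extraction. Using $(q^{k-1})@f = (q^k)@(qf)$, I rewrite each $(q^{r-1})@(\cdot)$ and $(q^r)@(\cdot)$ as $(q^{r+1})@$ applied to $q^2$ or $q$ times the relevant generating function. Set $f_2:=\binom{n}{1}_q A+\binom{n-1}{1}_q B+q\binom{n}{2}_q A$, $f_1:=\binom{n}{1}_q B$, and $f_0:=C$, where $A,B,C$ denote $\binom{n+d-2}{n}_q$, $\binom{n+d-1}{n}_q$, $\binom{n+d}{n}_q$ respectively. Then the signed sum collects cleanly to
\[
a_{(L,r,2)}(d[n]) = (q^{r+1})\at (1-q)\bigl(qf_2 + f_0 - (1+q)f_1\bigr).
\]

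The last step is a $q$-binomial simplification of $qf_2-(1+q)f_1$. The coefficient of $B$ becomes $q\binom{n-1}{1}_q - (1+q)\binom{n}{1}_q$, and a short computation using $\binom{n}{1}_q = (1-q^n)/(1-q)$ collapses this to $(q^{n+1}-1)/(1-q)$. Multiplying by $(1-q)$ then yields the $(q^{n+1}-1)B$ term in the statement. The coefficient of $A$ is $q\binom{n}{1}_q + q^2\binom{n}{2}_q$; after multiplying by $(1-q)$ and factoring out $q(1-q^n)$, one obtains $q(1-q^n)[(1-q^2)+q(1-q^{n-1})]/(1-q^2) = q(1-q^n)(1-q^2+q-q^n)/(1-q^2)$, which matches the stated $A$-coefficient. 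The $C$-coefficient is simply $(1-q)$, as stated.

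The proof is essentially bookkeeping: no nontrivial combinatorial identity is required beyond Jacobi-Trudi plus Proposition~\ref{prop:c_coeffs}, and the main obstacle is just the careful tracking of $q$-shifts so that every term ends up under the same $(q^{r+1})@$ extractor, together with the elementary but error-prone algebraic manipulation that factors $q(1-q^n)(1-q^2+q-q^n)/(1-q^2)$ out of the $A$-coefficient. Once those two pieces are verified, the proposition follows directly.
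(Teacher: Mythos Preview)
Your proposal is correct and follows essentially the same route as the paper's own proof: expand $s_{(L,r,2)}$ via Jacobi--Trudi to get the same signed sum of six $c_\nu$ terms, substitute the generating-function formulas from Proposition~\ref{prop:c_coeffs}, shift all coefficient extractions to $(q^{r+1})\at$, and simplify. Your intermediate expression $(1-q)\bigl(qf_2+f_0-(1+q)f_1\bigr)$ is exactly the paper's $(q-q^2)f_2+(q^2-1)f_1+(1-q)f_0$, and your algebraic verification of the $A$- and $B$-coefficients is in fact more detailed than what the paper spells out.
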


\begin{proof}
Following equation~\eqref{eq:pleth_det}, we have that
$$a_\la(d[n]) = c_{(L,r,2)} - c_{(L,r+1,1)} - c_{(L+1,r-1,2)}+c_{(L+1,r+1,0)} +c_{(L+2,r-1,1)}-c_{(L+2,r,0)}.$$
Substituting the formulas for the $c$'s from Proposition~\ref{prop:c_coeffs}, and observing that $(q^{r+j})\at f = (q^r)\at q^{-j}f$ for any $j$, we have that 
\begin{align*}
a_{\la}(d[n]) = (q^{r+1})\at \bigg( (q-q^2) \left(  \binom{n}{1}_q\binom{n+d-2}{n}_q +  \binom{n-1}{1}_q \binom{n+d-1}{n}_q +q \binom{n}{2}_q \binom{n+d-2}{n}_q\right) \\
+ (q^2-1)\binom{n}{1}_q\binom{n+d-1}{n}_q +(1-q)\binom{n+d}{n}_q \bigg)
\end{align*}
Simplifying the above expression by grouping terms for the same binomial coefficients together we obtain
\begin{align*}
a_\la(d[n]) =(q^{r+1})\at  \bigg(  \binom{n+d-2}{n}_q \frac{q(1-q^n)(1-q^2+q-q^n)}{1-q^2} \\
+ \binom{n+d-1}{n}_q (q^{n+1}-1)  + (1-q)\binom{n+d}{n}_q \bigg)
\end{align*}
\end{proof}

\begin{proposition}\label{prop:difference}
Let $\la = (L,r,2)$. We have that 
$$a_\la(d[n]) - a_\la(n[d]) = (q^r)\at \binom{n+d-2}{n-1}_q(q^n-q^d) \frac{(1-q^{d-1})(1-q^{n-1})}{(1-q^d)(1-q^n)}.$$
\end{proposition}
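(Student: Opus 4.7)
The plan is to apply Proposition~\ref{prop:pleth_3row} to both $a_\la(d[n])$ and $a_\la(n[d])$ and subtract the two closed-form expressions. Denote the bracketed expression in Proposition~\ref{prop:pleth_3row} by $F(n,d,q)$, so that $a_\la(d[n]) - a_\la(n[d]) = (q^{r+1})\at [F(n,d,q) - F(d,n,q)]$. The first step is to rewrite every $q$-binomial appearing in $F(n,d,q)$ and $F(d,n,q)$ as a rational multiple of the single common factor $B := \binom{n+d-2}{n-1}_q$, using only the elementary identity $\binom{N}{k}_q = \binom{N-1}{k-1}_q(1-q^N)/(1-q^k)$ together with the $q$-binomial symmetry $\binom{N}{k}_q = \binom{N}{N-k}_q$. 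A useful observation is that $\binom{n+d}{n}_q = \binom{n+d}{d}_q$, so the top term is manifestly symmetric in $n,d$ and drops out of $F(n,d,q) - F(d,n,q)$ entirely.

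The second step is a short algebraic simplification of the two remaining non-symmetric pieces. The $\binom{n+d-2}{n}_q$ piece collapses via the identity $(1-q^{d-1})(1-q^2+q-q^n) - (1-q^{n-1})(1-q^2+q-q^d) = (1-q^2)(q^{n-1}-q^{d-1})$, contributing $B(q^n-q^d)$ after cancelling the prefactor $q/(1-q^2)$. The $\binom{n+d-1}{n}_q$ piece, after being placed over the common denominator $(1-q^n)(1-q^d)$, produces $-B(1-q^{n+d-1})(1-q)(q^n-q^d)/[(1-q^n)(1-q^d)]$. Factoring out the common $B(q^n-q^d)$ reduces the entire calculation to the single scalar identity
\[
(1-q^n)(1-q^d) - (1-q)(1-q^{n+d-1}) = q(1-q^{n-1})(1-q^{d-1}),
\]
which is immediate by expanding both sides. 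Finally, the surplus factor of $q$ on the right-hand side is absorbed into the coefficient extraction via $(q^{r+1})\at qf = (q^r)\at f$, producing exactly the claimed formula.

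The main obstacle is simply the bookkeeping: there are six $q$-binomials to track (three in each of $F(n,d,q)$ and $F(d,n,q)$), and one must choose denominators carefully so that the $n\leftrightarrow d$ symmetry becomes visible and the $\binom{n+d}{n}_q$ terms cancel cleanly. There is no conceptual or representation-theoretic input beyond Proposition~\ref{prop:pleth_3row}; the identity is a finite rational identity in $q,q^n,q^d$ that follows by direct algebra once all the binomials are normalised against $B$.
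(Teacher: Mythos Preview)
Your proposal is correct and follows essentially the same route as the paper's own proof: apply Proposition~\ref{prop:pleth_3row} twice, observe that the $\binom{n+d}{n}_q$ term is symmetric in $n,d$ and cancels, rewrite the remaining $q$-binomials as rational multiples of $B=\binom{n+d-2}{n-1}_q$, factor out $B(q^n-q^d)$, and reduce to the scalar identity $(1-q^n)(1-q^d)-(1-q)(1-q^{n+d-1})=q(1-q^{n-1})(1-q^{d-1})$ before absorbing the stray $q$ into the coefficient extraction. The only cosmetic difference is that the paper writes the intermediate normalisation via the $[a]!_q$ factorial notation rather than via the recursion $\binom{N}{k}_q=\binom{N-1}{k-1}_q(1-q^N)/(1-q^k)$; the computations are identical.
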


\begin{proof}
Set $[a]!_q:=(1-q)\cdots(1-q^a)$, a  variant of the usual factorial $q$-analogue but multiplied by $(1-q)^a$, and consider the desired difference via the formula in Proposition~\ref{prop:pleth_3row}:
\begin{align*}
a_\la(d[n]) - a_\la(n[d]) &= (q^{r+1}) \at \\
&\bigg\{ \; \bigg(  \binom{n+d-2}{n}_q \frac{q(1-q^n)(1-q^2+q-q^n)}{1-q^2} 
+ \binom{n+d-1}{n}_q (q^{n+1}-1)  \\
&\quad -  \binom{n+d-2}{d}_q \frac{q(1-q^d)(1-q^2+q-q^d)}{1-q^2} 
- \binom{n+d-1}{d}_q (q^{d+1}-1) \bigg)\\
&\quad = \frac{ [n+d-2]!_q}{[n-2]!_q[d-2]!_q} \frac{q(q^{n-1}- q^{d-1})}{(1-q^{n-1})(1-q^{d-1})}
- \frac{ [n+d-1]!_q}{[n-1]!_q[d-1]!_q} \frac{(1-q)(q^n-q^d)}{(1-q^n)(1-q^d)} \\
&\quad = \binom{n+d-2}{n-1}_q (q^n-q^d) \left( 1  - \frac{(1-q^{n+d-1})(1-q)}{(1-q^n)(1-q^d)} \right)\\
&\quad=\binom{n+d-2}{n-1}_q (q^n-q^d) \frac{ q(1-q^{d-1})(1-q^{n-1})}{(1-q^d)(1-q^n)}  \; \bigg \}
\end{align*}
Finally, observe that the RHS is a polynomial divisible by $q$, so the coefficient at $q^{r+1}$ is the same as the coefficient at $q^{r}$ after dividing by $q$. 
\end{proof}

We are now ready to prove Theorem~\ref{thm:plethineq} as a Corollary of the above computations:

\begin{corollary}\label{cor:keyinequality} [Theorem~\ref{thm:plethineq}]
Let  $d=n+1$ and $\la = (n^2+n-2-r,r,2)$. Then
$a_\la((n+1)[n]) - a_\la(n[n+1]) \geq 0,$
with $$a_\la((n+1)[n]) - a_\la(n[n+1]) = \begin{cases} 0, &\text{ when }r<n, \\ 1, & \text{ when } r=n,\\ >0, &\text{ when $r>n$ and $n\geq 7$}, \\
\end{cases}$$
with the exception in the last case when $n=8$, and $r=35$ when $a_{(35,35,2)}(9[8])=a_{(35,35,2)}(8[9])$.
\end{corollary}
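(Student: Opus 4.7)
The plan is to specialize the difference formula in Proposition~\ref{prop:difference} to $d=n+1$, simplify the resulting rational expression to a single Gaussian binomial, and then read off the stated values via (strict) unimodality of $q$-binomial coefficients.

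Setting $d=n+1$ in Proposition~\ref{prop:difference}, the factor $(1-q^n)$ cancels from numerator and denominator, $q^n-q^d=q^n(1-q)$, and one is left with
\[
a_\la((n+1)[n]) - a_\la(n[n+1]) \;=\; (q^r)\at \left[\, q^n(1-q)\,\binom{2n-1}{n-1}_q\,\frac{1-q^{n-1}}{1-q^{n+1}}\, \right].
\]
The identity $\binom{2n-1}{n-1}_q\cdot (1-q^{n-1})/(1-q^{n+1}) = \binom{2n-1}{n-2}_q$, immediate from writing both sides in the form $[2n-1]!_q/\bigl([n-2]!_q\,[n+1]!_q\bigr)$, collapses the expression to $q^n(1-q)\binom{2n-1}{n-2}_q$, so the difference equals
\[
p_{r-n}(n-2,n+1)\;-\;p_{r-n-1}(n-2,n+1),
\]
where $p_k(a,b)$ counts partitions of $k$ inside an $a\times b$ box (equivalently, the coefficient of $q^k$ in $\binom{a+b}{a}_q$).

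The three cases of the corollary now correspond to three regimes of $r$. If $r<n$ both partition counts vanish and the difference is $0$. If $r=n$ we obtain $p_0-p_{-1}=1-0=1$. If $r>n$, the partition requirement $n^2+n-2-r\geq r$ forces $r-n\leq (n-2)(n+1)/2$, which is precisely the peak index of the unimodal sequence of coefficients of $\binom{2n-1}{n-2}_q$; classical unimodality of Gaussian coefficients (Sylvester/Stanley) then gives $p_{r-n}\geq p_{r-n-1}$, yielding non-negativity.

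The remaining content of the corollary is the strict inequality when $n\geq 7$ together with the single exception at $n=8$, $r=35$. The natural input here is the strict unimodality theorem of Pak--Panova for $\binom{a+b}{a}_q$ in the range $1\leq k\leq ab/2$; with $a=n-2$, $b=n+1$ this covers all sufficiently large $n$ uniformly. The finitely many remaining parameters $n\in\{7,8,9\}$ are then handled by directly tabulating the coefficients of $\binom{2n-1}{n-2}_q$, and this explicit check produces exactly the one coincidence $p_{27}(6,9)=p_{26}(6,9)$, accounting for the listed exception at $n=8$, $r=35$. The main obstacle of the argument is this strict-unimodality step: non-strict unimodality is classical, but pinning down strictness together with the precise list of exceptional $(n,r)$ requires the deeper Pak--Panova input plus an explicit small-case verification.
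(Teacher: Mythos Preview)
Your proof is correct and follows essentially the same route as the paper: specialize Proposition~\ref{prop:difference} to $d=n+1$, absorb the rational factor into $\binom{2n-1}{n-2}_q$, and read off the three cases as differences $p_{r-n}-p_{r-n-1}$ of partition counts in an $(n-2)\times(n+1)$ box. The paper phrases the $r>n$ step via the identification of this difference with a Kronecker coefficient $g\bigl((n^2-n-3-k,k+1),(n+1)^{n-2},(n+1)^{n-2}\bigr)$ and then cites Pak--Panova for its strict positivity (for $n\geq 9$, with the remaining small $n$ checked directly), whereas you invoke the equivalent Pak--Panova strict-unimodality statement for Gaussian binomials; these are two formulations of the same result, so the arguments coincide.
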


\begin{proof}
Then by the Proposition~\ref{prop:difference} we have
\begin{align*}
a_\la(n+1[n]) - a_\la(n[n+1]) = (q^{r})\at  \binom{ 2n-1}{n-1}_q (q^{n}-q^{n+1}) \frac{ (1-q^n)(1-q^{n-1})}{(1-q^{n+1})(1-q^n)}\\
=(q^{r})\at \binom{2n-1}{n-1}_q (q^{n} - q^{n+1}) \frac{(1-q^{n-1})}{1-q^{n+1}} = (q^{r})\at 
\binom{2n-1}{n-2}_q (q^{n}-q^{n+1}) 
\end{align*}

The last line follows by absorbing the fraction into the $q$-binomial coefficient. 
It is now evident, that since the $q$-binomial coefficient expands into a polynomial of $q$ (with coefficients given by $p_*(n-2,n+1)$), multiplying it with $q^{n}$ or $q^{n+1}$ gives two polynomials whose lowest order terms are $q^{n}$ and $q^{n+1}$ respectively.
So if $r<n$, there is no term of such degree, and the coefficient is 0. when $r=n$ we see that
 such term can only come from the first polynomial's first (lowest order) term, which is exactly $q^{n}$ since $\binom{2n-1}{n-2}_qq^{n} = q^{n}(1+q+2q^2+\cdots)=q^n+ O(q^{n+1})$. Therefore we obtain the case $r=n$.
 
 Let now $r>n$, and set $r= n + k+1$ for some $k \geq 0$. 
 We have that 
 \begin{align*}
 a_\la((n+1)[n]) - a_\la(n[n+1]) = (q^{k+1})\at \binom{ 2n-1}{n-2}_q - (q^{k})\at \binom{2n-1}{n-2}_q \\
 = p_{k+1}(n+1,n-2) - p_k(n+1,n-2) 
 \\=g((n^2-n-3-k,k+1),(n+1)^{n-2},(n+1)^{n-2})>0,
 \end{align*}
 where $g$ denotes the Kronecker coefficient for the symmetric group $\mathfrak{S}_n$ for the 3 given partititons, and the last identity and the strict positivity are shown to hold for $n\geq 9$ in~\cite{pak-panova:13}, and the other cases are verified by direct expansion of the $q$-binomial coefficients. In particular, we have that $p_{26}(9,6) =227 = p_{27}(9,6)$ which gives the only exceptional 0 plethysm. 
\end{proof}

\section{Computer calculations}\label{sec:computer}
The following computer calculation for Proposition~\ref{pro:computercalc} is a refinement and speedup of the computation performed in \cite{cim:15}.
Indeed, a run of the method from \cite{cim:15} would take significantly too long to prove Proposition~\ref{pro:computercalc} in any reasonable time.
Our new method makes extensive use of memory resources, while the method from \cite{cim:15} uses almost no memory.

\begin{proposition}\label{pro:computercalc}
If $X$ is defined as in Proposition~\ref{pro:listofgenerators}, then
for all $\mu \in X$ of length 3 we have $\mult_{\mu}(\IC[\Ch_3^6])>0$.

If $X$ is defined as in Proposition~\ref{pro:listofgenerators74}, then
for all $\mu \in X$ we have $\mult_{\mu}(\IC[\Ch_4^7])>0$.
\end{proposition}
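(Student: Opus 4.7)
The plan is to exhibit, for each $\mu$ in the relevant set $X$, an explicit highest weight vector $f_\mu \in \IC[\IA_m^n]_d$ of type $\mu$ (where $d = |\mu|/n$ and $(m,n) = (3,6)$ or $(4,7)$) that does not vanish identically on $\Ch_m^n$. Let $T(\mu) \subseteq \IC[\IA_m^n]_d$ be the space of HWVs of weight $\mu$; its dimension equals $\pl \mu d n$, which is already computable via \textsc{LiE}. Since $\Ch_m^n$ is irreducible and $\GL_m$-stable, and HWVs of weight $\mu$ in the quotient $\IC[\Ch_m^n]_d$ are exactly the images of elements of $T(\mu)$ that are not in $I(\Ch_m^n)_d$, one has $\mult_\mu(\IC[\Ch_m^n]_d) > 0$ iff some $f \in T(\mu)$ satisfies $f(\ell_1 \cdots \ell_n) \neq 0$ for a (generic) choice of linear forms $\ell_i \in V$. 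This reduces the statement to a finite linear-algebra search, one partition at a time.

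First I would build a spanning set of $T(\mu)$ combinatorially. Using the identification $\IC[\IA_m^n]_d \cong \Sym^d(\Sym^n V^*)$, one can parametrize candidate HWVs by Young-tableau-like data of shape $\mu$ whose entries record monomials in $\Sym^n V^*$; the symmetric-function framework of Section 4 gives a clean template for enumerating such candidates, and a basis of $T(\mu)$ is extracted by reducing modulo the $U^-$-raising relations inside the $\mu$-weight space. The size of this weight space, rather than $\dim T(\mu)$ itself, controls the cost.

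Second, I would evaluate each basis HWV at a Chow point $p = \ell_1 \cdots \ell_n$ whose linear factors have random (or carefully chosen) integer coefficients, expanding $p$ in the monomial basis of $\IA_m^n$, raising to the $d$-th power, and substituting into the symbolic expression for $f$. If at least one $f \in T(\mu)$ gives $f(p) \neq 0$, then positivity for $\mu$ is established. Since the locus of Chow points annihilated by all HWVs of weight $\mu$ in $T(\mu) \setminus I(\Ch_m^n)_d$ is a proper closed subset of the irreducible variety $\Ch_m^n$, a single generic $p$ succeeds with probability one when positivity holds, and exact rational arithmetic certifies the computation.

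The main obstacle is sheer size. For the largest $\mu$ in the list of $|X| = 948$ generators for the case $(m,n) = (4,7)$ --- in particular the two extremal partitions $(49,49)$ and $(24,24,23,23)$ mentioned immediately before the proposition --- both $\dim T(\mu)$ and the number of monomial terms of $(\ell_1\cdots\ell_n)^d$ are enormous, and the naive method of \cite{cim:15} runs out of time. The refinement I would implement memoizes partial products and shared tableau-fragments, stores HWVs in a sparse representation keyed by multi-index, and streams the evaluation at a fixed numerical $p$ rather than fully assembling the polynomials in memory; the 200 GB RAM ceiling reported later in the paper for the $n=7$, $m=5$ case confirms that memory management, rather than arithmetic, is the real bottleneck here.
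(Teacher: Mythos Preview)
Your high-level strategy matches the paper's: construct highest weight vectors and evaluate them at a generic Chow point to certify positivity. The concrete computational route, however, differs substantially.

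The paper works not in $\Sym^d\Sym^n V^*$ but in the ambient tensor space $\bigotimes^{dn}\IC^m$, using Schur--Weyl duality. The HWV candidates are the vectors $\pi v_\lambda$ for $\pi\in\aS_{dn}$, with $v_\lambda$ the canonical product of wedge powers; these are indexed by semistandard tableaux rather than extracted by linear algebra inside a weight space. Crucially, the evaluation $(p^{\otimes d})\,\pi v_\lambda$ at $p=\ell_1\cdots\ell_n$ is rewritten as a sum over ``placements'' of products of determinants of size at most $m\times m$, so one never expands $p$ in the monomial basis of $\IA_m^n$ or forms the $d$-th power --- precisely the step you propose and which the paper explicitly identifies as infeasible (the naive expansion has $(n!)^d$ summands). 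The paper's speedup over \cite{cim:15} is a column-by-column dynamic programme that caches partial sums $\alpha(\psi)$ across equivalence classes of partial placements, and numerical issues are sidestepped by working over a finite field rather than with exact rationals.

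Your approach buys conceptual simplicity (stay in $\Sym^d\Sym^n$, build an honest basis of $T(\mu)$) at the cost of the multilinear/determinantal structure that makes evaluation cheap. The paper's approach never needs a basis of $T(\mu)$ --- one nonvanishing tableau suffices --- and trades memory for time via the placement-equivalence caching; your vaguer ``memoize partial products and tableau-fragments'' points in the right direction but does not yet name the mechanism that actually closes the gap.
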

\begin{proof}
For a vector space $U$ let $\tensor^\delta U$ denote its $\delta$-th tensor power.
The computation that verifies Proposition~\ref{pro:computercalc}
is based on the famous Schur-Weyl duality (see e.g.~\protect\cite[Ch.~9, eq.~(3.1.4)]{Pro:07} or \protect\cite[eq.~(9.1)]{gw:09}):
\begin{equation}\label{eq:schurweyl}
\tensor^{dn} \IC^m \simeq \bigoplus_{\la} \{\la\} \otimes [\la],
\end{equation}
where the sum is over all $m$-partitions of $dn$,
$\{\la\}$ is the irreducible $\GL_m$-representation of type $\la$,
and $[\la]$ is the irreducible $\aS_{dn}$-representation of type $\la$, which is called the Specht module.

For a $\GL_m$-representation $W$,
a highest weight vector of type $\la$ is a vector $f \in W$ such that two properties hold:
(1) $\diag(\alpha_1,\alpha_2,\ldots,\alpha_m) f = \alpha_1^{\la_1}\cdots\alpha_m^{\la_m} f$,
where $\diag(\alpha_1,\alpha_2,\ldots,\alpha_m)$ denotes the diagonal matrix with $\alpha_i$ on the main diagonal,
and (2) $g f = f$ for every upper triangular matrix $g \in \GL_m$ with 1s on the main diagonal.
The highest weight vectors of type $\la$ in $W$ form a vector space, which we call $\HWV_\la(W)$.
Its dimension conveniently coincides with the multiplicity of $\la$ in $W$ (see e.g.~\cite[Prop~12.2.5]{BI:18July25}):
\begin{equation}\label{eq:multdimHWV}
\mult_\la(W) = \dim \HWV_\la(W).
\end{equation}
Let $\Sym^\delta U \subseteq \tensor^\delta U$ denote the $\aS_\delta$-invariant subspace
and let $U^*$ denote the vector space dual to $U$.
There are canonical isomorphisms $\IC[U]_d \simeq \Sym^d (U^*) \simeq (\Sym^d U)^*$.
Observe that there are canonical $\GL_m$-equivariant surjections:
\[
\tensor^{dn} \IC^m \simeq \tensor^{d}(\tensor^n \IC^{m*})^* \twoheadrightarrow \underbrace{\Sym^d(\Sym^n(\IC^{m*}))^*}_{=\IC[\IA_m^n]_d} \twoheadrightarrow \IC[\Ch_m^n]_d,
\]
where the first surjection is the symmetrization $S_{d,n} := \frac{1}{d!(n!)^d}\sum_{\sigma\in\aS_n \wr \aS_d}\sigma$ over the wreath product $\aS_n \wr \aS_d$
and the second surjection is the restriction of functions from $\IA_m^n$ to the subvariety $\Ch_m^n$.
Now, restricting to the highest weight vector space of type $\la$, we obtain surjections
\begin{equation}\label{eq:HWVsurjections}
\HWV_\la(\tensor^{dn} \IC^m) \twoheadrightarrow \HWV_\la(\IC[\IA_m^n]_d) \twoheadrightarrow \HWV_\la(\IC[\Ch_m^n]_d).
\end{equation}

To prove that $\mult_\la(\IC[\Ch_m^n]_d)>0$ we combine \eqref{eq:multdimHWV} with \eqref{eq:HWVsurjections},
so our goal is to find a nonzero vector in $\HWV_\la(\tensor^{dn} \IC^m)$
that does not vanish under the composition of both surjections in \eqref{eq:HWVsurjections}.
The vector space $\HWV_\la(\tensor^{dn} \IC^m)$ is well known and we construct its elements as follows.
For a partition $\la = (\la_1,\ldots,\la_m)$ let $\mu$ be its transposed partition, i.e., $\mu_i := |\{j \mid \la_i \geq j\}|$.
If we depict a partition by its Young diagram, which is a top-left justified array of boxes, $\la_i$ in each row, then $\mu$ is obtained by reflecting $\la$ at its diagonal, hence the name ``transposed partition''.
It is straightforward to verify that the following vector $v_\la$ is contained in $\HWV_\la(\tensor^{dn} \IC^m)$:
\[
v_\la := v_{\mu_1} \otimes v_{\mu_2} \otimes \cdots \otimes v_{\mu_{\la_1}},
\]
where $v_i := e_1 \wedge e_2 \wedge \cdots \wedge e_i := \frac{1}{i!}\sum_{\sigma \in \aS_i}\sgn(\sigma) e_{\sigma(1)} \otimes \cdots \otimes e_{\sigma(i)}$ is a highest weight vector of type $(\underbrace{1,1,\ldots,1}_{i \text{ times}})$.
Equation~\eqref{eq:schurweyl} combined with the fact that $\{\la\}$ contains a single highest weight vector line of type $\la$
and no highest weight vector of any other type (see e.g.~\cite[III.1.4, Satz 1]{Kra:85})
implies that $\{\pi v_\la \mid \pi \in \aS_{dn}\}$
is a generating set of the vector space $\HWV_\la(\tensor^{dn} \IC^m)$ (cp.~\cite[Claim 4.2.13]{ike:12b}).
Thus $\{S_{d,n} \pi v_\la \mid \pi \in \aS_{dn}\}$ is a generating set of
$\HWV_\la(\IC[\IA_m^n]_d)$.
The evaluation of $S_{d,n} \pi v_\la$ at a point $p \in \IA_m^n = \Sym^n\IC^{m*}$ is known to equal the tensor contraction
\begin{equation}\label{eq:contractI}
(p^{\otimes d}) S_{d,n} \pi v_\la,
\end{equation}
see e.g.~\cite[Sec.~4.2(A)]{ike:12b}.
Since $p^{\otimes d} S_{d,n} = p^{\otimes d}$, \eqref{eq:contractI} equals
\begin{equation}\label{eq:contractII}
(p^{\otimes d}) \pi v_\la,
\end{equation}
as observed e.g.\ in \cite[p.~39]{ike:12b}.
Therefore, the statement $\dim \HWV_\la(\IC[\Ch_m^n]_d)>0$
is equivalent to the existence of a $\pi \in \aS_{dn}$ and a $p \in \Ch_m^n$ such that \eqref{eq:contractII} is nonzero.
The search for $(\pi,p)$ is an algorithmic challenge that we tackle as follows.
We choose a random $p \in \Ch_m^n$, i.e., $p = \ell_1 \cdots \ell_n = \frac{1}{n!}\sum_{\sigma \in \aS_n} \ell_{\sigma(1)} \otimes \cdots \otimes \ell_{\sigma(i)}$, $\ell_i \in \IC^{m*}$.
To compute \eqref{eq:contractII} we aim to \emph{not} expand $\pi v_\la$,
because a tensor contraction 
\begin{equation}\label{eq:rankonecontraction}
(\ell_{i_1} \otimes \ell_{i_2} \otimes \cdots \otimes \ell_{i_{dn}}) \pi v_\la
\end{equation}
is just a product of determinants of matrices of size $\leq m$, and such determinants are efficiently computable.
If we expand $p^{\otimes d}$ into a sum of summands of the form $\ell_{i_1} \otimes \ell_{i_2} \otimes \cdots \otimes \ell_{i_{dn}}$,
then we obtain $(n!)^d$ many summands, which for $n=6$ quickly exceeds our computational recources, even for reasonably low $d$.
Therefore we use a \emph{dynamic programming} approach that is based on the
combinatorial interpretation of the summation \eqref{eq:contractII} from \cite{cim:15}, which we describe next.

We identify $\la$ with its Young diagram, which we interpret as a cardinality $|\la|$ subset of $\IN\times\IN$.
Let $\mu$ be the transpose of $\la$, so $\mu_i$ denotes the length of the $i$-th column of $\la$.
A \emph{placement $\vartheta$ on $\la$} is a map $\la \to \{1,\ldots,n\}$.
To each placement $\vartheta$ and each column index $i$ the corresponding determinant $\det_{\vartheta,i}$ is defined as the
determinant of the top $\mu_i \times \mu_i$ submatrix of the $m \times \mu_i$ matrix that is given by the linear forms
$\ell_{\vartheta(1,i)},\ell_{\vartheta(2,i)},\ldots,\ell_{\vartheta(\mu_i,i)}$.
We number the positions in $\la$ columnwise from left to right, top to bottom, so that each position $b\in\la$ gets a number $j(b) \in \{1,\ldots,nd\}$.
Given $\pi\in\aS_{nd}$ we construct the Young tableau $T$ of shape $\la$ by filling $\la$ at position $b$ with the number $T(b) := \lceil \pi(j(b))/n\rceil$.
For example, if $\la=(2,2)$, $d=n=2$, and $\pi$ is the transposition $(2 \ 3)$,
then $T = {\Yvcentermath1\tiny\young(11,22)}$.
The tableau $T$ contains each number from $\{1,\ldots,d\}$ exactly $n$ times (and each tableau $T$ which contains each number from $1$ to $d$ exactly $n$ times can be otained from some $\pi$).
A short calculation, which for example is done in \cite{cim:15}, shows that \eqref{eq:contractII} equals
\begin{equation}\label{eq:summation}
\sum_{\text{proper } \vartheta} \ \prod_{i=1}^{\la_1}\det_{\vartheta,i},
\end{equation}
where a placement $\vartheta$ is \emph{proper} if $\vartheta$ places each number onto each number in $T$ exactly once,
i.e., for every $i \in \{1,\ldots,n\}$ and $j \in \{1,\ldots,d\}$ there exists exactly one $b \in \la$ with $(\vartheta(b),T(b))=(i,j)$.

The above description was used to perform the computations in \cite{cim:15}.
We now discuss some adjustments for the computations that we use.
For a placement $\vartheta$ on $\la$ and for a partition $\nu \subseteq \la$
we denote by $\vartheta|_\nu$ the restriction of $\vartheta$ to $\nu$.
We say that a placement $\vartheta$ on $\la$ \emph{extends} a placement $\psi$ on $\nu \subseteq \la$
iff $\vartheta|_\nu = \psi$.
Let $\la^{\leq k}$ denote the set of boxes in the first $k$ columns of the Young diagram $\la$,
and $\la^{> k}$ denote the set of boxes of $\la$ that are in columns $>k$.
For a placement $\psi$ on $\la^{\leq k}$ and a placement $\varphi$ on $\la^{>k}$ let $\psi\varphi$ denote the unique placement on $\la$ that extends both $\psi$ and $\varphi$.

Our algorithm constructs all proper placements $\vartheta$ on $\la$ using a standard breadth-first search in a columnwise manner from left to right, top to bottom.
In this way we first obtain $\vartheta|_{\la^{\leq 1}}$, then $\vartheta|_{\la^{\leq 2}}$, and so on.
For each placement $\psi$ on $\la^{\leq k}$ we observe that
\begin{equation}\label{eq:summationII}
\sum_{\text{proper } \vartheta \text{ extending } \psi} \ \prod_{i=1}^{\la_1}\det_{\vartheta,i} =
\left(\prod_{i=1}^{k}\det_{\psi,i} \right)\underbrace{\left( \sum_{\text{proper } \vartheta \text{ extending } \psi} \ \prod_{i=k+1}^{\la_1}\det_{\vartheta,i}\right)}_{=: \alpha(\psi)}.
\end{equation}
Two placements $\psi$ and $\psi'$ on $\la^{\leq k}$ are called \emph{equivalent} if they
can be obtained from each other by permuting entries between positions that have the same number in $T$.
The crucial observation is that $\alpha(\psi) = \alpha(\psi')$ if $\psi$ is equivalent to $\psi'$.
Therefore we can store and reuse each $\alpha(\psi)$ that we encounter throughout the algorithm without computing it again.
Although this requires a significant amount of memory, it enables us to crucially cut down the computation time.
\footnote{Since $\alpha(\psi)$ for any placement $\psi$ on $\la^{\leq k}$ only depends on $\alpha(\psi')$ for some placements $\psi'$ on $\la^{\leq k+1}$ we don't have to keep the results for all $k$ at the same point in time, which again cuts down our memory usage by a small factor.
This was especially relevant for the case $m = 4, n = 7$, although not sufficient for all partitions.}

Define $\kappa_i(T^{\leq k})$ as the number of times the number $i$ appears in the first $k$ columns of $T$.
Then the number of non-equivalent $\psi$ can easily be calculated as
\begin{equation}\label{eq:cachesize}
    \sum_{k=1}^{\la_1}\prod_{i=1}^{d}\binom{n}{\kappa_i(T^{\leq k})}
\end{equation}
Note that depending on $T$ the number of non-equivalent $\psi$ can wildly vary and with it also our running time and memory usage.
Generally, it seems that semistandard tableaux, i.e., tableaux with non-decreasing rows and strictly increasing columns, can be evaluated faster.
It is sufficient to restrict our attention to semistandard $T$ only, see e.g.\ \cite[Sec.~4.3(A)]{ike:12b}.
For many partitions there were too many semistandard tableaux in order to generate all of them, so a set of random semistandard tableaux was chosen in these cases.
In either case we tested the chosen tableaux by increasing value of $\eqref{eq:cachesize}$.
Additionally, as soon as we find $\prod_{i=1}^{k}\det_{\psi,i}$ to be zero we do not have to evaluate the corresponding $\alpha(\psi)$.
Numerical problems were avoided by working over a finite field.
\end{proof}
The actual tableaux found in both these calculations can be found in Section~\ref{sec:computer_resultscomputercalc}.

\section{Tableau computation results for Proposition~\ref{pro:smallcomputercalc}} \label{sec:computer_resultssmallcomputercalc}
\begin{proof}[Proof of Proposition~\ref{pro:smallcomputercalc}]
We computed the following 8 tableaux that index a basis of $\HWV_{34,6,2}(\IC[\IA_3^6]_7)$,
in complete analogy to \cite[Sec.~6]{BIP:19}.
The corresponding functions can be readily evaluated at 8 random points in $\Pow_{3,4}^6$ to obtain an $8 \times 8$ matrix whose non-singularity proves the first part of Proposition~\ref{pro:smallcomputercalc}.

\begin{longtable}{cc}
{\def\lr#1{\multicolumn{1}{|@{\hspace{.6ex}}c@{\hspace{.6ex}}|}{\raisebox{-.3ex}{$#1$}}}
\raisebox{-.6ex}{\tiny$\begin{array}[b]{*{34}c}\cline{1-34}
\lr{1}&\lr{1}&\lr{1}&\lr{1}&\lr{1}&\lr{1}&\lr{2}&\lr{2}&\lr{2}&\lr{2}&\lr{2}&\lr{3}&\lr{3}&\lr{4}&\lr{4}&\lr{4}&\lr{4}&\lr{5}&\lr{5}&\lr{5}&\lr{5}&\lr{5}&\lr{6}&\lr{6}&\lr{6}&\lr{6}&\lr{6}&\lr{6}&\lr{7}&\lr{7}&\lr{7}&\lr{7}&\lr{7}&\lr{7}\\\cline{1-34}
\lr{2}&\lr{3}&\lr{3}&\lr{3}&\lr{4}&\lr{5}\\\cline{1-6}
\lr{3}&\lr{4}\\\cline{1-2}
\end{array}$}
} \\
{\def\lr#1{\multicolumn{1}{|@{\hspace{.6ex}}c@{\hspace{.6ex}}|}{\raisebox{-.3ex}{$#1$}}}
\raisebox{-.6ex}{\tiny$\begin{array}[b]{*{34}c}\cline{1-34}
\lr{1}&\lr{1}&\lr{1}&\lr{1}&\lr{1}&\lr{1}&\lr{2}&\lr{2}&\lr{2}&\lr{2}&\lr{3}&\lr{3}&\lr{3}&\lr{3}&\lr{3}&\lr{3}&\lr{4}&\lr{4}&\lr{4}&\lr{4}&\lr{5}&\lr{5}&\lr{5}&\lr{5}&\lr{6}&\lr{6}&\lr{6}&\lr{6}&\lr{7}&\lr{7}&\lr{7}&\lr{7}&\lr{7}&\lr{7}\\\cline{1-34}
\lr{2}&\lr{2}&\lr{4}&\lr{4}&\lr{5}&\lr{5}\\\cline{1-6}
\lr{6}&\lr{6}\\\cline{1-2}
\end{array}$}
} \\
{\def\lr#1{\multicolumn{1}{|@{\hspace{.6ex}}c@{\hspace{.6ex}}|}{\raisebox{-.3ex}{$#1$}}}
\raisebox{-.6ex}{\tiny$\begin{array}[b]{*{34}c}\cline{1-34}
\lr{1}&\lr{1}&\lr{1}&\lr{1}&\lr{1}&\lr{1}&\lr{2}&\lr{2}&\lr{2}&\lr{3}&\lr{3}&\lr{3}&\lr{3}&\lr{3}&\lr{3}&\lr{4}&\lr{4}&\lr{4}&\lr{4}&\lr{5}&\lr{5}&\lr{5}&\lr{6}&\lr{6}&\lr{6}&\lr{6}&\lr{6}&\lr{6}&\lr{7}&\lr{7}&\lr{7}&\lr{7}&\lr{7}&\lr{7}\\\cline{1-34}
\lr{2}&\lr{2}&\lr{2}&\lr{5}&\lr{5}&\lr{5}\\\cline{1-6}
\lr{4}&\lr{4}\\\cline{1-2}
\end{array}$}
} \\
{\def\lr#1{\multicolumn{1}{|@{\hspace{.6ex}}c@{\hspace{.6ex}}|}{\raisebox{-.3ex}{$#1$}}}
\raisebox{-.6ex}{\tiny$\begin{array}[b]{*{34}c}\cline{1-34}
\lr{1}&\lr{1}&\lr{1}&\lr{1}&\lr{1}&\lr{1}&\lr{2}&\lr{2}&\lr{2}&\lr{2}&\lr{2}&\lr{3}&\lr{3}&\lr{3}&\lr{3}&\lr{3}&\lr{3}&\lr{4}&\lr{4}&\lr{5}&\lr{5}&\lr{5}&\lr{5}&\lr{6}&\lr{6}&\lr{6}&\lr{6}&\lr{6}&\lr{6}&\lr{7}&\lr{7}&\lr{7}&\lr{7}&\lr{7}\\\cline{1-34}
\lr{2}&\lr{4}&\lr{4}&\lr{4}&\lr{4}&\lr{7}\\\cline{1-6}
\lr{5}&\lr{5}\\\cline{1-2}
\end{array}$}
} \\
{\def\lr#1{\multicolumn{1}{|@{\hspace{.6ex}}c@{\hspace{.6ex}}|}{\raisebox{-.3ex}{$#1$}}}
\raisebox{-.6ex}{\tiny$\begin{array}[b]{*{34}c}\cline{1-34}
\lr{1}&\lr{1}&\lr{1}&\lr{1}&\lr{1}&\lr{1}&\lr{2}&\lr{2}&\lr{3}&\lr{3}&\lr{3}&\lr{3}&\lr{3}&\lr{4}&\lr{4}&\lr{4}&\lr{4}&\lr{5}&\lr{5}&\lr{5}&\lr{5}&\lr{5}&\lr{5}&\lr{6}&\lr{6}&\lr{6}&\lr{6}&\lr{6}&\lr{7}&\lr{7}&\lr{7}&\lr{7}&\lr{7}&\lr{7}\\\cline{1-34}
\lr{2}&\lr{2}&\lr{2}&\lr{2}&\lr{4}&\lr{4}\\\cline{1-6}
\lr{3}&\lr{6}\\\cline{1-2}
\end{array}$}
} \\
{\def\lr#1{\multicolumn{1}{|@{\hspace{.6ex}}c@{\hspace{.6ex}}|}{\raisebox{-.3ex}{$#1$}}}
\raisebox{-.6ex}{\tiny$\begin{array}[b]{*{34}c}\cline{1-34}
\lr{1}&\lr{1}&\lr{1}&\lr{1}&\lr{1}&\lr{1}&\lr{2}&\lr{2}&\lr{2}&\lr{2}&\lr{2}&\lr{3}&\lr{3}&\lr{3}&\lr{3}&\lr{3}&\lr{4}&\lr{4}&\lr{4}&\lr{4}&\lr{5}&\lr{5}&\lr{5}&\lr{5}&\lr{5}&\lr{6}&\lr{6}&\lr{6}&\lr{7}&\lr{7}&\lr{7}&\lr{7}&\lr{7}&\lr{7}\\\cline{1-34}
\lr{2}&\lr{3}&\lr{5}&\lr{6}&\lr{6}&\lr{6}\\\cline{1-6}
\lr{4}&\lr{4}\\\cline{1-2}
\end{array}$}
} \\
{\def\lr#1{\multicolumn{1}{|@{\hspace{.6ex}}c@{\hspace{.6ex}}|}{\raisebox{-.3ex}{$#1$}}}
\raisebox{-.6ex}{\tiny$\begin{array}[b]{*{34}c}\cline{1-34}
\lr{1}&\lr{1}&\lr{1}&\lr{1}&\lr{1}&\lr{1}&\lr{2}&\lr{2}&\lr{2}&\lr{2}&\lr{2}&\lr{2}&\lr{3}&\lr{3}&\lr{3}&\lr{3}&\lr{4}&\lr{4}&\lr{4}&\lr{4}&\lr{5}&\lr{5}&\lr{5}&\lr{5}&\lr{6}&\lr{6}&\lr{6}&\lr{6}&\lr{6}&\lr{7}&\lr{7}&\lr{7}&\lr{7}&\lr{7}\\\cline{1-34}
\lr{3}&\lr{3}&\lr{4}&\lr{5}&\lr{5}&\lr{7}\\\cline{1-6}
\lr{4}&\lr{6}\\\cline{1-2}
\end{array}$}
} \\
{\def\lr#1{\multicolumn{1}{|@{\hspace{.6ex}}c@{\hspace{.6ex}}|}{\raisebox{-.3ex}{$#1$}}}
\raisebox{-.6ex}{\tiny$\begin{array}[b]{*{34}c}\cline{1-34}
\lr{1}&\lr{1}&\lr{1}&\lr{1}&\lr{1}&\lr{1}&\lr{2}&\lr{2}&\lr{2}&\lr{2}&\lr{3}&\lr{3}&\lr{3}&\lr{3}&\lr{4}&\lr{4}&\lr{4}&\lr{4}&\lr{4}&\lr{5}&\lr{5}&\lr{5}&\lr{5}&\lr{5}&\lr{6}&\lr{6}&\lr{6}&\lr{6}&\lr{6}&\lr{7}&\lr{7}&\lr{7}&\lr{7}&\lr{7}\\\cline{1-34}
\lr{2}&\lr{2}&\lr{3}&\lr{3}&\lr{6}&\lr{7}\\\cline{1-6}
\lr{4}&\lr{5}\\\cline{1-2}
\end{array}$}
} \\
\end{longtable}

The second part is proved analogously by studying $\HWV_{47,7,2}(\IC[\IA_3^7]_8)$ and using the following 11 tableaux.

\begin{longtable}{c}
\raisebox{-.6ex}{\tiny\young(11111112222233333334444445555556666666777888888,2247778,57)}\\
\raisebox{-.6ex}{\tiny\young(11111112222333333344444455556666667777778888888,2225557,46)}\\
\raisebox{-.6ex}{\tiny\young(11111112223333333444444555555566666677777778888,2222468,88)}\\
\raisebox{-.6ex}{\tiny\young(11111112223333344444445555556666666777777788888,2222338,58)}\\
\raisebox{-.6ex}{\tiny\young(11111112222233333334444455555566666667777777888,2248888,45)}\\
\raisebox{-.6ex}{\tiny\young(11111112222222334444444555566666667777778888888,3333355,57)}\\
\raisebox{-.6ex}{\tiny\young(11111112222222333334444444555555566666777888888,3366777,78)}\\
\raisebox{-.6ex}{\tiny\young(11111112222222333333344445555566666667777777888,4445588,88)}\\
\raisebox{-.6ex}{\tiny\young(11111112222233333334444455555666666677777888888,2244557,78)}\\
\raisebox{-.6ex}{\tiny\young(11111112222233333444444555555666666777777888888,2233578,46)}\\
{\tiny\young(11111112222223333444444555555666666777777888888,2334578,36)}
\end{longtable}
\end{proof}

\section{Generators for $m=4$, $n=7$}
\begin{proposition}\label{pro:listofgenerators74}
Define the set

\begin{minipage}{\textwidth}
\tiny{$X := \{\left(7\right) $}, 
\tiny{$ \left(8, 6\right) $}, 
\tiny{$ \left(10, 4\right) $}, 
\tiny{$ \left(12, 2\right) $}, 
\tiny{$ \left(8, 8, 5\right) $}, 
\tiny{$ \left(9, 6, 6\right) $}, 
\tiny{$ \left(10, 7, 4\right) $}, 
\tiny{$ \left(10, 8, 3\right) $}, 
\tiny{$ \left(10, 10, 1\right) $}, 
\tiny{$ \left(11, 6, 4\right) $}, 
\tiny{$ \left(11, 8, 2\right) $}, 
\tiny{$ \left(12, 6, 3\right) $}, 
\tiny{$ \left(12, 7, 2\right) $}, 
\tiny{$ \left(12, 8, 1\right) $}, 
\tiny{$ \left(12, 9\right) $}, 
\tiny{$ \left(13, 4, 4\right) $}, 
\tiny{$ \left(13, 6, 2\right) $}, 
\tiny{$ \left(13, 7, 1\right) $}, 
\tiny{$ \left(13, 8\right) $}, 
\tiny{$ \left(14, 5, 2\right) $}, 
\tiny{$ \left(14, 6, 1\right) $}, 
\tiny{$ \left(14, 7\right) $}, 
\tiny{$ \left(15, 4, 2\right) $}, 
\tiny{$ \left(16, 4, 1\right) $}, 
\tiny{$ \left(16, 5\right) $}, 
\tiny{$ \left(17, 2, 2\right) $}, 
\tiny{$ \left(18, 3\right) $}, 
\tiny{$ \left(10, 6, 6, 6\right) $}, 
\tiny{$ \left(9, 8, 6, 5\right) $}, 
\tiny{$ \left(8, 8, 8, 4\right) $}, 
\tiny{$ \left(10, 8, 6, 4\right) $}, 
\tiny{$ \left(10, 10, 4, 4\right) $}, 
\tiny{$ \left(11, 7, 6, 4\right) $}, 
\tiny{$ \left(11, 8, 5, 4\right) $}, 
\tiny{$ \left(10, 8, 7, 3\right) $}, 
\tiny{$ \left(10, 9, 6, 3\right) $}, 
\tiny{$ \left(12, 6, 6, 4\right) $}, 
\tiny{$ \left(12, 8, 4, 4\right) $}, 
\tiny{$ \left(11, 8, 6, 3\right) $}, 
\tiny{$ \left(11, 9, 5, 3\right) $}, 
\tiny{$ \left(10, 8, 8, 2\right) $}, 
\tiny{$ \left(11, 10, 4, 3\right) $}, 
\tiny{$ \left(10, 10, 6, 2\right) $}, 
\tiny{$ \left(13, 7, 4, 4\right) $}, 
\tiny{$ \left(12, 7, 6, 3\right) $}, 
\tiny{$ \left(12, 8, 5, 3\right) $}, 
\tiny{$ \left(12, 9, 4, 3\right) $}, 
\tiny{$ \left(11, 8, 7, 2\right) $}, 
\tiny{$ \left(11, 9, 6, 2\right) $}, 
\tiny{$ \left(11, 10, 5, 2\right) $}, 
\tiny{$ \left(10, 10, 7, 1\right) $}, 
\tiny{$ \left(14, 6, 4, 4\right) $}, 
\tiny{$ \left(13, 6, 6, 3\right) $}, 
\tiny{$ \left(13, 7, 5, 3\right) $}, 
\tiny{$ \left(13, 8, 4, 3\right) $}, 
\tiny{$ \left(13, 9, 3, 3\right) $}, 
\tiny{$ \left(12, 8, 6, 2\right) $}, 
\tiny{$ \left(12, 9, 5, 2\right) $}, 
\tiny{$ \left(11, 8, 8, 1\right) $}, 
\tiny{$ \left(12, 10, 4, 2\right) $}, 
\tiny{$ \left(11, 9, 7, 1\right) $}, 
\tiny{$ \left(12, 11, 3, 2\right) $}, 
\tiny{$ \left(11, 10, 6, 1\right) $}, 
\tiny{$ \left(12, 12, 2, 2\right) $}, 
\tiny{$ \left(11, 11, 5, 1\right) $}, 
\tiny{$ \left(10, 10, 8\right) $}, 
\tiny{$ \left(14, 7, 4, 3\right) $}, 
\tiny{$ \left(13, 7, 6, 2\right) $}, 
\tiny{$ \left(13, 8, 5, 2\right) $}, 
\tiny{$ \left(13, 9, 4, 2\right) $}, 
\tiny{$ \left(12, 8, 7, 1\right) $}, 
\tiny{$ \left(13, 10, 3, 2\right) $}, 
\tiny{$ \left(12, 9, 6, 1\right) $}, 
\tiny{$ \left(12, 10, 5, 1\right) $}, 
\tiny{$ \left(12, 11, 4, 1\right) $}, 
\tiny{$ \left(11, 10, 7\right) $}, 
\tiny{$ \left(16, 4, 4, 4\right) $}, 
\tiny{$ \left(15, 6, 4, 3\right) $}, 
\tiny{$ \left(14, 6, 6, 2\right) $}, 
\tiny{$ \left(14, 7, 5, 2\right) $}, 
\tiny{$ \left(14, 8, 4, 2\right) $}, 
\tiny{$ \left(13, 7, 7, 1\right) $}, 
\tiny{$ \left(14, 9, 3, 2\right) $}, 
\tiny{$ \left(13, 8, 6, 1\right) $}, 
\tiny{$ \left(14, 10, 2, 2\right) $}, 
\tiny{$ \left(13, 9, 5, 1\right) $}, 
\tiny{$ \left(12, 8, 8\right) $}, 
\tiny{$ \left(13, 10, 4, 1\right) $}, 
\tiny{$ \left(12, 9, 7\right) $}, 
\tiny{$ \left(13, 11, 3, 1\right) $}, 
\tiny{$ \left(12, 10, 6\right) $}, 
\tiny{$ \left(13, 12, 2, 1\right) $}, 
\tiny{$ \left(12, 11, 5\right) $}, 
\tiny{$ \left(13, 13, 1, 1\right) $}, 
\tiny{$ \left(12, 12, 4\right) $}, 
\tiny{$ \left(15, 6, 5, 2\right) $}, 
\tiny{$ \left(15, 7, 4, 2\right) $}, 
\tiny{$ \left(15, 8, 3, 2\right) $}, 
\tiny{$ \left(14, 7, 6, 1\right) $}, 
\tiny{$ \left(15, 9, 2, 2\right) $}, 
\tiny{$ \left(14, 8, 5, 1\right) $}, 
\tiny{$ \left(14, 9, 4, 1\right) $}, 
\tiny{$ \left(13, 8, 7\right) $}, 
\tiny{$ \left(14, 10, 3, 1\right) $}, 
\tiny{$ \left(13, 9, 6\right) $}, 
\tiny{$ \left(14, 11, 2, 1\right) $}, 
\tiny{$ \left(13, 10, 5\right) $}, 
\tiny{$ \left(13, 11, 4\right) $}, 
\tiny{$ \left(13, 12, 3\right) $}, 
\tiny{$ \left(16, 6, 4, 2\right) $}, 
\tiny{$ \left(16, 7, 3, 2\right) $}, 
\tiny{$ \left(15, 6, 6, 1\right) $}, 
\tiny{$ \left(16, 8, 2, 2\right) $}, 
\tiny{$ \left(15, 7, 5, 1\right) $}, 
\tiny{$ \left(15, 8, 4, 1\right) $}, 
\tiny{$ \left(15, 9, 3, 1\right) $}, 
\tiny{$ \left(14, 8, 6\right) $}, 
\tiny{$ \left(15, 10, 2, 1\right) $}, 
\tiny{$ \left(14, 9, 5\right) $}, 
\tiny{$ \left(15, 11, 1, 1\right) $}, 
\tiny{$ \left(14, 10, 4\right) $}, 
\tiny{$ \left(14, 11, 3\right) $}, 
\tiny{$ \left(14, 12, 2\right) $}, 
\tiny{$ \left(14, 13, 1\right) $}, 
\tiny{$ \left(14, 14\right) $}, 
\tiny{$ \left(17, 5, 4, 2\right) $}, 
\tiny{$ \left(17, 6, 3, 2\right) $}, 
\tiny{$ \left(17, 7, 2, 2\right) $}, 
\tiny{$ \left(16, 6, 5, 1\right) $}, 
\tiny{$ \left(16, 7, 4, 1\right) $}, 
\tiny{$ \left(16, 8, 3, 1\right) $}, 
\tiny{$ \left(15, 7, 6\right) $}, 
\tiny{$ \left(16, 9, 2, 1\right) $}, 
\tiny{$ \left(16, 10, 1, 1\right) $}, 
\tiny{$ \left(15, 9, 4\right) $}, 
\tiny{$ \left(15, 10, 3\right) $}, 
\tiny{$ \left(15, 11, 2\right) $}, 
\tiny{$ \left(15, 12, 1\right) $}, 
\tiny{$ \left(18, 4, 4, 2\right) $}, 
\tiny{$ \left(18, 6, 2, 2\right) $}, 
\tiny{$ \left(17, 6, 4, 1\right) $}, 
\tiny{$ \left(17, 7, 3, 1\right) $}, 
\tiny{$ \left(17, 8, 2, 1\right) $}, 
\tiny{$ \left(16, 7, 5\right) $}, 
\tiny{$ \left(17, 9, 1, 1\right) $}, 
\tiny{$ \left(16, 8, 4\right) $}, 
\tiny{$ \left(16, 9, 3\right) $}, 
\tiny{$ \left(16, 10, 2\right) $}, 
\tiny{$ \left(16, 11, 1\right) $}, 
\tiny{$ \left(19, 5, 2, 2\right) $}, 
\tiny{$ \left(18, 5, 4, 1\right) $}, 
\tiny{$ \left(18, 6, 3, 1\right) $}, 
\tiny{$ \left(18, 7, 2, 1\right) $}, 
\tiny{$ \left(17, 6, 5\right) $}, 
\tiny{$ \left(17, 9, 2\right) $}, 
\tiny{$ \left(17, 11\right) $}, 
\tiny{$ \left(20, 4, 2, 2\right) $}, 
\tiny{$ \left(19, 4, 4, 1\right) $}, 
\tiny{$ \left(19, 5, 3, 1\right) $}, 
\tiny{$ \left(19, 6, 2, 1\right) $}, 
\tiny{$ \left(19, 7, 1, 1\right) $}, 
\tiny{$ \left(18, 7, 3\right) $}, 
\tiny{$ \left(18, 9, 1\right) $}, 
\tiny{$ \left(20, 5, 2, 1\right) $}, 
\tiny{$ \left(19, 5, 4\right) $}, 
\tiny{$ \left(22, 2, 2, 2\right) $}, 
\tiny{$ \left(21, 4, 2, 1\right) $}, 
\tiny{$ \left(20, 5, 3\right) $}, 
\tiny{$ \left(21, 4, 3\right) $}, 
\tiny{$ \left(22, 5, 1\right) $}, 
\tiny{$ \left(23, 3, 2\right) $}, 
\tiny{$ \left(11, 8, 8, 8\right) $}, 
\tiny{$ \left(10, 10, 8, 7\right) $}, 
\tiny{$ \left(11, 9, 8, 7\right) $}, 
\tiny{$ \left(11, 10, 7, 7\right) $}, 
\tiny{$ \left(10, 10, 9, 6\right) $}, 
\tiny{$ \left(12, 8, 8, 7\right) $}, 
\tiny{$ \left(11, 10, 8, 6\right) $}, 
\tiny{$ \left(11, 11, 7, 6\right) $}, 
\tiny{$ \left(10, 10, 10, 5\right) $}, 
\tiny{$ \left(13, 8, 7, 7\right) $}, 
\tiny{$ \left(12, 9, 8, 6\right) $}, 
\tiny{$ \left(12, 10, 7, 6\right) $}, 
\tiny{$ \left(12, 11, 6, 6\right) $}, 
\tiny{$ \left(11, 10, 9, 5\right) $}, 
\tiny{$ \left(11, 11, 8, 5\right) $}, 
\tiny{$ \left(13, 8, 8, 6\right) $}, 
\tiny{$ \left(13, 9, 7, 6\right) $}, 
\tiny{$ \left(13, 10, 6, 6\right) $}, 
\tiny{$ \left(12, 9, 9, 5\right) $}, 
\tiny{$ \left(12, 10, 8, 5\right) $}, 
\tiny{$ \left(12, 11, 7, 5\right) $}, 
\tiny{$ \left(11, 10, 10, 4\right) $}, 
\tiny{$ \left(12, 12, 6, 5\right) $}, 
\tiny{$ \left(11, 11, 9, 4\right) $}, 
\tiny{$ \left(14, 8, 7, 6\right) $}, 
\tiny{$ \left(14, 9, 6, 6\right) $}, 
\tiny{$ \left(13, 9, 8, 5\right) $}, 
\tiny{$ \left(13, 10, 7, 5\right) $}, 
\tiny{$ \left(13, 11, 6, 5\right) $}, 
\tiny{$ \left(12, 10, 9, 4\right) $}, 
\tiny{$ \left(13, 12, 5, 5\right) $}, 
\tiny{$ \left(12, 11, 8, 4\right) $}, 
\tiny{$ \left(12, 12, 7, 4\right) $}, 
\tiny{$ \left(11, 11, 10, 3\right) $}, 
\tiny{$ \left(15, 7, 7, 6\right) $}, 
\tiny{$ \left(15, 8, 6, 6\right) $}, 
\tiny{$ \left(14, 8, 8, 5\right) $}, 
\tiny{$ \left(14, 9, 7, 5\right) $}, 
\tiny{$ \left(14, 10, 6, 5\right) $}, 
\tiny{$ \left(13, 9, 9, 4\right) $}, 
\tiny{$ \left(14, 11, 5, 5\right) $}, 
\tiny{$ \left(13, 10, 8, 4\right) $}, 
\tiny{$ \left(13, 11, 7, 4\right) $}, 
\tiny{$ \left(12, 10, 10, 3\right) $}, 
\tiny{$ \left(13, 12, 6, 4\right) $}, 
\tiny{$ \left(12, 11, 9, 3\right) $}, 
\tiny{$ \left(13, 13, 5, 4\right) $}, 
\tiny{$ \left(12, 12, 8, 3\right) $}, 
\tiny{$ \left(16, 7, 6, 6\right) $}, 
\tiny{$ \left(15, 8, 7, 5\right) $}, 
\tiny{$ \left(15, 9, 6, 5\right) $}, 
\tiny{$ \left(15, 10, 5, 5\right) $}, 
\tiny{$ \left(14, 9, 8, 4\right) $}, 
\tiny{$ \left(14, 10, 7, 4\right) $}, 
\tiny{$ \left(14, 11, 6, 4\right) $}, 
\tiny{$ \left(13, 10, 9, 3\right) $}, 
\tiny{$ \left(14, 12, 5, 4\right) $}, 
\tiny{$ \left(13, 11, 8, 3\right) $}, 
\tiny{$ \left(14, 13, 4, 4\right) $}, 
\tiny{$ \left(13, 12, 7, 3\right) $}, 
\tiny{$ \left(12, 11, 10, 2\right) $}, 
\tiny{$ \left(13, 13, 6, 3\right) $}, 
\tiny{$ \left(12, 12, 9, 2\right) $}, 
\tiny{$ \left(16, 7, 7, 5\right) $}, 
\tiny{$ \left(16, 9, 5, 5\right) $}, 
\tiny{$ \left(15, 9, 7, 4\right) $}, 
\tiny{$ \left(15, 10, 6, 4\right) $}, 
\tiny{$ \left(14, 9, 9, 3\right) $}, 
\tiny{$ \left(15, 11, 5, 4\right) $}, 
\tiny{$ \left(14, 10, 8, 3\right) $}, 
\tiny{$ \left(15, 12, 4, 4\right) $}, 
\tiny{$ \left(14, 11, 7, 3\right) $}, 
\tiny{$ \left(13, 10, 10, 2\right) $}, 
\tiny{$ \left(14, 12, 6, 3\right) $}, 
\tiny{$ \left(13, 11, 9, 2\right) $}, 
\tiny{$ \left(14, 13, 5, 3\right) $}, 
\tiny{$ \left(13, 12, 8, 2\right) $}, 
\tiny{$ \left(14, 14, 4, 3\right) $}, 
\tiny{$ \left(13, 13, 7, 2\right) $}, 
\tiny{$ \left(12, 12, 10, 1\right) $}, 
\tiny{$ \left(17, 7, 6, 5\right) $}, 
\tiny{$ \left(17, 8, 5, 5\right) $}, 
\tiny{$ \left(16, 8, 7, 4\right) $}, 
\tiny{$ \left(16, 9, 6, 4\right) $}, 
\tiny{$ \left(16, 10, 5, 4\right) $}, 
\tiny{$ \left(15, 9, 8, 3\right) $}, 
\tiny{$ \left(16, 11, 4, 4\right) $}, 
\tiny{$ \left(15, 10, 7, 3\right) $}, 
\tiny{$ \left(15, 11, 6, 3\right) $}, 
\tiny{$ \left(14, 10, 9, 2\right) $}, 
\tiny{$ \left(15, 12, 5, 3\right) $}, 
\tiny{$ \left(14, 11, 8, 2\right) $}, 
\tiny{$ \left(15, 13, 4, 3\right) $}, 
\tiny{$ \left(14, 12, 7, 2\right) $}, 
\tiny{$ \left(13, 11, 10, 1\right) $}, 
\tiny{$ \left(15, 14, 3, 3\right) $}, 
\tiny{$ \left(14, 13, 6, 2\right) $}, 
\tiny{$ \left(13, 12, 9, 1\right) $}, 
\tiny{$ \left(14, 14, 5, 2\right) $}, 
\tiny{$ \left(13, 13, 8, 1\right) $}, 
\tiny{$ \left(12, 12, 11\right) $}, 
\tiny{$ \left(18, 6, 6, 5\right) $}, 
\tiny{$ \left(18, 7, 5, 5\right) $}, 
\tiny{$ \left(17, 7, 7, 4\right) $}, 
\tiny{$ \left(17, 9, 5, 4\right) $}, 
\tiny{$ \left(16, 8, 8, 3\right) $}, 
\tiny{$ \left(16, 9, 7, 3\right) $}, 
\tiny{$ \left(16, 10, 6, 3\right) $}, 
\tiny{$ \left(15, 9, 9, 2\right) $}, 
\tiny{$ \left(16, 11, 5, 3\right) $}, 
\tiny{$ \left(15, 10, 8, 2\right) $}, 
\tiny{$ \left(16, 12, 4, 3\right) $}, 
\tiny{$ \left(15, 11, 7, 2\right) $}, 
\tiny{$ \left(14, 10, 10, 1\right) $}, 
\tiny{$ \left(16, 13, 3, 3\right) $}, 
\tiny{$ \left(15, 12, 6, 2\right) $}, 
\tiny{$ \left(14, 11, 9, 1\right) $}, 
\tiny{$ \left(15, 13, 5, 2\right) $}, 
\tiny{$ \left(14, 12, 8, 1\right) $}, 
\tiny{$ \left(15, 14, 4, 2\right) $}, 
\tiny{$ \left(14, 13, 7, 1\right) $}, 
\tiny{$ \left(13, 12, 10\right) $}, 
\tiny{$ \left(15, 15, 3, 2\right) $}, 
\tiny{$ \left(14, 14, 6, 1\right) $}, 
\tiny{$ \left(13, 13, 9\right) $}, 
\tiny{$ \left(18, 9, 4, 4\right) $}, 
\tiny{$ \left(17, 10, 5, 3\right) $}, 
\tiny{$ \left(16, 9, 8, 2\right) $}, 
\tiny{$ \left(17, 11, 4, 3\right) $}, 
\tiny{$ \left(16, 10, 7, 2\right) $}, 
\tiny{$ \left(17, 12, 3, 3\right) $}, 
\tiny{$ \left(16, 11, 6, 2\right) $}, 
\tiny{$ \left(15, 10, 9, 1\right) $}, 
\tiny{$ \left(16, 12, 5, 2\right) $}, 
\tiny{$ \left(15, 11, 8, 1\right) $}, 
\tiny{$ \left(16, 13, 4, 2\right) $}, 
\tiny{$ \left(15, 12, 7, 1\right) $}, 
\tiny{$ \left(14, 11, 10\right) $}, 
\tiny{$ \left(16, 14, 3, 2\right) $}, 
\tiny{$ \left(15, 13, 6, 1\right) $}, 
\tiny{$ \left(14, 12, 9\right) $}, 
\tiny{$ \left(16, 15, 2, 2\right) $}, 
\tiny{$ \left(15, 14, 5, 1\right) $}, 
\tiny{$ \left(14, 13, 8\right) $}, 
\tiny{$ \left(15, 15, 4, 1\right) $}, 
\tiny{$ \left(14, 14, 7\right) $}, 
\tiny{$ \left(19, 7, 5, 4\right) $}, 
\tiny{$ \left(18, 7, 7, 3\right) $}, 
\tiny{$ \left(17, 9, 7, 2\right) $}, 
\tiny{$ \left(18, 11, 3, 3\right) $}, 
\tiny{$ \left(16, 9, 9, 1\right) $}, 
\tiny{$ \left(17, 11, 5, 2\right) $}, 
\tiny{$ \left(16, 10, 8, 1\right) $}, 
\tiny{$ \left(17, 12, 4, 2\right) $}, 
\tiny{$ \left(16, 11, 7, 1\right) $}, 
\tiny{$ \left(15, 10, 10\right) $}, 
\tiny{$ \left(17, 13, 3, 2\right) $}, 
\tiny{$ \left(16, 12, 6, 1\right) $}, 
\tiny{$ \left(15, 11, 9\right) $}, 
\tiny{$ \left(17, 14, 2, 2\right) $}, 
\tiny{$ \left(16, 13, 5, 1\right) $}, 
\tiny{$ \left(15, 12, 8\right) $}, 
\tiny{$ \left(16, 14, 4, 1\right) $}, 
\tiny{$ \left(15, 13, 7\right) $}, 
\tiny{$ \left(16, 15, 3, 1\right) $}, 
\tiny{$ \left(15, 14, 6\right) $}, 
\tiny{$ \left(16, 16, 2, 1\right) $}, 
\tiny{$ \left(15, 15, 5\right) $}, 
\tiny{$ \left(20, 6, 5, 4\right) $}, 
\tiny{$ \left(19, 10, 3, 3\right) $}, 
\tiny{$ \left(17, 9, 8, 1\right) $}, 
\tiny{$ \left(18, 11, 4, 2\right) $}, 
\tiny{$ \left(18, 12, 3, 2\right) $}, 
\tiny{$ \left(17, 11, 6, 1\right) $}, 
\tiny{$ \left(16, 10, 9\right) $}, 
\tiny{$ \left(18, 13, 2, 2\right) $}, 
\tiny{$ \left(17, 12, 5, 1\right) $}, 
\tiny{$ \left(16, 11, 8\right) $}, 
\tiny{$ \left(17, 13, 4, 1\right) $}, 
\tiny{$ \left(16, 12, 7\right) $}, 
\tiny{$ \left(17, 14, 3, 1\right) $}, 
\tiny{$ \left(16, 13, 6\right) $}, 
\tiny{$ \left(17, 15, 2, 1\right) $}, 
\tiny{$ \left(17, 16, 1, 1\right) $}, 
\tiny{$ \left(16, 15, 4\right) $}, 
\tiny{$ \left(16, 16, 3\right) $}, 
\tiny{$ \left(19, 7, 7, 2\right) $}, 
\tiny{$ \left(17, 9, 9\right) $}, 
\tiny{$ \left(18, 12, 4, 1\right) $}, 
\tiny{$ \left(17, 11, 7\right) $}, 
\tiny{$ \left(18, 13, 3, 1\right) $}, 
\tiny{$ \left(18, 14, 2, 1\right) $}, 
\tiny{$ \left(17, 13, 5\right) $}, 
\tiny{$ \left(18, 15, 1, 1\right) $}, 
\tiny{$ \left(17, 14, 4\right) $}, 
\tiny{$ \left(17, 15, 3\right) $}, 
\tiny{$ \left(17, 16, 2\right) $}, 
\tiny{$ \left(22, 5, 4, 4\right) $}, 
\tiny{$ \left(21, 6, 5, 3\right) $}, 
\tiny{$ \left(21, 8, 3, 3\right) $}, 
\tiny{$ \left(20, 11, 2, 2\right) $}, 
\tiny{$ \left(18, 9, 8\right) $}, 
\tiny{$ \left(19, 12, 3, 1\right) $}, 
\tiny{$ \left(18, 11, 6\right) $}, 
\tiny{$ \left(19, 13, 2, 1\right) $}, 
\tiny{$ \left(19, 14, 1, 1\right) $}, 
\tiny{$ \left(18, 15, 2\right) $}, 
\tiny{$ \left(18, 17\right) $}, 
\tiny{$ \left(22, 7, 3, 3\right) $}, 
\tiny{$ \left(19, 13, 3\right) $}, 
\tiny{$ \left(19, 15, 1\right) $}, 
\tiny{$ \left(19, 16\right) $}, 
\tiny{$ \left(23, 5, 4, 3\right) $}, 
\tiny{$ \left(23, 6, 3, 3\right) $}, 
\tiny{$ \left(21, 12, 1, 1\right) $}, 
\tiny{$ \left(24, 4, 4, 3\right) $}, 
\tiny{$ \left(23, 5, 5, 2\right) $}, 
\tiny{$ \left(21, 7, 7\right) $}, 
\tiny{$ \left(25, 5, 3, 2\right) $}, 
\tiny{$ \left(24, 5, 5, 1\right) $}, 
\tiny{$ \left(26, 4, 3, 2\right) $}, 
\tiny{$ \left(25, 8, 1, 1\right) $}, 
\tiny{$ \left(28, 3, 2, 2\right) $}, 
\tiny{$ \left(27, 4, 3, 1\right) $}, 
\tiny{$ \left(12, 10, 10, 10\right) $}, 
\tiny{$ \left(11, 11, 11, 9\right) $}, 
\tiny{$ \left(12, 11, 10, 9\right) $}, 
\tiny{$ \left(12, 12, 9, 9\right) $}, 
\tiny{$ \left(13, 10, 10, 9\right) $}, 
\tiny{$ \left(13, 11, 9, 9\right) $}, 
\tiny{$ \left(12, 11, 11, 8\right) $}, 
\tiny{$ \left(12, 12, 10, 8\right) $}, 
\tiny{$ \left(14, 10, 9, 9\right) $}, 
\tiny{$ \left(13, 11, 10, 8\right) $}, 
\tiny{$ \left(13, 12, 9, 8\right) $}, 
\tiny{$ \left(13, 13, 8, 8\right) $}, 
\tiny{$ \left(12, 12, 11, 7\right) $}, 
\tiny{$ \left(15, 9, 9, 9\right) $}, 
\tiny{$ \left(14, 10, 10, 8\right) $}, 
\tiny{$ \left(14, 11, 9, 8\right) $}, 
\tiny{$ \left(14, 12, 8, 8\right) $}, 
\tiny{$ \left(13, 11, 11, 7\right) $}, 
\tiny{$ \left(13, 12, 10, 7\right) $}, 
\tiny{$ \left(13, 13, 9, 7\right) $}, 
\tiny{$ \left(12, 12, 12, 6\right) $}, 
\tiny{$ \left(15, 10, 9, 8\right) $}, 
\tiny{$ \left(15, 11, 8, 8\right) $}, 
\tiny{$ \left(14, 11, 10, 7\right) $}, 
\tiny{$ \left(14, 12, 9, 7\right) $}, 
\tiny{$ \left(14, 13, 8, 7\right) $}, 
\tiny{$ \left(13, 12, 11, 6\right) $}, 
\tiny{$ \left(14, 14, 7, 7\right) $}, 
\tiny{$ \left(13, 13, 10, 6\right) $}, 
\tiny{$ \left(16, 9, 9, 8\right) $}, 
\tiny{$ \left(16, 10, 8, 8\right) $}, 
\tiny{$ \left(15, 10, 10, 7\right) $}, 
\tiny{$ \left(15, 11, 9, 7\right) $}, 
\tiny{$ \left(15, 12, 8, 7\right) $}, 
\tiny{$ \left(14, 11, 11, 6\right) $}, 
\tiny{$ \left(15, 13, 7, 7\right) $}, 
\tiny{$ \left(14, 12, 10, 6\right) $}, 
\tiny{$ \left(14, 13, 9, 6\right) $}, 
\tiny{$ \left(13, 12, 12, 5\right) $}, 
\tiny{$ \left(14, 14, 8, 6\right) $}, 
\tiny{$ \left(13, 13, 11, 5\right) $}, 
\tiny{$ \left(17, 9, 8, 8\right) $}, 
\tiny{$ \left(16, 10, 9, 7\right) $}, 
\tiny{$ \left(16, 11, 8, 7\right) $}, 
\tiny{$ \left(16, 12, 7, 7\right) $}, 
\tiny{$ \left(15, 11, 10, 6\right) $}, 
\tiny{$ \left(15, 12, 9, 6\right) $}, 
\tiny{$ \left(15, 13, 8, 6\right) $}, 
\tiny{$ \left(14, 12, 11, 5\right) $}, 
\tiny{$ \left(15, 14, 7, 6\right) $}, 
\tiny{$ \left(14, 13, 10, 5\right) $}, 
\tiny{$ \left(15, 15, 6, 6\right) $}, 
\tiny{$ \left(14, 14, 9, 5\right) $}, 
\tiny{$ \left(13, 13, 12, 4\right) $}, 
\tiny{$ \left(17, 9, 9, 7\right) $}, 
\tiny{$ \left(17, 11, 7, 7\right) $}, 
\tiny{$ \left(16, 10, 10, 6\right) $}, 
\tiny{$ \left(16, 11, 9, 6\right) $}, 
\tiny{$ \left(16, 12, 8, 6\right) $}, 
\tiny{$ \left(15, 11, 11, 5\right) $}, 
\tiny{$ \left(16, 13, 7, 6\right) $}, 
\tiny{$ \left(15, 12, 10, 5\right) $}, 
\tiny{$ \left(16, 14, 6, 6\right) $}, 
\tiny{$ \left(15, 13, 9, 5\right) $}, 
\tiny{$ \left(14, 12, 12, 4\right) $}, 
\tiny{$ \left(15, 14, 8, 5\right) $}, 
\tiny{$ \left(14, 13, 11, 4\right) $}, 
\tiny{$ \left(15, 15, 7, 5\right) $}, 
\tiny{$ \left(14, 14, 10, 4\right) $}, 
\tiny{$ \left(13, 13, 13, 3\right) $}, 
\tiny{$ \left(17, 11, 8, 6\right) $}, 
\tiny{$ \left(17, 12, 7, 6\right) $}, 
\tiny{$ \left(16, 11, 10, 5\right) $}, 
\tiny{$ \left(17, 13, 6, 6\right) $}, 
\tiny{$ \left(16, 12, 9, 5\right) $}, 
\tiny{$ \left(16, 13, 8, 5\right) $}, 
\tiny{$ \left(15, 12, 11, 4\right) $}, 
\tiny{$ \left(16, 14, 7, 5\right) $}, 
\tiny{$ \left(15, 13, 10, 4\right) $}, 
\tiny{$ \left(16, 15, 6, 5\right) $}, 
\tiny{$ \left(15, 14, 9, 4\right) $}, 
\tiny{$ \left(14, 13, 12, 3\right) $}, 
\tiny{$ \left(16, 16, 5, 5\right) $}, 
\tiny{$ \left(15, 15, 8, 4\right) $}, 
\tiny{$ \left(14, 14, 11, 3\right) $}, 
\tiny{$ \left(19, 9, 7, 7\right) $}, 
\tiny{$ \left(18, 9, 9, 6\right) $}, 
\tiny{$ \left(17, 11, 9, 5\right) $}, 
\tiny{$ \left(17, 12, 8, 5\right) $}, 
\tiny{$ \left(16, 11, 11, 4\right) $}, 
\tiny{$ \left(17, 13, 7, 5\right) $}, 
\tiny{$ \left(16, 12, 10, 4\right) $}, 
\tiny{$ \left(16, 13, 9, 4\right) $}, 
\tiny{$ \left(15, 12, 12, 3\right) $}, 
\tiny{$ \left(17, 15, 5, 5\right) $}, 
\tiny{$ \left(15, 13, 11, 3\right) $}, 
\tiny{$ \left(16, 15, 7, 4\right) $}, 
\tiny{$ \left(15, 14, 10, 3\right) $}, 
\tiny{$ \left(14, 13, 13, 2\right) $}, 
\tiny{$ \left(16, 16, 6, 4\right) $}, 
\tiny{$ \left(15, 15, 9, 3\right) $}, 
\tiny{$ \left(14, 14, 12, 2\right) $}, 
\tiny{$ \left(18, 12, 7, 5\right) $}, 
\tiny{$ \left(17, 11, 10, 4\right) $}, 
\tiny{$ \left(18, 13, 6, 5\right) $}, 
\tiny{$ \left(17, 12, 9, 4\right) $}, 
\tiny{$ \left(18, 14, 5, 5\right) $}, 
\tiny{$ \left(17, 13, 8, 4\right) $}, 
\tiny{$ \left(16, 12, 11, 3\right) $}, 
\tiny{$ \left(17, 14, 7, 4\right) $}, 
\tiny{$ \left(16, 13, 10, 3\right) $}, 
\tiny{$ \left(17, 15, 6, 4\right) $}, 
\tiny{$ \left(16, 14, 9, 3\right) $}, 
\tiny{$ \left(15, 13, 12, 2\right) $}, 
\tiny{$ \left(17, 16, 5, 4\right) $}, 
\tiny{$ \left(16, 15, 8, 3\right) $}, 
\tiny{$ \left(15, 14, 11, 2\right) $}, 
\tiny{$ \left(17, 17, 4, 4\right) $}, 
\tiny{$ \left(16, 16, 7, 3\right) $}, 
\tiny{$ \left(15, 15, 10, 2\right) $}, 
\tiny{$ \left(14, 14, 13, 1\right) $}, 
\tiny{$ \left(21, 7, 7, 7\right) $}, 
\tiny{$ \left(19, 13, 5, 5\right) $}, 
\tiny{$ \left(17, 11, 11, 3\right) $}, 
\tiny{$ \left(18, 13, 7, 4\right) $}, 
\tiny{$ \left(17, 12, 10, 3\right) $}, 
\tiny{$ \left(17, 13, 9, 3\right) $}, 
\tiny{$ \left(16, 12, 12, 2\right) $}, 
\tiny{$ \left(18, 15, 5, 4\right) $}, 
\tiny{$ \left(17, 14, 8, 3\right) $}, 
\tiny{$ \left(16, 13, 11, 2\right) $}, 
\tiny{$ \left(17, 15, 7, 3\right) $}, 
\tiny{$ \left(16, 14, 10, 2\right) $}, 
\tiny{$ \left(15, 13, 13, 1\right) $}, 
\tiny{$ \left(17, 16, 6, 3\right) $}, 
\tiny{$ \left(16, 15, 9, 2\right) $}, 
\tiny{$ \left(15, 14, 12, 1\right) $}, 
\tiny{$ \left(17, 17, 5, 3\right) $}, 
\tiny{$ \left(16, 16, 8, 2\right) $}, 
\tiny{$ \left(15, 15, 11, 1\right) $}, 
\tiny{$ \left(14, 14, 14\right) $}, 
\tiny{$ \left(18, 12, 9, 3\right) $}, 
\tiny{$ \left(18, 13, 8, 3\right) $}, 
\tiny{$ \left(17, 12, 11, 2\right) $}, 
\tiny{$ \left(19, 15, 4, 4\right) $}, 
\tiny{$ \left(17, 13, 10, 2\right) $}, 
\tiny{$ \left(17, 14, 9, 2\right) $}, 
\tiny{$ \left(16, 13, 12, 1\right) $}, 
\tiny{$ \left(18, 16, 5, 3\right) $}, 
\tiny{$ \left(17, 15, 8, 2\right) $}, 
\tiny{$ \left(16, 14, 11, 1\right) $}, 
\tiny{$ \left(18, 17, 4, 3\right) $}, 
\tiny{$ \left(17, 16, 7, 2\right) $}, 
\tiny{$ \left(16, 15, 10, 1\right) $}, 
\tiny{$ \left(15, 14, 13\right) $}, 
\tiny{$ \left(18, 18, 3, 3\right) $}, 
\tiny{$ \left(17, 17, 6, 2\right) $}, 
\tiny{$ \left(16, 16, 9, 1\right) $}, 
\tiny{$ \left(15, 15, 12\right) $}, 
\tiny{$ \left(18, 11, 11, 2\right) $}, 
\tiny{$ \left(19, 13, 7, 3\right) $}, 
\tiny{$ \left(18, 12, 10, 2\right) $}, 
\tiny{$ \left(18, 13, 9, 2\right) $}, 
\tiny{$ \left(17, 12, 12, 1\right) $}, 
\tiny{$ \left(17, 13, 11, 1\right) $}, 
\tiny{$ \left(18, 15, 7, 2\right) $}, 
\tiny{$ \left(17, 14, 10, 1\right) $}, 
\tiny{$ \left(16, 13, 13\right) $}, 
\tiny{$ \left(19, 17, 3, 3\right) $}, 
\tiny{$ \left(17, 15, 9, 1\right) $}, 
\tiny{$ \left(16, 14, 12\right) $}, 
\tiny{$ \left(18, 17, 5, 2\right) $}, 
\tiny{$ \left(17, 16, 8, 1\right) $}, 
\tiny{$ \left(16, 15, 11\right) $}, 
\tiny{$ \left(18, 18, 4, 2\right) $}, 
\tiny{$ \left(17, 17, 7, 1\right) $}, 
\tiny{$ \left(19, 13, 8, 2\right) $}, 
\tiny{$ \left(18, 12, 11, 1\right) $}, 
\tiny{$ \left(18, 13, 10, 1\right) $}, 
\tiny{$ \left(20, 16, 3, 3\right) $}, 
\tiny{$ \left(18, 14, 9, 1\right) $}, 
\tiny{$ \left(17, 13, 12\right) $}, 
\tiny{$ \left(18, 15, 8, 1\right) $}, 
\tiny{$ \left(19, 17, 4, 2\right) $}, 
\tiny{$ \left(17, 15, 10\right) $}, 
\tiny{$ \left(19, 18, 3, 2\right) $}, 
\tiny{$ \left(18, 17, 6, 1\right) $}, 
\tiny{$ \left(17, 16, 9\right) $}, 
\tiny{$ \left(19, 19, 2, 2\right) $}, 
\tiny{$ \left(18, 18, 5, 1\right) $}, 
\tiny{$ \left(17, 17, 8\right) $}, 
\tiny{$ \left(19, 11, 11, 1\right) $}, 
\tiny{$ \left(19, 13, 9, 1\right) $}, 
\tiny{$ \left(18, 13, 11\right) $}, 
\tiny{$ \left(18, 14, 10\right) $}, 
\tiny{$ \left(19, 18, 4, 1\right) $}, 
\tiny{$ \left(18, 17, 7\right) $}, 
\tiny{$ \left(19, 19, 3, 1\right) $}, 
\tiny{$ \left(21, 17, 2, 2\right) $}, 
\tiny{$ \left(19, 15, 8\right) $}, 
\tiny{$ \left(20, 18, 3, 1\right) $}, 
\tiny{$ \left(19, 17, 6\right) $}, 
\tiny{$ \left(20, 19, 2, 1\right) $}, 
\tiny{$ \left(19, 18, 5\right) $}, 
\tiny{$ \left(19, 19, 4\right) $}, 
\tiny{$ \left(20, 11, 11\right) $}, 
\tiny{$ \left(20, 19, 3\right) $}, 
\tiny{$ \left(26, 6, 5, 5\right) $}, 
\tiny{$ \left(22, 18, 1, 1\right) $}, 
\tiny{$ \left(21, 19, 2\right) $}, 
\tiny{$ \left(21, 20, 1\right) $}, 
\tiny{$ \left(23, 19\right) $}, 
\tiny{$ \left(28, 5, 5, 4\right) $}, 
\tiny{$ \left(29, 5, 5, 3\right) $}, 
\tiny{$ \left(31, 5, 3, 3\right) $}, 
\tiny{$ \left(32, 5, 5\right) $}, 
\tiny{$ \left(13, 12, 12, 12\right) $}, 
\tiny{$ \left(13, 13, 12, 11\right) $}, 
\tiny{$ \left(14, 12, 12, 11\right) $}, 
\tiny{$ \left(14, 13, 11, 11\right) $}, 
\tiny{$ \left(13, 13, 13, 10\right) $}, 
\tiny{$ \left(15, 12, 11, 11\right) $}, 
\tiny{$ \left(14, 13, 12, 10\right) $}, 
\tiny{$ \left(14, 14, 11, 10\right) $}, 
\tiny{$ \left(16, 11, 11, 11\right) $}, 
\tiny{$ \left(15, 12, 12, 10\right) $}, 
\tiny{$ \left(15, 13, 11, 10\right) $}, 
\tiny{$ \left(15, 14, 10, 10\right) $}, 
\tiny{$ \left(14, 13, 13, 9\right) $}, 
\tiny{$ \left(14, 14, 12, 9\right) $}, 
\tiny{$ \left(16, 12, 11, 10\right) $}, 
\tiny{$ \left(16, 13, 10, 10\right) $}, 
\tiny{$ \left(15, 13, 12, 9\right) $}, 
\tiny{$ \left(15, 14, 11, 9\right) $}, 
\tiny{$ \left(15, 15, 10, 9\right) $}, 
\tiny{$ \left(14, 14, 13, 8\right) $}, 
\tiny{$ \left(17, 11, 11, 10\right) $}, 
\tiny{$ \left(17, 12, 10, 10\right) $}, 
\tiny{$ \left(16, 12, 12, 9\right) $}, 
\tiny{$ \left(16, 13, 11, 9\right) $}, 
\tiny{$ \left(16, 14, 10, 9\right) $}, 
\tiny{$ \left(15, 13, 13, 8\right) $}, 
\tiny{$ \left(16, 15, 9, 9\right) $}, 
\tiny{$ \left(15, 14, 12, 8\right) $}, 
\tiny{$ \left(15, 15, 11, 8\right) $}, 
\tiny{$ \left(14, 14, 14, 7\right) $}, 
\tiny{$ \left(18, 11, 10, 10\right) $}, 
\tiny{$ \left(17, 12, 11, 9\right) $}, 
\tiny{$ \left(17, 13, 10, 9\right) $}, 
\tiny{$ \left(17, 14, 9, 9\right) $}, 
\tiny{$ \left(16, 13, 12, 8\right) $}, 
\tiny{$ \left(16, 14, 11, 8\right) $}, 
\tiny{$ \left(16, 15, 10, 8\right) $}, 
\tiny{$ \left(15, 14, 13, 7\right) $}, 
\tiny{$ \left(16, 16, 9, 8\right) $}, 
\tiny{$ \left(15, 15, 12, 7\right) $}, 
\tiny{$ \left(18, 12, 10, 9\right) $}, 
\tiny{$ \left(18, 13, 9, 9\right) $}, 
\tiny{$ \left(17, 12, 12, 8\right) $}, 
\tiny{$ \left(17, 13, 11, 8\right) $}, 
\tiny{$ \left(17, 14, 10, 8\right) $}, 
\tiny{$ \left(16, 13, 13, 7\right) $}, 
\tiny{$ \left(17, 15, 9, 8\right) $}, 
\tiny{$ \left(16, 14, 12, 7\right) $}, 
\tiny{$ \left(17, 16, 8, 8\right) $}, 
\tiny{$ \left(16, 15, 11, 7\right) $}, 
\tiny{$ \left(15, 14, 14, 6\right) $}, 
\tiny{$ \left(16, 16, 10, 7\right) $}, 
\tiny{$ \left(15, 15, 13, 6\right) $}, 
\tiny{$ \left(18, 12, 11, 8\right) $}, 
\tiny{$ \left(18, 13, 10, 8\right) $}, 
\tiny{$ \left(18, 14, 9, 8\right) $}, 
\tiny{$ \left(17, 13, 12, 7\right) $}, 
\tiny{$ \left(18, 15, 8, 8\right) $}, 
\tiny{$ \left(17, 14, 11, 7\right) $}, 
\tiny{$ \left(17, 15, 10, 7\right) $}, 
\tiny{$ \left(16, 14, 13, 6\right) $}, 
\tiny{$ \left(17, 16, 9, 7\right) $}, 
\tiny{$ \left(16, 15, 12, 6\right) $}, 
\tiny{$ \left(17, 17, 8, 7\right) $}, 
\tiny{$ \left(16, 16, 11, 6\right) $}, 
\tiny{$ \left(15, 15, 14, 5\right) $}, 
\tiny{$ \left(19, 13, 9, 8\right) $}, 
\tiny{$ \left(18, 12, 12, 7\right) $}, 
\tiny{$ \left(18, 13, 11, 7\right) $}, 
\tiny{$ \left(18, 14, 10, 7\right) $}, 
\tiny{$ \left(17, 13, 13, 6\right) $}, 
\tiny{$ \left(18, 15, 9, 7\right) $}, 
\tiny{$ \left(17, 14, 12, 6\right) $}, 
\tiny{$ \left(17, 15, 11, 6\right) $}, 
\tiny{$ \left(16, 14, 14, 5\right) $}, 
\tiny{$ \left(18, 17, 7, 7\right) $}, 
\tiny{$ \left(17, 16, 10, 6\right) $}, 
\tiny{$ \left(16, 15, 13, 5\right) $}, 
\tiny{$ \left(17, 17, 9, 6\right) $}, 
\tiny{$ \left(16, 16, 12, 5\right) $}, 
\tiny{$ \left(15, 15, 15, 4\right) $}, 
\tiny{$ \left(19, 13, 10, 7\right) $}, 
\tiny{$ \left(19, 14, 9, 7\right) $}, 
\tiny{$ \left(18, 13, 12, 6\right) $}, 
\tiny{$ \left(18, 15, 10, 6\right) $}, 
\tiny{$ \left(17, 14, 13, 5\right) $}, 
\tiny{$ \left(17, 15, 12, 5\right) $}, 
\tiny{$ \left(18, 17, 8, 6\right) $}, 
\tiny{$ \left(16, 15, 14, 4\right) $}, 
\tiny{$ \left(18, 18, 7, 6\right) $}, 
\tiny{$ \left(17, 17, 10, 5\right) $}, 
\tiny{$ \left(19, 13, 11, 6\right) $}, 
\tiny{$ \left(20, 15, 7, 7\right) $}, 
\tiny{$ \left(19, 14, 10, 6\right) $}, 
\tiny{$ \left(18, 13, 13, 5\right) $}, 
\tiny{$ \left(19, 15, 9, 6\right) $}, 
\tiny{$ \left(18, 15, 11, 5\right) $}, 
\tiny{$ \left(17, 15, 13, 4\right) $}, 
\tiny{$ \left(19, 18, 6, 6\right) $}, 
\tiny{$ \left(18, 17, 9, 5\right) $}, 
\tiny{$ \left(17, 16, 12, 4\right) $}, 
\tiny{$ \left(16, 15, 15, 3\right) $}, 
\tiny{$ \left(18, 18, 8, 5\right) $}, 
\tiny{$ \left(17, 17, 11, 4\right) $}, 
\tiny{$ \left(16, 16, 14, 3\right) $}, 
\tiny{$ \left(19, 13, 12, 5\right) $}, 
\tiny{$ \left(19, 14, 11, 5\right) $}, 
\tiny{$ \left(18, 14, 13, 4\right) $}, 
\tiny{$ \left(17, 15, 14, 3\right) $}, 
\tiny{$ \left(18, 17, 10, 4\right) $}, 
\tiny{$ \left(17, 16, 13, 3\right) $}, 
\tiny{$ \left(19, 19, 6, 5\right) $}, 
\tiny{$ \left(17, 17, 12, 3\right) $}, 
\tiny{$ \left(16, 16, 15, 2\right) $}, 
\tiny{$ \left(19, 13, 13, 4\right) $}, 
\tiny{$ \left(18, 14, 14, 3\right) $}, 
\tiny{$ \left(18, 15, 13, 3\right) $}, 
\tiny{$ \left(17, 15, 15, 2\right) $}, 
\tiny{$ \left(20, 19, 5, 5\right) $}, 
\tiny{$ \left(19, 18, 8, 4\right) $}, 
\tiny{$ \left(17, 16, 14, 2\right) $}, 
\tiny{$ \left(19, 19, 7, 4\right) $}, 
\tiny{$ \left(18, 18, 10, 3\right) $}, 
\tiny{$ \left(17, 17, 13, 2\right) $}, 
\tiny{$ \left(16, 16, 16, 1\right) $}, 
\tiny{$ \left(18, 15, 14, 2\right) $}, 
\tiny{$ \left(20, 19, 6, 4\right) $}, 
\tiny{$ \left(18, 17, 12, 2\right) $}, 
\tiny{$ \left(17, 16, 15, 1\right) $}, 
\tiny{$ \left(19, 19, 8, 3\right) $}, 
\tiny{$ \left(17, 17, 14, 1\right) $}, 
\tiny{$ \left(19, 15, 13, 2\right) $}, 
\tiny{$ \left(18, 15, 15, 1\right) $}, 
\tiny{$ \left(18, 16, 14, 1\right) $}, 
\tiny{$ \left(21, 20, 4, 4\right) $}, 
\tiny{$ \left(18, 17, 13, 1\right) $}, 
\tiny{$ \left(17, 16, 16\right) $}, 
\tiny{$ \left(20, 20, 6, 3\right) $}, 
\tiny{$ \left(19, 19, 9, 2\right) $}, 
\tiny{$ \left(17, 17, 15\right) $}, 
\tiny{$ \left(19, 15, 14, 1\right) $}, 
\tiny{$ \left(18, 16, 15\right) $}, 
\tiny{$ \left(18, 17, 14\right) $}, 
\tiny{$ \left(21, 21, 4, 3\right) $}, 
\tiny{$ \left(19, 15, 15\right) $}, 
\tiny{$ \left(19, 17, 13\right) $}, 
\tiny{$ \left(22, 21, 3, 3\right) $}, 
\tiny{$ \left(21, 21, 5, 2\right) $}, 
\tiny{$ \left(19, 19, 11\right) $}, 
\tiny{$ \left(20, 15, 14\right) $}, 
\tiny{$ \left(23, 22, 2, 2\right) $}, 
\tiny{$ \left(22, 22, 4, 1\right) $}, 
\tiny{$ \left(21, 21, 7\right) $}, 
\tiny{$ \left(24, 23, 1, 1\right) $}, 
\tiny{$ \left(23, 23, 3\right) $}, 
\tiny{$ \left(25, 23, 1\right) $}, 
\tiny{$ \left(25, 24\right) $}, 
\tiny{$ \left(14, 14, 14, 14\right) $}, 
\tiny{$ \left(15, 14, 14, 13\right) $}, 
\tiny{$ \left(15, 15, 13, 13\right) $}, 
\tiny{$ \left(16, 14, 13, 13\right) $}, 
\tiny{$ \left(15, 15, 14, 12\right) $}, 
\tiny{$ \left(17, 13, 13, 13\right) $}, 
\tiny{$ \left(16, 14, 14, 12\right) $}, 
\tiny{$ \left(16, 15, 13, 12\right) $}, 
\tiny{$ \left(16, 16, 12, 12\right) $}, 
\tiny{$ \left(15, 15, 15, 11\right) $}, 
\tiny{$ \left(17, 14, 13, 12\right) $}, 
\tiny{$ \left(17, 15, 12, 12\right) $}, 
\tiny{$ \left(16, 15, 14, 11\right) $}, 
\tiny{$ \left(16, 16, 13, 11\right) $}, 
\tiny{$ \left(18, 13, 13, 12\right) $}, 
\tiny{$ \left(18, 14, 12, 12\right) $}, 
\tiny{$ \left(17, 14, 14, 11\right) $}, 
\tiny{$ \left(17, 15, 13, 11\right) $}, 
\tiny{$ \left(17, 16, 12, 11\right) $}, 
\tiny{$ \left(16, 15, 15, 10\right) $}, 
\tiny{$ \left(17, 17, 11, 11\right) $}, 
\tiny{$ \left(16, 16, 14, 10\right) $}, 
\tiny{$ \left(19, 13, 12, 12\right) $}, 
\tiny{$ \left(18, 14, 13, 11\right) $}, 
\tiny{$ \left(18, 15, 12, 11\right) $}, 
\tiny{$ \left(18, 16, 11, 11\right) $}, 
\tiny{$ \left(17, 15, 14, 10\right) $}, 
\tiny{$ \left(17, 16, 13, 10\right) $}, 
\tiny{$ \left(17, 17, 12, 10\right) $}, 
\tiny{$ \left(16, 16, 15, 9\right) $}, 
\tiny{$ \left(19, 13, 13, 11\right) $}, 
\tiny{$ \left(19, 15, 11, 11\right) $}, 
\tiny{$ \left(18, 15, 13, 10\right) $}, 
\tiny{$ \left(17, 15, 15, 9\right) $}, 
\tiny{$ \left(18, 17, 11, 10\right) $}, 
\tiny{$ \left(18, 18, 10, 10\right) $}, 
\tiny{$ \left(17, 17, 13, 9\right) $}, 
\tiny{$ \left(20, 14, 11, 11\right) $}, 
\tiny{$ \left(19, 14, 13, 10\right) $}, 
\tiny{$ \left(19, 15, 12, 10\right) $}, 
\tiny{$ \left(19, 16, 11, 10\right) $}, 
\tiny{$ \left(18, 15, 14, 9\right) $}, 
\tiny{$ \left(19, 17, 10, 10\right) $}, 
\tiny{$ \left(18, 16, 13, 9\right) $}, 
\tiny{$ \left(18, 17, 12, 9\right) $}, 
\tiny{$ \left(17, 16, 15, 8\right) $}, 
\tiny{$ \left(18, 18, 11, 9\right) $}, 
\tiny{$ \left(17, 17, 14, 8\right) $}, 
\tiny{$ \left(20, 15, 11, 10\right) $}, 
\tiny{$ \left(19, 14, 14, 9\right) $}, 
\tiny{$ \left(19, 15, 13, 9\right) $}, 
\tiny{$ \left(18, 15, 15, 8\right) $}, 
\tiny{$ \left(18, 17, 13, 8\right) $}, 
\tiny{$ \left(17, 16, 16, 7\right) $}, 
\tiny{$ \left(19, 19, 9, 9\right) $}, 
\tiny{$ \left(17, 17, 15, 7\right) $}, 
\tiny{$ \left(21, 15, 10, 10\right) $}, 
\tiny{$ \left(19, 18, 11, 8\right) $}, 
\tiny{$ \left(19, 19, 10, 8\right) $}, 
\tiny{$ \left(17, 17, 16, 6\right) $}, 
\tiny{$ \left(20, 15, 13, 8\right) $}, 
\tiny{$ \left(19, 15, 15, 7\right) $}, 
\tiny{$ \left(20, 19, 9, 8\right) $}, 
\tiny{$ \left(18, 17, 15, 6\right) $}, 
\tiny{$ \left(19, 19, 11, 7\right) $}, 
\tiny{$ \left(17, 17, 17, 5\right) $}, 
\tiny{$ \left(18, 17, 16, 5\right) $}, 
\tiny{$ \left(20, 15, 15, 6\right) $}, 
\tiny{$ \left(18, 17, 17, 4\right) $}, 
\tiny{$ \left(21, 21, 7, 7\right) $}, 
\tiny{$ \left(19, 17, 16, 4\right) $}, 
\tiny{$ \left(18, 18, 17, 3\right) $}, 
\tiny{$ \left(19, 17, 17, 3\right) $}, 
\tiny{$ \left(19, 18, 16, 3\right) $}, 
\tiny{$ \left(18, 18, 18, 2\right) $}, 
\tiny{$ \left(19, 18, 17, 2\right) $}, 
\tiny{$ \left(19, 19, 16, 2\right) $}, 
\tiny{$ \left(20, 17, 17, 2\right) $}, 
\tiny{$ \left(19, 18, 18, 1\right) $}, 
\tiny{$ \left(19, 19, 17, 1\right) $}, 
\tiny{$ \left(20, 18, 17, 1\right) $}, 
\tiny{$ \left(20, 19, 16, 1\right) $}, 
\tiny{$ \left(19, 19, 18\right) $}, 
\tiny{$ \left(21, 17, 17, 1\right) $}, 
\tiny{$ \left(20, 18, 18\right) $}, 
\tiny{$ \left(20, 19, 17\right) $}, 
\tiny{$ \left(21, 19, 16\right) $}, 
\tiny{$ \left(22, 17, 17\right) $}, 
\tiny{$ \left(28, 26, 1, 1\right) $}, 
\tiny{$ \left(27, 27, 2\right) $}, 
\tiny{$ \left(29, 27\right) $}, 
\tiny{$ \left(16, 16, 16, 15\right) $}, 
\tiny{$ \left(17, 16, 15, 15\right) $}, 
\tiny{$ \left(18, 15, 15, 15\right) $}, 
\tiny{$ \left(17, 16, 16, 14\right) $}, 
\tiny{$ \left(17, 17, 15, 14\right) $}, 
\tiny{$ \left(18, 16, 15, 14\right) $}, 
\tiny{$ \left(18, 17, 14, 14\right) $}, 
\tiny{$ \left(17, 17, 16, 13\right) $}, 
\tiny{$ \left(19, 15, 15, 14\right) $}, 
\tiny{$ \left(19, 16, 14, 14\right) $}, 
\tiny{$ \left(18, 16, 16, 13\right) $}, 
\tiny{$ \left(18, 17, 15, 13\right) $}, 
\tiny{$ \left(18, 18, 14, 13\right) $}, 
\tiny{$ \left(17, 17, 17, 12\right) $}, 
\tiny{$ \left(20, 15, 14, 14\right) $}, 
\tiny{$ \left(19, 16, 15, 13\right) $}, 
\tiny{$ \left(19, 17, 14, 13\right) $}, 
\tiny{$ \left(19, 18, 13, 13\right) $}, 
\tiny{$ \left(18, 17, 16, 12\right) $}, 
\tiny{$ \left(18, 18, 15, 12\right) $}, 
\tiny{$ \left(20, 15, 15, 13\right) $}, 
\tiny{$ \left(20, 17, 13, 13\right) $}, 
\tiny{$ \left(19, 17, 15, 12\right) $}, 
\tiny{$ \left(18, 17, 17, 11\right) $}, 
\tiny{$ \left(19, 19, 13, 12\right) $}, 
\tiny{$ \left(20, 19, 12, 12\right) $}, 
\tiny{$ \left(19, 19, 14, 11\right) $}, 
\tiny{$ \left(21, 15, 15, 12\right) $}, 
\tiny{$ \left(20, 17, 15, 11\right) $}, 
\tiny{$ \left(19, 17, 17, 10\right) $}, 
\tiny{$ \left(19, 19, 19, 6\right) $}, 
\tiny{$ \left(20, 19, 19, 5\right) $}, 
\tiny{$ \left(21, 19, 19, 4\right) $}, 
\tiny{$ \left(20, 20, 20, 3\right) $}, 
\tiny{$ \left(21, 20, 19, 3\right) $}, 
\tiny{$ \left(21, 20, 20, 2\right) $}, 
\tiny{$ \left(21, 21, 19, 2\right) $}, 
\tiny{$ \left(21, 21, 20, 1\right) $}, 
\tiny{$ \left(22, 20, 20, 1\right) $}, 
\tiny{$ \left(22, 21, 19, 1\right) $}, 
\tiny{$ \left(21, 21, 21\right) $}, 
\tiny{$ \left(22, 21, 20\right) $}, 
\tiny{$ \left(23, 21, 19\right) $}, 
\tiny{$ \left(31, 31, 1\right) $}, 
\tiny{$ \left(18, 18, 17, 17\right) $}, 
\tiny{$ \left(19, 17, 17, 17\right) $}, 
\tiny{$ \left(18, 18, 18, 16\right) $}, 
\tiny{$ \left(19, 18, 17, 16\right) $}, 
\tiny{$ \left(19, 19, 16, 16\right) $}, 
\tiny{$ \left(20, 17, 17, 16\right) $}, 
\tiny{$ \left(20, 18, 16, 16\right) $}, 
\tiny{$ \left(19, 18, 18, 15\right) $}, 
\tiny{$ \left(19, 19, 17, 15\right) $}, 
\tiny{$ \left(21, 17, 16, 16\right) $}, 
\tiny{$ \left(20, 18, 17, 15\right) $}, 
\tiny{$ \left(20, 19, 16, 15\right) $}, 
\tiny{$ \left(20, 20, 15, 15\right) $}, 
\tiny{$ \left(19, 19, 18, 14\right) $}, 
\tiny{$ \left(21, 19, 15, 15\right) $}, 
\tiny{$ \left(21, 21, 14, 14\right) $}, 
\tiny{$ \left(23, 22, 22, 3\right) $}, 
\tiny{$ \left(23, 23, 22, 2\right) $}, 
\tiny{$ \left(23, 23, 23, 1\right) $}, 
\tiny{$ \left(24, 23, 22, 1\right) $}, 
\tiny{$ \left(24, 23, 23\right) $}, 
\tiny{$ \left(25, 23, 22\right) $}, 
\tiny{$ \left(34, 34, 1, 1\right) $}, 
\tiny{$ \left(20, 19, 19, 19\right) $}, 
\tiny{$ \left(20, 20, 19, 18\right) $}, 
\tiny{$ \left(21, 19, 19, 18\right) $}, 
\tiny{$ \left(21, 20, 18, 18\right) $}, 
\tiny{$ \left(20, 20, 20, 17\right) $}, 
\tiny{$ \left(22, 19, 18, 18\right) $}, 
\tiny{$ \left(21, 20, 19, 17\right) $}, 
\tiny{$ \left(21, 21, 18, 17\right) $}, 
\tiny{$ \left(22, 21, 17, 17\right) $}, 
\tiny{$ \left(25, 25, 25, 2\right) $}, 
\tiny{$ \left(26, 25, 25, 1\right) $}, 
\tiny{$ \left(27, 25, 25\right) $}, 
\tiny{$ \left(21, 21, 21, 21\right) $}, 
\tiny{$ \left(22, 21, 21, 20\right) $}, 
\tiny{$ \left(22, 22, 20, 20\right) $}, 
\tiny{$ \left(23, 21, 20, 20\right) $}, 
\tiny{$ \left(22, 22, 21, 19\right) $}, 
\tiny{$ \left(23, 23, 19, 19\right) $}, 
\tiny{$ \left(23, 23, 23, 22\right) $}, 
\tiny{$ \left(24, 23, 22, 22\right) $}, 
\tiny{$ \left(25, 25, 24, 24\right) $}, 
\tiny{$ \left(49, 49\right)\}$}.

\end{minipage}

Here we truncated trailing zeros from the 4-partitions.
The set $X$ is the set of generators of the semigroup of 4-partitions $\mu$ that have \mbox{$\pl \mu d 7 >0$}.
\end{proposition}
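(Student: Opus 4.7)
The plan is to adapt the proof of Proposition~\ref{pro:listofgenerators} to the new parameters $m=4$, $n=7$, following the three-step scheme already used for $n=6$: (i) verify $\pl\mu d 7>0$ for every listed generator, (ii) verify by finite computer search that for small $d$ every admissible $\la$ is a sum of elements of $X$, (iii) invoke the inductive reduction of Proposition~\ref{pro:generators74} to cover all remaining $d$.

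For step (i), I would run \textsc{LiE} directly on every $\mu\in X$ except the two partitions $\mu\in\{(49,49),(24,24,23,23)\}$, where the degree is too large for \textsc{LiE} to finish. For these two outliers I would rely instead on Proposition~\ref{pro:computercalc}, which establishes $\mult_\mu(\IC[\Ch_4^7]_d)>0$; since $\IC[\Ch_m^n]_d$ is a $\GL_m$-equivariant quotient of $\IC[\IA_m^n]_d$, any irreducible that occurs in the former also occurs in the latter, so $\pl\mu d 7>0$ follows automatically. For step (ii) I would enumerate, for each $d$ in the range $1 \leq d \leq 13$, all $4$-partitions $\la$ of $7d$ with $\bar\la\notin Y$ (equivalently, with $\pl\la d 7>0$ by Lemma~\ref{lem:vanishingpleth74} together with \textsc{LiE}) and check by a straightforward integer-programming-style search that each such $\la$ admits a nonnegative integer decomposition $\la=\sum_{\mu\in X} c_\mu\mu$; this is the same kind of finite verification performed for $n=6$.

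For step (iii), covering $d\geq 14$, the argument of Proposition~\ref{pro:generators74} is in place: $|\la|=7d\geq 98$, so by the pigeonhole principle applied to the column multiplicities $c_1,c_2,c_3,c_4$ of $\la$, at least one $c_i$ is large enough that $\la=\la'+(7^i)$ with $\la'$ still satisfying $\bar{\la'}\notin Y$; then $\la'$ is a sum of elements of $X$ by induction, and $(7^i)\in X$ for $i\in\{1,2,3,4\}$. Combining steps (i)--(iii) gives both that $X$ is contained in the semigroup of positivity and that every element of that semigroup lies in $\IN X$, so $X$ is exactly a generating set.

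The main obstacle is step (iii), specifically calibrating the pigeonhole thresholds so that after subtracting $(7^i)$ the residue $\la'$ avoids every shape in the much larger forbidden set $Y$; unlike the $n=6$ case, $Y$ contains partitions with up to three rows of size as large as $6$, so I have to choose the thresholds (e.g.\ $c_2\geq 10$, $c_3\geq 10$, $c_4\geq 10$) large enough that each part $\la_i'$ for $i\geq 2$ exceeds all parts appearing in $Y$ at that index, while still being achievable by pigeonhole from $|\la|\geq 98$. A secondary obstacle is computational: the finite enumeration in step (ii) is much heavier than for $n=6$ (948 generators, many of length $4$), and the verification of \eqref{eq:posplethimpliesposchow} for the two outliers in step (i) depends on Proposition~\ref{pro:computercalc}, which itself required the optimized dynamic-programming scheme described in Section~\ref{sec:computer}; without that speed-up the argument would be infeasible.
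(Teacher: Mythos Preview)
Your three–step plan (verify positivity of each listed generator, finite check for small $d$, inductive reduction for $d\ge 14$) is exactly what the paper does, and your handling of the two outliers $(49,49)$ and $(24,24,23,23)$ via Proposition~\ref{pro:computercalc} matches the paper as well.

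There is, however, a concrete error in your description of step~(iii). You assert that $(7^i)\in X$ for $i\in\{1,2,3,4\}$, but this is false: inspecting the list, only $(7)$ appears. In fact $\pl{(7,7)}{2}{7}=\pl{(7,7,7)}{3}{7}=\pl{(7,7,7,7)}{4}{7}=0$ (for instance, $h_2[h_7]$ contains only $s_{(14-k,k)}$ with $k$ even), so these partitions are not in the semigroup at all and cannot serve as the ``column blocks'' in the pigeonhole subtraction. The direct analogue of the $n=6$ argument therefore breaks down, and the inductive step of Proposition~\ref{pro:generators74} must subtract other elements of $X$ (e.g.\ two-row generators like $(8,6)$ or $(14,14)$, three-row ones like $(9,6,6)$, four-row ones like $(8,8,8,4)$), with a more delicate case split on the column counts $c_1,c_2,c_3,c_4$ to ensure the result remains a partition with $\bar{\la'}\notin Y$. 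The paper glosses over this by saying the proof is ``completely analogous'', but your write-up makes a specific false claim; you should either treat Proposition~\ref{pro:generators74} as a black box or work out the correct set of subtraction partitions.
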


\section{Tableau computation results for Proposition~\ref{pro:computercalc}} \label{sec:computer_resultscomputercalc}
In the following we list the actual tableaux from our computer computations described in the proof of Proposition~\ref{pro:computercalc}.

We start with the case $n=6$, $m=3$.
The only partition where we were not able to generate all the semistandard tableaux was $\la = (45,45)$, so we generated random semistandard tableaux and tested only those which guaranteed fast evaluation.

\noindent\begin{minipage}{5cm}
\begin{minipage}{1.7cm}$(6)$\end{minipage}\scalebox{.55}{\def\lr#1{\multicolumn{1}{|@{\hspace{.6ex}}c@{\hspace{.6ex}}|}{\raisebox{-.3ex}{$#1$}}}
\raisebox{-.6ex}{$\begin{array}[b]{*{6}c}\cline{1-6}
\lr{1}&\lr{1}&\lr{1}&\lr{1}&\lr{1}&\lr{1}\\\cline{1-6}
\end{array}$}
}\end{minipage} \ \begin{minipage}{5cm}
\begin{minipage}{1.7cm}$(6,6)$\end{minipage}\scalebox{.55}{\def\lr#1{\multicolumn{1}{|@{\hspace{.6ex}}c@{\hspace{.6ex}}|}{\raisebox{-.3ex}{$#1$}}}
\raisebox{-.6ex}{$\begin{array}[b]{*{6}c}\cline{1-6}
\lr{1}&\lr{1}&\lr{1}&\lr{1}&\lr{1}&\lr{1}\\\cline{1-6}
\lr{2}&\lr{2}&\lr{2}&\lr{2}&\lr{2}&\lr{2}\\\cline{1-6}
\end{array}$}
}\end{minipage} \ \begin{minipage}{5cm}
\begin{minipage}{1.7cm}$(8,4)$\end{minipage}\scalebox{.55}{\def\lr#1{\multicolumn{1}{|@{\hspace{.6ex}}c@{\hspace{.6ex}}|}{\raisebox{-.3ex}{$#1$}}}
\raisebox{-.6ex}{$\begin{array}[b]{*{8}c}\cline{1-8}
\lr{1}&\lr{1}&\lr{1}&\lr{1}&\lr{1}&\lr{1}&\lr{2}&\lr{2}\\\cline{1-8}
\lr{2}&\lr{2}&\lr{2}&\lr{2}\\\cline{1-4}
\end{array}$}
}\end{minipage} \ \begin{minipage}{5cm}
\begin{minipage}{1.7cm}$(10,2)$\end{minipage}\scalebox{.55}{\def\lr#1{\multicolumn{1}{|@{\hspace{.6ex}}c@{\hspace{.6ex}}|}{\raisebox{-.3ex}{$#1$}}}
\raisebox{-.6ex}{$\begin{array}[b]{*{10}c}\cline{1-10}
\lr{1}&\lr{1}&\lr{1}&\lr{1}&\lr{1}&\lr{1}&\lr{2}&\lr{2}&\lr{2}&\lr{2}\\\cline{1-10}
\lr{2}&\lr{2}\\\cline{1-2}
\end{array}$}
}\end{minipage} \ \begin{minipage}{5cm}
\begin{minipage}{1.7cm}$(6,6,6)$\end{minipage}\scalebox{.55}{\def\lr#1{\multicolumn{1}{|@{\hspace{.6ex}}c@{\hspace{.6ex}}|}{\raisebox{-.3ex}{$#1$}}}
\raisebox{-.6ex}{$\begin{array}[b]{*{6}c}\cline{1-6}
\lr{1}&\lr{1}&\lr{1}&\lr{1}&\lr{1}&\lr{1}\\\cline{1-6}
\lr{2}&\lr{2}&\lr{2}&\lr{2}&\lr{2}&\lr{2}\\\cline{1-6}
\lr{3}&\lr{3}&\lr{3}&\lr{3}&\lr{3}&\lr{3}\\\cline{1-6}
\end{array}$}
}\end{minipage} \ \begin{minipage}{5cm}
\begin{minipage}{1.7cm}$(8,6,4)$\end{minipage}\scalebox{.55}{\def\lr#1{\multicolumn{1}{|@{\hspace{.6ex}}c@{\hspace{.6ex}}|}{\raisebox{-.3ex}{$#1$}}}
\raisebox{-.6ex}{$\begin{array}[b]{*{8}c}\cline{1-8}
\lr{1}&\lr{1}&\lr{1}&\lr{1}&\lr{1}&\lr{1}&\lr{2}&\lr{2}\\\cline{1-8}
\lr{2}&\lr{2}&\lr{2}&\lr{2}&\lr{3}&\lr{3}\\\cline{1-6}
\lr{3}&\lr{3}&\lr{3}&\lr{3}\\\cline{1-4}
\end{array}$}
}\end{minipage} \ \begin{minipage}{5cm}
\begin{minipage}{1.7cm}$(10,4,4)$\end{minipage}\scalebox{.55}{\def\lr#1{\multicolumn{1}{|@{\hspace{.6ex}}c@{\hspace{.6ex}}|}{\raisebox{-.3ex}{$#1$}}}
\raisebox{-.6ex}{$\begin{array}[b]{*{10}c}\cline{1-10}
\lr{1}&\lr{1}&\lr{1}&\lr{1}&\lr{1}&\lr{1}&\lr{2}&\lr{2}&\lr{3}&\lr{3}\\\cline{1-10}
\lr{2}&\lr{2}&\lr{2}&\lr{2}\\\cline{1-4}
\lr{3}&\lr{3}&\lr{3}&\lr{3}\\\cline{1-4}
\end{array}$}
}\end{minipage} \ \begin{minipage}{5cm}
\begin{minipage}{1.7cm}$(9,6,3)$\end{minipage}\scalebox{.55}{\def\lr#1{\multicolumn{1}{|@{\hspace{.6ex}}c@{\hspace{.6ex}}|}{\raisebox{-.3ex}{$#1$}}}
\raisebox{-.6ex}{$\begin{array}[b]{*{9}c}\cline{1-9}
\lr{1}&\lr{1}&\lr{1}&\lr{1}&\lr{1}&\lr{1}&\lr{2}&\lr{2}&\lr{2}\\\cline{1-9}
\lr{2}&\lr{2}&\lr{2}&\lr{3}&\lr{3}&\lr{3}\\\cline{1-6}
\lr{3}&\lr{3}&\lr{3}\\\cline{1-3}
\end{array}$}
}\end{minipage} \ \begin{minipage}{5cm}
\begin{minipage}{1.7cm}$(8,8,2)$\end{minipage}\scalebox{.55}{\def\lr#1{\multicolumn{1}{|@{\hspace{.6ex}}c@{\hspace{.6ex}}|}{\raisebox{-.3ex}{$#1$}}}
\raisebox{-.6ex}{$\begin{array}[b]{*{8}c}\cline{1-8}
\lr{1}&\lr{1}&\lr{1}&\lr{1}&\lr{1}&\lr{1}&\lr{2}&\lr{2}\\\cline{1-8}
\lr{2}&\lr{2}&\lr{2}&\lr{2}&\lr{3}&\lr{3}&\lr{3}&\lr{3}\\\cline{1-8}
\lr{3}&\lr{3}\\\cline{1-2}
\end{array}$}
}\end{minipage} \ \begin{minipage}{5cm}
\begin{minipage}{1.7cm}$(10,6,2)$\end{minipage}\scalebox{.55}{\def\lr#1{\multicolumn{1}{|@{\hspace{.6ex}}c@{\hspace{.6ex}}|}{\raisebox{-.3ex}{$#1$}}}
\raisebox{-.6ex}{$\begin{array}[b]{*{10}c}\cline{1-10}
\lr{1}&\lr{1}&\lr{1}&\lr{1}&\lr{1}&\lr{1}&\lr{2}&\lr{2}&\lr{2}&\lr{2}\\\cline{1-10}
\lr{2}&\lr{2}&\lr{3}&\lr{3}&\lr{3}&\lr{3}\\\cline{1-6}
\lr{3}&\lr{3}\\\cline{1-2}
\end{array}$}
}\end{minipage} \ \begin{minipage}{5cm}
\begin{minipage}{1.7cm}$(11,5,2)$\end{minipage}\scalebox{.55}{\def\lr#1{\multicolumn{1}{|@{\hspace{.6ex}}c@{\hspace{.6ex}}|}{\raisebox{-.3ex}{$#1$}}}
\raisebox{-.6ex}{$\begin{array}[b]{*{11}c}\cline{1-11}
\lr{1}&\lr{1}&\lr{1}&\lr{1}&\lr{1}&\lr{1}&\lr{2}&\lr{2}&\lr{3}&\lr{3}&\lr{3}\\\cline{1-11}
\lr{2}&\lr{2}&\lr{2}&\lr{2}&\lr{3}\\\cline{1-5}
\lr{3}&\lr{3}\\\cline{1-2}
\end{array}$}
}\end{minipage} \ \begin{minipage}{5cm}
\begin{minipage}{1.7cm}$(10,7,1)$\end{minipage}\scalebox{.55}{\def\lr#1{\multicolumn{1}{|@{\hspace{.6ex}}c@{\hspace{.6ex}}|}{\raisebox{-.3ex}{$#1$}}}
\raisebox{-.6ex}{$\begin{array}[b]{*{10}c}\cline{1-10}
\lr{1}&\lr{1}&\lr{1}&\lr{1}&\lr{1}&\lr{1}&\lr{2}&\lr{2}&\lr{3}&\lr{3}\\\cline{1-10}
\lr{2}&\lr{2}&\lr{2}&\lr{2}&\lr{3}&\lr{3}&\lr{3}\\\cline{1-7}
\lr{3}\\\cline{1-1}
\end{array}$}
}\end{minipage} \ \begin{minipage}{5cm}
\begin{minipage}{1.7cm}$(12,4,2)$\end{minipage}\scalebox{.55}{\def\lr#1{\multicolumn{1}{|@{\hspace{.6ex}}c@{\hspace{.6ex}}|}{\raisebox{-.3ex}{$#1$}}}
\raisebox{-.6ex}{$\begin{array}[b]{*{12}c}\cline{1-12}
\lr{1}&\lr{1}&\lr{1}&\lr{1}&\lr{1}&\lr{1}&\lr{2}&\lr{2}&\lr{2}&\lr{2}&\lr{3}&\lr{3}\\\cline{1-12}
\lr{2}&\lr{2}&\lr{3}&\lr{3}\\\cline{1-4}
\lr{3}&\lr{3}\\\cline{1-2}
\end{array}$}
}\end{minipage} \ \begin{minipage}{5cm}
\begin{minipage}{1.7cm}$(11,6,1)$\end{minipage}\scalebox{.55}{\def\lr#1{\multicolumn{1}{|@{\hspace{.6ex}}c@{\hspace{.6ex}}|}{\raisebox{-.3ex}{$#1$}}}
\raisebox{-.6ex}{$\begin{array}[b]{*{11}c}\cline{1-11}
\lr{1}&\lr{1}&\lr{1}&\lr{1}&\lr{1}&\lr{1}&\lr{2}&\lr{2}&\lr{2}&\lr{2}&\lr{2}\\\cline{1-11}
\lr{2}&\lr{3}&\lr{3}&\lr{3}&\lr{3}&\lr{3}\\\cline{1-6}
\lr{3}\\\cline{1-1}
\end{array}$}
}\end{minipage} \ \begin{minipage}{5cm}
\begin{minipage}{1.7cm}$(10,8)$\end{minipage}\scalebox{.55}{\def\lr#1{\multicolumn{1}{|@{\hspace{.6ex}}c@{\hspace{.6ex}}|}{\raisebox{-.3ex}{$#1$}}}
\raisebox{-.6ex}{$\begin{array}[b]{*{10}c}\cline{1-10}
\lr{1}&\lr{1}&\lr{1}&\lr{1}&\lr{1}&\lr{1}&\lr{2}&\lr{2}&\lr{3}&\lr{3}\\\cline{1-10}
\lr{2}&\lr{2}&\lr{2}&\lr{2}&\lr{3}&\lr{3}&\lr{3}&\lr{3}\\\cline{1-8}
\end{array}$}
}\end{minipage} \ \begin{minipage}{5cm}
\begin{minipage}{1.7cm}$(14,2,2)$\end{minipage}\scalebox{.55}{\def\lr#1{\multicolumn{1}{|@{\hspace{.6ex}}c@{\hspace{.6ex}}|}{\raisebox{-.3ex}{$#1$}}}
\raisebox{-.6ex}{$\begin{array}[b]{*{14}c}\cline{1-14}
\lr{1}&\lr{1}&\lr{1}&\lr{1}&\lr{1}&\lr{1}&\lr{2}&\lr{2}&\lr{2}&\lr{2}&\lr{3}&\lr{3}&\lr{3}&\lr{3}\\\cline{1-14}
\lr{2}&\lr{2}\\\cline{1-2}
\lr{3}&\lr{3}\\\cline{1-2}
\end{array}$}
}\end{minipage} \ \begin{minipage}{5cm}
\begin{minipage}{1.7cm}$(13,4,1)$\end{minipage}\scalebox{.55}{\def\lr#1{\multicolumn{1}{|@{\hspace{.6ex}}c@{\hspace{.6ex}}|}{\raisebox{-.3ex}{$#1$}}}
\raisebox{-.6ex}{$\begin{array}[b]{*{13}c}\cline{1-13}
\lr{1}&\lr{1}&\lr{1}&\lr{1}&\lr{1}&\lr{1}&\lr{2}&\lr{2}&\lr{2}&\lr{2}&\lr{2}&\lr{3}&\lr{3}\\\cline{1-13}
\lr{2}&\lr{3}&\lr{3}&\lr{3}\\\cline{1-4}
\lr{3}\\\cline{1-1}
\end{array}$}
}\end{minipage} \ \begin{minipage}{5cm}
\begin{minipage}{1.7cm}$(13,5)$\end{minipage}\scalebox{.55}{\def\lr#1{\multicolumn{1}{|@{\hspace{.6ex}}c@{\hspace{.6ex}}|}{\raisebox{-.3ex}{$#1$}}}
\raisebox{-.6ex}{$\begin{array}[b]{*{13}c}\cline{1-13}
\lr{1}&\lr{1}&\lr{1}&\lr{1}&\lr{1}&\lr{1}&\lr{2}&\lr{2}&\lr{3}&\lr{3}&\lr{3}&\lr{3}&\lr{3}\\\cline{1-13}
\lr{2}&\lr{2}&\lr{2}&\lr{2}&\lr{3}\\\cline{1-5}
\end{array}$}
}\end{minipage} \ \begin{minipage}{5cm}
\begin{minipage}{1.7cm}$(15,3)$\end{minipage}\scalebox{.55}{\def\lr#1{\multicolumn{1}{|@{\hspace{.6ex}}c@{\hspace{.6ex}}|}{\raisebox{-.3ex}{$#1$}}}
\raisebox{-.6ex}{$\begin{array}[b]{*{15}c}\cline{1-15}
\lr{1}&\lr{1}&\lr{1}&\lr{1}&\lr{1}&\lr{1}&\lr{2}&\lr{2}&\lr{2}&\lr{2}&\lr{3}&\lr{3}&\lr{3}&\lr{3}&\lr{3}\\\cline{1-15}
\lr{2}&\lr{2}&\lr{3}\\\cline{1-3}
\end{array}$}
}\end{minipage} \ \begin{minipage}{5cm}
\begin{minipage}{1.7cm}$(8,8,8)$\end{minipage}\scalebox{.55}{\def\lr#1{\multicolumn{1}{|@{\hspace{.6ex}}c@{\hspace{.6ex}}|}{\raisebox{-.3ex}{$#1$}}}
\raisebox{-.6ex}{$\begin{array}[b]{*{8}c}\cline{1-8}
\lr{1}&\lr{1}&\lr{1}&\lr{1}&\lr{1}&\lr{1}&\lr{2}&\lr{2}\\\cline{1-8}
\lr{2}&\lr{2}&\lr{2}&\lr{2}&\lr{3}&\lr{3}&\lr{3}&\lr{3}\\\cline{1-8}
\lr{3}&\lr{3}&\lr{4}&\lr{4}&\lr{4}&\lr{4}&\lr{4}&\lr{4}\\\cline{1-8}
\end{array}$}
}\end{minipage} \ \begin{minipage}{5cm}
\begin{minipage}{1.7cm}$(10,8,6)$\end{minipage}\scalebox{.55}{\def\lr#1{\multicolumn{1}{|@{\hspace{.6ex}}c@{\hspace{.6ex}}|}{\raisebox{-.3ex}{$#1$}}}
\raisebox{-.6ex}{$\begin{array}[b]{*{10}c}\cline{1-10}
\lr{1}&\lr{1}&\lr{1}&\lr{1}&\lr{1}&\lr{1}&\lr{2}&\lr{2}&\lr{3}&\lr{3}\\\cline{1-10}
\lr{2}&\lr{2}&\lr{2}&\lr{2}&\lr{3}&\lr{3}&\lr{3}&\lr{3}\\\cline{1-8}
\lr{4}&\lr{4}&\lr{4}&\lr{4}&\lr{4}&\lr{4}\\\cline{1-6}
\end{array}$}
}\end{minipage} \ \begin{minipage}{5cm}
\begin{minipage}{1.7cm}$(11,7,6)$\end{minipage}\scalebox{.55}{\def\lr#1{\multicolumn{1}{|@{\hspace{.6ex}}c@{\hspace{.6ex}}|}{\raisebox{-.3ex}{$#1$}}}
\raisebox{-.6ex}{$\begin{array}[b]{*{11}c}\cline{1-11}
\lr{1}&\lr{1}&\lr{1}&\lr{1}&\lr{1}&\lr{1}&\lr{2}&\lr{2}&\lr{3}&\lr{3}&\lr{3}\\\cline{1-11}
\lr{2}&\lr{2}&\lr{2}&\lr{2}&\lr{3}&\lr{3}&\lr{3}\\\cline{1-7}
\lr{4}&\lr{4}&\lr{4}&\lr{4}&\lr{4}&\lr{4}\\\cline{1-6}
\end{array}$}
}\end{minipage} \ \begin{minipage}{5cm}
\begin{minipage}{1.7cm}$(10,9,5)$\end{minipage}\scalebox{.55}{\def\lr#1{\multicolumn{1}{|@{\hspace{.6ex}}c@{\hspace{.6ex}}|}{\raisebox{-.3ex}{$#1$}}}
\raisebox{-.6ex}{$\begin{array}[b]{*{10}c}\cline{1-10}
\lr{1}&\lr{1}&\lr{1}&\lr{1}&\lr{1}&\lr{1}&\lr{2}&\lr{2}&\lr{2}&\lr{3}\\\cline{1-10}
\lr{2}&\lr{2}&\lr{2}&\lr{3}&\lr{3}&\lr{4}&\lr{4}&\lr{4}&\lr{4}\\\cline{1-9}
\lr{3}&\lr{3}&\lr{3}&\lr{4}&\lr{4}\\\cline{1-5}
\end{array}$}
}\end{minipage} \ \begin{minipage}{5cm}
\begin{minipage}{1.7cm}$(11,8,5)$\end{minipage}\scalebox{.55}{\def\lr#1{\multicolumn{1}{|@{\hspace{.6ex}}c@{\hspace{.6ex}}|}{\raisebox{-.3ex}{$#1$}}}
\raisebox{-.6ex}{$\begin{array}[b]{*{11}c}\cline{1-11}
\lr{1}&\lr{1}&\lr{1}&\lr{1}&\lr{1}&\lr{1}&\lr{2}&\lr{2}&\lr{4}&\lr{4}&\lr{4}\\\cline{1-11}
\lr{2}&\lr{2}&\lr{2}&\lr{2}&\lr{3}&\lr{3}&\lr{4}&\lr{4}\\\cline{1-8}
\lr{3}&\lr{3}&\lr{3}&\lr{3}&\lr{4}\\\cline{1-5}
\end{array}$}
}\end{minipage} \ \begin{minipage}{5cm}
\begin{minipage}{1.7cm}$(10,10,4)$\end{minipage}\scalebox{.55}{\def\lr#1{\multicolumn{1}{|@{\hspace{.6ex}}c@{\hspace{.6ex}}|}{\raisebox{-.3ex}{$#1$}}}
\raisebox{-.6ex}{$\begin{array}[b]{*{10}c}\cline{1-10}
\lr{1}&\lr{1}&\lr{1}&\lr{1}&\lr{1}&\lr{1}&\lr{2}&\lr{2}&\lr{3}&\lr{3}\\\cline{1-10}
\lr{2}&\lr{2}&\lr{2}&\lr{2}&\lr{3}&\lr{3}&\lr{3}&\lr{3}&\lr{4}&\lr{4}\\\cline{1-10}
\lr{4}&\lr{4}&\lr{4}&\lr{4}\\\cline{1-4}
\end{array}$}
}\end{minipage} \ \begin{minipage}{5cm}
\begin{minipage}{1.7cm}$(12,7,5)$\end{minipage}\scalebox{.55}{\def\lr#1{\multicolumn{1}{|@{\hspace{.6ex}}c@{\hspace{.6ex}}|}{\raisebox{-.3ex}{$#1$}}}
\raisebox{-.6ex}{$\begin{array}[b]{*{12}c}\cline{1-12}
\lr{1}&\lr{1}&\lr{1}&\lr{1}&\lr{1}&\lr{1}&\lr{2}&\lr{2}&\lr{3}&\lr{3}&\lr{3}&\lr{3}\\\cline{1-12}
\lr{2}&\lr{2}&\lr{2}&\lr{2}&\lr{3}&\lr{3}&\lr{4}\\\cline{1-7}
\lr{4}&\lr{4}&\lr{4}&\lr{4}&\lr{4}\\\cline{1-5}
\end{array}$}
}\end{minipage} \ \begin{minipage}{5cm}
\begin{minipage}{1.7cm}$(11,9,4)$\end{minipage}\scalebox{.55}{\def\lr#1{\multicolumn{1}{|@{\hspace{.6ex}}c@{\hspace{.6ex}}|}{\raisebox{-.3ex}{$#1$}}}
\raisebox{-.6ex}{$\begin{array}[b]{*{11}c}\cline{1-11}
\lr{1}&\lr{1}&\lr{1}&\lr{1}&\lr{1}&\lr{1}&\lr{2}&\lr{2}&\lr{3}&\lr{3}&\lr{4}\\\cline{1-11}
\lr{2}&\lr{2}&\lr{2}&\lr{2}&\lr{3}&\lr{3}&\lr{3}&\lr{3}&\lr{4}\\\cline{1-9}
\lr{4}&\lr{4}&\lr{4}&\lr{4}\\\cline{1-4}
\end{array}$}
}\end{minipage} \ \begin{minipage}{5cm}
\begin{minipage}{1.7cm}$(13,6,5)$\end{minipage}\scalebox{.55}{\def\lr#1{\multicolumn{1}{|@{\hspace{.6ex}}c@{\hspace{.6ex}}|}{\raisebox{-.3ex}{$#1$}}}
\raisebox{-.6ex}{$\begin{array}[b]{*{13}c}\cline{1-13}
\lr{1}&\lr{1}&\lr{1}&\lr{1}&\lr{1}&\lr{1}&\lr{2}&\lr{2}&\lr{3}&\lr{3}&\lr{3}&\lr{3}&\lr{3}\\\cline{1-13}
\lr{2}&\lr{2}&\lr{2}&\lr{2}&\lr{3}&\lr{4}\\\cline{1-6}
\lr{4}&\lr{4}&\lr{4}&\lr{4}&\lr{4}\\\cline{1-5}
\end{array}$}
}\end{minipage} \ \begin{minipage}{5cm}
\begin{minipage}{1.7cm}$(12,8,4)$\end{minipage}\scalebox{.55}{\def\lr#1{\multicolumn{1}{|@{\hspace{.6ex}}c@{\hspace{.6ex}}|}{\raisebox{-.3ex}{$#1$}}}
\raisebox{-.6ex}{$\begin{array}[b]{*{12}c}\cline{1-12}
\lr{1}&\lr{1}&\lr{1}&\lr{1}&\lr{1}&\lr{1}&\lr{2}&\lr{2}&\lr{2}&\lr{2}&\lr{2}&\lr{2}\\\cline{1-12}
\lr{3}&\lr{3}&\lr{3}&\lr{3}&\lr{3}&\lr{3}&\lr{4}&\lr{4}\\\cline{1-8}
\lr{4}&\lr{4}&\lr{4}&\lr{4}\\\cline{1-4}
\end{array}$}
}\end{minipage} \ \begin{minipage}{5cm}
\begin{minipage}{1.7cm}$(11,10,3)$\end{minipage}\scalebox{.55}{\def\lr#1{\multicolumn{1}{|@{\hspace{.6ex}}c@{\hspace{.6ex}}|}{\raisebox{-.3ex}{$#1$}}}
\raisebox{-.6ex}{$\begin{array}[b]{*{11}c}\cline{1-11}
\lr{1}&\lr{1}&\lr{1}&\lr{1}&\lr{1}&\lr{1}&\lr{2}&\lr{2}&\lr{2}&\lr{2}&\lr{4}\\\cline{1-11}
\lr{2}&\lr{2}&\lr{3}&\lr{3}&\lr{3}&\lr{3}&\lr{4}&\lr{4}&\lr{4}&\lr{4}\\\cline{1-10}
\lr{3}&\lr{3}&\lr{4}\\\cline{1-3}
\end{array}$}
}\end{minipage} \ \begin{minipage}{5cm}
\begin{minipage}{1.7cm}$(13,7,4)$\end{minipage}\scalebox{.55}{\def\lr#1{\multicolumn{1}{|@{\hspace{.6ex}}c@{\hspace{.6ex}}|}{\raisebox{-.3ex}{$#1$}}}
\raisebox{-.6ex}{$\begin{array}[b]{*{13}c}\cline{1-13}
\lr{1}&\lr{1}&\lr{1}&\lr{1}&\lr{1}&\lr{1}&\lr{2}&\lr{2}&\lr{2}&\lr{2}&\lr{2}&\lr{2}&\lr{4}\\\cline{1-13}
\lr{3}&\lr{3}&\lr{3}&\lr{3}&\lr{3}&\lr{3}&\lr{4}\\\cline{1-7}
\lr{4}&\lr{4}&\lr{4}&\lr{4}\\\cline{1-4}
\end{array}$}
}\end{minipage} \ \begin{minipage}{5cm}
\begin{minipage}{1.7cm}$(12,9,3)$\end{minipage}\scalebox{.55}{\def\lr#1{\multicolumn{1}{|@{\hspace{.6ex}}c@{\hspace{.6ex}}|}{\raisebox{-.3ex}{$#1$}}}
\raisebox{-.6ex}{$\begin{array}[b]{*{12}c}\cline{1-12}
\lr{1}&\lr{1}&\lr{1}&\lr{1}&\lr{1}&\lr{1}&\lr{2}&\lr{2}&\lr{2}&\lr{2}&\lr{2}&\lr{2}\\\cline{1-12}
\lr{3}&\lr{3}&\lr{3}&\lr{3}&\lr{3}&\lr{3}&\lr{4}&\lr{4}&\lr{4}\\\cline{1-9}
\lr{4}&\lr{4}&\lr{4}\\\cline{1-3}
\end{array}$}
}\end{minipage} \ \begin{minipage}{5cm}
\begin{minipage}{1.7cm}$(13,8,3)$\end{minipage}\scalebox{.55}{\def\lr#1{\multicolumn{1}{|@{\hspace{.6ex}}c@{\hspace{.6ex}}|}{\raisebox{-.3ex}{$#1$}}}
\raisebox{-.6ex}{$\begin{array}[b]{*{13}c}\cline{1-13}
\lr{1}&\lr{1}&\lr{1}&\lr{1}&\lr{1}&\lr{1}&\lr{2}&\lr{2}&\lr{2}&\lr{2}&\lr{2}&\lr{2}&\lr{4}\\\cline{1-13}
\lr{3}&\lr{3}&\lr{3}&\lr{3}&\lr{3}&\lr{3}&\lr{4}&\lr{4}\\\cline{1-8}
\lr{4}&\lr{4}&\lr{4}\\\cline{1-3}
\end{array}$}
}\end{minipage} \ \begin{minipage}{5cm}
\begin{minipage}{1.7cm}$(12,10,2)$\end{minipage}\scalebox{.55}{\def\lr#1{\multicolumn{1}{|@{\hspace{.6ex}}c@{\hspace{.6ex}}|}{\raisebox{-.3ex}{$#1$}}}
\raisebox{-.6ex}{$\begin{array}[b]{*{12}c}\cline{1-12}
\lr{1}&\lr{1}&\lr{1}&\lr{1}&\lr{1}&\lr{1}&\lr{2}&\lr{2}&\lr{2}&\lr{2}&\lr{2}&\lr{2}\\\cline{1-12}
\lr{3}&\lr{3}&\lr{3}&\lr{3}&\lr{3}&\lr{3}&\lr{4}&\lr{4}&\lr{4}&\lr{4}\\\cline{1-10}
\lr{4}&\lr{4}\\\cline{1-2}
\end{array}$}
}\end{minipage} \ \begin{minipage}{5cm}
\begin{minipage}{1.7cm}$(15,5,4)$\end{minipage}\scalebox{.55}{\def\lr#1{\multicolumn{1}{|@{\hspace{.6ex}}c@{\hspace{.6ex}}|}{\raisebox{-.3ex}{$#1$}}}
\raisebox{-.6ex}{$\begin{array}[b]{*{15}c}\cline{1-15}
\lr{1}&\lr{1}&\lr{1}&\lr{1}&\lr{1}&\lr{1}&\lr{2}&\lr{2}&\lr{3}&\lr{3}&\lr{3}&\lr{3}&\lr{3}&\lr{4}&\lr{4}\\\cline{1-15}
\lr{2}&\lr{2}&\lr{2}&\lr{2}&\lr{3}\\\cline{1-5}
\lr{4}&\lr{4}&\lr{4}&\lr{4}\\\cline{1-4}
\end{array}$}
}\end{minipage} \ \begin{minipage}{5cm}
\begin{minipage}{1.7cm}$(14,7,3)$\end{minipage}\scalebox{.55}{\def\lr#1{\multicolumn{1}{|@{\hspace{.6ex}}c@{\hspace{.6ex}}|}{\raisebox{-.3ex}{$#1$}}}
\raisebox{-.6ex}{$\begin{array}[b]{*{14}c}\cline{1-14}
\lr{1}&\lr{1}&\lr{1}&\lr{1}&\lr{1}&\lr{1}&\lr{2}&\lr{2}&\lr{2}&\lr{2}&\lr{2}&\lr{2}&\lr{4}&\lr{4}\\\cline{1-14}
\lr{3}&\lr{3}&\lr{3}&\lr{3}&\lr{3}&\lr{3}&\lr{4}\\\cline{1-7}
\lr{4}&\lr{4}&\lr{4}\\\cline{1-3}
\end{array}$}
}\end{minipage} \ \begin{minipage}{5cm}
\begin{minipage}{1.7cm}$(13,9,2)$\end{minipage}\scalebox{.55}{\def\lr#1{\multicolumn{1}{|@{\hspace{.6ex}}c@{\hspace{.6ex}}|}{\raisebox{-.3ex}{$#1$}}}
\raisebox{-.6ex}{$\begin{array}[b]{*{13}c}\cline{1-13}
\lr{1}&\lr{1}&\lr{1}&\lr{1}&\lr{1}&\lr{1}&\lr{2}&\lr{2}&\lr{2}&\lr{2}&\lr{2}&\lr{2}&\lr{4}\\\cline{1-13}
\lr{3}&\lr{3}&\lr{3}&\lr{3}&\lr{3}&\lr{3}&\lr{4}&\lr{4}&\lr{4}\\\cline{1-9}
\lr{4}&\lr{4}\\\cline{1-2}
\end{array}$}
}\end{minipage} \ \begin{minipage}{5cm}
\begin{minipage}{1.7cm}$(13,10,1)$\end{minipage}\scalebox{.55}{\def\lr#1{\multicolumn{1}{|@{\hspace{.6ex}}c@{\hspace{.6ex}}|}{\raisebox{-.3ex}{$#1$}}}
\raisebox{-.6ex}{$\begin{array}[b]{*{13}c}\cline{1-13}
\lr{1}&\lr{1}&\lr{1}&\lr{1}&\lr{1}&\lr{1}&\lr{2}&\lr{2}&\lr{2}&\lr{2}&\lr{2}&\lr{2}&\lr{4}\\\cline{1-13}
\lr{3}&\lr{3}&\lr{3}&\lr{3}&\lr{3}&\lr{3}&\lr{4}&\lr{4}&\lr{4}&\lr{4}\\\cline{1-10}
\lr{4}\\\cline{1-1}
\end{array}$}
}\end{minipage} \ \begin{minipage}{5cm}
\begin{minipage}{1.7cm}$(16,5,3)$\end{minipage}\scalebox{.55}{\def\lr#1{\multicolumn{1}{|@{\hspace{.6ex}}c@{\hspace{.6ex}}|}{\raisebox{-.3ex}{$#1$}}}
\raisebox{-.6ex}{$\begin{array}[b]{*{16}c}\cline{1-16}
\lr{1}&\lr{1}&\lr{1}&\lr{1}&\lr{1}&\lr{1}&\lr{2}&\lr{2}&\lr{2}&\lr{2}&\lr{3}&\lr{3}&\lr{3}&\lr{4}&\lr{4}&\lr{4}\\\cline{1-16}
\lr{2}&\lr{2}&\lr{3}&\lr{3}&\lr{3}\\\cline{1-5}
\lr{4}&\lr{4}&\lr{4}\\\cline{1-3}
\end{array}$}
}\end{minipage} \ \begin{minipage}{5cm}
\begin{minipage}{1.7cm}$(15,7,2)$\end{minipage}\scalebox{.55}{\def\lr#1{\multicolumn{1}{|@{\hspace{.6ex}}c@{\hspace{.6ex}}|}{\raisebox{-.3ex}{$#1$}}}
\raisebox{-.6ex}{$\begin{array}[b]{*{15}c}\cline{1-15}
\lr{1}&\lr{1}&\lr{1}&\lr{1}&\lr{1}&\lr{1}&\lr{2}&\lr{2}&\lr{2}&\lr{2}&\lr{2}&\lr{2}&\lr{4}&\lr{4}&\lr{4}\\\cline{1-15}
\lr{3}&\lr{3}&\lr{3}&\lr{3}&\lr{3}&\lr{3}&\lr{4}\\\cline{1-7}
\lr{4}&\lr{4}\\\cline{1-2}
\end{array}$}
}\end{minipage} \ \begin{minipage}{5cm}
\begin{minipage}{1.7cm}$(14,9,1)$\end{minipage}\scalebox{.55}{\def\lr#1{\multicolumn{1}{|@{\hspace{.6ex}}c@{\hspace{.6ex}}|}{\raisebox{-.3ex}{$#1$}}}
\raisebox{-.6ex}{$\begin{array}[b]{*{14}c}\cline{1-14}
\lr{1}&\lr{1}&\lr{1}&\lr{1}&\lr{1}&\lr{1}&\lr{2}&\lr{2}&\lr{2}&\lr{2}&\lr{2}&\lr{2}&\lr{4}&\lr{4}\\\cline{1-14}
\lr{3}&\lr{3}&\lr{3}&\lr{3}&\lr{3}&\lr{3}&\lr{4}&\lr{4}&\lr{4}\\\cline{1-9}
\lr{4}\\\cline{1-1}
\end{array}$}
}\end{minipage} \ \begin{minipage}{5cm}
\begin{minipage}{1.7cm}$(17,4,3)$\end{minipage}\scalebox{.55}{\def\lr#1{\multicolumn{1}{|@{\hspace{.6ex}}c@{\hspace{.6ex}}|}{\raisebox{-.3ex}{$#1$}}}
\raisebox{-.6ex}{$\begin{array}[b]{*{17}c}\cline{1-17}
\lr{1}&\lr{1}&\lr{1}&\lr{1}&\lr{1}&\lr{1}&\lr{2}&\lr{2}&\lr{2}&\lr{2}&\lr{3}&\lr{3}&\lr{3}&\lr{3}&\lr{3}&\lr{4}&\lr{4}\\\cline{1-17}
\lr{2}&\lr{2}&\lr{3}&\lr{4}\\\cline{1-4}
\lr{4}&\lr{4}&\lr{4}\\\cline{1-3}
\end{array}$}
}\end{minipage} \ \begin{minipage}{5cm}
\begin{minipage}{1.7cm}$(15,8,1)$\end{minipage}\scalebox{.55}{\def\lr#1{\multicolumn{1}{|@{\hspace{.6ex}}c@{\hspace{.6ex}}|}{\raisebox{-.3ex}{$#1$}}}
\raisebox{-.6ex}{$\begin{array}[b]{*{15}c}\cline{1-15}
\lr{1}&\lr{1}&\lr{1}&\lr{1}&\lr{1}&\lr{1}&\lr{2}&\lr{2}&\lr{2}&\lr{2}&\lr{2}&\lr{2}&\lr{4}&\lr{4}&\lr{4}\\\cline{1-15}
\lr{3}&\lr{3}&\lr{3}&\lr{3}&\lr{3}&\lr{3}&\lr{4}&\lr{4}\\\cline{1-8}
\lr{4}\\\cline{1-1}
\end{array}$}
}\end{minipage} \ \begin{minipage}{5cm}
\begin{minipage}{1.7cm}$(15,9)$\end{minipage}\scalebox{.55}{\def\lr#1{\multicolumn{1}{|@{\hspace{.6ex}}c@{\hspace{.6ex}}|}{\raisebox{-.3ex}{$#1$}}}
\raisebox{-.6ex}{$\begin{array}[b]{*{15}c}\cline{1-15}
\lr{1}&\lr{1}&\lr{1}&\lr{1}&\lr{1}&\lr{1}&\lr{2}&\lr{2}&\lr{3}&\lr{3}&\lr{3}&\lr{3}&\lr{3}&\lr{3}&\lr{4}\\\cline{1-15}
\lr{2}&\lr{2}&\lr{2}&\lr{2}&\lr{4}&\lr{4}&\lr{4}&\lr{4}&\lr{4}\\\cline{1-9}
\end{array}$}
}\end{minipage} \ \begin{minipage}{5cm}
\begin{minipage}{1.7cm}$(19,3,2)$\end{minipage}\scalebox{.55}{\def\lr#1{\multicolumn{1}{|@{\hspace{.6ex}}c@{\hspace{.6ex}}|}{\raisebox{-.3ex}{$#1$}}}
\raisebox{-.6ex}{$\begin{array}[b]{*{19}c}\cline{1-19}
\lr{1}&\lr{1}&\lr{1}&\lr{1}&\lr{1}&\lr{1}&\lr{2}&\lr{2}&\lr{2}&\lr{2}&\lr{3}&\lr{3}&\lr{3}&\lr{3}&\lr{3}&\lr{4}&\lr{4}&\lr{4}&\lr{4}\\\cline{1-19}
\lr{2}&\lr{2}&\lr{3}\\\cline{1-3}
\lr{4}&\lr{4}\\\cline{1-2}
\end{array}$}
}\end{minipage} \ \begin{minipage}{5cm}
\begin{minipage}{1.7cm}$(18,5,1)$\end{minipage}\scalebox{.55}{\def\lr#1{\multicolumn{1}{|@{\hspace{.6ex}}c@{\hspace{.6ex}}|}{\raisebox{-.3ex}{$#1$}}}
\raisebox{-.6ex}{$\begin{array}[b]{*{18}c}\cline{1-18}
\lr{1}&\lr{1}&\lr{1}&\lr{1}&\lr{1}&\lr{1}&\lr{2}&\lr{2}&\lr{2}&\lr{2}&\lr{2}&\lr{3}&\lr{3}&\lr{3}&\lr{4}&\lr{4}&\lr{4}&\lr{4}\\\cline{1-18}
\lr{2}&\lr{3}&\lr{3}&\lr{3}&\lr{4}\\\cline{1-5}
\lr{4}\\\cline{1-1}
\end{array}$}
}\end{minipage} \ \begin{minipage}{5cm}
\begin{minipage}{1.7cm}$(17,7)$\end{minipage}\scalebox{.55}{\def\lr#1{\multicolumn{1}{|@{\hspace{.6ex}}c@{\hspace{.6ex}}|}{\raisebox{-.3ex}{$#1$}}}
\raisebox{-.6ex}{$\begin{array}[b]{*{17}c}\cline{1-17}
\lr{1}&\lr{1}&\lr{1}&\lr{1}&\lr{1}&\lr{1}&\lr{2}&\lr{2}&\lr{2}&\lr{2}&\lr{2}&\lr{2}&\lr{3}&\lr{3}&\lr{4}&\lr{4}&\lr{4}\\\cline{1-17}
\lr{3}&\lr{3}&\lr{3}&\lr{3}&\lr{4}&\lr{4}&\lr{4}\\\cline{1-7}
\end{array}$}
}\end{minipage} \ \begin{minipage}{5cm}
\begin{minipage}{1.7cm}$(10,10,10)$\end{minipage}\scalebox{.55}{\def\lr#1{\multicolumn{1}{|@{\hspace{.6ex}}c@{\hspace{.6ex}}|}{\raisebox{-.3ex}{$#1$}}}
\raisebox{-.6ex}{$\begin{array}[b]{*{10}c}\cline{1-10}
\lr{1}&\lr{1}&\lr{1}&\lr{1}&\lr{1}&\lr{1}&\lr{2}&\lr{2}&\lr{3}&\lr{3}\\\cline{1-10}
\lr{2}&\lr{2}&\lr{2}&\lr{2}&\lr{3}&\lr{3}&\lr{3}&\lr{3}&\lr{4}&\lr{4}\\\cline{1-10}
\lr{4}&\lr{4}&\lr{4}&\lr{4}&\lr{5}&\lr{5}&\lr{5}&\lr{5}&\lr{5}&\lr{5}\\\cline{1-10}
\end{array}$}
}\end{minipage} \ \begin{minipage}{5cm}
\begin{minipage}{1.7cm}$(11,10,9)$\end{minipage}\scalebox{.55}{\def\lr#1{\multicolumn{1}{|@{\hspace{.6ex}}c@{\hspace{.6ex}}|}{\raisebox{-.3ex}{$#1$}}}
\raisebox{-.6ex}{$\begin{array}[b]{*{11}c}\cline{1-11}
\lr{1}&\lr{1}&\lr{1}&\lr{1}&\lr{1}&\lr{1}&\lr{2}&\lr{2}&\lr{3}&\lr{3}&\lr{4}\\\cline{1-11}
\lr{2}&\lr{2}&\lr{2}&\lr{2}&\lr{3}&\lr{3}&\lr{3}&\lr{3}&\lr{4}&\lr{5}\\\cline{1-10}
\lr{4}&\lr{4}&\lr{4}&\lr{4}&\lr{5}&\lr{5}&\lr{5}&\lr{5}&\lr{5}\\\cline{1-9}
\end{array}$}
}\end{minipage} \ \begin{minipage}{5cm}
\begin{minipage}{1.7cm}$(12,10,8)$\end{minipage}\scalebox{.55}{\def\lr#1{\multicolumn{1}{|@{\hspace{.6ex}}c@{\hspace{.6ex}}|}{\raisebox{-.3ex}{$#1$}}}
\raisebox{-.6ex}{$\begin{array}[b]{*{12}c}\cline{1-12}
\lr{1}&\lr{1}&\lr{1}&\lr{1}&\lr{1}&\lr{1}&\lr{2}&\lr{2}&\lr{2}&\lr{2}&\lr{2}&\lr{2}\\\cline{1-12}
\lr{3}&\lr{3}&\lr{3}&\lr{3}&\lr{3}&\lr{3}&\lr{4}&\lr{4}&\lr{5}&\lr{5}\\\cline{1-10}
\lr{4}&\lr{4}&\lr{4}&\lr{4}&\lr{5}&\lr{5}&\lr{5}&\lr{5}\\\cline{1-8}
\end{array}$}
}\end{minipage} \ \begin{minipage}{5cm}
\begin{minipage}{1.7cm}$(13,9,8)$\end{minipage}\scalebox{.55}{\def\lr#1{\multicolumn{1}{|@{\hspace{.6ex}}c@{\hspace{.6ex}}|}{\raisebox{-.3ex}{$#1$}}}
\raisebox{-.6ex}{$\begin{array}[b]{*{13}c}\cline{1-13}
\lr{1}&\lr{1}&\lr{1}&\lr{1}&\lr{1}&\lr{1}&\lr{2}&\lr{2}&\lr{2}&\lr{2}&\lr{2}&\lr{2}&\lr{5}\\\cline{1-13}
\lr{3}&\lr{3}&\lr{3}&\lr{3}&\lr{3}&\lr{3}&\lr{4}&\lr{4}&\lr{5}\\\cline{1-9}
\lr{4}&\lr{4}&\lr{4}&\lr{4}&\lr{5}&\lr{5}&\lr{5}&\lr{5}\\\cline{1-8}
\end{array}$}
}\end{minipage} \ \begin{minipage}{5cm}
\begin{minipage}{1.7cm}$(12,11,7)$\end{minipage}\scalebox{.55}{\def\lr#1{\multicolumn{1}{|@{\hspace{.6ex}}c@{\hspace{.6ex}}|}{\raisebox{-.3ex}{$#1$}}}
\raisebox{-.6ex}{$\begin{array}[b]{*{12}c}\cline{1-12}
\lr{1}&\lr{1}&\lr{1}&\lr{1}&\lr{1}&\lr{1}&\lr{2}&\lr{2}&\lr{2}&\lr{2}&\lr{2}&\lr{2}\\\cline{1-12}
\lr{3}&\lr{3}&\lr{3}&\lr{3}&\lr{3}&\lr{3}&\lr{4}&\lr{4}&\lr{5}&\lr{5}&\lr{5}\\\cline{1-11}
\lr{4}&\lr{4}&\lr{4}&\lr{4}&\lr{5}&\lr{5}&\lr{5}\\\cline{1-7}
\end{array}$}
}\end{minipage} \ \begin{minipage}{5cm}
\begin{minipage}{1.7cm}$(13,10,7)$\end{minipage}\scalebox{.55}{\def\lr#1{\multicolumn{1}{|@{\hspace{.6ex}}c@{\hspace{.6ex}}|}{\raisebox{-.3ex}{$#1$}}}
\raisebox{-.6ex}{$\begin{array}[b]{*{13}c}\cline{1-13}
\lr{1}&\lr{1}&\lr{1}&\lr{1}&\lr{1}&\lr{1}&\lr{2}&\lr{2}&\lr{2}&\lr{2}&\lr{2}&\lr{2}&\lr{4}\\\cline{1-13}
\lr{3}&\lr{3}&\lr{3}&\lr{3}&\lr{3}&\lr{3}&\lr{4}&\lr{5}&\lr{5}&\lr{5}\\\cline{1-10}
\lr{4}&\lr{4}&\lr{4}&\lr{4}&\lr{5}&\lr{5}&\lr{5}\\\cline{1-7}
\end{array}$}
}\end{minipage} \ \begin{minipage}{5cm}
\begin{minipage}{1.7cm}$(14,9,7)$\end{minipage}\scalebox{.55}{\def\lr#1{\multicolumn{1}{|@{\hspace{.6ex}}c@{\hspace{.6ex}}|}{\raisebox{-.3ex}{$#1$}}}
\raisebox{-.6ex}{$\begin{array}[b]{*{14}c}\cline{1-14}
\lr{1}&\lr{1}&\lr{1}&\lr{1}&\lr{1}&\lr{1}&\lr{2}&\lr{2}&\lr{2}&\lr{2}&\lr{2}&\lr{2}&\lr{4}&\lr{5}\\\cline{1-14}
\lr{3}&\lr{3}&\lr{3}&\lr{3}&\lr{3}&\lr{3}&\lr{4}&\lr{5}&\lr{5}\\\cline{1-9}
\lr{4}&\lr{4}&\lr{4}&\lr{4}&\lr{5}&\lr{5}&\lr{5}\\\cline{1-7}
\end{array}$}
}\end{minipage} \ \begin{minipage}{5cm}
\begin{minipage}{1.7cm}$(13,11,6)$\end{minipage}\scalebox{.55}{\def\lr#1{\multicolumn{1}{|@{\hspace{.6ex}}c@{\hspace{.6ex}}|}{\raisebox{-.3ex}{$#1$}}}
\raisebox{-.6ex}{$\begin{array}[b]{*{13}c}\cline{1-13}
\lr{1}&\lr{1}&\lr{1}&\lr{1}&\lr{1}&\lr{1}&\lr{2}&\lr{2}&\lr{2}&\lr{2}&\lr{2}&\lr{2}&\lr{5}\\\cline{1-13}
\lr{3}&\lr{3}&\lr{3}&\lr{3}&\lr{3}&\lr{3}&\lr{4}&\lr{4}&\lr{5}&\lr{5}&\lr{5}\\\cline{1-11}
\lr{4}&\lr{4}&\lr{4}&\lr{4}&\lr{5}&\lr{5}\\\cline{1-6}
\end{array}$}
}\end{minipage} \ \begin{minipage}{5cm}
\begin{minipage}{1.7cm}$(15,8,7)$\end{minipage}\scalebox{.55}{\def\lr#1{\multicolumn{1}{|@{\hspace{.6ex}}c@{\hspace{.6ex}}|}{\raisebox{-.3ex}{$#1$}}}
\raisebox{-.6ex}{$\begin{array}[b]{*{15}c}\cline{1-15}
\lr{1}&\lr{1}&\lr{1}&\lr{1}&\lr{1}&\lr{1}&\lr{2}&\lr{2}&\lr{4}&\lr{4}&\lr{4}&\lr{4}&\lr{4}&\lr{5}&\lr{5}\\\cline{1-15}
\lr{2}&\lr{2}&\lr{2}&\lr{2}&\lr{3}&\lr{3}&\lr{4}&\lr{5}\\\cline{1-8}
\lr{3}&\lr{3}&\lr{3}&\lr{3}&\lr{5}&\lr{5}&\lr{5}\\\cline{1-7}
\end{array}$}
}\end{minipage} \ \begin{minipage}{5cm}
\begin{minipage}{1.7cm}$(13,12,5)$\end{minipage}\scalebox{.55}{\def\lr#1{\multicolumn{1}{|@{\hspace{.6ex}}c@{\hspace{.6ex}}|}{\raisebox{-.3ex}{$#1$}}}
\raisebox{-.6ex}{$\begin{array}[b]{*{13}c}\cline{1-13}
\lr{1}&\lr{1}&\lr{1}&\lr{1}&\lr{1}&\lr{1}&\lr{2}&\lr{2}&\lr{2}&\lr{2}&\lr{2}&\lr{2}&\lr{5}\\\cline{1-13}
\lr{3}&\lr{3}&\lr{3}&\lr{3}&\lr{3}&\lr{3}&\lr{4}&\lr{4}&\lr{5}&\lr{5}&\lr{5}&\lr{5}\\\cline{1-12}
\lr{4}&\lr{4}&\lr{4}&\lr{4}&\lr{5}\\\cline{1-5}
\end{array}$}
}\end{minipage} \ \begin{minipage}{5cm}
\begin{minipage}{1.7cm}$(16,7,7)$\end{minipage}\scalebox{.55}{\def\lr#1{\multicolumn{1}{|@{\hspace{.6ex}}c@{\hspace{.6ex}}|}{\raisebox{-.3ex}{$#1$}}}
\raisebox{-.6ex}{$\begin{array}[b]{*{16}c}\cline{1-16}
\lr{1}&\lr{1}&\lr{1}&\lr{1}&\lr{1}&\lr{1}&\lr{2}&\lr{2}&\lr{2}&\lr{2}&\lr{2}&\lr{2}&\lr{4}&\lr{5}&\lr{5}&\lr{5}\\\cline{1-16}
\lr{3}&\lr{3}&\lr{3}&\lr{3}&\lr{3}&\lr{3}&\lr{4}\\\cline{1-7}
\lr{4}&\lr{4}&\lr{4}&\lr{4}&\lr{5}&\lr{5}&\lr{5}\\\cline{1-7}
\end{array}$}
}\end{minipage} \ \begin{minipage}{5cm}
\begin{minipage}{1.7cm}$(15,9,6)$\end{minipage}\scalebox{.55}{\def\lr#1{\multicolumn{1}{|@{\hspace{.6ex}}c@{\hspace{.6ex}}|}{\raisebox{-.3ex}{$#1$}}}
\raisebox{-.6ex}{$\begin{array}[b]{*{15}c}\cline{1-15}
\lr{1}&\lr{1}&\lr{1}&\lr{1}&\lr{1}&\lr{1}&\lr{2}&\lr{2}&\lr{2}&\lr{2}&\lr{2}&\lr{2}&\lr{3}&\lr{3}&\lr{4}\\\cline{1-15}
\lr{3}&\lr{3}&\lr{3}&\lr{3}&\lr{4}&\lr{4}&\lr{4}&\lr{4}&\lr{4}\\\cline{1-9}
\lr{5}&\lr{5}&\lr{5}&\lr{5}&\lr{5}&\lr{5}\\\cline{1-6}
\end{array}$}
}\end{minipage} \ \begin{minipage}{5cm}
\begin{minipage}{1.7cm}$(14,11,5)$\end{minipage}\scalebox{.55}{\def\lr#1{\multicolumn{1}{|@{\hspace{.6ex}}c@{\hspace{.6ex}}|}{\raisebox{-.3ex}{$#1$}}}
\raisebox{-.6ex}{$\begin{array}[b]{*{14}c}\cline{1-14}
\lr{1}&\lr{1}&\lr{1}&\lr{1}&\lr{1}&\lr{1}&\lr{2}&\lr{2}&\lr{2}&\lr{2}&\lr{2}&\lr{2}&\lr{3}&\lr{3}\\\cline{1-14}
\lr{3}&\lr{3}&\lr{3}&\lr{3}&\lr{4}&\lr{4}&\lr{5}&\lr{5}&\lr{5}&\lr{5}&\lr{5}\\\cline{1-11}
\lr{4}&\lr{4}&\lr{4}&\lr{4}&\lr{5}\\\cline{1-5}
\end{array}$}
}\end{minipage} \ \begin{minipage}{5cm}
\begin{minipage}{1.7cm}$(13,13,4)$\end{minipage}\scalebox{.55}{\def\lr#1{\multicolumn{1}{|@{\hspace{.6ex}}c@{\hspace{.6ex}}|}{\raisebox{-.3ex}{$#1$}}}
\raisebox{-.6ex}{$\begin{array}[b]{*{13}c}\cline{1-13}
\lr{1}&\lr{1}&\lr{1}&\lr{1}&\lr{1}&\lr{1}&\lr{2}&\lr{2}&\lr{2}&\lr{3}&\lr{3}&\lr{3}&\lr{4}\\\cline{1-13}
\lr{2}&\lr{2}&\lr{2}&\lr{3}&\lr{3}&\lr{3}&\lr{4}&\lr{4}&\lr{4}&\lr{4}&\lr{4}&\lr{5}&\lr{5}\\\cline{1-13}
\lr{5}&\lr{5}&\lr{5}&\lr{5}\\\cline{1-4}
\end{array}$}
}\end{minipage} \ \begin{minipage}{5cm}
\begin{minipage}{1.7cm}$(15,10,5)$\end{minipage}\scalebox{.55}{\def\lr#1{\multicolumn{1}{|@{\hspace{.6ex}}c@{\hspace{.6ex}}|}{\raisebox{-.3ex}{$#1$}}}
\raisebox{-.6ex}{$\begin{array}[b]{*{15}c}\cline{1-15}
\lr{1}&\lr{1}&\lr{1}&\lr{1}&\lr{1}&\lr{1}&\lr{2}&\lr{2}&\lr{2}&\lr{2}&\lr{2}&\lr{2}&\lr{3}&\lr{3}&\lr{4}\\\cline{1-15}
\lr{3}&\lr{3}&\lr{3}&\lr{3}&\lr{4}&\lr{5}&\lr{5}&\lr{5}&\lr{5}&\lr{5}\\\cline{1-10}
\lr{4}&\lr{4}&\lr{4}&\lr{4}&\lr{5}\\\cline{1-5}
\end{array}$}
}\end{minipage} \ \begin{minipage}{5cm}
\begin{minipage}{1.7cm}$(15,11,4)$\end{minipage}\scalebox{.55}{\def\lr#1{\multicolumn{1}{|@{\hspace{.6ex}}c@{\hspace{.6ex}}|}{\raisebox{-.3ex}{$#1$}}}
\raisebox{-.6ex}{$\begin{array}[b]{*{15}c}\cline{1-15}
\lr{1}&\lr{1}&\lr{1}&\lr{1}&\lr{1}&\lr{1}&\lr{2}&\lr{2}&\lr{2}&\lr{2}&\lr{2}&\lr{2}&\lr{3}&\lr{3}&\lr{5}\\\cline{1-15}
\lr{3}&\lr{3}&\lr{3}&\lr{3}&\lr{4}&\lr{4}&\lr{4}&\lr{4}&\lr{4}&\lr{4}&\lr{5}\\\cline{1-11}
\lr{5}&\lr{5}&\lr{5}&\lr{5}\\\cline{1-4}
\end{array}$}
}\end{minipage} \ \begin{minipage}{5cm}
\begin{minipage}{1.7cm}$(14,13,3)$\end{minipage}\scalebox{.55}{\def\lr#1{\multicolumn{1}{|@{\hspace{.6ex}}c@{\hspace{.6ex}}|}{\raisebox{-.3ex}{$#1$}}}
\raisebox{-.6ex}{$\begin{array}[b]{*{14}c}\cline{1-14}
\lr{1}&\lr{1}&\lr{1}&\lr{1}&\lr{1}&\lr{1}&\lr{2}&\lr{2}&\lr{2}&\lr{2}&\lr{2}&\lr{2}&\lr{4}&\lr{4}\\\cline{1-14}
\lr{3}&\lr{3}&\lr{3}&\lr{3}&\lr{3}&\lr{3}&\lr{4}&\lr{4}&\lr{4}&\lr{4}&\lr{5}&\lr{5}&\lr{5}\\\cline{1-13}
\lr{5}&\lr{5}&\lr{5}\\\cline{1-3}
\end{array}$}
}\end{minipage} \ \begin{minipage}{5cm}
\begin{minipage}{1.7cm}$(16,11,3)$\end{minipage}\scalebox{.55}{\def\lr#1{\multicolumn{1}{|@{\hspace{.6ex}}c@{\hspace{.6ex}}|}{\raisebox{-.3ex}{$#1$}}}
\raisebox{-.6ex}{$\begin{array}[b]{*{16}c}\cline{1-16}
\lr{1}&\lr{1}&\lr{1}&\lr{1}&\lr{1}&\lr{1}&\lr{2}&\lr{2}&\lr{2}&\lr{2}&\lr{2}&\lr{2}&\lr{4}&\lr{4}&\lr{5}&\lr{5}\\\cline{1-16}
\lr{3}&\lr{3}&\lr{3}&\lr{3}&\lr{3}&\lr{3}&\lr{4}&\lr{4}&\lr{4}&\lr{4}&\lr{5}\\\cline{1-11}
\lr{5}&\lr{5}&\lr{5}\\\cline{1-3}
\end{array}$}
}\end{minipage} \ \begin{minipage}{5cm}
\begin{minipage}{1.7cm}$(15,13,2)$\end{minipage}\scalebox{.55}{\def\lr#1{\multicolumn{1}{|@{\hspace{.6ex}}c@{\hspace{.6ex}}|}{\raisebox{-.3ex}{$#1$}}}
\raisebox{-.6ex}{$\begin{array}[b]{*{15}c}\cline{1-15}
\lr{1}&\lr{1}&\lr{1}&\lr{1}&\lr{1}&\lr{1}&\lr{2}&\lr{2}&\lr{2}&\lr{2}&\lr{2}&\lr{2}&\lr{4}&\lr{4}&\lr{5}\\\cline{1-15}
\lr{3}&\lr{3}&\lr{3}&\lr{3}&\lr{3}&\lr{3}&\lr{4}&\lr{4}&\lr{4}&\lr{4}&\lr{5}&\lr{5}&\lr{5}\\\cline{1-13}
\lr{5}&\lr{5}\\\cline{1-2}
\end{array}$}
}\end{minipage} \ \begin{minipage}{5cm}
\begin{minipage}{1.7cm}$(15,14,1)$\end{minipage}\scalebox{.55}{\def\lr#1{\multicolumn{1}{|@{\hspace{.6ex}}c@{\hspace{.6ex}}|}{\raisebox{-.3ex}{$#1$}}}
\raisebox{-.6ex}{$\begin{array}[b]{*{15}c}\cline{1-15}
\lr{1}&\lr{1}&\lr{1}&\lr{1}&\lr{1}&\lr{1}&\lr{2}&\lr{2}&\lr{2}&\lr{2}&\lr{2}&\lr{2}&\lr{4}&\lr{4}&\lr{5}\\\cline{1-15}
\lr{3}&\lr{3}&\lr{3}&\lr{3}&\lr{3}&\lr{3}&\lr{4}&\lr{4}&\lr{4}&\lr{4}&\lr{5}&\lr{5}&\lr{5}&\lr{5}\\\cline{1-14}
\lr{5}\\\cline{1-1}
\end{array}$}
}\end{minipage} \ \begin{minipage}{5cm}
\begin{minipage}{1.7cm}$(17,13)$\end{minipage}\scalebox{.55}{\def\lr#1{\multicolumn{1}{|@{\hspace{.6ex}}c@{\hspace{.6ex}}|}{\raisebox{-.3ex}{$#1$}}}
\raisebox{-.6ex}{$\begin{array}[b]{*{17}c}\cline{1-17}
\lr{1}&\lr{1}&\lr{1}&\lr{1}&\lr{1}&\lr{1}&\lr{2}&\lr{2}&\lr{2}&\lr{2}&\lr{2}&\lr{2}&\lr{3}&\lr{3}&\lr{5}&\lr{5}&\lr{5}\\\cline{1-17}
\lr{3}&\lr{3}&\lr{3}&\lr{3}&\lr{4}&\lr{4}&\lr{4}&\lr{4}&\lr{4}&\lr{4}&\lr{5}&\lr{5}&\lr{5}\\\cline{1-13}
\end{array}$}
}\end{minipage} \ \begin{minipage}{5cm}
\begin{minipage}{1.7cm}$(13,12,11)$\end{minipage}\scalebox{.55}{\def\lr#1{\multicolumn{1}{|@{\hspace{.6ex}}c@{\hspace{.6ex}}|}{\raisebox{-.3ex}{$#1$}}}
\raisebox{-.6ex}{$\begin{array}[b]{*{13}c}\cline{1-13}
\lr{1}&\lr{1}&\lr{1}&\lr{1}&\lr{1}&\lr{1}&\lr{2}&\lr{2}&\lr{2}&\lr{2}&\lr{2}&\lr{2}&\lr{5}\\\cline{1-13}
\lr{3}&\lr{3}&\lr{3}&\lr{3}&\lr{3}&\lr{3}&\lr{4}&\lr{4}&\lr{5}&\lr{5}&\lr{5}&\lr{6}\\\cline{1-12}
\lr{4}&\lr{4}&\lr{4}&\lr{4}&\lr{5}&\lr{5}&\lr{6}&\lr{6}&\lr{6}&\lr{6}&\lr{6}\\\cline{1-11}
\end{array}$}
}\end{minipage} \ \begin{minipage}{5cm}
\begin{minipage}{1.7cm}$(14,11,11)$\end{minipage}\scalebox{.55}{\def\lr#1{\multicolumn{1}{|@{\hspace{.6ex}}c@{\hspace{.6ex}}|}{\raisebox{-.3ex}{$#1$}}}
\raisebox{-.6ex}{$\begin{array}[b]{*{14}c}\cline{1-14}
\lr{1}&\lr{1}&\lr{1}&\lr{1}&\lr{1}&\lr{1}&\lr{2}&\lr{2}&\lr{3}&\lr{3}&\lr{4}&\lr{5}&\lr{5}&\lr{6}\\\cline{1-14}
\lr{2}&\lr{2}&\lr{2}&\lr{2}&\lr{3}&\lr{3}&\lr{3}&\lr{3}&\lr{4}&\lr{5}&\lr{5}\\\cline{1-11}
\lr{4}&\lr{4}&\lr{4}&\lr{4}&\lr{5}&\lr{5}&\lr{6}&\lr{6}&\lr{6}&\lr{6}&\lr{6}\\\cline{1-11}
\end{array}$}
}\end{minipage} \ \begin{minipage}{5cm}
\begin{minipage}{1.7cm}$(13,13,10)$\end{minipage}\scalebox{.55}{\def\lr#1{\multicolumn{1}{|@{\hspace{.6ex}}c@{\hspace{.6ex}}|}{\raisebox{-.3ex}{$#1$}}}
\raisebox{-.6ex}{$\begin{array}[b]{*{13}c}\cline{1-13}
\lr{1}&\lr{1}&\lr{1}&\lr{1}&\lr{1}&\lr{1}&\lr{2}&\lr{2}&\lr{3}&\lr{3}&\lr{4}&\lr{5}&\lr{5}\\\cline{1-13}
\lr{2}&\lr{2}&\lr{2}&\lr{2}&\lr{3}&\lr{3}&\lr{3}&\lr{3}&\lr{4}&\lr{5}&\lr{5}&\lr{6}&\lr{6}\\\cline{1-13}
\lr{4}&\lr{4}&\lr{4}&\lr{4}&\lr{5}&\lr{5}&\lr{6}&\lr{6}&\lr{6}&\lr{6}\\\cline{1-10}
\end{array}$}
}\end{minipage} \ \begin{minipage}{5cm}
\begin{minipage}{1.7cm}$(15,11,10)$\end{minipage}\scalebox{.55}{\def\lr#1{\multicolumn{1}{|@{\hspace{.6ex}}c@{\hspace{.6ex}}|}{\raisebox{-.3ex}{$#1$}}}
\raisebox{-.6ex}{$\begin{array}[b]{*{15}c}\cline{1-15}
\lr{1}&\lr{1}&\lr{1}&\lr{1}&\lr{1}&\lr{1}&\lr{2}&\lr{2}&\lr{2}&\lr{2}&\lr{2}&\lr{2}&\lr{3}&\lr{3}&\lr{5}\\\cline{1-15}
\lr{3}&\lr{3}&\lr{3}&\lr{3}&\lr{4}&\lr{4}&\lr{4}&\lr{4}&\lr{4}&\lr{4}&\lr{5}\\\cline{1-11}
\lr{5}&\lr{5}&\lr{5}&\lr{5}&\lr{6}&\lr{6}&\lr{6}&\lr{6}&\lr{6}&\lr{6}\\\cline{1-10}
\end{array}$}
}\end{minipage} \ \begin{minipage}{5cm}
\begin{minipage}{1.7cm}$(14,13,9)$\end{minipage}\scalebox{.55}{\def\lr#1{\multicolumn{1}{|@{\hspace{.6ex}}c@{\hspace{.6ex}}|}{\raisebox{-.3ex}{$#1$}}}
\raisebox{-.6ex}{$\begin{array}[b]{*{14}c}\cline{1-14}
\lr{1}&\lr{1}&\lr{1}&\lr{1}&\lr{1}&\lr{1}&\lr{2}&\lr{2}&\lr{2}&\lr{2}&\lr{2}&\lr{2}&\lr{3}&\lr{3}\\\cline{1-14}
\lr{3}&\lr{3}&\lr{3}&\lr{3}&\lr{4}&\lr{4}&\lr{5}&\lr{5}&\lr{5}&\lr{5}&\lr{5}&\lr{5}&\lr{6}\\\cline{1-13}
\lr{4}&\lr{4}&\lr{4}&\lr{4}&\lr{6}&\lr{6}&\lr{6}&\lr{6}&\lr{6}\\\cline{1-9}
\end{array}$}
}\end{minipage} \ \begin{minipage}{5cm}
\begin{minipage}{1.7cm}$(16,11,9)$\end{minipage}\scalebox{.55}{\def\lr#1{\multicolumn{1}{|@{\hspace{.6ex}}c@{\hspace{.6ex}}|}{\raisebox{-.3ex}{$#1$}}}
\raisebox{-.6ex}{$\begin{array}[b]{*{16}c}\cline{1-16}
\lr{1}&\lr{1}&\lr{1}&\lr{1}&\lr{1}&\lr{1}&\lr{2}&\lr{2}&\lr{2}&\lr{2}&\lr{2}&\lr{2}&\lr{4}&\lr{4}&\lr{4}&\lr{6}\\\cline{1-16}
\lr{3}&\lr{3}&\lr{3}&\lr{3}&\lr{3}&\lr{3}&\lr{4}&\lr{4}&\lr{4}&\lr{5}&\lr{5}\\\cline{1-11}
\lr{5}&\lr{5}&\lr{5}&\lr{5}&\lr{6}&\lr{6}&\lr{6}&\lr{6}&\lr{6}\\\cline{1-9}
\end{array}$}
}\end{minipage} \ \begin{minipage}{5cm}
\begin{minipage}{1.7cm}$(15,13,8)$\end{minipage}\scalebox{.55}{\def\lr#1{\multicolumn{1}{|@{\hspace{.6ex}}c@{\hspace{.6ex}}|}{\raisebox{-.3ex}{$#1$}}}
\raisebox{-.6ex}{$\begin{array}[b]{*{15}c}\cline{1-15}
\lr{1}&\lr{1}&\lr{1}&\lr{1}&\lr{1}&\lr{1}&\lr{2}&\lr{2}&\lr{2}&\lr{2}&\lr{2}&\lr{2}&\lr{3}&\lr{3}&\lr{6}\\\cline{1-15}
\lr{3}&\lr{3}&\lr{3}&\lr{3}&\lr{4}&\lr{4}&\lr{5}&\lr{5}&\lr{5}&\lr{5}&\lr{5}&\lr{5}&\lr{6}\\\cline{1-13}
\lr{4}&\lr{4}&\lr{4}&\lr{4}&\lr{6}&\lr{6}&\lr{6}&\lr{6}\\\cline{1-8}
\end{array}$}
}\end{minipage} \ \begin{minipage}{5cm}
\begin{minipage}{1.7cm}$(15,14,7)$\end{minipage}\scalebox{.55}{\def\lr#1{\multicolumn{1}{|@{\hspace{.6ex}}c@{\hspace{.6ex}}|}{\raisebox{-.3ex}{$#1$}}}
\raisebox{-.6ex}{$\begin{array}[b]{*{15}c}\cline{1-15}
\lr{1}&\lr{1}&\lr{1}&\lr{1}&\lr{1}&\lr{1}&\lr{2}&\lr{2}&\lr{2}&\lr{2}&\lr{2}&\lr{2}&\lr{3}&\lr{3}&\lr{6}\\\cline{1-15}
\lr{3}&\lr{3}&\lr{3}&\lr{3}&\lr{4}&\lr{4}&\lr{5}&\lr{5}&\lr{5}&\lr{5}&\lr{5}&\lr{5}&\lr{6}&\lr{6}\\\cline{1-14}
\lr{4}&\lr{4}&\lr{4}&\lr{4}&\lr{6}&\lr{6}&\lr{6}\\\cline{1-7}
\end{array}$}
}\end{minipage} \ \begin{minipage}{5cm}
\begin{minipage}{1.7cm}$(18,9,9)$\end{minipage}\scalebox{.55}{\def\lr#1{\multicolumn{1}{|@{\hspace{.6ex}}c@{\hspace{.6ex}}|}{\raisebox{-.3ex}{$#1$}}}
\raisebox{-.6ex}{$\begin{array}[b]{*{18}c}\cline{1-18}
\lr{1}&\lr{1}&\lr{1}&\lr{1}&\lr{1}&\lr{1}&\lr{2}&\lr{2}&\lr{3}&\lr{3}&\lr{3}&\lr{3}&\lr{3}&\lr{3}&\lr{5}&\lr{5}&\lr{5}&\lr{6}\\\cline{1-18}
\lr{2}&\lr{2}&\lr{2}&\lr{2}&\lr{4}&\lr{4}&\lr{5}&\lr{5}&\lr{5}\\\cline{1-9}
\lr{4}&\lr{4}&\lr{4}&\lr{4}&\lr{6}&\lr{6}&\lr{6}&\lr{6}&\lr{6}\\\cline{1-9}
\end{array}$}
}\end{minipage} \ \begin{minipage}{5cm}
\begin{minipage}{1.7cm}$(15,15,6)$\end{minipage}\scalebox{.55}{\def\lr#1{\multicolumn{1}{|@{\hspace{.6ex}}c@{\hspace{.6ex}}|}{\raisebox{-.3ex}{$#1$}}}
\raisebox{-.6ex}{$\begin{array}[b]{*{15}c}\cline{1-15}
\lr{1}&\lr{1}&\lr{1}&\lr{1}&\lr{1}&\lr{1}&\lr{2}&\lr{2}&\lr{2}&\lr{2}&\lr{2}&\lr{2}&\lr{3}&\lr{3}&\lr{5}\\\cline{1-15}
\lr{3}&\lr{3}&\lr{3}&\lr{3}&\lr{4}&\lr{4}&\lr{4}&\lr{4}&\lr{4}&\lr{4}&\lr{5}&\lr{6}&\lr{6}&\lr{6}&\lr{6}\\\cline{1-15}
\lr{5}&\lr{5}&\lr{5}&\lr{5}&\lr{6}&\lr{6}\\\cline{1-6}
\end{array}$}
}\end{minipage} \ \begin{minipage}{5cm}
\begin{minipage}{1.7cm}$(17,17,2)$\end{minipage}\scalebox{.55}{\def\lr#1{\multicolumn{1}{|@{\hspace{.6ex}}c@{\hspace{.6ex}}|}{\raisebox{-.3ex}{$#1$}}}
\raisebox{-.6ex}{$\begin{array}[b]{*{17}c}\cline{1-17}
\lr{1}&\lr{1}&\lr{1}&\lr{1}&\lr{1}&\lr{1}&\lr{2}&\lr{2}&\lr{2}&\lr{2}&\lr{3}&\lr{3}&\lr{4}&\lr{4}&\lr{5}&\lr{5}&\lr{5}\\\cline{1-17}
\lr{2}&\lr{2}&\lr{3}&\lr{3}&\lr{3}&\lr{3}&\lr{4}&\lr{4}&\lr{4}&\lr{4}&\lr{5}&\lr{5}&\lr{5}&\lr{6}&\lr{6}&\lr{6}&\lr{6}\\\cline{1-17}
\lr{6}&\lr{6}\\\cline{1-2}
\end{array}$}
}\end{minipage} \ \begin{minipage}{5cm}
\begin{minipage}{1.7cm}$(18,17,1)$\end{minipage}\scalebox{.55}{\def\lr#1{\multicolumn{1}{|@{\hspace{.6ex}}c@{\hspace{.6ex}}|}{\raisebox{-.3ex}{$#1$}}}
\raisebox{-.6ex}{$\begin{array}[b]{*{18}c}\cline{1-18}
\lr{1}&\lr{1}&\lr{1}&\lr{1}&\lr{1}&\lr{1}&\lr{2}&\lr{2}&\lr{2}&\lr{2}&\lr{2}&\lr{2}&\lr{4}&\lr{4}&\lr{5}&\lr{5}&\lr{5}&\lr{5}\\\cline{1-18}
\lr{3}&\lr{3}&\lr{3}&\lr{3}&\lr{3}&\lr{3}&\lr{4}&\lr{4}&\lr{4}&\lr{4}&\lr{5}&\lr{5}&\lr{6}&\lr{6}&\lr{6}&\lr{6}&\lr{6}\\\cline{1-17}
\lr{6}\\\cline{1-1}
\end{array}$}
}\end{minipage} \ \begin{minipage}{5cm}
\begin{minipage}{1.7cm}$(26,5,5)$\end{minipage}\scalebox{.55}{\def\lr#1{\multicolumn{1}{|@{\hspace{.6ex}}c@{\hspace{.6ex}}|}{\raisebox{-.3ex}{$#1$}}}
\raisebox{-.6ex}{$\begin{array}[b]{*{26}c}\cline{1-26}
\lr{1}&\lr{1}&\lr{1}&\lr{1}&\lr{1}&\lr{1}&\lr{2}&\lr{2}&\lr{2}&\lr{2}&\lr{3}&\lr{3}&\lr{3}&\lr{3}&\lr{4}&\lr{4}&\lr{4}&\lr{4}&\lr{4}&\lr{5}&\lr{5}&\lr{5}&\lr{5}&\lr{6}&\lr{6}&\lr{6}\\\cline{1-26}
\lr{2}&\lr{2}&\lr{3}&\lr{3}&\lr{4}\\\cline{1-5}
\lr{5}&\lr{5}&\lr{6}&\lr{6}&\lr{6}\\\cline{1-5}
\end{array}$}
}\end{minipage} \ \begin{minipage}{5cm}
\begin{minipage}{1.7cm}$(15,14,13)$\end{minipage}\scalebox{.55}{\def\lr#1{\multicolumn{1}{|@{\hspace{.6ex}}c@{\hspace{.6ex}}|}{\raisebox{-.3ex}{$#1$}}}
\raisebox{-.6ex}{$\begin{array}[b]{*{15}c}\cline{1-15}
\lr{1}&\lr{1}&\lr{1}&\lr{1}&\lr{1}&\lr{1}&\lr{2}&\lr{2}&\lr{2}&\lr{2}&\lr{2}&\lr{2}&\lr{3}&\lr{3}&\lr{7}\\\cline{1-15}
\lr{3}&\lr{3}&\lr{3}&\lr{3}&\lr{4}&\lr{4}&\lr{5}&\lr{5}&\lr{5}&\lr{5}&\lr{5}&\lr{5}&\lr{6}&\lr{7}\\\cline{1-14}
\lr{4}&\lr{4}&\lr{4}&\lr{4}&\lr{6}&\lr{6}&\lr{6}&\lr{6}&\lr{6}&\lr{7}&\lr{7}&\lr{7}&\lr{7}\\\cline{1-13}
\end{array}$}
}\end{minipage} \ \begin{minipage}{5cm}
\begin{minipage}{1.7cm}$(16,13,13)$\end{minipage}\scalebox{.55}{\def\lr#1{\multicolumn{1}{|@{\hspace{.6ex}}c@{\hspace{.6ex}}|}{\raisebox{-.3ex}{$#1$}}}
\raisebox{-.6ex}{$\begin{array}[b]{*{16}c}\cline{1-16}
\lr{1}&\lr{1}&\lr{1}&\lr{1}&\lr{1}&\lr{1}&\lr{2}&\lr{2}&\lr{2}&\lr{2}&\lr{2}&\lr{2}&\lr{3}&\lr{3}&\lr{7}&\lr{7}\\\cline{1-16}
\lr{3}&\lr{3}&\lr{3}&\lr{3}&\lr{4}&\lr{4}&\lr{5}&\lr{5}&\lr{5}&\lr{5}&\lr{5}&\lr{5}&\lr{6}\\\cline{1-13}
\lr{4}&\lr{4}&\lr{4}&\lr{4}&\lr{6}&\lr{6}&\lr{6}&\lr{6}&\lr{6}&\lr{7}&\lr{7}&\lr{7}&\lr{7}\\\cline{1-13}
\end{array}$}
}\end{minipage} \ \begin{minipage}{5cm}
\begin{minipage}{1.7cm}$(15,15,12)$\end{minipage}\scalebox{.55}{\def\lr#1{\multicolumn{1}{|@{\hspace{.6ex}}c@{\hspace{.6ex}}|}{\raisebox{-.3ex}{$#1$}}}
\raisebox{-.6ex}{$\begin{array}[b]{*{15}c}\cline{1-15}
\lr{1}&\lr{1}&\lr{1}&\lr{1}&\lr{1}&\lr{1}&\lr{2}&\lr{2}&\lr{2}&\lr{2}&\lr{2}&\lr{2}&\lr{3}&\lr{3}&\lr{5}\\\cline{1-15}
\lr{3}&\lr{3}&\lr{3}&\lr{3}&\lr{4}&\lr{4}&\lr{4}&\lr{4}&\lr{4}&\lr{4}&\lr{5}&\lr{6}&\lr{6}&\lr{6}&\lr{6}\\\cline{1-15}
\lr{5}&\lr{5}&\lr{5}&\lr{5}&\lr{6}&\lr{6}&\lr{7}&\lr{7}&\lr{7}&\lr{7}&\lr{7}&\lr{7}\\\cline{1-12}
\end{array}$}
}\end{minipage} \ \begin{minipage}{5cm}
\begin{minipage}{1.7cm}$(17,17,8)$\end{minipage}\scalebox{.55}{\def\lr#1{\multicolumn{1}{|@{\hspace{.6ex}}c@{\hspace{.6ex}}|}{\raisebox{-.3ex}{$#1$}}}
\raisebox{-.6ex}{$\begin{array}[b]{*{17}c}\cline{1-17}
\lr{1}&\lr{1}&\lr{1}&\lr{1}&\lr{1}&\lr{1}&\lr{2}&\lr{2}&\lr{2}&\lr{2}&\lr{2}&\lr{2}&\lr{4}&\lr{4}&\lr{4}&\lr{4}&\lr{6}\\\cline{1-17}
\lr{3}&\lr{3}&\lr{3}&\lr{3}&\lr{3}&\lr{3}&\lr{4}&\lr{4}&\lr{5}&\lr{5}&\lr{5}&\lr{5}&\lr{5}&\lr{5}&\lr{6}&\lr{7}&\lr{7}\\\cline{1-17}
\lr{6}&\lr{6}&\lr{6}&\lr{6}&\lr{7}&\lr{7}&\lr{7}&\lr{7}\\\cline{1-8}
\end{array}$}
}\end{minipage} \ \begin{minipage}{5cm}
\begin{minipage}{1.7cm}$(18,15,15)$\end{minipage}\scalebox{.55}{\def\lr#1{\multicolumn{1}{|@{\hspace{.6ex}}c@{\hspace{.6ex}}|}{\raisebox{-.3ex}{$#1$}}}
\raisebox{-.6ex}{$\begin{array}[b]{*{18}c}\cline{1-18}
\lr{1}&\lr{1}&\lr{1}&\lr{1}&\lr{1}&\lr{1}&\lr{2}&\lr{2}&\lr{3}&\lr{4}&\lr{4}&\lr{4}&\lr{4}&\lr{5}&\lr{6}&\lr{6}&\lr{7}&\lr{8}\\\cline{1-18}
\lr{2}&\lr{2}&\lr{2}&\lr{2}&\lr{3}&\lr{4}&\lr{5}&\lr{5}&\lr{5}&\lr{5}&\lr{6}&\lr{7}&\lr{7}&\lr{7}&\lr{7}\\\cline{1-15}
\lr{3}&\lr{3}&\lr{3}&\lr{3}&\lr{4}&\lr{5}&\lr{6}&\lr{6}&\lr{6}&\lr{7}&\lr{8}&\lr{8}&\lr{8}&\lr{8}&\lr{8}\\\cline{1-15}
\end{array}$}
}\end{minipage} \ \begin{minipage}{5cm}
\begin{minipage}{1.7cm}$(17,17,14)$\end{minipage}\scalebox{.55}{\def\lr#1{\multicolumn{1}{|@{\hspace{.6ex}}c@{\hspace{.6ex}}|}{\raisebox{-.3ex}{$#1$}}}
\raisebox{-.6ex}{$\begin{array}[b]{*{17}c}\cline{1-17}
\lr{1}&\lr{1}&\lr{1}&\lr{1}&\lr{1}&\lr{1}&\lr{2}&\lr{2}&\lr{2}&\lr{2}&\lr{2}&\lr{2}&\lr{3}&\lr{3}&\lr{6}&\lr{6}&\lr{7}\\\cline{1-17}
\lr{3}&\lr{3}&\lr{3}&\lr{3}&\lr{4}&\lr{4}&\lr{5}&\lr{5}&\lr{5}&\lr{5}&\lr{5}&\lr{5}&\lr{6}&\lr{6}&\lr{7}&\lr{8}&\lr{8}\\\cline{1-17}
\lr{4}&\lr{4}&\lr{4}&\lr{4}&\lr{6}&\lr{6}&\lr{7}&\lr{7}&\lr{7}&\lr{7}&\lr{8}&\lr{8}&\lr{8}&\lr{8}\\\cline{1-14}
\end{array}$}
}\end{minipage} \ \begin{minipage}{5cm}
\begin{minipage}{1.7cm}$(25,23)$\end{minipage}\scalebox{.55}{\def\lr#1{\multicolumn{1}{|@{\hspace{.6ex}}c@{\hspace{.6ex}}|}{\raisebox{-.3ex}{$#1$}}}
\raisebox{-.6ex}{$\begin{array}[b]{*{25}c}\cline{1-25}
\lr{1}&\lr{1}&\lr{1}&\lr{1}&\lr{1}&\lr{1}&\lr{2}&\lr{2}&\lr{2}&\lr{2}&\lr{2}&\lr{3}&\lr{3}&\lr{3}&\lr{4}&\lr{5}&\lr{6}&\lr{6}&\lr{6}&\lr{6}&\lr{6}&\lr{7}&\lr{7}&\lr{7}&\lr{8}\\\cline{1-25}
\lr{2}&\lr{3}&\lr{3}&\lr{3}&\lr{4}&\lr{4}&\lr{4}&\lr{4}&\lr{4}&\lr{5}&\lr{5}&\lr{5}&\lr{5}&\lr{5}&\lr{6}&\lr{7}&\lr{7}&\lr{7}&\lr{8}&\lr{8}&\lr{8}&\lr{8}&\lr{8}\\\cline{1-23}
\end{array}$}
}\end{minipage}

\noindent\begin{minipage}{10cm}
\begin{minipage}{1.7cm}$(45,45)$\end{minipage}\scalebox{.55}{\def\lr#1{\multicolumn{1}{|@{\hspace{.6ex}}c@{\hspace{.6ex}}|}{\raisebox{-.3ex}{$#1$}}}
\raisebox{-.6ex}{$\begin{array}[b]{*{45}c}\cline{1-45}
\lr{1}&\lr{1}&\lr{1}&\lr{1}&\lr{1}&\lr{1}&\lr{2}&\lr{2}&\lr{3}&\lr{3}&\lr{3}&\lr{4}&\lr{4}&\lr{4}&\lr{4}&\lr{5}&\lr{5}&\lr{6}&\lr{6}&\lr{6}&\lr{6}&\lr{6}&\lr{7}&\lr{7}&\lr{7}&\lr{8}&\lr{8}&\lr{8}&\lr{9}&\lr{9}&\lr{10}&\lr{10}&\lr{11}&\lr{11}&\lr{11}&\lr{12}&\lr{12}&\lr{12}&\lr{13}&\lr{13}&\lr{13}&\lr{13}&\lr{14}&\lr{14}&\lr{14}\\\cline{1-45}
\lr{2}&\lr{2}&\lr{2}&\lr{2}&\lr{3}&\lr{3}&\lr{3}&\lr{4}&\lr{4}&\lr{5}&\lr{5}&\lr{5}&\lr{5}&\lr{6}&\lr{7}&\lr{7}&\lr{7}&\lr{8}&\lr{8}&\lr{8}&\lr{9}&\lr{9}&\lr{9}&\lr{9}&\lr{10}&\lr{10}&\lr{10}&\lr{10}&\lr{11}&\lr{11}&\lr{11}&\lr{12}&\lr{12}&\lr{12}&\lr{13}&\lr{13}&\lr{14}&\lr{14}&\lr{14}&\lr{15}&\lr{15}&\lr{15}&\lr{15}&\lr{15}&\lr{15}\\\cline{1-45}
\end{array}$}
}\end{minipage}

\medskip

We now give the tableaux for the case $n=7$, $m=4$.
We list them in short notation, using row-wise encoding and a dot to separate rows. Commas separate the tableaux.
Furthermore, we write $A$ instead of 10, $B$ instead of 11, etc.
Repeated letters are written in short form, e.g., instead of 111 we write $1^3$.

{\begin{flushleft}
\noindent\scalebox{0.75}{$1^{7}$},
\scalebox{0.75}{$1^{7}2.2^{6}$},
\scalebox{0.75}{$1^{7}2^{3}.2^{4}$},
\scalebox{0.75}{$1^{7}2^{5}.2^{2}$},
\scalebox{0.75}{$1^{7}2^{6}3.23^{6}$},
\scalebox{0.75}{$1^{7}23.2^{6}.3^{6}$},
\scalebox{0.75}{$1^{7}23^{6}.2^{6}.3$},
\scalebox{0.75}{$1^{7}2^{5}3^{6}.2^{2}3$},
\scalebox{0.75}{$1^{7}23^{5}.2^{6}3^{2}$},
\scalebox{0.75}{$1^{7}2^{6}3.23^{6}4^{7}$},
\scalebox{0.75}{$1^{7}2^{3}.2^{4}3^{6}.3$},
\scalebox{0.75}{$1^{7}2.2^{6}3^{2}.3^{5}$},
\scalebox{0.75}{$1^{7}23^{4}.2^{6}3.3^{2}$},
\scalebox{0.75}{$1^{7}2^{5}3^{4}.2^{2}3^{3}$},
\scalebox{0.75}{$1^{7}2^{3}3^{2}.2^{4}3^{5}$},
\scalebox{0.75}{$1^{7}234.2^{6}.3^{6}.4^{6}$},
\scalebox{0.75}{$1^{7}2^{3}.2^{4}3^{4}.3^{3}$},
\scalebox{0.75}{$1^{7}2^{3}3^{3}.2^{4}.3^{4}$},
\scalebox{0.75}{$1^{7}23^{6}4.2^{6}4^{5}.3.4$},
\scalebox{0.75}{$1^{7}23^{5}.2^{6}34^{6}.3.4$},
\scalebox{0.75}{$1^{7}2^{3}3^{5}.2^{4}.3^{2}$},
\scalebox{0.75}{$1^{7}2^{5}3^{5}.2^{2}.3^{2}$},
\scalebox{0.75}{$1^{7}2^{4}3^{2}.2^{3}3^{4}.3$},
\scalebox{0.75}{$1^{7}2^{3}3.2^{4}3^{2}.3^{4}$},
\scalebox{0.75}{$1^{7}2^{5}3^{4}.2^{2}3^{2}.3$},
\scalebox{0.75}{$1^{7}2^{3}3^{4}.2^{4}3.3^{2}$},
\scalebox{0.75}{$1^{7}2^{3}3^{2}.2^{4}3^{4}.3$},
\scalebox{0.75}{$1^{7}2^{6}34.23^{5}.34^{5}.4$},
\scalebox{0.75}{$1^{7}2^{3}3.2^{4}3^{4}.3^{2}$},
\scalebox{0.75}{$1^{7}2^{2}3.2^{5}3^{2}.3^{4}$},
\scalebox{0.75}{$1^{7}2^{7}3.3^{6}4^{3}.4^{4}$},
\scalebox{0.75}{$1^{7}2^{2}3^{3}.2^{5}3.3^{3}$},
\scalebox{0.75}{$1^{7}2^{7}34.3^{6}4^{2}.4^{4}$},
\scalebox{0.75}{$1^{7}23^{3}4^{6}.2^{6}.3^{4}4$},
\scalebox{0.75}{$1^{7}2^{3}3^{6}4^{6}.2^{4}3.4$},
\scalebox{0.75}{$1^{7}2^{6}34^{4}.23^{5}4^{3}.3$},
\scalebox{0.75}{$1^{7}23^{2}4^{6}.2^{6}3.3^{4}4$},
\scalebox{0.75}{$1^{7}23.2^{6}4^{2}.3^{6}.4^{5}$},
\scalebox{0.75}{$1^{7}23^{5}4.2^{6}3^{2}4^{5}.4$},
\scalebox{0.75}{$1^{7}2^{5}3^{6}4^{5}.2^{2}4.34$},
\scalebox{0.75}{$1^{7}234^{3}.2^{6}4^{3}.3^{6}.4$},
\scalebox{0.75}{$1^{7}234^{4}.2^{6}4^{2}.3^{6}.4$},
\scalebox{0.75}{$1^{7}23^{4}4.2^{6}34^{5}.3^{2}4$},
\scalebox{0.75}{$1^{7}23^{6}4^{3}.2^{6}.34^{3}.4$},
\scalebox{0.75}{$1^{7}2^{3}3^{6}4^{5}.2^{4}.34.4$},
\scalebox{0.75}{$1^{7}23^{6}4^{2}.2^{6}4^{4}.3.4$},
\scalebox{0.75}{$1^{7}2^{7}3^{2}.3^{5}4^{5}.4^{2}$},
\scalebox{0.75}{$1^{7}23^{6}45.2^{6}3.4^{6}.5^{6}$},
\scalebox{0.75}{$1^{7}2^{5}34^{4}.2^{2}3^{6}4^{3}$},
\scalebox{0.75}{$1^{7}2^{6}3.23^{5}4^{2}.34^{4}.4$},
\scalebox{0.75}{$1^{7}2^{3}3^{3}.2^{4}3^{2}.3^{2}$},
\scalebox{0.75}{$1^{7}23^{4}4^{5}.2^{6}3^{2}4.3.4$},
\scalebox{0.75}{$1^{7}2^{6}3^{4}.23^{2}4^{5}.34.4$},
\scalebox{0.75}{$1^{7}2^{6}3^{3}.23^{3}4^{5}.34.4$},
\scalebox{0.75}{$1^{7}2^{3}3^{6}4^{3}.2^{4}3.4^{4}$},
\scalebox{0.75}{$1^{7}2^{2}34^{4}.2^{5}34.3^{5}4.4$},
\scalebox{0.75}{$1^{7}23^{5}.2^{6}3^{2}4^{3}.4^{4}$},
\scalebox{0.75}{$1^{7}234^{3}.2^{6}34.3^{5}4^{2}.4$},
\scalebox{0.75}{$1^{7}2^{7}34^{2}.3^{6}4^{3}.4^{2}$},
\scalebox{0.75}{$1^{7}2^{3}34^{2}.2^{4}3^{6}.4^{5}$},
\scalebox{0.75}{$1^{7}2^{5}3^{7}.2^{2}4^{6}5^{7}.4$},
\scalebox{0.75}{$1^{7}2^{2}34.2^{5}34.3^{5}4.4^{4}$},
\scalebox{0.75}{$1^{7}24^{4}.2^{6}3^{2}.3^{5}4^{3}$},
\scalebox{0.75}{$1^{7}234.2^{6}34^{2}.3^{5}4.4^{3}$},
\scalebox{0.75}{$1^{7}23^{7}4^{7}5.2^{6}5^{6}6^{7}$},
\scalebox{0.75}{$1^{7}23^{4}4^{4}.2^{6}3^{2}4^{3}.3$},
\scalebox{0.75}{$1^{7}23^{3}4^{4}.2^{6}4.3^{4}4^{2}$},
\scalebox{0.75}{$1^{7}2^{4}3.2^{3}3^{3}4^{6}.3^{3}4$},
\scalebox{0.75}{$1^{7}2^{7}34^{2}5.3^{6}4^{5}.5^{6}$},
\scalebox{0.75}{$1^{7}23^{3}4.2^{6}4^{4}.3^{4}4^{2}$},
\scalebox{0.75}{$1^{7}2^{2}3^{4}4.2^{5}34^{6}.3^{2}$},
\scalebox{0.75}{$1^{7}2.2^{6}3^{2}.3^{5}4^{3}.4^{4}$},
\scalebox{0.75}{$1^{7}23^{3}4^{4}.2^{6}34^{3}.3^{3}$},
\scalebox{0.75}{$1^{7}234^{5}.2^{6}35^{6}.3^{5}45.4$},
\scalebox{0.75}{$1^{7}23^{4}.2^{6}4^{5}.3^{3}.4^{2}$},
\scalebox{0.75}{$1^{7}2^{7}3.3^{6}4^{3}5^{6}.4^{4}5$},
\scalebox{0.75}{$1^{7}23^{2}4^{4}.2^{6}4^{2}.3^{5}4$},
\scalebox{0.75}{$1^{7}23^{5}4^{2}.2^{6}3^{2}4^{4}.4$},
\scalebox{0.75}{$1^{7}23^{2}.2^{6}4^{2}.3^{5}4.4^{4}$},
\scalebox{0.75}{$1^{7}24^{3}.2^{6}3^{2}.3^{5}4^{3}.4$},
\scalebox{0.75}{$1^{7}23^{3}4^{2}.2^{6}4^{4}.3^{4}.4$},
\scalebox{0.75}{$1^{7}23^{3}4^{3}.2^{6}4.3^{4}.4^{3}$},
\scalebox{0.75}{$1^{7}23^{2}4^{4}.2^{6}34^{2}.3^{4}4$},
\scalebox{0.75}{$1^{7}23^{6}4^{2}.2^{6}345^{7}.4^{4}$},
\scalebox{0.75}{$1^{7}23^{5}4^{4}.2^{6}.3^{2}4.4^{2}$},
\scalebox{0.75}{$1^{7}23^{3}4^{3}.2^{6}34^{3}.3^{3}4$},
\scalebox{0.75}{$1^{7}2^{7}35.3^{6}4^{2}5^{5}.4^{5}5$},
\scalebox{0.75}{$1^{7}2^{3}34.2^{4}3^{2}4^{5}.3^{4}4$},
\scalebox{0.75}{$1^{7}23^{3}4^{4}.2^{6}.3^{4}4.4^{2}$},
\scalebox{0.75}{$1^{7}2^{7}3.3^{6}4^{3}5^{6}.4^{4}.5$},
\scalebox{0.75}{$1^{7}2^{7}3.3^{6}4^{2}5^{6}.4^{5}.5$},
\scalebox{0.75}{$1^{7}23^{2}.2^{6}34^{3}.3^{4}.4^{4}$},
\scalebox{0.75}{$1^{7}24^{6}.2^{6}3^{2}5^{6}.3^{5}45$},
\scalebox{0.75}{$1^{7}23^{4}4^{4}.2^{6}.3^{3}4.4^{2}$},
\scalebox{0.75}{$1^{7}2^{3}3^{2}.2^{4}34^{6}.3^{4}.4$},
\scalebox{0.75}{$1^{7}2^{3}3^{6}4^{3}.2^{4}.34^{3}.4$},
\scalebox{0.75}{$1^{7}23^{3}4.2^{6}4^{3}.3^{4}.4^{3}$},
\scalebox{0.75}{$1^{7}2^{2}3^{6}4^{4}.2^{5}4^{2}.3.4$},
\scalebox{0.75}{$1^{7}2^{5}3.2^{2}3^{4}4^{6}.3^{2}.4$},
\scalebox{0.75}{$1^{7}23^{4}4^{4}.2^{6}34.3^{2}.4^{2}$},
\scalebox{0.75}{$1^{7}23^{3}4^{3}.2^{6}4.3^{4}4.4^{2}$},
\scalebox{0.75}{$1^{7}2^{2}3^{3}4^{5}.2^{5}34.3^{3}.4$},
\scalebox{0.75}{$1^{7}234^{6}.2^{6}3^{2}5^{6}.3^{4}45$},
\scalebox{0.75}{$1^{7}2^{2}3^{6}4^{3}.2^{5}4.34^{2}.4$},
\scalebox{0.75}{$1^{7}23^{3}4.2^{6}4^{5}5.3^{4}45^{6}$},
\scalebox{0.75}{$1^{7}2^{2}34.2^{5}34^{2}.3^{5}.4^{4}$},
\scalebox{0.75}{$1^{7}2^{3}4.2^{4}3^{4}4.3^{3}4^{4}.4$},
\scalebox{0.75}{$1^{7}23^{6}4^{4}.2^{6}4^{3}5^{6}.3.5$},
\scalebox{0.75}{$1^{7}2^{2}3^{4}4^{5}.2^{5}34.3^{2}.4$},
\scalebox{0.75}{$1^{7}23^{4}4^{4}.2^{6}3^{2}.34^{2}.4$},
\scalebox{0.75}{$1^{7}2^{2}3^{2}.2^{5}34^{5}.3^{4}4.4$},
\scalebox{0.75}{$1^{7}234^{4}.2^{6}4^{3}5.3^{6}.5^{6}$},
\scalebox{0.75}{$1^{7}23^{2}4^{2}.2^{6}34.3^{4}.4^{4}$},
\scalebox{0.75}{$1^{7}2^{5}34^{2}.2^{2}3^{5}.34^{4}.4$},
\scalebox{0.75}{$1^{7}23^{3}4^{4}.2^{6}3.3^{3}4.4^{2}$},
\scalebox{0.75}{$1^{7}234^{4}5.2^{6}4^{3}.3^{6}.5^{6}$},
\scalebox{0.75}{$1^{7}24^{6}.2^{6}3^{2}5^{6}.3^{5}4.5$},
\scalebox{0.75}{$1^{7}2^{6}4.23^{7}4^{5}5.45^{6}6^{7}$},
\scalebox{0.75}{$1^{7}2^{3}3^{7}4^{2}.2^{4}4^{5}5^{7}$},
\scalebox{0.75}{$1^{7}2^{2}34^{3}.2^{5}3.3^{5}4.4^{3}$},
\scalebox{0.75}{$1^{7}2^{7}345.3^{6}4^{2}5^{5}.4^{4}.5$},
\scalebox{0.75}{$1^{7}2^{7}3.3^{6}4^{7}56.5^{6}6.6^{5}$},
\scalebox{0.75}{$1^{7}23^{4}4.2^{6}3^{2}4^{3}.34^{2}.4$},
\scalebox{0.75}{$1^{7}234^{4}.2^{6}45^{6}.3^{6}5.4^{2}$},
\scalebox{0.75}{$1^{7}234^{6}5^{2}.2^{6}45^{4}.3^{6}.5$},
\scalebox{0.75}{$1^{7}2^{3}3^{2}4^{6}.2^{4}3^{5}45^{7}$},
\scalebox{0.75}{$1^{7}2^{5}34^{6}5^{6}.2^{2}3^{5}4.3.5$},
\scalebox{0.75}{$1^{7}23^{3}4.2^{6}34.3^{3}4^{2}.4^{3}$},
\scalebox{0.75}{$1^{7}23^{5}4^{5}.2^{6}45^{6}.3^{2}.45$},
\scalebox{0.75}{$1^{7}23^{2}4.2^{6}34^{2}.3^{4}4.4^{3}$},
\scalebox{0.75}{$1^{7}2^{3}34.2^{4}3^{6}5.4^{6}5.5^{5}$},
\scalebox{0.75}{$1^{7}23^{6}4^{2}5.2^{6}5^{6}.34^{4}.4$},
\scalebox{0.75}{$1^{7}23^{4}4.2^{6}34^{2}.3^{2}4^{3}.4$},
\scalebox{0.75}{$1^{7}234^{4}5^{5}.2^{6}345.3^{5}4^{2}5$},
\scalebox{0.75}{$1^{7}234^{6}6.2^{6}35^{7}6.3^{5}46^{5}$},
\scalebox{0.75}{$1^{7}23^{2}.2^{6}3^{2}4^{2}.3^{3}4^{5}$},
\scalebox{0.75}{$1^{7}2^{5}3^{5}4^{5}.2^{2}.3^{2}.4^{2}$},
\scalebox{0.75}{$1^{7}2^{3}3^{3}4^{5}.2^{4}.3^{4}.4^{2}$},
\scalebox{0.75}{$1^{7}2^{3}.2^{4}3^{4}.3^{3}4^{5}.4^{2}$},
\scalebox{0.75}{$1^{7}2^{3}3^{3}.2^{4}4^{4}.3^{4}.4^{3}$},
\scalebox{0.75}{$1^{7}2^{6}3.23^{5}4^{2}5^{6}.34^{4}.45$},
\scalebox{0.75}{$1^{7}2^{5}3^{3}4^{6}.2^{2}3^{2}.3^{2}4$},
\scalebox{0.75}{$1^{7}2^{2}4^{4}.2^{5}3^{3}4.3^{4}4^{2}$},
\scalebox{0.75}{$1^{7}2^{2}3^{5}4^{6}5.2^{5}345^{5}.3.5$},
\scalebox{0.75}{$1^{7}2^{7}3^{2}5.3^{5}4^{4}5^{6}.4^{3}$},
\scalebox{0.75}{$1^{7}2^{3}3^{3}4^{3}.2^{4}.3^{4}.4^{4}$},
\scalebox{0.75}{$1^{7}234^{6}5^{2}.2^{6}3.3^{5}45.5^{4}$},
\scalebox{0.75}{$1^{7}234^{6}5.2^{6}345^{5}.3^{5}56^{7}$},
\scalebox{0.75}{$1^{7}2^{3}4.2^{4}3^{6}5.34^{5}5.45^{5}$},
\scalebox{0.75}{$1^{7}2^{2}3^{6}45.2^{5}34^{5}5^{5}.4.5$},
\scalebox{0.75}{$1^{7}2^{4}3^{3}4^{6}.2^{3}3^{2}.3^{2}4$},
\scalebox{0.75}{$1^{7}2^{3}3^{2}4^{6}.2^{4}3^{2}4.3^{3}$},
\scalebox{0.75}{$1^{7}2^{7}345^{3}.3^{6}4^{5}5^{2}.45.5$},
\scalebox{0.75}{$1^{7}2345^{6}.2^{6}56^{6}.3^{6}6.4^{6}$},
\scalebox{0.75}{$1^{7}2^{2}3^{6}4.2^{5}34^{6}5^{4}.5^{3}$},
\scalebox{0.75}{$1^{7}234^{5}.2^{6}345^{6}.3^{5}46^{7}.5$},
\scalebox{0.75}{$1^{7}23^{7}.2^{6}4^{2}5^{4}.4^{5}.5^{3}$},
\scalebox{0.75}{$1^{7}2^{3}34^{3}.2^{4}3^{2}.3^{4}.4^{4}$},
\scalebox{0.75}{$1^{7}234^{5}.2^{6}345^{4}.3^{5}45^{2}.5$},
\scalebox{0.75}{$1^{7}2^{4}3^{2}4^{6}.2^{3}3^{3}.3^{2}.4$},
\scalebox{0.75}{$1^{7}2^{7}3^{3}.3^{4}4^{5}5^{6}.4^{2}.5$},
\scalebox{0.75}{$1^{7}2^{4}.2^{3}3^{5}4^{2}.3^{2}4^{4}.4$},
\scalebox{0.75}{$1^{7}2^{2}34^{3}.2^{5}3^{2}4.3^{4}4^{3}$},
\scalebox{0.75}{$1^{7}2^{7}35^{3}.3^{6}4^{3}.4^{4}.5^{4}$},
\scalebox{0.75}{$1^{7}23^{4}4^{2}.2^{6}3^{2}4^{3}.34^{2}$},
\scalebox{0.75}{$1^{7}23^{6}45.2^{6}34^{5}5^{2}.45^{3}.5$},
\scalebox{0.75}{$1^{7}2^{2}34.2^{5}3^{2}4^{3}.3^{4}4^{3}$},
\scalebox{0.75}{$1^{7}23^{5}4^{2}5.2^{6}35^{5}.34^{4}5.4$},
\scalebox{0.75}{$1^{7}23^{2}4^{2}.2^{6}34^{2}.3^{4}4^{3}$},
\scalebox{0.75}{$1^{7}23^{3}.2^{6}4^{3}.3^{4}4^{2}.4^{2}$},
\scalebox{0.75}{$1^{7}2^{2}34^{3}.2^{5}3^{2}.3^{4}.4^{4}$},
\scalebox{0.75}{$1^{7}2^{2}3^{2}.2^{5}34^{4}.3^{4}.4^{3}$},
\scalebox{0.75}{$1^{7}2^{5}.2^{2}3^{7}45^{2}.4^{6}5^{4}.5$},
\scalebox{0.75}{$1^{7}23^{7}.2^{6}4^{2}5^{3}.4^{5}5^{3}.5$},
\scalebox{0.75}{$1^{7}2^{2}34^{3}.2^{5}3^{2}.3^{4}4^{3}.4$},
\scalebox{0.75}{$1^{7}2^{3}3^{2}.2^{4}34^{4}.3^{4}4.4^{2}$},
\scalebox{0.75}{$1^{7}2^{7}3.3^{6}4^{7}5.5^{6}6^{3}.6^{4}$},
\scalebox{0.75}{$1^{7}23^{4}4^{5}5.2^{6}34^{2}5^{6}.3^{2}$},
\scalebox{0.75}{$1^{7}2^{4}4^{5}.2^{3}3^{5}45^{7}.3^{2}.4$},
\scalebox{0.75}{$1^{7}234^{4}5.2^{6}4^{3}5^{4}.3^{6}5^{2}$},
\scalebox{0.75}{$1^{7}234^{2}.2^{6}3^{2}.3^{4}4^{2}.4^{3}$},
\scalebox{0.75}{$1^{7}23^{7}5.2^{6}4^{7}.5^{6}6^{2}.6^{5}$},
\scalebox{0.75}{$1^{7}2^{7}34.3^{6}5^{7}6^{2}.4^{6}.6^{5}$},
\scalebox{0.75}{$1^{7}2^{2}3^{3}4^{3}.2^{5}3^{3}.34^{3}.4$},
\scalebox{0.75}{$1^{7}23^{7}.2^{6}4^{2}5^{6}6.4^{5}56^{6}$},
\scalebox{0.75}{$1^{7}2^{6}345.23^{5}5^{6}.34^{5}6^{6}.46$},
\scalebox{0.75}{$1^{7}2^{5}3^{3}4^{5}.2^{2}3^{2}4.3^{2}.4$},
\scalebox{0.75}{$1^{7}2^{2}3.2^{5}3^{2}4^{3}.3^{4}4^{3}.4$},
\scalebox{0.75}{$1^{7}2^{3}34.2^{4}3^{6}.4^{6}5^{3}.5^{4}$},
\scalebox{0.75}{$1^{7}23^{4}4^{4}5.2^{6}34^{3}5^{6}.3^{2}$},
\scalebox{0.75}{$1^{7}234^{3}.2^{6}5^{5}.3^{6}.4^{4}5^{2}$},
\scalebox{0.75}{$1^{7}2^{3}34.2^{4}3^{3}.3^{3}4^{3}.4^{3}$},
\scalebox{0.75}{$1^{7}23^{6}5.2^{6}34^{6}5.45^{4}6^{6}.56$},
\scalebox{0.75}{$1^{7}2^{7}3.3^{6}4^{3}5^{4}.4^{4}5.5^{2}$},
\scalebox{0.75}{$1^{7}2^{2}3^{4}4^{3}.2^{5}3^{2}.34^{3}.4$},
\scalebox{0.75}{$1^{7}234^{2}.2^{6}3^{2}.3^{4}4^{3}.4^{2}$},
\scalebox{0.75}{$1^{7}23^{2}4^{4}.2^{6}3^{2}4^{2}.3^{3}.4$},
\scalebox{0.75}{$1^{7}2^{4}34^{4}.2^{3}3^{3}4.3^{3}.4^{2}$},
\scalebox{0.75}{$1^{7}2^{3}34^{3}.2^{4}3^{2}4^{3}.3^{4}.4$},
\scalebox{0.75}{$1^{7}2^{2}3.2^{5}3^{2}4.3^{4}4^{3}.4^{3}$},
\scalebox{0.75}{$1^{7}2^{7}3.3^{6}4^{3}5.4^{4}5^{2}.5^{4}$},
\scalebox{0.75}{$1^{7}234^{5}5^{2}.2^{6}4^{2}5^{4}.3^{6}5$},
\scalebox{0.75}{$1^{7}234^{6}56.2^{6}45^{5}.3^{6}56^{5}.6$},
\scalebox{0.75}{$1^{7}2^{3}3.2^{4}3^{6}4.4^{6}5^{4}.5^{3}$},
\scalebox{0.75}{$1^{7}2^{3}34.2^{4}3^{2}.3^{4}4^{2}.4^{4}$},
\scalebox{0.75}{$1^{7}23^{3}4^{3}.2^{6}3^{2}4^{3}.3^{2}.4$},
\scalebox{0.75}{$1^{7}2^{3}3^{3}5.2^{4}3^{4}4.4^{6}5.5^{5}$},
\scalebox{0.75}{$1^{7}2^{7}34.3^{6}4^{3}5.4^{3}5^{4}.5^{2}$},
\scalebox{0.75}{$1^{7}23^{7}4^{2}5^{2}.2^{6}4^{4}5^{4}.4.5$},
\scalebox{0.75}{$1^{7}2^{7}3^{3}5.3^{4}4^{7}6.5^{6}6.6^{5}$},
\scalebox{0.75}{$1^{7}2^{2}3^{6}.2^{5}34^{7}56.5^{6}.6^{6}$},
\scalebox{0.75}{$1^{7}234^{7}6^{2}.2^{6}5^{6}.3^{6}6^{5}.5$},
\scalebox{0.75}{$1^{7}2^{3}34.2^{4}3^{3}4.3^{3}4^{3}.4^{2}$},
\scalebox{0.75}{$1^{7}23^{3}4^{5}.2^{6}34^{2}5^{6}.3^{3}.5$},
\scalebox{0.75}{$1^{7}234^{6}5^{3}.2^{6}45^{3}6^{7}.3^{6}5$},
\scalebox{0.75}{$1^{7}23^{3}4^{2}5.2^{6}45^{6}.3^{4}.4^{4}$},
\scalebox{0.75}{$1^{7}2^{7}34.3^{6}4^{3}5^{3}.4^{3}5.5^{3}$},
\scalebox{0.75}{$1^{7}2^{5}3^{4}4^{3}.2^{2}34^{2}.3^{2}4.4$},
\scalebox{0.75}{$1^{7}2^{7}35.3^{6}4^{2}5^{3}.4^{5}5.5^{2}$},
\scalebox{0.75}{$1^{7}2^{4}34.2^{3}3^{3}4^{3}.3^{3}4.4^{2}$},
\scalebox{0.75}{$1^{7}234^{7}.2^{6}5^{6}6^{4}.3^{6}6^{3}.5$},
\scalebox{0.75}{$1^{7}24^{5}.2^{6}3^{2}5^{5}.3^{5}45^{2}.4$},
\scalebox{0.75}{$1^{7}2^{3}34.2^{4}3^{6}4.4^{5}5^{5}.5^{2}$},
\scalebox{0.75}{$1^{7}2^{3}3^{3}4^{2}.2^{4}34.3^{3}4.4^{3}$},
\scalebox{0.75}{$1^{7}2^{7}34.3^{6}4^{6}5.5^{6}6^{3}.6^{4}$},
\scalebox{0.75}{$1^{7}234^{6}5.2^{6}3^{2}45^{2}.3^{4}5^{4}$},
\scalebox{0.75}{$1^{7}2^{5}3^{6}4^{5}5^{5}.2^{2}4.34.5^{2}$},
\scalebox{0.75}{$1^{7}2^{3}34^{3}.2^{4}3^{4}4.3^{2}4.4^{2}$},
\scalebox{0.75}{$1^{7}23^{5}5.2^{6}3^{2}4.4^{6}5^{3}.5^{3}$},
\scalebox{0.75}{$1^{7}2^{6}3^{7}4^{2}.24^{4}5^{6}6^{7}.4.5$},
\scalebox{0.75}{$1^{7}23^{3}4^{4}5^{5}.2^{6}.3^{4}5.4^{3}5$},
\scalebox{0.75}{$1^{7}2^{4}3^{3}4^{3}.2^{3}34.3^{3}4.4^{2}$},
\scalebox{0.75}{$1^{7}2^{4}34.2^{3}3^{3}4.3^{3}4^{3}.4^{2}$},
\scalebox{0.75}{$1^{7}23^{7}4.2^{6}5^{6}6^{3}.4^{6}6^{4}.5$},
\scalebox{0.75}{$1^{7}2^{4}3^{4}4^{3}.2^{3}34.3^{2}4^{2}.4$},
\scalebox{0.75}{$1^{7}234^{4}5.2^{6}4^{3}5^{4}.3^{6}.5^{2}$},
\scalebox{0.75}{$1^{7}23^{5}45^{2}.2^{6}3^{2}4.4^{5}5^{4}.5$},
\scalebox{0.75}{$1^{7}2^{5}3^{4}4^{6}5^{5}.2^{2}35.3^{2}4.5$},
\scalebox{0.75}{$1^{7}2^{2}345^{6}.2^{5}34^{2}.3^{5}.4^{4}5$},
\scalebox{0.75}{$1^{7}2^{2}3^{3}45.2^{5}4^{5}.3^{4}45.5^{5}$},
\scalebox{0.75}{$1^{7}2345^{5}.2^{6}5^{2}6^{6}.3^{6}6.4^{6}$},
\scalebox{0.75}{$1^{7}23^{2}4^{7}6.2^{6}5^{6}6^{5}.3^{5}6.5$},
\scalebox{0.75}{$1^{7}2^{3}3^{3}4^{5}5^{6}.2^{4}3.3^{3}45.4$},
\scalebox{0.75}{$1^{7}2^{2}3^{2}4^{6}5.2^{5}3.3^{4}45.5^{5}$},
\scalebox{0.75}{$1^{7}2^{3}3^{2}4^{4}.2^{4}3^{2}4^{3}.3^{3}$},
\scalebox{0.75}{$1^{7}2^{2}3^{2}4^{4}.2^{5}3^{3}4^{3}.3^{2}$},
\scalebox{0.75}{$1^{7}234^{6}5^{2}.2^{6}5^{5}6^{6}.3^{6}6.4$},
\scalebox{0.75}{$1^{7}2^{2}3^{3}4^{5}5.2^{5}3^{3}5^{6}.34.4$},
\scalebox{0.75}{$1^{7}23^{6}5.2^{6}34^{4}5^{3}.4^{3}5.5^{2}$},
\scalebox{0.75}{$1^{7}234^{2}5^{4}.2^{6}45.3^{6}.4^{4}5^{2}$},
\scalebox{0.75}{$1^{7}2^{7}356.3^{6}4^{7}6^{2}.5^{6}6.6^{3}$},
\scalebox{0.75}{$1^{7}2^{6}35^{3}.23^{5}4^{4}5^{4}.34^{2}.4$},
\scalebox{0.75}{$1^{7}2^{7}35^{2}.3^{6}4^{2}5^{3}.4^{5}5^{2}$},
\scalebox{0.75}{$1^{7}234^{2}5^{2}.2^{6}35^{4}.3^{5}4.4^{4}5$},
\scalebox{0.75}{$1^{7}2^{2}34^{5}5.2^{5}3^{2}45.3^{4}4.5^{5}$},
\scalebox{0.75}{$1^{7}2^{2}3^{3}.2^{5}3^{2}4^{5}.3^{2}.4^{2}$},
\scalebox{0.75}{$1^{7}2^{4}4^{2}.2^{3}3^{7}5^{2}.4^{5}.5^{5}$},
\scalebox{0.75}{$1^{7}2^{4}3^{5}4^{3}.2^{3}4^{2}.3^{2}.4^{2}$},
\scalebox{0.75}{$1^{7}2^{2}3^{6}4^{4}.2^{5}45^{7}6^{6}.34.46$},
\scalebox{0.75}{$1^{7}2^{2}3^{3}4^{4}5.2^{5}4^{3}5^{6}.3^{4}$},
\scalebox{0.75}{$1^{7}2^{3}.2^{4}3^{4}4^{2}.3^{3}4^{3}.4^{2}$},
\scalebox{0.75}{$1^{7}23^{2}.2^{6}3^{2}45.3^{3}4^{5}5.45^{5}$},
\scalebox{0.75}{$1^{7}2^{5}3^{5}4^{3}.2^{2}4^{2}.3^{2}.4^{2}$},
\scalebox{0.75}{$1^{7}23^{2}4^{7}.2^{6}5^{7}6^{3}.3^{5}6^{4}$},
\scalebox{0.75}{$1^{7}2^{6}3^{3}5.23^{3}4^{6}5.345^{3}.5^{2}$},
\scalebox{0.75}{$1^{7}2^{3}3^{2}5.2^{4}3^{5}45^{2}.4^{6}5^{4}$},
\scalebox{0.75}{$1^{7}2^{3}3^{4}45^{2}.2^{4}3^{3}.4^{6}.5^{5}$},
\scalebox{0.75}{$1^{7}23^{2}45^{4}.2^{6}34.3^{4}45^{2}.4^{4}5$},
\scalebox{0.75}{$1^{7}2^{3}3^{3}4^{4}.2^{4}3^{2}4.3^{2}.4^{2}$},
\scalebox{0.75}{$1^{7}2^{3}45^{6}6.2^{4}3^{6}6^{5}.34^{5}5.46$},
\scalebox{0.75}{$1^{7}2^{2}3^{3}4^{3}.2^{5}34^{2}.3^{3}.4^{2}$},
\scalebox{0.75}{$1^{7}2^{7}3.3^{6}4^{7}5^{2}.5^{5}6^{3}.6^{4}$},
\scalebox{0.75}{$1^{7}2^{3}34.2^{4}3^{2}45^{5}.3^{4}45.4^{4}5$},
\scalebox{0.75}{$1^{7}23^{2}4^{3}.2^{6}3^{2}4^{2}.3^{3}.4^{2}$},
\scalebox{0.75}{$1^{7}2^{3}3.2^{4}3^{4}4^{2}.3^{2}4^{3}.4^{2}$},
\scalebox{0.75}{$1^{7}2^{6}34^{2}5^{3}.23^{5}4.34^{3}5^{3}.45$},
\scalebox{0.75}{$1^{7}2^{4}34^{3}.2^{3}3^{4}4^{2}.3^{2}.4^{2}$},
\scalebox{0.75}{$1^{7}2^{7}3^{2}.3^{5}4^{5}5^{4}.4^{2}5.5^{2}$},
\scalebox{0.75}{$1^{7}2^{7}3^{4}4.3^{3}4^{3}5^{4}.4^{3}.5^{3}$},
\scalebox{0.75}{$1^{7}2^{3}3^{4}.2^{4}4^{7}5.3^{3}5^{3}.5^{3}$},
\scalebox{0.75}{$1^{7}2^{4}3^{2}4^{2}.2^{3}3^{2}4^{4}.3^{3}.4$},
\scalebox{0.75}{$1^{7}23^{2}4.2^{6}345.3^{4}4^{3}5.4^{2}5^{5}$},
\scalebox{0.75}{$1^{7}234^{6}.2^{6}35^{6}6^{2}.3^{5}456^{4}.6$},
\scalebox{0.75}{$1^{7}2^{3}34^{2}.2^{4}3^{3}4^{2}.3^{3}.4^{3}$},
\scalebox{0.75}{$1^{7}2^{2}3^{3}4^{3}.2^{5}3^{2}4^{3}.3^{2}.4$},
\scalebox{0.75}{$1^{7}2^{2}34^{5}.2^{5}34^{2}5^{4}.3^{5}5^{3}$},
\scalebox{0.75}{$1^{7}234^{5}5^{3}.2^{6}3456^{7}.3^{5}5^{3}.4$},
\scalebox{0.75}{$1^{7}2^{3}345^{2}.2^{4}3^{6}5^{2}.4^{6}5^{3}$},
\scalebox{0.75}{$1^{7}23^{3}4^{2}.2^{6}3^{2}.3^{2}4^{3}.4^{2}$},
\scalebox{0.75}{$1^{7}25^{3}.2^{6}3^{2}.3^{5}4^{3}.4^{4}5^{4}$},
\scalebox{0.75}{$1^{7}2^{5}3^{4}4^{2}.2^{2}34^{3}.3^{2}.4^{2}$},
\scalebox{0.75}{$1^{7}2^{3}34^{2}.2^{4}3^{6}.4^{5}5^{5}.5^{2}$},
\scalebox{0.75}{$1^{7}2^{3}3^{2}.2^{4}3^{2}4^{4}.3^{3}4^{2}.4$},
\scalebox{0.75}{$1^{7}234^{6}5.2^{6}35^{5}.3^{5}456^{4}.6^{3}$},
\scalebox{0.75}{$1^{7}23^{4}4^{3}.2^{6}4^{2}5^{7}.3^{3}.4^{2}$},
\scalebox{0.75}{$1^{7}2^{6}.23^{7}4^{2}5^{3}.4^{5}6^{7}.5^{4}$},
\scalebox{0.75}{$1^{7}2^{3}3^{2}4.2^{4}3^{5}5^{4}.4^{6}.5^{3}$},
\scalebox{0.75}{$1^{7}25^{5}.2^{6}3^{2}.3^{5}4^{3}.4^{4}5^{2}$},
\scalebox{0.75}{$1^{7}2^{4}3^{3}4^{2}.2^{3}3^{2}4.3^{2}4^{3}.4$},
\scalebox{0.75}{$1^{7}2^{2}3^{3}4.2^{5}3^{4}5.4^{6}5^{3}.5^{3}$},
\scalebox{0.75}{$1^{7}23^{3}4^{2}5^{4}.2^{6}4^{4}.3^{4}4.5^{3}$},
\scalebox{0.75}{$1^{7}23^{2}4^{7}5^{2}.2^{6}35^{4}6^{7}.3^{4}5$},
\scalebox{0.75}{$1^{7}2^{3}3^{2}4.2^{4}3^{2}4.3^{3}4^{2}.4^{3}$},
\scalebox{0.75}{$1^{7}2^{3}3^{4}4^{4}5^{6}.2^{4}.3^{3}4.4^{2}5$},
\scalebox{0.75}{$1^{7}23^{7}45^{2}.2^{6}4^{6}.5^{5}6^{4}.6^{3}$},
\scalebox{0.75}{$1^{7}2^{2}3^{2}4^{6}5^{4}.2^{5}3.3^{4}4.5^{3}$},
\scalebox{0.75}{$1^{7}2^{4}3^{4}45^{3}.2^{3}3^{3}4.4^{5}.5^{4}$},
\scalebox{0.75}{$1^{7}235^{3}.2^{6}4^{3}.3^{6}5^{2}.4^{4}5^{2}$},
\scalebox{0.75}{$1^{7}2^{7}3^{2}4.3^{5}4^{6}5.5^{6}6^{2}.6^{5}$},
\scalebox{0.75}{$1^{7}23^{2}4^{2}.2^{6}3^{2}4^{2}.3^{3}4.4^{2}$},
\scalebox{0.75}{$1^{7}2^{3}34.2^{4}3^{6}5^{2}.4^{6}5^{3}.5^{2}$},
\scalebox{0.75}{$1^{7}2^{7}34.3^{6}4^{2}5^{2}.4^{4}5^{2}.5^{3}$},
\scalebox{0.75}{$1^{7}2^{3}34^{4}5^{6}.2^{4}3^{2}4.3^{4}4^{2}5$},
\scalebox{0.75}{$1^{7}23^{5}5^{2}.2^{6}3^{2}4.4^{6}5^{3}.5^{2}$},
\scalebox{0.75}{$1^{7}2^{2}34^{6}5^{2}.2^{5}3^{2}5^{5}.3^{4}.4$},
\scalebox{0.75}{$1^{7}23^{4}4^{4}5^{2}.2^{6}4^{2}5^{5}.3^{3}.4$},
\scalebox{0.75}{$1^{7}2^{2}3^{3}.2^{5}4^{6}5.3^{4}45^{2}.5^{4}$},
\scalebox{0.75}{$1^{7}23^{4}4^{2}5^{3}.2^{6}4^{5}5.3^{3}.5^{3}$},
\scalebox{0.75}{$1^{7}23^{6}456.2^{6}34^{6}56^{2}.5^{5}6.6^{3}$},
\scalebox{0.75}{$1^{7}2^{3}3^{4}5.2^{4}3^{3}4^{2}5.4^{5}.5^{5}$},
\scalebox{0.75}{$1^{7}2^{2}35.2^{5}3^{2}4^{3}.3^{4}5^{6}.4^{4}$},
\scalebox{0.75}{$1^{7}2^{3}3^{3}4^{4}5^{6}.2^{4}4.3^{4}.4^{2}5$},
\scalebox{0.75}{$1^{7}23^{2}45^{4}.2^{6}34^{3}.3^{4}4^{3}5^{3}$},
\scalebox{0.75}{$1^{7}23^{7}4.2^{6}45^{5}6^{2}.4^{5}5^{2}6^{5}$},
\scalebox{0.75}{$1^{7}2^{2}3^{3}45^{4}.2^{5}3^{4}.4^{6}5^{2}.5$},
\scalebox{0.75}{$1^{7}24^{6}5^{4}.2^{6}3^{2}5^{3}6^{7}.3^{5}.4$},
\scalebox{0.75}{$1^{7}2^{7}3^{2}4.3^{5}4^{2}5^{2}.4^{4}5.5^{4}$},
\scalebox{0.75}{$1^{7}2^{6}35^{3}.23^{6}4^{5}5^{2}.4^{2}.5^{2}$},
\scalebox{0.75}{$1^{7}2^{3}345^{3}.2^{4}3^{6}5^{2}.4^{6}.5^{2}$},
\scalebox{0.75}{$1^{7}23^{5}4^{3}5^{3}.2^{6}34^{2}5^{4}.34^{2}$},
\scalebox{0.75}{$1^{7}23^{2}4^{4}5^{3}.2^{6}34^{2}5^{4}.3^{4}4$},
\scalebox{0.75}{$1^{7}234^{6}56.2^{6}3^{2}5^{6}6.3^{4}46.6^{4}$},
\scalebox{0.75}{$1^{7}2^{3}3^{4}.2^{4}3^{3}45^{4}.4^{6}5.5^{2}$},
\scalebox{0.75}{$1^{7}24^{5}5.2^{6}3^{2}5^{2}.3^{5}4^{2}.5^{4}$},
\scalebox{0.75}{$1^{7}2^{3}3^{2}4.2^{4}3^{5}5^{2}.4^{6}5.5^{4}$},
\scalebox{0.75}{$1^{7}2^{3}.2^{4}3^{5}4.3^{2}4^{5}5^{3}.45^{4}$},
\scalebox{0.75}{$1^{7}2^{3}34^{2}5.2^{4}3^{6}.4^{5}5^{4}.5^{2}$},
\scalebox{0.75}{$1^{7}23^{3}4^{3}5^{2}.2^{6}4^{3}5.3^{4}45^{4}$},
\scalebox{0.75}{$1^{7}2^{2}3^{4}4^{2}.2^{5}3^{2}4^{5}5.35^{5}.5$},
\scalebox{0.75}{$1^{7}2^{7}34.3^{6}4^{6}5^{3}6.5^{4}6^{2}.6^{4}$},
\scalebox{0.75}{$1^{7}2^{3}34^{5}5^{5}.2^{4}3^{3}4.3^{3}.45^{2}$},
\scalebox{0.75}{$1^{7}2^{2}3^{4}4.2^{5}3^{3}4^{2}5.4^{4}5^{5}.5$},
\scalebox{0.75}{$1^{7}23^{7}.2^{6}4^{3}5^{2}6.4^{4}5^{5}6.6^{5}$},
\scalebox{0.75}{$1^{7}2^{3}3^{3}45.2^{4}3^{4}4.4^{5}5^{2}.5^{4}$},
\scalebox{0.75}{$1^{7}245^{3}.2^{6}3^{2}.3^{5}4^{2}5.4^{4}5^{3}$},
\scalebox{0.75}{$1^{7}23^{3}4^{2}5.2^{6}4^{4}5^{3}.3^{4}4.5^{3}$},
\scalebox{0.75}{$1^{7}234^{5}5^{3}.2^{6}4^{2}5^{3}6^{6}.3^{6}56$},
\scalebox{0.75}{$1^{7}23^{2}4^{2}.2^{6}34^{2}5.3^{4}4^{3}.5^{6}$},
\scalebox{0.75}{$1^{7}2^{3}4.2^{4}3^{5}4.3^{2}4^{4}5^{3}.45^{4}$},
\scalebox{0.75}{$1^{7}23^{2}4^{2}5^{6}.2^{6}3.3^{4}4^{2}5.4^{3}$},
\scalebox{0.75}{$1^{7}23^{2}5^{4}.2^{6}4^{2}.3^{5}45.4^{4}5^{2}$},
\scalebox{0.75}{$1^{7}23^{2}4^{3}.2^{6}34^{3}.3^{4}45^{3}.5^{4}$},
\scalebox{0.75}{$1^{7}2^{2}3^{4}.2^{5}3^{2}4^{3}5.34^{4}5^{5}.5$},
\scalebox{0.75}{$1^{7}2^{7}345.3^{6}4^{6}5^{2}.5^{4}6^{5}.6^{2}$},
\scalebox{0.75}{$1^{7}23^{2}4^{3}.2^{6}345^{4}.3^{4}4^{3}.5^{3}$},
\scalebox{0.75}{$1^{7}2^{6}3^{4}4^{4}5^{5}6^{6}.23^{3}5.4^{3}56$},
\scalebox{0.75}{$1^{7}2^{2}34^{6}5.2^{5}3^{3}45^{4}.3^{3}.5^{2}$},
\scalebox{0.75}{$1^{7}2^{7}35.3^{6}4^{3}5^{2}6.4^{4}5^{4}.6^{6}$},
\scalebox{0.75}{$1^{7}234^{5}6.2^{6}4^{2}5^{5}6.3^{6}5^{2}6^{5}$},
\scalebox{0.75}{$1^{7}2^{3}34^{2}.2^{4}3^{2}45^{6}.3^{4}5.4^{4}$},
\scalebox{0.75}{$1^{7}2^{2}3^{3}45^{3}.2^{5}34^{6}5.3^{3}.5^{3}$},
\scalebox{0.75}{$1^{7}23^{6}4^{2}.2^{6}356^{7}.4^{5}5^{2}.5^{4}$},
\scalebox{0.75}{$1^{7}23^{7}4^{2}5^{2}.2^{6}4^{4}5^{2}.45^{2}.5$},
\scalebox{0.75}{$1^{7}23^{5}45^{2}.2^{6}3^{2}4^{2}.4^{4}5.5^{4}$},
\scalebox{0.75}{$1^{7}2^{3}3^{2}5.2^{4}3^{3}4.3^{2}4^{5}.45^{6}$},
\scalebox{0.75}{$1^{7}2^{2}3^{5}5.2^{5}3^{2}45.4^{6}5^{2}.5^{3}$},
\scalebox{0.75}{$1^{7}23^{3}5^{6}.2^{6}4^{4}56^{6}.3^{4}6.4^{3}$},
\scalebox{0.75}{$1^{7}23^{3}4^{6}5^{5}.2^{6}456^{7}7^{7}.3^{4}.5$},
\scalebox{0.75}{$1^{7}2^{7}346.3^{6}5^{3}6^{3}.4^{6}6.5^{4}6^{2}$},
\scalebox{0.75}{$1^{7}2^{2}34^{3}.2^{5}3^{2}4^{3}5.3^{4}5^{5}.45$},
\scalebox{0.75}{$1^{7}23^{4}5.2^{6}3^{2}4^{2}5^{2}.34^{4}5^{4}.4$},
\scalebox{0.75}{$1^{7}23^{2}4^{2}5.2^{6}5^{6}.3^{5}46^{6}.4^{4}6$},
\scalebox{0.75}{$1^{7}2^{2}34^{3}5^{3}.2^{5}34^{3}5.3^{5}5^{3}.4$},
\scalebox{0.75}{$1^{7}23^{7}56.2^{6}4^{2}5^{3}6.4^{5}6^{5}.5^{3}$},
\scalebox{0.75}{$1^{7}234^{7}56.2^{6}5^{3}6^{4}.3^{6}6^{2}.5^{3}$},
\scalebox{0.75}{$1^{7}234^{5}56^{2}.2^{6}5^{6}6.3^{6}6^{4}.4^{2}$},
\scalebox{0.75}{$1^{7}234^{4}5^{3}6.2^{6}4^{3}56^{5}.3^{6}5^{3}6$},
\scalebox{0.75}{$1^{7}2345^{3}6.2^{6}5^{4}6^{2}.3^{6}6^{4}.4^{6}$},
\scalebox{0.75}{$1^{7}2^{7}34.3^{6}4^{2}5^{2}6^{6}.4^{4}5.5^{4}6$},
\scalebox{0.75}{$1^{7}23^{3}45^{3}.2^{6}34^{3}.3^{3}45^{4}.4^{2}$},
\scalebox{0.75}{$1^{7}234^{6}56^{5}.2^{6}45^{6}6^{2}7^{6}.3^{6}7$},
\scalebox{0.75}{$1^{7}23^{2}45.2^{6}3^{2}4^{2}.3^{3}45^{6}.4^{3}$},
\scalebox{0.75}{$1^{7}24^{2}5^{3}.2^{6}3^{2}5.3^{5}45.4^{4}5^{2}$},
\scalebox{0.75}{$1^{7}2^{7}3^{2}45.3^{5}45^{5}.4^{5}56^{5}.6^{2}$},
\scalebox{0.75}{$1^{7}24^{5}6^{6}.2^{6}3^{2}5^{5}.3^{5}45^{2}6.4$},
\scalebox{0.75}{$1^{7}23^{2}4^{2}.2^{6}5^{6}.3^{5}46^{6}.4^{4}56$},
\scalebox{0.75}{$1^{7}2^{2}4^{2}5^{4}.2^{5}3^{3}5.3^{4}45.4^{4}5$},
\scalebox{0.75}{$1^{7}23^{5}45^{5}.2^{6}3^{2}4^{3}56^{7}.4^{3}.5$},
\scalebox{0.75}{$1^{7}23^{5}4^{5}6^{7}.2^{6}3^{2}4^{2}5^{7}7^{7}$},
\scalebox{0.75}{$1^{7}2^{2}3^{2}45^{4}.2^{5}3^{4}5^{2}.34^{5}5.4$},
\scalebox{0.75}{$1^{7}2^{7}35.3^{6}4^{2}5^{5}6^{3}.4^{5}6^{3}.56$},
\scalebox{0.75}{$1^{7}2^{6}3^{2}4^{4}.23^{4}5^{6}6^{7}.34^{2}.45$},
\scalebox{0.75}{$1^{7}2345.2^{6}4^{2}5^{2}.3^{6}5^{2}.4^{4}5^{2}$},
\scalebox{0.75}{$1^{7}2^{2}3^{7}4^{7}5^{6}.2^{5}56^{7}7^{7}8^{7}$},
\scalebox{0.75}{$1^{7}2^{2}3^{3}4^{2}.2^{5}3^{2}4^{3}.3^{2}.4^{2}$},
\scalebox{0.75}{$1^{7}2^{7}345.3^{6}4^{6}7.5^{6}6^{3}7.6^{4}7^{5}$},
\scalebox{0.75}{$1^{7}2^{2}3^{2}5^{5}.2^{5}4^{4}.3^{5}5^{2}.4^{3}$},
\scalebox{0.75}{$1^{7}2345^{4}.2^{6}45^{2}6^{5}.3^{6}56^{2}.4^{5}$},
\scalebox{0.75}{$1^{7}2^{3}3^{2}4^{2}.2^{4}3^{2}.3^{3}4^{3}.4^{2}$},
\scalebox{0.75}{$1^{7}2^{2}3^{5}.2^{5}3^{2}4^{3}5^{4}.4^{4}.5^{3}$},
\scalebox{0.75}{$1^{7}2^{2}4^{3}5^{4}.2^{5}3^{3}.3^{4}5^{3}.4^{4}$},
\scalebox{0.75}{$1^{7}2^{7}34.3^{6}4^{6}57.5^{6}6^{3}7.6^{4}7^{5}$},
\scalebox{0.75}{$1^{7}2^{7}3^{2}46.3^{5}45^{5}6^{4}.4^{5}56^{2}.5$},
\scalebox{0.75}{$1^{7}23^{5}5.2^{6}3^{2}456^{4}.4^{6}5.5^{4}6^{3}$},
\scalebox{0.75}{$1^{7}2^{2}3^{3}4^{7}5^{2}.2^{5}3^{3}5^{5}6^{7}.3$},
\scalebox{0.75}{$1^{7}23^{2}4^{3}5^{2}.2^{6}4^{4}5^{3}.3^{5}5^{2}$},
\scalebox{0.75}{$1^{7}23^{3}4^{3}5^{5}.2^{6}4^{4}5^{2}6^{7}.3^{4}$},
\scalebox{0.75}{$1^{7}2^{3}3^{4}.2^{4}3^{3}4^{2}.4^{5}5^{3}.5^{4}$},
\scalebox{0.75}{$1^{7}2^{2}34^{3}5^{2}.2^{5}34^{3}5.3^{5}45^{3}.5$},
\scalebox{0.75}{$1^{7}23^{3}45.2^{6}5^{6}6.3^{4}4^{2}6^{5}.4^{4}6$},
\scalebox{0.75}{$1^{7}2^{3}3^{5}4^{4}5^{6}6^{6}.2^{4}4.3^{2}4.456$},
\scalebox{0.75}{$1^{7}234^{2}5^{2}.2^{6}3^{2}5.3^{4}45^{3}.4^{4}5$},
\scalebox{0.75}{$1^{7}23^{6}46^{4}.2^{6}345.4^{5}5^{2}.5^{4}6^{3}$},
\scalebox{0.75}{$1^{7}2^{7}34^{2}6.3^{6}4^{4}5^{2}.45^{4}6^{5}.56$},
\scalebox{0.75}{$1^{7}23^{2}5^{2}.2^{6}34^{2}5.3^{4}45^{3}.4^{4}5$},
\scalebox{0.75}{$1^{7}2^{7}3.3^{6}4^{3}5^{2}.4^{4}5^{5}6^{2}.6^{5}$},
\scalebox{0.75}{$1^{7}2^{3}3^{2}4^{2}.2^{4}3^{2}4^{2}.3^{3}4.4^{2}$},
\scalebox{0.75}{$1^{7}2^{2}3^{4}4^{4}.2^{5}3^{3}5^{3}.4^{3}5.5^{3}$},
\scalebox{0.75}{$1^{7}23^{2}4^{2}5^{5}.2^{6}34^{2}.3^{4}4^{3}5^{2}$},
\scalebox{0.75}{$1^{7}2^{2}3^{4}.2^{5}3^{3}45^{3}.4^{6}5^{2}.5^{2}$},
\scalebox{0.75}{$1^{7}234^{7}5.2^{6}56^{6}7.3^{6}67^{6}.5^{5}8^{7}$},
\scalebox{0.75}{$1^{7}2^{2}3^{4}4^{3}.2^{5}34^{2}5^{7}.3^{2}.4^{2}$},
\scalebox{0.75}{$1^{7}2^{7}3.3^{6}4^{4}5^{2}.4^{3}5^{2}6^{7}.5^{3}$},
\scalebox{0.75}{$1^{7}2^{5}3^{4}4^{3}5^{4}.2^{2}3^{3}.4^{4}5.5^{2}$},
\scalebox{0.75}{$1^{7}2^{2}3^{4}4.2^{5}3^{3}5^{2}.4^{6}5^{2}.5^{3}$},
\scalebox{0.75}{$1^{7}2^{6}35^{3}.23^{5}4^{2}56^{7}.34^{4}5^{2}.45$},
\scalebox{0.75}{$1^{7}2^{5}3^{3}4^{6}5^{5}.2^{2}3^{2}.3^{2}4.5^{2}$},
\scalebox{0.75}{$1^{7}2^{3}3^{2}4^{4}.2^{4}34^{3}5^{3}.3^{4}.5^{4}$},
\scalebox{0.75}{$1^{7}2^{2}3^{4}5.2^{5}3^{3}4^{3}.4^{4}5^{2}.5^{4}$},
\scalebox{0.75}{$1^{7}2^{3}4^{6}.2^{4}3^{4}45^{2}.3^{3}5^{2}.5^{3}$},
\scalebox{0.75}{$1^{7}2^{7}3^{2}4.3^{5}4^{2}5^{3}6^{7}.4^{4}.5^{4}$},
\scalebox{0.75}{$1^{7}2^{3}3^{2}.2^{4}34^{3}5^{3}.3^{4}5^{4}.4^{4}$},
\scalebox{0.75}{$1^{7}234^{2}5^{5}.2^{6}56^{6}7.3^{6}67^{6}.4^{5}5$},
\scalebox{0.75}{$1^{7}234^{7}7.2^{6}5^{7}67.3^{6}6^{2}7.6^{4}7^{4}$},
\scalebox{0.75}{$1^{7}23^{3}45^{3}6.2^{6}45^{4}6.3^{4}46^{5}.4^{4}$},
\scalebox{0.75}{$1^{7}2^{2}3^{2}4.2^{5}4^{3}5^{4}.3^{5}5^{3}.4^{3}$},
\scalebox{0.75}{$1^{7}2^{7}3.3^{6}4^{2}6^{3}.4^{5}5^{3}.5^{4}6^{4}$},
\scalebox{0.75}{$1^{7}234^{5}6^{3}.2^{6}4^{2}5^{5}.3^{6}5^{2}6^{4}$},
\scalebox{0.75}{$1^{7}2^{2}3^{2}4^{2}.2^{5}4^{5}5^{2}.3^{5}5.5^{4}$},
\scalebox{0.75}{$1^{7}23^{7}4^{4}.2^{6}4^{3}5^{4}6^{4}.5^{3}.6^{3}$},
\scalebox{0.75}{$1^{7}2^{3}3^{2}5^{3}.2^{4}34^{3}5^{4}.3^{4}.4^{4}$},
\scalebox{0.75}{$1^{7}2^{2}34^{3}.2^{5}3^{2}4^{3}5^{3}.3^{4}45^{4}$},
\scalebox{0.75}{$1^{7}2^{2}3^{6}.2^{5}34^{2}5^{2}.4^{5}5^{2}.5^{3}$},
\scalebox{0.75}{$1^{7}23^{3}4^{5}.2^{6}5^{7}6^{2}.3^{4}6^{5}.4^{2}$},
\scalebox{0.75}{$1^{7}23^{2}4^{3}5^{5}6.2^{6}4^{3}6^{6}.3^{5}45^{2}$},
\scalebox{0.75}{$1^{7}2^{2}4.2^{5}3^{3}5^{2}.3^{4}4^{3}5.4^{3}5^{4}$},
\scalebox{0.75}{$1^{7}23^{7}4^{3}.2^{6}4^{4}5^{2}6.5^{5}6^{2}.6^{4}$},
\scalebox{0.75}{$1^{7}2^{3}3^{2}4^{2}.2^{4}34^{4}5^{2}.3^{4}4.5^{5}$},
\scalebox{0.75}{$1^{7}2^{7}36^{2}.3^{6}4^{3}5^{4}.4^{4}5^{3}6.6^{4}$},
\scalebox{0.75}{$1^{7}2^{3}3^{4}45.2^{4}3^{3}4^{2}.4^{4}5^{2}.5^{4}$},
\scalebox{0.75}{$1^{7}23^{2}45^{5}6.2^{6}4^{2}56^{4}.3^{5}56.4^{4}6$},
\scalebox{0.75}{$1^{7}2^{5}3^{6}4^{3}5^{4}.2^{2}34^{2}.4^{2}5.5^{2}$},
\scalebox{0.75}{$1^{7}23^{3}4^{3}.2^{6}4^{3}5^{2}.3^{4}45^{2}.5^{3}$},
\scalebox{0.75}{$1^{7}24^{2}5.2^{6}3^{2}5^{2}.3^{5}4^{2}.4^{3}5^{4}$},
\scalebox{0.75}{$1^{7}2^{2}4^{3}6^{6}.2^{5}3^{3}5.3^{4}4^{4}6.5^{6}$},
\scalebox{0.75}{$1^{7}2^{2}3^{3}46.2^{5}3^{4}5^{3}.4^{6}6^{6}.5^{4}$},
\scalebox{0.75}{$1^{7}23^{2}5^{3}.2^{6}4^{4}.3^{5}5^{4}6.4^{3}6^{6}$},
\scalebox{0.75}{$1^{7}2^{7}3.3^{6}4^{2}5^{4}6.4^{5}5^{3}6^{3}.6^{3}$},
\scalebox{0.75}{$1^{7}2^{4}35^{2}.2^{3}3^{4}4^{3}.3^{2}4^{3}5^{5}.4$},
\scalebox{0.75}{$1^{7}2^{7}3^{3}4.3^{4}4^{2}5^{3}6^{6}.4^{4}6.5^{4}$},
\scalebox{0.75}{$1^{7}2^{7}3.3^{6}4^{3}5^{3}6.4^{4}5^{4}6^{2}.6^{4}$},
\scalebox{0.75}{$1^{7}2^{2}4^{4}5.2^{5}3^{3}5^{3}.3^{4}45^{3}.4^{2}$},
\scalebox{0.75}{$1^{7}2^{7}37.3^{6}4^{7}5^{2}.5^{5}6^{4}7^{6}.6^{3}$},
\scalebox{0.75}{$1^{7}23^{2}4^{3}56.2^{6}3456^{5}.3^{4}4^{3}6.5^{5}$},
\scalebox{0.75}{$1^{7}23^{6}456.2^{6}35^{5}7^{3}.4^{6}567^{4}.6^{5}$},
\scalebox{0.75}{$1^{7}234^{3}5^{2}.2^{6}5^{5}6^{3}.3^{6}6^{4}.4^{4}$},
\scalebox{0.75}{$1^{7}2^{2}34^{5}.2^{5}345^{5}6^{3}.3^{5}456^{3}.56$},
\scalebox{0.75}{$1^{7}2^{2}3^{4}45.2^{5}3^{3}4^{2}5^{3}.4^{4}.5^{3}$},
\scalebox{0.75}{$1^{7}23^{2}4^{3}5^{5}.2^{6}45^{2}6^{7}.3^{5}.4^{3}$},
\scalebox{0.75}{$1^{7}23^{7}4^{3}6.2^{6}4^{4}5^{3}6^{2}.5^{4}.6^{4}$},
\scalebox{0.75}{$1^{7}2^{3}345^{2}.2^{4}3^{2}4^{4}5.3^{4}4^{2}5^{4}$},
\scalebox{0.75}{$1^{7}2^{4}3.2^{3}3^{3}4^{3}.3^{3}4^{3}5^{3}.45^{4}$},
\scalebox{0.75}{$1^{7}23^{2}4^{5}5^{3}6.2^{6}35^{4}6^{6}.3^{4}4^{2}$},
\scalebox{0.75}{$1^{7}2^{2}3^{2}4^{3}.2^{5}34^{3}5^{3}.3^{4}4.5^{4}$},
\scalebox{0.75}{$1^{7}2^{2}34^{4}.2^{5}3^{2}4.3^{4}4^{2}5^{2}.5^{5}$},
\scalebox{0.75}{$1^{7}2^{4}3^{3}4^{5}5^{4}.2^{3}35^{2}.3^{3}.4^{2}5$},
\scalebox{0.75}{$1^{7}23^{2}4^{3}5^{2}.2^{6}4^{4}5^{2}.3^{5}5^{2}.5$},
\scalebox{0.75}{$1^{7}2^{2}3^{2}4.2^{5}34^{3}5^{2}.3^{4}5^{5}.4^{3}$},
\scalebox{0.75}{$1^{7}2^{7}34.3^{6}4^{2}5^{3}.4^{4}5^{4}6^{2}.6^{5}$},
\scalebox{0.75}{$1^{7}235^{4}.2^{6}4^{3}6^{4}.3^{6}5^{3}.4^{4}6^{3}$},
\scalebox{0.75}{$1^{7}2^{3}345^{3}.2^{4}3^{2}4^{4}5.3^{4}4^{2}5^{3}$},
\scalebox{0.75}{$1^{7}2^{2}3^{6}4.2^{5}34^{3}5^{3}.4^{3}5^{2}.5^{2}$},
\scalebox{0.75}{$1^{7}2^{5}3^{3}4^{3}5^{4}.2^{2}3^{3}.34^{3}.45^{3}$},
\scalebox{0.75}{$1^{7}2^{2}3^{4}4^{2}5.2^{5}3^{3}5.4^{5}5^{3}.5^{2}$},
\scalebox{0.75}{$1^{7}2^{7}36^{2}.3^{6}4^{3}5^{2}6^{4}.4^{4}5.5^{4}6$},
\scalebox{0.75}{$1^{7}2^{2}356.2^{5}3^{2}4^{3}.3^{4}5^{6}.4^{4}6^{6}$},
\scalebox{0.75}{$1^{7}2^{2}3^{2}45.2^{5}4^{2}5^{3}.3^{5}5^{2}.4^{4}5$},
\scalebox{0.75}{$1^{7}2345^{3}6^{2}.2^{6}3456.3^{5}5^{3}6.4^{5}6^{3}$},
\scalebox{0.75}{$1^{7}23^{6}4.2^{6}35^{3}6^{2}.4^{6}6^{3}.5^{4}6^{2}$},
\scalebox{0.75}{$1^{7}2^{5}34^{2}.2^{2}3^{4}4^{2}5^{6}.3^{2}4.4^{2}5$},
\scalebox{0.75}{$1^{7}234^{5}5.2^{6}356^{6}7.3^{5}4^{2}7^{5}.5^{5}67$},
\scalebox{0.75}{$1^{7}2^{5}7.2^{2}3^{7}4^{2}5.4^{5}5^{6}6.6^{6}7^{6}$},
\scalebox{0.75}{$1^{7}2^{2}3^{6}4.2^{5}35^{2}6^{6}.4^{6}67^{7}.5^{5}$},
\scalebox{0.75}{$1^{7}234^{5}6^{4}.2^{6}4^{2}5^{5}.3^{6}5^{2}6^{2}.6$},
\scalebox{0.75}{$1^{7}23^{5}5^{5}6^{2}.2^{6}3^{2}4^{4}56^{5}.4^{3}.5$},
\scalebox{0.75}{$1^{7}2^{7}35.3^{6}4^{2}6^{7}.4^{5}57^{7}.5^{5}8^{7}$},
\scalebox{0.75}{$1^{7}2^{3}3^{6}4.2^{4}345^{7}6^{2}.4^{5}6^{3}.6^{2}$},
\scalebox{0.75}{$1^{7}2^{2}3^{6}4^{7}5^{2}.2^{5}35^{4}6^{6}7^{7}.5.6$},
\scalebox{0.75}{$1^{7}23^{4}4^{5}5.2^{6}4^{2}5^{4}6^{6}.3^{3}.5^{2}6$},
\scalebox{0.75}{$1^{7}234^{3}5^{4}.2^{6}45^{3}6^{5}.3^{6}6^{2}.4^{3}$},
\scalebox{0.75}{$1^{7}2^{3}345^{3}.2^{4}3^{2}4^{2}5^{3}.3^{4}5.4^{4}$},
\scalebox{0.75}{$1^{7}23^{5}6.2^{6}3^{2}46^{2}.4^{6}5^{3}.5^{4}6^{4}$},
\scalebox{0.75}{$1^{7}2345^{3}6^{3}.2^{6}3456.3^{5}5^{3}6.4^{5}6^{2}$},
\scalebox{0.75}{$1^{7}2345^{5}7.2^{6}4^{2}6^{6}7.3^{6}57^{5}.4^{4}56$},
\scalebox{0.75}{$1^{7}234^{3}5^{4}.2^{6}45^{3}6^{3}.3^{6}6^{4}.4^{3}$},
\scalebox{0.75}{$1^{7}2^{4}3^{5}4^{3}5^{3}.2^{3}34^{2}5.34^{2}.5^{3}$},
\scalebox{0.75}{$1^{7}23^{2}45.2^{6}45^{4}6.3^{5}456^{4}.4^{4}56^{2}$},
\scalebox{0.75}{$1^{7}23^{7}.2^{6}4^{7}56.5^{6}6^{2}7^{3}.6^{4}7^{4}$},
\scalebox{0.75}{$1^{7}2^{3}3.2^{4}3^{2}4^{2}5^{3}.3^{4}45^{3}.4^{4}5$},
\scalebox{0.75}{$1^{7}2^{3}4.2^{4}3^{4}45^{2}.3^{3}4^{2}5^{4}.4^{3}5$},
\scalebox{0.75}{$1^{7}2^{7}34.3^{6}4^{6}5.5^{6}6^{2}7^{5}.6^{5}7^{2}$},
\scalebox{0.75}{$1^{7}23^{3}4^{4}5.2^{6}3^{2}4^{2}5^{3}.3^{2}5^{3}.4$},
\scalebox{0.75}{$1^{7}23^{5}.2^{6}3^{2}456^{2}.4^{6}5^{2}6.5^{4}6^{4}$},
\scalebox{0.75}{$1^{7}2^{7}3^{3}4.3^{4}4^{2}5^{2}6^{5}.4^{4}56.5^{4}6$},
\scalebox{0.75}{$1^{7}2^{2}345^{2}.2^{5}3^{2}4^{3}5.3^{4}4^{3}5.5^{3}$},
\scalebox{0.75}{$1^{7}2^{3}34^{2}5^{6}.2^{4}3^{3}4^{4}56^{6}.3^{3}4.6$},
\scalebox{0.75}{$1^{7}23^{5}4.2^{6}3^{2}56^{5}.4^{6}6^{2}7^{6}.5^{6}7$},
\scalebox{0.75}{$1^{7}23^{6}6.2^{6}34^{3}5.4^{4}5^{2}6^{4}.5^{4}6^{2}$},
\scalebox{0.75}{$1^{7}23^{3}4^{2}56^{5}.2^{6}5^{5}6^{2}.3^{4}4.4^{4}5$},
\scalebox{0.75}{$1^{7}2^{2}3^{3}4^{3}6.2^{5}3^{4}4^{3}6.45^{7}6^{4}.6$},
\scalebox{0.75}{$1^{7}23^{3}4^{2}5^{2}.2^{6}345^{3}.3^{3}45^{2}.4^{3}$},
\scalebox{0.75}{$1^{7}234^{4}6^{4}.2^{6}34^{2}.3^{5}45^{2}.5^{5}6^{3}$},
\scalebox{0.75}{$1^{7}2^{2}3^{6}4^{6}5^{6}8.2^{5}46^{7}7^{7}8^{6}.3.5$},
\scalebox{0.75}{$1^{7}234^{4}5^{4}6^{3}.2^{6}34^{2}5^{2}.3^{5}456^{4}$},
\scalebox{0.75}{$1^{7}2^{2}3^{3}4^{4}5^{4}.2^{5}3^{3}4^{3}56^{7}.35.5$},
\scalebox{0.75}{$1^{7}2345^{5}67.2^{6}3456^{4}7.3^{5}4567^{4}.4^{4}67$},
\scalebox{0.75}{$1^{7}2^{3}3^{4}4^{3}5^{6}6^{6}.2^{4}4.3^{3}45.4^{2}6$},
\scalebox{0.75}{$1^{7}23^{4}4^{3}5^{2}.2^{6}3^{2}4^{3}5.345^{2}.5^{2}$},
\scalebox{0.75}{$1^{7}234^{5}6^{3}.2^{6}4^{2}5.3^{6}56^{2}.5^{5}6^{2}$},
\scalebox{0.75}{$1^{7}2^{2}345^{3}6.2^{5}3456^{5}.3^{5}456.4^{4}5^{2}$},
\scalebox{0.75}{$1^{7}235^{4}6^{4}.2^{6}34^{2}6^{2}.3^{5}5^{3}.4^{5}6$},
\scalebox{0.75}{$1^{7}2^{2}34^{3}5.2^{5}3^{2}5^{6}.3^{4}4^{3}6^{6}.46$},
\scalebox{0.75}{$1^{7}23^{3}45^{2}.2^{6}4^{2}5^{4}6.3^{4}456^{6}.4^{3}$},
\scalebox{0.75}{$1^{7}23^{2}4^{3}5.2^{6}4^{4}56^{2}.3^{5}56^{4}.5^{4}6$},
\scalebox{0.75}{$1^{7}2^{7}3.3^{6}4^{4}56.4^{3}5^{3}6^{5}7.5^{3}67^{6}$},
\scalebox{0.75}{$1^{7}2^{3}4^{3}5^{3}.2^{4}3^{5}5^{4}.3^{2}4^{2}.4^{2}$},
\scalebox{0.75}{$1^{7}245^{5}6^{3}.2^{6}3^{2}56^{2}.3^{5}45.4^{5}6^{2}$},
\scalebox{0.75}{$1^{7}23^{2}45^{2}6^{4}.2^{6}45^{4}.3^{5}456^{3}.4^{4}$},
\scalebox{0.75}{$1^{7}2^{3}4^{3}5^{5}.2^{4}3^{4}4^{2}5^{2}.3^{3}.4^{2}$},
\scalebox{0.75}{$1^{7}234^{5}5^{5}.2^{6}4^{2}56^{4}7^{6}.3^{6}57.6^{3}$},
\scalebox{0.75}{$1^{7}23^{2}45^{2}.2^{6}45^{4}.3^{5}456^{4}.4^{4}6^{3}$},
\scalebox{0.75}{$1^{7}235^{5}.2^{6}4^{2}6^{6}.3^{6}567^{5}.4^{5}57^{2}$},
\scalebox{0.75}{$1^{7}234^{3}5^{5}6^{2}.2^{6}35^{2}6^{4}.3^{5}46.4^{3}$},
\scalebox{0.75}{$1^{7}2^{2}34^{5}56^{2}.2^{5}34^{2}5^{5}.3^{5}6^{4}.56$},
\scalebox{0.75}{$1^{7}23^{5}4^{2}5^{2}6.2^{6}3^{2}46^{5}.4^{4}5.5^{4}6$},
\scalebox{0.75}{$1^{7}2^{2}34^{2}5^{4}.2^{5}3^{2}.3^{4}4^{3}.4^{2}5^{3}$},
\scalebox{0.75}{$1^{7}23^{2}45.2^{6}34^{2}5^{2}6.3^{4}45^{4}.4^{3}6^{6}$},
\scalebox{0.75}{$1^{7}23^{4}4^{4}5^{3}6^{2}.2^{6}34^{3}5^{4}6^{5}.3^{2}$},
\scalebox{0.75}{$1^{7}23^{3}456^{5}.2^{6}5^{5}7^{7}.3^{4}4^{2}6.4^{4}56$},
\scalebox{0.75}{$1^{7}2345^{5}68.2^{6}5^{2}6^{6}.3^{6}7^{7}8.4^{6}8^{5}$},
\scalebox{0.75}{$1^{7}23^{2}5^{2}.2^{6}4^{3}56.3^{5}45^{3}6.4^{3}56^{5}$},
\scalebox{0.75}{$1^{7}234^{3}5^{7}.2^{6}3^{2}6^{7}7^{4}.3^{4}4^{4}7^{3}$},
\scalebox{0.75}{$1^{7}23^{2}4^{5}5^{2}.2^{6}4^{2}5^{5}6^{3}.3^{5}.6^{4}$},
\scalebox{0.75}{$1^{7}2^{3}3^{2}45.2^{4}345^{5}6^{2}.3^{4}456^{5}.4^{4}$},
\scalebox{0.75}{$1^{7}2^{4}3^{3}.2^{3}3^{4}4^{2}5^{4}6.4^{5}5^{3}.6^{6}$},
\scalebox{0.75}{$1^{7}234^{4}5^{4}67.2^{6}34^{2}5^{2}6^{6}.3^{5}457^{6}$},
\scalebox{0.75}{$1^{7}2^{4}34^{7}5^{3}6.2^{3}3^{3}5^{4}6^{6}7^{7}.3^{3}$},
\scalebox{0.75}{$1^{7}2^{7}3^{2}.3^{5}4^{3}56^{5}.4^{4}5^{2}.5^{4}6^{2}$},
\scalebox{0.75}{$1^{7}23^{2}4^{6}57.2^{6}456^{7}7^{2}.3^{5}57^{4}.5^{4}$},
\scalebox{0.75}{$1^{7}2^{2}3^{4}4^{2}5^{2}.2^{5}3^{3}4.4^{4}5^{3}.5^{2}$},
\scalebox{0.75}{$1^{7}2^{2}3^{2}4^{6}5^{3}.2^{5}3^{3}45^{2}.3^{2}.5^{2}$},
\scalebox{0.75}{$1^{7}234^{2}5^{5}.2^{6}34^{2}56^{6}.3^{5}567^{7}.4^{3}$},
\scalebox{0.75}{$1^{7}2345^{6}7.2^{6}56^{7}7.3^{6}7^{5}8^{4}.4^{6}8^{3}$},
\scalebox{0.75}{$1^{7}2^{7}3^{2}4^{2}.3^{5}4^{2}5^{3}6^{7}.4^{3}5.5^{3}$},
\scalebox{0.75}{$1^{7}2345^{6}67^{2}.2^{6}4^{2}6^{5}.3^{6}57^{5}.4^{4}6$},
\scalebox{0.75}{$1^{7}2^{2}3^{2}4^{2}5^{3}.2^{5}34^{2}.3^{4}5^{4}.4^{3}$},
\scalebox{0.75}{$1^{7}24^{4}5^{5}.2^{6}3^{2}5^{2}6^{6}.3^{5}4^{3}67^{7}$},
\scalebox{0.75}{$1^{7}2^{2}3^{2}45^{3}.2^{5}3^{2}.3^{3}4^{4}.4^{2}5^{4}$},
\scalebox{0.75}{$1^{7}2^{3}34^{2}5^{5}.2^{4}3^{3}.3^{3}4^{2}.4^{3}5^{2}$},
\scalebox{0.75}{$1^{7}2^{7}3^{2}.3^{5}4^{3}5^{7}7.4^{4}6^{5}7^{6}.6^{2}$},
\scalebox{0.75}{$1^{7}234^{3}5^{2}.2^{6}5^{5}6^{2}.3^{6}6^{3}.4^{4}6^{2}$},
\scalebox{0.75}{$1^{7}25^{5}6.2^{6}3^{2}6^{4}.3^{5}4^{3}.4^{4}5^{2}6^{2}$},
\scalebox{0.75}{$1^{7}24^{4}5^{6}7^{2}.2^{6}3^{2}56^{6}.3^{5}4^{3}67^{5}$},
\scalebox{0.75}{$1^{7}2^{3}34^{6}567.2^{4}3^{2}45^{6}6^{5}7.3^{4}6.7^{5}$},
\scalebox{0.75}{$1^{7}23^{2}4^{4}6^{2}.2^{6}4^{3}.3^{5}5^{3}6.5^{4}6^{4}$},
\scalebox{0.75}{$1^{7}24^{4}5^{2}6.2^{6}3^{2}6^{4}.3^{5}4^{3}.5^{5}6^{2}$},
\scalebox{0.75}{$1^{7}2^{7}3^{2}5.3^{5}4^{4}5^{2}6^{6}.4^{3}5^{2}6.5^{2}$},
\scalebox{0.75}{$1^{7}234^{3}5^{3}6^{3}.2^{6}345^{3}6^{3}.3^{5}4^{2}56.4$},
\scalebox{0.75}{$1^{7}2^{2}3^{2}4^{6}5^{5}6^{4}.2^{5}3.3^{4}4.5^{2}6^{3}$},
\scalebox{0.75}{$1^{7}2^{2}3^{7}57^{2}.2^{5}4^{7}6.5^{6}6^{2}7^{5}.6^{4}$},
\scalebox{0.75}{$1^{7}24^{5}5^{6}6^{2}.2^{6}3^{2}56^{5}7^{7}.3^{5}.4^{2}$},
\scalebox{0.75}{$1^{7}2^{2}345^{4}6.2^{5}34^{2}5^{2}6^{4}.3^{5}56.4^{4}6$},
\scalebox{0.75}{$1^{7}2^{3}3^{3}4^{2}6^{2}.2^{4}3^{4}4^{5}5.5^{6}6.6^{4}$},
\scalebox{0.75}{$1^{7}23^{5}4.2^{6}3^{2}4^{2}5^{2}.4^{4}5^{5}6^{2}.6^{5}$},
\scalebox{0.75}{$1^{7}24^{6}5^{3}.2^{6}3^{2}5^{2}6^{5}.3^{5}5^{2}6^{2}.4$},
\scalebox{0.75}{$1^{7}24^{6}567.2^{6}3^{2}6^{5}7^{2}.3^{5}467^{3}.5^{6}7$},
\scalebox{0.75}{$1^{7}2^{3}3^{3}4.2^{4}3^{2}5^{7}6.3^{2}4^{4}6^{6}.4^{2}$},
\scalebox{0.75}{$1^{7}234^{2}5^{3}6.2^{6}35^{4}7.3^{5}46^{5}.4^{4}67^{6}$},
\scalebox{0.75}{$1^{7}24^{3}5^{5}.2^{6}3^{2}5^{2}6^{6}.3^{5}4^{3}67^{7}.4$},
\scalebox{0.75}{$1^{7}24^{4}5^{4}6.2^{6}3^{2}5^{2}6^{4}.3^{5}4^{3}6^{2}.5$},
\scalebox{0.75}{$1^{7}23^{5}56.2^{6}3^{2}4^{2}.4^{5}5^{2}6^{3}.5^{4}6^{3}$},
\scalebox{0.75}{$1^{7}2^{2}3^{2}45.2^{5}35^{6}.3^{4}4^{2}6^{5}.4^{4}6^{2}$},
\scalebox{0.75}{$1^{7}2^{3}3^{2}4.2^{4}3^{2}45^{6}.3^{3}4^{2}56^{7}.4^{3}$},
\scalebox{0.75}{$1^{7}24^{4}5^{5}6.2^{6}3^{2}5^{2}6^{5}7.3^{5}4^{3}67^{6}$},
\scalebox{0.75}{$1^{7}24^{3}5^{4}6.2^{6}3^{2}5^{3}6^{3}.3^{5}4^{3}6^{3}.4$},
\scalebox{0.75}{$1^{7}23^{5}6^{3}.2^{6}3^{2}45^{2}.4^{6}56^{2}.5^{4}6^{2}$},
\scalebox{0.75}{$1^{7}23^{6}4^{4}5.2^{6}34^{3}5^{6}6^{2}.6^{5}7^{3}.7^{4}$},
\scalebox{0.75}{$1^{7}2^{2}3^{4}5^{3}.2^{5}3^{3}45^{2}6.4^{6}5^{2}6.6^{5}$},
\scalebox{0.75}{$1^{7}2^{2}3^{3}56^{2}.2^{5}3^{3}4.34^{6}5^{2}.5^{4}6^{5}$},
\scalebox{0.75}{$1^{7}2^{3}34^{3}5^{6}.2^{4}3^{2}6^{7}7^{7}.3^{4}45.4^{3}$},
\scalebox{0.75}{$1^{7}2^{7}35.3^{6}4^{2}5^{6}7^{2}.4^{5}6^{3}7.6^{4}7^{4}$},
\scalebox{0.75}{$1^{7}2^{2}34^{2}5.2^{5}3^{2}4^{2}.3^{4}4^{2}5^{3}.45^{3}$},
\scalebox{0.75}{$1^{7}2^{2}4^{5}6.2^{5}3^{3}45^{2}6^{2}.3^{4}46^{4}.5^{5}$},
\scalebox{0.75}{$1^{7}2^{2}34^{3}5^{2}6^{6}.2^{5}3^{2}.3^{4}4^{3}.45^{5}6$},
\scalebox{0.75}{$1^{7}23^{2}4^{5}5^{3}.2^{6}34^{2}5^{2}6^{6}.3^{4}6.5^{2}$},
\scalebox{0.75}{$1^{7}2^{2}3^{4}5^{2}.2^{5}3^{3}45^{3}6^{2}.4^{6}56^{5}.5$},
\scalebox{0.75}{$1^{7}23^{5}.2^{6}3^{2}45^{2}6^{2}.4^{6}56^{3}.5^{4}6^{2}$},
\scalebox{0.75}{$1^{7}2^{7}3.3^{6}4^{3}5^{2}7^{2}.4^{4}5^{5}67.6^{6}7^{4}$},
\scalebox{0.75}{$1^{7}2^{2}34^{2}56.2^{5}35^{5}6.3^{5}46^{5}7.4^{4}57^{6}$},
\scalebox{0.75}{$1^{7}23^{2}4^{5}5.2^{6}4^{2}6^{3}.3^{5}5^{2}6.5^{4}6^{3}$},
\scalebox{0.75}{$1^{7}2^{7}3^{2}46.3^{5}4^{3}56^{6}.4^{3}5^{3}7^{7}.5^{3}$},
\scalebox{0.75}{$1^{7}2^{7}36^{2}.3^{6}4^{2}56^{2}.4^{5}5^{2}6^{2}.5^{4}6$},
\scalebox{0.75}{$1^{7}2^{2}3^{2}4^{7}6.2^{5}5^{3}6^{5}7^{6}.3^{5}67.5^{4}$},
\scalebox{0.75}{$1^{7}2^{3}34^{2}6^{2}.2^{4}3^{2}5^{6}.3^{4}456^{5}.4^{4}$},
\scalebox{0.75}{$1^{7}2^{2}3^{7}4.2^{5}45^{2}6^{5}.4^{5}6^{2}7^{6}.5^{5}7$},
\scalebox{0.75}{$1^{7}2^{7}346.3^{6}456^{3}7^{5}.4^{5}5^{2}67.5^{4}6^{2}7$},
\scalebox{0.75}{$1^{7}2^{7}34.3^{6}5^{3}6^{4}7^{3}.4^{6}6^{3}7^{3}.5^{4}7$},
\scalebox{0.75}{$1^{7}2^{6}.23^{7}45^{3}6.4^{6}6^{4}7^{3}.5^{4}6^{2}7^{4}$},
\scalebox{0.75}{$1^{7}234^{5}.2^{6}3^{2}45^{2}6^{3}.3^{4}45^{2}6^{4}.5^{3}$},
\scalebox{0.75}{$1^{7}2^{4}3^{2}4^{3}5^{4}.2^{3}3^{2}4^{4}5^{3}6^{7}.3^{3}$},
\scalebox{0.75}{$1^{7}2345^{6}7.2^{6}456^{7}78.3^{6}7^{4}8^{2}.4^{5}78^{4}$},
\scalebox{0.75}{$1^{7}23^{4}5^{5}67.2^{6}3^{2}456^{4}7.34^{5}56^{2}7^{5}.4$},
\scalebox{0.75}{$1^{7}23^{2}45^{4}.2^{6}35^{2}6^{4}.3^{4}4^{2}5.4^{4}6^{3}$},
\scalebox{0.75}{$1^{7}2^{3}5^{5}6.2^{4}3^{4}456^{4}.3^{3}4^{3}56^{2}.4^{3}$},
\scalebox{0.75}{$1^{7}2^{2}3^{3}4^{2}6.2^{5}3^{3}45^{4}.34^{3}5^{3}6^{6}.4$},
\scalebox{0.75}{$1^{7}234^{4}5^{2}.2^{6}3^{2}45^{3}6.3^{4}5^{2}6^{6}.4^{2}$},
\scalebox{0.75}{$1^{7}2^{2}345^{2}6.2^{5}345^{3}6^{3}.3^{5}456.4^{4}56^{2}$},
\scalebox{0.75}{$1^{7}2345^{2}6^{4}.2^{6}4^{2}5^{2}.3^{6}56^{3}.4^{4}5^{2}$},
\scalebox{0.75}{$1^{7}24^{2}5^{5}6^{4}.2^{6}3^{2}56^{3}7^{7}.3^{5}4.4^{4}5$},
\scalebox{0.75}{$1^{7}2^{2}3^{5}4.2^{5}3^{2}46^{6}7.4^{5}5^{3}7^{6}.5^{4}6$},
\scalebox{0.75}{$1^{7}24^{2}5^{7}6.2^{6}3^{2}6^{4}7^{5}.3^{5}6^{2}7.4^{5}7$},
\scalebox{0.75}{$1^{7}2^{2}3^{2}46^{5}.2^{5}3^{4}5^{3}.34^{5}5^{3}6^{2}.45$},
\scalebox{0.75}{$1^{7}2345^{5}6^{3}7.2^{6}456^{2}7^{4}.3^{6}567^{2}.4^{5}6$},
\scalebox{0.75}{$1^{7}24^{6}.2^{6}3^{2}5^{6}.3^{5}46^{2}7^{6}.56^{5}78^{7}$},
\scalebox{0.75}{$1^{7}234^{5}57.2^{6}4^{2}6^{7}.3^{6}5^{2}7^{2}.5^{4}7^{4}$},
\scalebox{0.75}{$1^{7}234^{3}5^{4}67.2^{6}3^{2}6^{6}.3^{4}4^{4}7^{6}.5^{3}$},
\scalebox{0.75}{$1^{7}2^{2}3^{2}4^{5}5^{2}.2^{5}3^{2}4^{2}5^{2}.3^{3}.5^{3}$},
\scalebox{0.75}{$1^{7}2^{2}4^{3}5^{5}6.2^{5}3^{3}5^{2}6^{4}.3^{4}4^{4}6^{2}$},
\scalebox{0.75}{$1^{7}2^{2}3^{7}4.2^{5}45^{2}6^{6}8.4^{5}67^{6}.5^{5}78^{6}$},
\scalebox{0.75}{$1^{7}234^{2}5^{2}6^{4}.2^{6}45^{4}.3^{6}56^{3}7.4^{4}7^{6}$},
\scalebox{0.75}{$1^{7}2^{2}3^{4}4^{2}5^{2}.2^{5}3^{3}4^{5}.5^{5}6^{4}.6^{3}$},
\scalebox{0.75}{$1^{7}2^{7}356.3^{6}4^{2}56^{2}7.4^{5}6^{4}7^{2}.5^{5}7^{4}$},
\scalebox{0.75}{$1^{7}2346^{4}.2^{6}3^{2}45.3^{4}4^{2}5^{3}.4^{3}5^{3}6^{3}$},
\scalebox{0.75}{$1^{7}23^{7}457^{3}.2^{6}5^{5}6^{2}7^{4}.4^{6}6^{4}8^{7}.56$},
\scalebox{0.75}{$1^{7}23^{2}4^{2}5^{2}6^{5}.2^{6}34^{3}5.3^{4}4^{2}5^{4}6.6$},
\scalebox{0.75}{$1^{7}234^{4}6^{2}.2^{6}34^{2}6^{2}.3^{5}45^{2}6.5^{5}6^{2}$},
\scalebox{0.75}{$1^{7}2^{4}35^{3}.2^{3}3^{3}4^{3}5^{2}.3^{3}4^{2}5^{2}.4^{2}$},
\scalebox{0.75}{$1^{7}23^{2}45^{2}6^{4}7.2^{6}5^{5}67^{5}.3^{5}46^{2}7.4^{5}$},
\scalebox{0.75}{$1^{7}23^{4}46.2^{6}3^{2}4^{2}6.34^{3}5^{4}6^{2}.45^{3}6^{3}$},
\scalebox{0.75}{$1^{7}235^{3}6^{6}7.2^{6}4^{2}57^{4}.3^{6}5^{2}6.4^{5}57^{2}$},
\scalebox{0.75}{$1^{7}2^{2}35^{6}67.2^{5}34^{2}6^{6}.3^{5}57^{6}8.4^{5}8^{6}$},
\scalebox{0.75}{$1^{7}2^{4}3^{2}4^{3}5^{4}.2^{3}3^{2}45^{3}6^{7}.3^{3}.4^{3}$},
\scalebox{0.75}{$1^{7}2^{2}3^{2}4^{4}5^{3}.2^{5}34^{3}5^{4}6^{5}.3^{4}.6^{2}$},
\scalebox{0.75}{$1^{7}245^{6}68^{3}.2^{6}3^{2}6^{6}.3^{5}47^{7}8.4^{5}58^{3}$},
\scalebox{0.75}{$1^{7}234^{4}5^{2}6^{2}.2^{6}4^{3}6^{3}7^{5}.3^{6}67.5^{5}67$},
\scalebox{0.75}{$1^{7}234^{6}7.2^{6}35^{2}6^{3}7^{3}.3^{5}46^{3}.5^{5}67^{3}$},
\scalebox{0.75}{$1^{7}23^{6}4.2^{6}34^{2}56^{4}.4^{4}5^{2}6^{2}7^{6}.5^{4}67$},
\scalebox{0.75}{$1^{7}2^{2}3^{2}4^{3}6^{2}.2^{5}4^{4}5^{2}.3^{5}56^{5}.5^{4}$},
\scalebox{0.75}{$1^{7}2^{5}3^{3}45^{2}.2^{2}3^{2}4^{2}5^{5}.3^{2}4^{2}.4^{2}$},
\scalebox{0.75}{$1^{7}2^{5}4^{6}5^{2}6^{3}7^{2}.2^{2}3^{6}45^{5}6^{4}7^{5}.3$},
\scalebox{0.75}{$1^{7}23^{2}4^{5}.2^{6}3^{2}4^{2}56^{4}.3^{3}5^{3}6^{3}.5^{3}$},
\scalebox{0.75}{$1^{7}23^{4}56^{6}.2^{6}3^{2}45^{3}7^{6}.34^{5}5^{3}678^{7}.4$},
\scalebox{0.75}{$1^{7}2^{4}56.2^{3}3^{5}4^{2}567.3^{2}4^{5}6^{4}7.5^{5}67^{5}$},
\scalebox{0.75}{$1^{7}24^{4}5^{2}6.2^{6}3^{2}5^{2}6^{4}.3^{5}5^{3}6^{2}.4^{3}$},
\scalebox{0.75}{$1^{7}23^{3}45^{3}6^{2}.2^{6}5^{4}6^{3}.3^{4}4^{2}.4^{4}6^{2}$},
\scalebox{0.75}{$1^{7}2345^{5}6^{2}7.2^{6}346^{4}7^{3}.3^{5}5^{2}6.4^{5}7^{3}$},
\scalebox{0.75}{$1^{7}2^{5}4^{3}5^{2}6^{4}.2^{2}3^{6}4^{3}5^{3}6^{3}.34.5^{2}$},
\scalebox{0.75}{$1^{7}2^{7}36^{2}7.3^{6}4^{2}5^{2}7^{2}.4^{5}6^{5}.5^{5}7^{4}$},
\scalebox{0.75}{$1^{7}2345^{3}6.2^{6}45^{2}6^{4}.3^{6}56^{2}7^{4}.4^{5}57^{3}$},
\scalebox{0.75}{$1^{7}234^{2}5^{4}6.2^{6}45^{3}67^{5}.3^{6}6^{3}7.4^{4}6^{2}7$},
\scalebox{0.75}{$1^{7}24^{4}5^{4}6^{2}.2^{6}3^{2}5^{2}6^{5}.3^{5}4^{3}7^{7}.5$},
\scalebox{0.75}{$1^{7}245^{6}.2^{6}3^{2}6^{5}7.3^{5}46^{2}7^{5}8.4^{5}578^{6}$},
\scalebox{0.75}{$1^{7}2^{4}3^{6}4^{3}5^{3}6^{5}.2^{3}34.4^{3}5^{2}.5^{2}6^{2}$},
\scalebox{0.75}{$1^{7}2345^{5}.2^{6}35^{2}6^{3}7.3^{5}46^{3}7^{3}.4^{5}67^{3}$},
\scalebox{0.75}{$1^{7}24^{2}5^{3}.2^{6}3^{2}6^{5}.3^{5}45^{2}.4^{4}5^{2}6^{2}$},
\scalebox{0.75}{$1^{7}23^{5}5^{3}6^{5}.2^{6}3^{2}4^{3}56^{2}7^{7}.4^{4}.5^{3}$},
\scalebox{0.75}{$1^{7}2^{2}3^{2}4^{3}.2^{5}34^{4}5.3^{4}5^{2}6^{5}.5^{4}6^{2}$},
\scalebox{0.75}{$1^{7}234^{5}6^{3}.2^{6}3^{2}45^{2}6^{3}7.3^{4}4567^{6}.5^{4}$},
\scalebox{0.75}{$1^{7}2345^{5}7^{3}.2^{6}5^{2}6^{7}.3^{6}7^{4}8^{4}.4^{6}8^{3}$},
\scalebox{0.75}{$1^{7}234^{3}5^{3}6^{2}.2^{6}4^{3}5^{2}6^{2}.3^{6}5^{2}6^{3}.4$},
\scalebox{0.75}{$1^{7}24^{7}567.2^{6}3^{2}5^{3}6^{3}7^{3}.3^{5}5^{3}6^{3}7^{3}$},
\scalebox{0.75}{$1^{7}23^{3}45^{2}6^{3}.2^{6}4^{2}5^{3}.3^{4}45^{2}6^{4}.4^{3}$},
\scalebox{0.75}{$1^{7}234^{2}5^{5}67.2^{6}36^{6}7^{2}.3^{5}457^{2}.4^{4}57^{2}$},
\scalebox{0.75}{$1^{7}23^{5}6^{7}.2^{6}3^{2}45^{4}7^{5}.4^{6}5^{2}7^{2}8^{7}.5$},
\scalebox{0.75}{$1^{7}234^{4}5^{2}6^{4}.2^{6}345^{4}67^{2}.3^{5}46^{2}7^{5}.45$},
\scalebox{0.75}{$1^{7}2^{3}3.2^{4}3^{2}4^{3}5^{2}.3^{4}45^{4}6^{2}.4^{3}56^{5}$},
\scalebox{0.75}{$1^{7}24^{5}5^{3}7.2^{6}3^{2}6^{6}7^{2}.3^{5}4^{2}67^{4}.5^{4}$},
\scalebox{0.75}{$1^{7}24^{2}5^{6}.2^{6}3^{2}6^{7}7.3^{5}57^{6}8^{2}.4^{5}8^{5}$},
\scalebox{0.75}{$1^{7}2^{7}35.3^{6}4^{2}56^{4}7^{2}.4^{5}6^{3}7^{3}.5^{5}7^{2}$},
\scalebox{0.75}{$1^{7}2345^{3}6^{3}7^{2}.2^{6}35^{2}6^{3}.3^{5}457^{5}.4^{5}56$},
\scalebox{0.75}{$1^{7}2^{2}3^{5}5.2^{5}3^{2}4^{7}7.5^{6}6^{2}7^{6}8.6^{5}8^{6}$},
\scalebox{0.75}{$1^{7}2^{2}3^{5}5^{3}6.2^{5}3^{2}46^{5}7^{5}.4^{6}567^{2}.5^{3}$},
\scalebox{0.75}{$1^{7}2345^{2}67^{5}.2^{6}345^{2}6^{2}7.3^{5}5^{3}67.4^{5}6^{3}$},
\scalebox{0.75}{$1^{7}2345^{4}6^{4}.2^{6}3567^{6}.3^{5}4^{2}68^{7}.4^{4}5^{2}67$},
\scalebox{0.75}{$1^{7}23^{2}5^{3}.2^{6}4^{2}5^{2}6^{2}.3^{5}456^{3}.4^{4}56^{2}$},
\scalebox{0.75}{$1^{7}2^{2}345^{6}.2^{5}34^{2}6^{5}7^{4}.3^{5}56^{2}7^{3}.4^{4}$},
\scalebox{0.75}{$1^{7}2^{2}3^{5}46^{3}.2^{5}3^{2}456^{4}.4^{5}5^{2}7^{6}.5^{4}7$},
\scalebox{0.75}{$1^{7}234^{4}6^{7}8.2^{6}4^{2}5^{5}7^{4}.3^{6}5^{2}7^{3}8^{6}.4$},
\scalebox{0.75}{$1^{7}2^{2}3^{2}4^{4}56^{3}7.2^{5}4^{3}56^{4}7^{6}.3^{5}5.5^{4}$},
\scalebox{0.75}{$1^{7}23^{2}4^{4}56^{3}.2^{6}4^{3}56^{2}7^{6}.3^{5}6^{2}7.5^{5}$},
\scalebox{0.75}{$1^{7}23^{2}4^{4}6^{6}.2^{6}34^{2}5^{3}7^{6}.3^{4}45^{4}678^{7}$},
\scalebox{0.75}{$1^{7}2^{2}45^{6}6.2^{5}3^{3}46^{6}7^{2}.3^{4}4^{2}57^{5}.4^{3}$},
\scalebox{0.75}{$1^{7}24^{3}5^{6}7.2^{6}3^{2}56^{3}7^{5}.3^{5}6^{4}78^{7}.4^{4}$},
\scalebox{0.75}{$1^{7}2^{2}3^{5}5^{4}6.2^{5}3^{2}45^{3}7^{2}.4^{6}67^{4}.6^{5}7$},
\scalebox{0.75}{$1^{7}24^{3}5^{4}6^{2}.2^{6}3^{2}5^{3}6^{4}7.3^{5}4^{3}67^{6}.4$},
\scalebox{0.75}{$1^{7}234^{3}56^{4}.2^{6}5^{6}7^{5}.3^{6}6^{3}7^{2}8^{6}.4^{4}8$},
\scalebox{0.75}{$1^{7}24^{4}5^{4}6^{2}.2^{6}3^{2}5^{3}6^{4}7.3^{5}4^{2}67^{6}.4$},
\scalebox{0.75}{$1^{7}2^{2}3^{6}45^{2}6^{4}7.2^{5}34^{6}5^{3}67^{6}.5^{2}.6^{2}$},
\scalebox{0.75}{$1^{7}234^{2}5^{2}67^{2}.2^{6}5^{4}6^{2}7.3^{6}6^{4}.4^{5}57^{4}$},
\scalebox{0.75}{$1^{7}23^{2}4^{3}56^{2}.2^{6}3^{2}45^{4}6.3^{3}45^{2}6^{4}.4^{2}$},
\scalebox{0.75}{$1^{7}24^{5}5^{5}6.2^{6}3^{2}56^{2}7^{3}.3^{5}4^{2}57^{3}.6^{4}7$},
\scalebox{0.75}{$1^{7}2^{2}4^{5}5^{3}6.2^{5}3^{3}45^{4}67^{3}.3^{4}6^{4}7^{4}.46$},
\scalebox{0.75}{$1^{7}23^{7}4^{6}.2^{6}45^{3}6^{2}78^{6}.5^{4}67^{6}89^{7}.6^{4}$},
\scalebox{0.75}{$1^{7}23^{2}45^{2}6^{3}.2^{6}35^{3}6^{2}.3^{4}4^{2}5.4^{4}56^{2}$},
\scalebox{0.75}{$1^{7}2345^{6}.2^{6}3^{2}6^{6}.3^{4}4^{2}567^{5}.4^{4}7^{2}8^{7}$},
\scalebox{0.75}{$1^{7}245^{5}6^{2}7^{3}.2^{6}3^{2}6^{4}7^{3}.3^{5}45^{2}7.4^{5}6$},
\scalebox{0.75}{$1^{7}2^{2}34^{2}5^{7}8.2^{5}34^{3}6^{7}8^{3}.3^{5}47^{7}8^{3}.4$},
\scalebox{0.75}{$1^{7}2^{2}3456^{2}.2^{5}3^{4}5^{2}67.3^{2}4^{5}6^{4}.45^{4}7^{6}$},
\scalebox{0.75}{$1^{7}23^{2}45^{4}.2^{6}45^{3}6^{3}7.3^{5}46^{3}7^{4}.4^{4}67^{2}$},
\scalebox{0.75}{$1^{7}2^{2}345^{2}7^{3}.2^{5}345^{4}.3^{5}456^{4}.4^{4}6^{3}7^{4}$},
\scalebox{0.75}{$1^{7}2^{3}3^{4}4^{3}5^{2}6^{3}.2^{4}45^{5}6^{4}7^{7}.3^{3}.4^{3}$},
\scalebox{0.75}{$1^{7}23^{7}47.2^{6}45^{2}6^{2}7^{5}.4^{5}56^{3}8^{7}.5^{4}6^{2}7$},
\scalebox{0.75}{$1^{7}234^{4}67^{3}.2^{6}4^{3}567.3^{6}5^{2}6^{2}.5^{4}6^{3}7^{3}$},
\scalebox{0.75}{$1^{7}24^{6}56^{2}.2^{6}3^{2}6^{3}7^{2}.3^{5}4567^{3}.5^{5}67^{2}$},
\scalebox{0.75}{$1^{7}2^{2}3^{2}5^{5}6.2^{5}34^{2}56^{5}7.3^{4}4^{2}567^{6}.4^{3}$},
\scalebox{0.75}{$1^{7}2^{3}35^{7}.2^{4}3^{3}4^{3}6^{5}.3^{3}4^{2}6^{2}7^{7}.4^{2}$},
\scalebox{0.75}{$1^{7}24^{2}5^{6}.2^{6}3^{2}6^{5}7.3^{5}567^{5}8^{2}.4^{5}678^{5}$},
\scalebox{0.75}{$1^{7}23^{3}45^{2}6^{5}.2^{6}4^{4}567^{6}8.3^{4}4^{2}5^{4}678^{6}$},
\scalebox{0.75}{$1^{7}234^{6}6^{4}.2^{6}456^{2}7^{5}8^{2}.3^{6}567^{2}.5^{5}8^{5}$},
\scalebox{0.75}{$1^{7}23^{2}45^{4}7^{2}.2^{6}45^{2}6^{5}.3^{5}456^{2}7.4^{4}7^{4}$},
\scalebox{0.75}{$1^{7}2^{3}3^{2}4^{4}6^{5}.2^{4}34^{3}5^{3}6^{2}7^{7}.3^{4}.5^{4}$},
\scalebox{0.75}{$1^{7}2^{2}3^{3}4^{2}5^{2}.2^{5}34^{5}6^{5}.3^{3}5^{2}6^{2}.5^{3}$},
\scalebox{0.75}{$1^{7}2^{2}3^{5}4^{3}.2^{5}3^{2}4^{4}5.5^{6}6^{2}7^{4}.6^{5}7^{3}$},
\scalebox{0.75}{$1^{7}2^{3}4^{5}.2^{4}3^{5}56^{5}.3^{2}4^{2}5^{2}6^{2}7^{7}.5^{4}$},
\scalebox{0.75}{$1^{7}23456^{6}.2^{6}3^{2}5^{2}7^{6}.3^{4}4^{2}5^{2}.4^{4}5^{2}67$},
\scalebox{0.75}{$1^{7}24^{2}5^{6}78.2^{6}3^{2}6^{7}.3^{5}47^{5}8^{2}.4^{4}578^{4}$},
\scalebox{0.75}{$1^{7}2^{2}34^{3}5^{3}6^{2}.2^{5}34^{3}5^{2}6^{2}.3^{5}45^{2}6^{3}$},
\scalebox{0.75}{$1^{7}25^{5}6^{5}.2^{6}3^{2}6^{2}7^{4}.3^{5}4^{3}7^{2}.4^{4}5^{2}7$},
\scalebox{0.75}{$1^{7}2^{2}3^{2}45^{3}6^{3}.2^{5}35^{4}6^{4}.3^{4}4^{2}7^{7}.4^{4}$},
\scalebox{0.75}{$1^{7}2^{7}36^{2}7.3^{6}4^{2}5^{2}6^{3}7^{4}.4^{5}6^{2}.5^{5}7^{2}$},
\scalebox{0.75}{$1^{7}24^{2}5^{5}7^{2}.2^{6}3^{2}5^{2}6^{3}7.3^{5}6^{4}.4^{5}7^{4}$},
\scalebox{0.75}{$1^{7}23^{3}5^{3}6^{3}.2^{6}4^{2}5^{2}6^{2}.3^{4}45^{2}6^{2}.4^{4}$},
\scalebox{0.75}{$1^{7}2^{2}3^{2}4^{4}5^{3}6.2^{5}3^{2}4^{3}5^{4}6^{3}.3^{3}6.6^{2}$},
\scalebox{0.75}{$1^{7}2^{2}3^{4}46^{2}.2^{5}4^{2}5^{5}6.3^{3}4^{2}5^{2}6^{4}.4^{2}$},
\scalebox{0.75}{$1^{7}23^{2}4^{6}8^{2}.2^{6}5^{7}.3^{5}46^{2}7^{5}.6^{5}7^{2}8^{5}$},
\scalebox{0.75}{$1^{7}2^{2}3^{2}4^{5}.2^{5}4^{2}5^{2}6^{5}7.3^{5}56^{2}7^{6}.5^{4}$},
\scalebox{0.75}{$1^{7}24^{2}5^{6}67^{2}.2^{6}3^{2}6^{5}78^{5}.3^{5}57^{4}8.4^{5}68$},
\scalebox{0.75}{$1^{7}2^{4}3.2^{3}3^{3}4^{2}5^{3}.3^{3}4^{2}5^{3}6^{3}.4^{3}56^{4}$},
\scalebox{0.75}{$1^{7}2^{2}3^{5}4^{2}6.2^{5}3^{2}4^{2}56^{4}.4^{3}5^{3}6^{2}.5^{3}$},
\scalebox{0.75}{$1^{7}24^{2}5^{7}6^{2}.2^{6}3^{2}6^{4}7^{6}.3^{5}4^{2}678^{7}.4^{3}$},
\scalebox{0.75}{$1^{7}23^{2}4^{2}5^{5}.2^{6}5^{2}6^{4}7^{5}.3^{5}46^{2}7^{2}.4^{4}6$},
\scalebox{0.75}{$1^{7}245^{6}6^{2}.2^{6}3^{2}6^{5}7^{2}.3^{5}4^{2}57^{3}.4^{4}7^{2}$},
\scalebox{0.75}{$1^{7}2^{2}35^{2}6.2^{5}3^{3}46^{2}.3^{3}4^{3}5^{3}.4^{3}5^{2}6^{4}$},
\scalebox{0.75}{$1^{7}234^{5}6^{2}7^{6}.2^{6}34^{2}5^{4}78^{7}.3^{5}5^{3}6^{5}9^{7}$},
\scalebox{0.75}{$1^{7}2^{2}45^{2}6^{4}.2^{5}3^{3}56.3^{4}4^{2}5^{2}.4^{4}5^{2}6^{2}$},
\scalebox{0.75}{$1^{7}2^{7}34^{4}.3^{6}4^{3}56^{2}7^{5}8.5^{6}6^{3}7^{2}8^{6}.6^{2}$},
\scalebox{0.75}{$1^{7}234^{3}5^{2}6.2^{6}5^{3}6^{4}7^{2}.3^{6}6^{2}7^{5}.4^{4}5^{2}$},
\scalebox{0.75}{$1^{7}2345^{2}6^{3}7.2^{6}4^{2}5^{2}67.3^{6}56^{3}7.4^{4}5^{2}7^{4}$},
\scalebox{0.75}{$1^{7}2^{2}4^{4}5^{3}6.2^{5}3^{4}5^{4}6^{3}7.3^{3}4^{3}6^{2}7^{6}.6$},
\scalebox{0.75}{$1^{7}2^{7}3.3^{6}4^{3}56^{3}7^{2}.4^{4}5^{3}67^{5}.5^{3}6^{3}8^{7}$},
\scalebox{0.75}{$1^{7}2345^{3}6^{2}.2^{6}345^{2}7^{4}.3^{5}456^{3}.4^{4}56^{2}7^{3}$},
\scalebox{0.75}{$1^{7}2^{2}345^{5}6.2^{5}3^{2}4^{3}6^{5}7^{2}.3^{4}4^{3}5^{2}67^{5}$},
\scalebox{0.75}{$1^{7}24^{2}5^{2}6^{2}.2^{6}3^{2}5^{2}.3^{5}45^{2}6^{2}.4^{4}56^{3}$},
\scalebox{0.75}{$1^{7}23^{2}4^{4}6^{2}7^{2}.2^{6}4^{2}6^{5}.3^{5}45^{2}7^{4}.5^{5}7$},
\scalebox{0.75}{$1^{7}2345^{4}6^{2}7.2^{6}3^{2}56^{4}7.3^{4}4^{2}5^{2}67^{4}.4^{4}7$},
\scalebox{0.75}{$1^{7}234^{5}5^{2}7.2^{6}4^{2}5^{3}6^{4}.3^{6}5^{2}6^{3}7^{4}.7^{2}$},
\scalebox{0.75}{$1^{7}2345^{2}.2^{6}3^{2}5^{2}6^{2}.3^{4}4^{2}5^{2}6^{2}.4^{4}56^{3}$},
\scalebox{0.75}{$1^{7}245^{6}6^{3}.2^{6}3^{2}6^{2}7^{4}.3^{5}4^{2}57.4^{4}6^{2}7^{2}$},
\scalebox{0.75}{$1^{7}23^{3}4^{2}56^{5}.2^{6}35^{5}6^{2}7^{5}.3^{3}4^{3}57^{2}.4^{2}$},
\scalebox{0.75}{$1^{7}23^{7}6.2^{6}4^{3}57^{6}.4^{4}5^{2}6^{4}8^{5}.5^{4}6^{2}78^{2}$},
\scalebox{0.75}{$1^{7}2^{7}36.3^{6}4^{2}5^{2}6^{2}7^{2}.4^{5}56^{3}7^{3}.5^{4}67^{2}$},
\scalebox{0.75}{$1^{7}24^{6}6^{3}7.2^{6}3^{2}5^{4}6^{2}7^{3}.3^{5}45^{2}6^{2}7^{3}.5$},
\scalebox{0.75}{$1^{7}23^{6}4.2^{6}35^{2}6^{2}7^{3}.4^{6}56^{2}7^{2}.5^{4}6^{3}7^{2}$},
\scalebox{0.75}{$1^{7}2345^{3}6^{2}7^{3}.2^{6}4^{2}5^{2}6.3^{6}6^{4}.4^{4}5^{2}7^{4}$},
\scalebox{0.75}{$1^{7}234^{2}5^{3}6^{3}.2^{6}35^{3}6^{4}7^{3}.3^{5}457^{2}.4^{4}7^{2}$},
\scalebox{0.75}{$1^{7}23^{2}45^{3}6^{3}7.2^{6}45^{2}6^{3}7^{3}.3^{5}5^{2}67^{3}.4^{5}$},
\scalebox{0.75}{$1^{7}24^{2}5^{5}7.2^{6}3^{2}5^{2}67^{3}.3^{5}46^{4}7.4^{4}6^{2}7^{2}$},
\scalebox{0.75}{$1^{7}24^{2}5^{6}6.2^{6}3^{2}56^{3}7^{5}.3^{5}467^{2}8^{7}.4^{4}6^{2}$},
\scalebox{0.75}{$1^{7}245^{3}6^{5}78.2^{6}3^{2}57^{5}8^{4}.3^{5}45^{2}68^{2}.4^{5}567$},
\scalebox{0.75}{$1^{7}2^{2}34^{6}.2^{5}35^{2}6^{5}7^{2}.3^{5}467^{5}8^{3}.5^{5}68^{4}$},
\scalebox{0.75}{$1^{7}23456^{2}.2^{6}3^{2}5^{2}6.3^{4}4^{2}5^{2}6^{2}.4^{4}5^{2}6^{2}$},
\scalebox{0.75}{$1^{7}2345^{4}7^{3}.2^{6}3^{2}5^{2}6^{4}.3^{4}4^{2}56^{3}7.4^{4}7^{3}$},
\scalebox{0.75}{$1^{7}23^{2}4^{3}5^{2}6^{2}.2^{6}345^{2}6^{5}7.3^{4}45^{3}7^{6}.4^{2}$},
\scalebox{0.75}{$1^{7}234^{2}5^{4}68.2^{6}35^{3}67^{5}.3^{5}46^{3}78^{5}.4^{4}6^{2}78$},
\scalebox{0.75}{$1^{7}2345^{3}6.2^{6}4^{2}5^{2}6^{2}7.3^{6}6^{4}7^{2}.4^{4}5^{2}7^{4}$},
\scalebox{0.75}{$1^{7}24^{2}5^{3}6^{4}7^{3}.2^{6}3^{2}5^{2}67^{4}.3^{5}45.4^{4}56^{2}$},
\scalebox{0.75}{$1^{7}234^{2}6^{4}7^{4}8.2^{6}3^{2}457^{3}8^{6}.3^{4}4^{4}6.5^{6}6^{2}$},
\scalebox{0.75}{$1^{7}24^{5}6^{2}7^{6}.2^{6}3^{2}5^{5}78^{6}9.3^{5}45^{2}6^{5}89^{6}.4$},
\scalebox{0.75}{$1^{7}235^{6}8^{2}.2^{6}34^{2}6^{4}.3^{5}456^{2}7^{4}.4^{4}67^{3}8^{5}$},
\scalebox{0.75}{$1^{7}24^{2}5^{3}6^{2}.2^{6}3^{2}56^{4}7^{2}.3^{5}45^{2}67.4^{4}57^{4}$},
\scalebox{0.75}{$1^{7}2^{2}34^{2}5^{5}6^{2}.2^{5}3^{2}4^{3}6^{3}.3^{4}4^{2}5^{2}.6^{2}$},
\scalebox{0.75}{$1^{7}2^{7}3^{2}.3^{5}4^{3}56^{5}7^{2}.4^{4}5^{2}6^{2}7^{3}.5^{4}7^{2}$},
\scalebox{0.75}{$1^{7}235^{6}6^{3}.2^{6}4^{2}6^{3}7^{4}8.3^{6}567^{2}8^{3}.4^{5}78^{3}$},
\scalebox{0.75}{$1^{7}234^{3}5^{2}6^{5}.2^{6}4^{3}5^{3}67^{5}8.3^{6}5^{2}67^{2}8^{6}.4$},
\scalebox{0.75}{$1^{7}235^{6}6^{2}8.2^{6}4^{2}6^{3}7^{3}8.3^{6}57^{4}8.4^{5}6^{2}8^{4}$},
\scalebox{0.75}{$1^{7}2^{2}45^{5}7.2^{5}3^{3}6^{5}.3^{4}4^{2}6^{2}7^{4}.4^{4}5^{2}7^{2}$},
\scalebox{0.75}{$1^{7}23^{2}4^{7}5^{4}.2^{6}5^{2}6^{5}7^{2}8^{6}.3^{5}56^{2}7^{5}89^{7}$},
\scalebox{0.75}{$1^{7}2^{2}3^{2}45^{4}6^{3}.2^{5}35^{3}6^{4}7^{5}.3^{4}4^{2}.4^{4}7^{2}$},
\scalebox{0.75}{$1^{7}235^{6}6^{2}8.2^{6}4^{2}6^{4}7^{2}8^{3}.3^{6}67^{4}8^{2}.4^{5}578$},
\scalebox{0.75}{$1^{7}24^{5}6^{3}7^{5}9.2^{6}3^{2}5^{5}78^{6}.3^{5}45^{2}6^{4}789^{6}.4$},
\scalebox{0.75}{$1^{7}23^{2}5^{4}6^{2}7^{2}.2^{6}4^{2}56^{3}7^{3}.3^{5}456^{2}7.4^{4}57$},
\scalebox{0.75}{$1^{7}235^{6}7^{3}.2^{6}34^{2}6^{3}78^{5}.3^{5}56^{3}7.4^{5}67^{2}8^{2}$},
\scalebox{0.75}{$1^{7}23^{3}45^{2}6^{5}.2^{6}4^{2}5^{2}7^{7}.3^{4}5^{3}6^{2}8^{7}.4^{4}$},
\scalebox{0.75}{$1^{7}23^{2}5^{5}6^{2}.2^{6}4^{2}5^{2}6^{2}7.3^{5}6^{3}7^{4}.4^{5}7^{2}$},
\scalebox{0.75}{$1^{7}235^{4}6^{3}7.2^{6}34^{2}67^{5}8^{2}.3^{5}45^{3}8^{5}.4^{4}6^{3}7$},
\scalebox{0.75}{$1^{7}2345^{5}67^{3}.2^{6}4^{2}56^{3}7.3^{6}567^{3}8^{2}.4^{4}6^{2}8^{5}$},
\scalebox{0.75}{$1^{7}23^{2}45^{2}6^{2}.2^{6}45^{3}6^{3}.3^{5}5^{2}6^{2}7^{4}.4^{5}7^{3}$},
\scalebox{0.75}{$1^{7}24^{2}5^{4}6^{2}7^{3}.2^{6}3^{2}56^{3}7.3^{5}5^{2}7^{3}.4^{5}6^{2}$},
\scalebox{0.75}{$1^{7}23^{2}4^{4}56^{4}8^{3}.2^{6}34^{3}56^{3}7^{3}.3^{4}5^{5}7^{4}8^{4}$},
\scalebox{0.75}{$1^{7}2^{2}5^{4}6^{2}7.2^{5}3^{3}46^{4}7^{2}.3^{4}4^{2}5^{2}67^{4}.4^{4}5$},
\scalebox{0.75}{$1^{7}2^{2}3^{6}6^{2}8.2^{5}34^{2}6^{5}7^{3}8.4^{5}5^{3}7^{3}8^{5}.5^{4}7$},
\scalebox{0.75}{$1^{7}24^{2}5^{6}6^{2}7^{2}.2^{6}3^{2}6^{5}8^{2}.3^{5}57^{4}8.4^{5}78^{4}$},
\scalebox{0.75}{$1^{7}234^{6}5.2^{6}3^{2}46^{2}7^{2}.3^{4}5^{2}6^{3}7^{2}.5^{4}6^{2}7^{3}$},
\scalebox{0.75}{$1^{7}24^{4}5^{3}67^{5}9.2^{6}3^{2}5^{4}678^{6}9.3^{5}4^{2}6^{5}789^{5}.4$},
\scalebox{0.75}{$1^{7}2^{2}3^{7}67.2^{5}4^{3}56^{2}7.4^{4}5^{2}6^{2}7^{3}.5^{4}6^{2}7^{2}$},
\scalebox{0.75}{$1^{7}23^{7}.2^{6}4^{4}5^{2}6^{2}7.4^{3}5^{2}6^{3}7^{5}8.5^{3}6^{2}78^{6}$},
\scalebox{0.75}{$1^{7}2345^{4}6^{3}.2^{6}3456^{2}7^{4}8.3^{5}5^{2}6^{2}7^{2}8.4^{5}78^{5}$},
\scalebox{0.75}{$1^{7}24^{2}5^{4}6^{5}7^{2}.2^{6}3^{2}5^{2}7^{5}8^{6}.3^{5}45.4^{4}6^{2}8$},
\scalebox{0.75}{$1^{7}23^{2}45^{3}.2^{6}35^{4}6^{2}7.3^{4}4^{2}6^{3}7^{3}.4^{4}6^{2}7^{3}$},
\scalebox{0.75}{$1^{7}245^{2}6^{5}.2^{6}3^{2}567^{6}.3^{5}4^{2}578^{6}9.4^{4}5^{3}689^{6}$},
\scalebox{0.75}{$1^{7}234^{7}6^{2}7.2^{6}5^{2}6^{3}7^{2}8^{6}.3^{6}6^{2}7^{2}.5^{5}7^{2}8$},
\scalebox{0.75}{$1^{7}24^{3}5^{6}78^{5}.2^{6}3^{2}56^{6}789^{6}.3^{5}4^{3}67^{5}89A^{7}.4$},
\scalebox{0.75}{$1^{7}2^{3}345^{4}6.2^{4}3^{2}45^{3}6^{3}7^{3}.3^{4}4^{2}6^{3}7^{4}.4^{3}$},
\scalebox{0.75}{$1^{7}24^{2}5^{3}6^{5}7.2^{6}3^{2}6^{2}7^{5}.3^{5}45^{2}8^{7}.4^{4}5^{2}7$},
\scalebox{0.75}{$1^{7}2^{2}345^{2}6^{4}.2^{5}345^{3}7^{7}.3^{5}5^{2}6^{3}8^{5}.4^{5}8^{2}$},
\scalebox{0.75}{$1^{7}2^{2}34^{4}5^{2}.2^{5}345^{5}7^{3}.3^{5}46^{2}7^{4}8^{2}.46^{5}8^{5}$},
\scalebox{0.75}{$1^{7}2345^{2}6^{2}.2^{6}4^{2}5^{2}6^{2}7^{2}.3^{6}56^{3}7.4^{4}5^{2}7^{4}$},
\scalebox{0.75}{$1^{7}235^{4}67^{2}.2^{6}4^{2}6^{4}78^{3}.3^{6}5^{2}7^{4}.4^{5}56^{2}8^{4}$},
\scalebox{0.75}{$1^{7}23^{5}4^{6}56^{3}7^{3}8^{7}9.2^{6}3^{2}5^{5}6^{4}7^{4}9^{6}A^{7}.4.5$},
\scalebox{0.75}{$1^{7}2^{2}345^{2}6.2^{5}3^{2}45^{2}6^{2}.3^{4}4^{2}56^{2}.4^{3}5^{2}6^{2}$},
\scalebox{0.75}{$1^{7}2^{2}35^{4}67^{4}.2^{5}34^{2}56^{4}8^{6}.3^{5}5^{2}6^{2}.4^{5}7^{3}8$},
\scalebox{0.75}{$1^{7}24^{2}5^{5}67.2^{6}3^{2}5^{2}6^{4}78^{2}.3^{5}6^{2}7^{5}8.4^{5}8^{4}$},
\scalebox{0.75}{$1^{7}23^{2}45^{3}67^{3}.2^{6}35^{3}67^{2}.3^{4}4^{2}56^{2}.4^{4}6^{3}7^{2}$},
\scalebox{0.75}{$1^{7}235^{4}6^{2}7^{3}.2^{6}3456^{4}78^{4}.3^{5}4^{2}57^{2}8.4^{4}5678^{2}$},
\scalebox{0.75}{$1^{7}2^{2}35^{6}67.2^{5}3^{2}46^{4}8^{2}.3^{4}4^{2}57^{5}.4^{4}6^{2}78^{5}$},
\scalebox{0.75}{$1^{7}23^{2}4^{2}5^{3}6^{3}8.2^{6}35^{3}67^{6}.3^{4}4^{2}56^{3}78^{6}.4^{3}$},
\scalebox{0.75}{$1^{7}23^{2}4^{2}5^{3}67^{3}.2^{6}34^{2}5^{2}6^{4}.3^{4}4^{3}5^{2}6^{2}7^{4}$},
\scalebox{0.75}{$1^{7}245^{6}7^{3}.2^{6}3^{2}6^{4}78^{4}.3^{5}4^{2}67^{2}8^{2}.4^{4}56^{2}78$},
\scalebox{0.75}{$1^{7}24^{4}5^{4}67^{3}.2^{6}3^{2}5^{3}6^{3}7^{3}8^{2}.3^{5}4^{3}6^{3}78^{5}$},
\scalebox{0.75}{$1^{7}2^{2}35^{3}6^{4}.2^{5}3^{2}46^{3}7^{5}.3^{4}4^{2}5^{2}7^{2}.4^{4}5^{2}$},
\scalebox{0.75}{$1^{7}23^{4}456^{2}7^{3}.2^{6}3456^{5}8.3^{2}4^{3}5^{2}7^{4}.4^{2}5^{3}8^{6}$},
\scalebox{0.75}{$1^{7}245^{2}6^{6}89^{2}.2^{6}3^{2}57^{7}8.3^{5}4^{2}58^{5}.4^{4}5^{3}69^{5}$},
\scalebox{0.75}{$1^{7}2345^{6}6^{2}7.2^{6}456^{3}7^{3}8^{5}.3^{6}6^{2}7^{2}8^{2}9^{7}.4^{5}7$},
\scalebox{0.75}{$1^{7}24^{2}5^{3}6^{3}.2^{6}3^{2}5^{2}6^{2}.3^{5}45^{2}6^{2}7^{2}.4^{4}7^{5}$},
\scalebox{0.75}{$1^{7}24^{2}5^{5}678^{2}.2^{6}3^{2}5^{2}6^{3}.3^{5}6^{3}7^{4}.4^{5}7^{2}8^{5}$},
\scalebox{0.75}{$1^{7}23^{2}45^{2}6^{3}7^{3}.2^{6}45^{3}6^{2}7^{2}.3^{5}5^{2}7^{2}.4^{5}6^{2}$},
\scalebox{0.75}{$1^{7}245^{3}6^{4}78.2^{6}3^{2}567^{6}.3^{5}4^{2}6^{2}8^{5}9.4^{4}5^{3}89^{6}$},
\scalebox{0.75}{$1^{7}2^{7}3^{7}4A.4^{6}5^{3}7^{6}8^{5}9^{2}.5^{4}6^{4}78^{2}9^{5}A^{6}.6^{3}$},
\scalebox{0.75}{$1^{7}245^{3}6^{6}78.2^{6}3^{2}567^{5}.3^{5}4^{2}78^{5}9^{2}.4^{4}5^{3}89^{5}$},
\scalebox{0.75}{$1^{7}235^{3}6^{3}7^{2}.2^{6}4^{2}567^{5}9.3^{6}5^{2}68^{6}.4^{5}56^{2}89^{6}$},
\scalebox{0.75}{$1^{7}2^{3}35^{3}6^{7}.2^{4}3^{2}4^{4}7^{6}8^{4}.3^{4}45^{4}78^{3}9^{7}.4^{2}$},
\scalebox{0.75}{$1^{7}24^{2}5^{7}68^{2}.2^{6}3^{2}6^{3}7^{4}.3^{5}6^{3}7^{2}8^{3}.4^{5}78^{2}$},
\scalebox{0.75}{$1^{7}24^{5}5^{3}67^{2}.2^{6}3^{2}56^{6}7^{2}8^{2}.3^{5}4^{2}57^{3}8^{5}.5^{2}$},
\scalebox{0.75}{$1^{7}25^{6}6^{2}8.2^{6}3^{2}6^{4}7^{3}.3^{5}4^{3}7^{2}8^{4}.4^{4}567^{2}8^{2}$},
\scalebox{0.75}{$1^{7}2^{2}3^{2}4^{5}5^{3}6^{3}7^{5}.2^{5}3^{3}4^{2}5^{4}6^{4}7^{2}8^{7}.3^{2}$},
\scalebox{0.75}{$1^{7}24^{2}5^{2}6^{4}7^{2}.2^{6}3^{2}5^{2}67^{2}.3^{5}45^{2}7^{3}.4^{4}56^{2}$},
\scalebox{0.75}{$1^{7}234^{2}6^{5}78.2^{6}45^{4}7^{4}8^{2}9.3^{6}567^{2}8^{4}.4^{4}5^{2}69^{6}$},
\scalebox{0.75}{$1^{7}2^{2}35^{3}6^{5}8^{2}.2^{5}34^{2}5^{2}67^{3}.3^{5}5^{2}7^{4}.4^{5}68^{5}$},
\scalebox{0.75}{$1^{7}2345^{5}7^{4}.2^{6}3^{2}56^{3}7^{3}89.3^{4}4^{2}568^{6}9.4^{4}6^{3}9^{5}$},
\scalebox{0.75}{$1^{7}2^{2}345^{2}6^{3}7^{2}8^{2}.2^{5}345^{3}67^{4}.3^{5}456^{3}8^{5}.4^{4}57$},
\scalebox{0.75}{$1^{7}24^{2}5^{5}68.2^{6}3^{2}6^{3}7^{3}8.3^{5}4567^{3}8^{2}.4^{4}56^{2}78^{3}$},
\scalebox{0.75}{$1^{7}2^{2}4^{3}5^{7}8.2^{5}3^{3}67^{4}8^{5}.3^{4}4^{4}79^{7}A.6^{6}7^{2}8A^{6}$},
\scalebox{0.75}{$1^{7}2^{2}5^{4}6^{5}8^{3}.2^{5}3^{3}467^{4}8.3^{4}4^{2}57^{3}.4^{4}5^{2}68^{3}$},
\scalebox{0.75}{$1^{7}2^{5}34^{3}5^{4}6^{3}7^{3}8^{5}.2^{2}3^{5}4^{4}5^{3}6^{4}7^{4}8^{2}9^{7}.3$},
\scalebox{0.75}{$1^{7}2^{2}4^{3}5^{3}6^{2}.2^{5}3^{3}5^{2}6^{4}7^{3}.3^{4}4^{2}5^{2}67^{4}.4^{2}$},
\scalebox{0.75}{$1^{7}235^{5}6^{2}8^{2}.2^{6}4^{3}6^{2}7^{3}8^{2}.3^{6}567^{3}.4^{4}56^{2}78^{3}$},
\scalebox{0.75}{$1^{7}2345^{3}6^{7}9.2^{6}45^{3}7^{3}8^{2}.3^{6}7^{4}8^{2}9^{3}.4^{5}58^{3}9^{3}$},
\scalebox{0.75}{$1^{7}2^{2}35^{5}67.2^{5}3^{2}45^{2}67^{2}8^{4}.3^{4}4^{2}6^{4}7.4^{4}67^{3}8^{3}$},
\scalebox{0.75}{$1^{7}24^{2}5^{3}6^{2}.2^{6}3^{2}5^{2}6^{2}7^{2}.3^{5}46^{3}7^{3}.4^{4}5^{2}7^{2}$},
\scalebox{0.75}{$1^{7}245^{5}6^{2}7^{2}.2^{6}3^{2}56^{3}7^{3}8^{2}.3^{5}4^{2}567^{2}8^{4}.4^{4}68$},
\scalebox{0.75}{$1^{7}245^{4}6^{4}7.2^{6}3^{2}6^{2}7^{3}8^{4}.3^{5}4^{2}57^{2}8.4^{4}5^{2}678^{2}$},
\scalebox{0.75}{$1^{7}2^{2}4^{2}5^{4}6^{3}.2^{5}3^{3}5^{3}6^{2}7^{5}.3^{4}4^{4}6^{2}7^{2}8^{6}.48$},
\scalebox{0.75}{$1^{7}24^{7}7^{6}9^{4}A^{2}.2^{6}3^{2}5^{6}78^{5}A^{5}.3^{5}56^{7}8^{2}9^{3}B^{7}$},
\scalebox{0.75}{$1^{7}25^{5}6^{2}7^{2}8^{2}.2^{6}3^{2}6^{5}7^{2}8.3^{5}4^{3}7^{3}.4^{4}5^{2}8^{4}$},
\scalebox{0.75}{$1^{7}245^{4}6^{3}7.2^{6}3^{2}6^{3}7^{2}8^{3}.3^{5}45^{2}7^{3}8^{2}.4^{5}5678^{2}$},
\scalebox{0.75}{$1^{7}23^{5}4^{7}.2^{6}3^{2}57^{4}8^{2}9^{4}.5^{6}6^{2}78^{3}.6^{5}7^{2}8^{2}9^{3}$},
\scalebox{0.75}{$1^{7}24^{5}5^{2}67^{2}8^{7}.2^{6}3^{2}5^{4}6^{3}79^{7}.3^{5}4^{2}56^{3}7^{4}A^{7}$},
\scalebox{0.75}{$1^{7}234^{3}5^{3}6^{2}7^{2}8^{2}.2^{6}3^{2}5^{4}67^{4}8^{2}.3^{4}4^{4}6^{4}78^{3}$},
\scalebox{0.75}{$1^{7}25^{4}6^{3}78^{2}.2^{6}3^{2}6^{3}7^{4}.3^{5}4^{3}7^{2}8^{5}.4^{4}5^{3}69^{7}$},
\scalebox{0.75}{$1^{7}245^{3}6^{7}.2^{6}3^{2}57^{4}8^{2}9^{4}.3^{5}45^{2}78^{4}9.4^{5}57^{2}89^{2}$},
\scalebox{0.75}{$1^{7}23^{2}46^{2}7^{5}89.2^{6}45^{4}8^{6}.3^{5}456^{4}9^{4}.4^{4}5^{2}67^{2}9^{2}$},
\scalebox{0.75}{$1^{7}25^{3}6^{4}7^{2}.2^{6}3^{2}6^{3}7^{4}8.3^{5}4^{3}78^{5}9^{2}.4^{4}5^{4}89^{5}$},
\scalebox{0.75}{$1^{7}2^{3}3^{2}5^{2}6^{3}7.2^{4}34^{3}5^{2}6^{2}7^{5}8.3^{4}45^{3}6^{2}78^{6}.4^{3}$},
\scalebox{0.75}{$1^{7}2345^{3}6^{5}.2^{6}3^{2}5^{2}6^{2}7^{4}9.3^{4}4^{2}7^{3}8^{6}.4^{4}5^{2}89^{6}$},
\scalebox{0.75}{$1^{7}2^{2}34^{5}67^{2}9.2^{5}3456^{4}7^{4}8.3^{5}46^{2}8^{5}9^{4}.5^{6}789^{2}A^{7}$},
\scalebox{0.75}{$1^{7}235^{3}6^{5}7^{2}.2^{6}4^{3}7^{3}8^{5}.3^{6}5^{2}68^{2}9^{6}.4^{4}5^{2}67^{2}9$},
\scalebox{0.75}{$1^{7}23^{7}8^{2}.2^{6}4^{2}5^{2}67^{4}9^{2}.4^{5}56^{4}8^{4}9.5^{4}6^{2}7^{3}89^{4}$},
\scalebox{0.75}{$1^{7}2^{2}35^{4}6^{4}8.2^{5}34^{2}6^{3}7^{2}8.3^{5}45^{2}7^{2}8^{3}.4^{4}57^{3}8^{2}$},
\scalebox{0.75}{$1^{7}2345^{5}6^{3}8.2^{6}35^{2}6^{2}7^{2}8^{2}.3^{5}46^{2}7^{2}8^{2}.4^{5}7^{3}8^{2}$},
\scalebox{0.75}{$1^{7}234^{5}6^{3}9.2^{6}345^{2}6^{4}8^{3}.3^{5}45^{2}7^{4}8^{2}.5^{3}7^{3}8^{2}9^{6}$},
\scalebox{0.75}{$1^{7}2345^{5}7^{2}89.2^{6}3^{2}6^{4}789^{4}.3^{4}4^{2}567^{4}8.4^{4}56^{2}8^{4}9^{2}$},
\scalebox{0.75}{$1^{7}235^{6}789^{2}.2^{6}3456^{3}7^{2}8.3^{5}4^{2}6^{2}78^{4}9.4^{4}6^{2}7^{3}89^{4}$},
\scalebox{0.75}{$1^{7}245^{3}6^{2}7^{3}9.2^{6}3^{2}6^{4}8^{5}.3^{5}4^{2}57^{4}9^{5}.4^{4}5^{3}68^{2}9$},
\scalebox{0.75}{$1^{7}235^{4}7^{6}A.2^{6}4^{3}6^{3}8^{5}A.3^{6}5^{2}68^{2}9^{5}.4^{4}56^{3}79^{2}A^{5}$},
\scalebox{0.75}{$1^{7}2345^{4}689^{5}.2^{6}3456^{3}7^{3}9A^{5}.3^{5}5^{2}67^{4}89.4^{5}6^{2}8^{5}A^{2}$},
\scalebox{0.75}{$1^{7}245^{3}6^{5}89.2^{6}3^{2}6^{2}7^{4}8^{2}9.3^{5}4^{2}57^{2}8^{4}.4^{4}5^{3}79^{5}$},
\scalebox{0.75}{$1^{7}23^{3}5^{5}6^{4}.2^{6}4^{2}5^{2}6^{2}7^{3}8^{5}.3^{4}4^{2}67^{4}8^{2}9^{7}.4^{3}$},
\scalebox{0.75}{$1^{7}2^{2}345^{5}7^{3}8^{5}.2^{5}3^{2}4^{3}56^{4}8^{2}9^{6}.3^{4}4^{3}56^{3}7^{4}9A^{7}$},
\scalebox{0.75}{$1^{7}235^{4}6^{3}7^{2}9.2^{6}4^{2}56^{2}7^{5}A^{3}.3^{6}568^{6}9^{2}.4^{5}5689^{4}A^{4}$},
\scalebox{0.75}{$1^{7}2^{2}5^{5}6^{2}7^{3}.2^{5}3^{3}46^{2}7^{3}8.3^{4}4^{2}5^{2}68^{4}.4^{4}6^{2}78^{2}$},
\scalebox{0.75}{$1^{7}234^{2}5^{2}67^{3}8^{2}.2^{6}45^{4}67^{2}9^{5}.3^{6}56^{2}7^{2}89.4^{4}6^{3}8^{4}9$},
\scalebox{0.75}{$1^{7}245^{7}7.2^{6}3^{2}6^{4}7^{2}8^{3}.3^{5}46^{2}7^{2}8^{2}9^{5}.4^{5}67^{2}8^{2}9^{2}$},
\scalebox{0.75}{$1^{7}24^{2}5^{2}67^{5}8.2^{6}3^{2}5^{2}6^{2}8^{6}.3^{5}45^{2}679^{6}A.4^{4}56^{3}79A^{6}$},
\scalebox{0.75}{$1^{7}2^{2}5^{3}6^{6}9.2^{5}3^{3}467^{4}8^{2}.3^{4}4^{2}5^{2}78^{5}9.4^{4}5^{2}7^{2}9^{5}$},
\scalebox{0.75}{$1^{7}23^{2}4^{2}5^{3}6^{2}7^{4}.2^{6}34^{2}56^{4}7^{3}8^{2}9.3^{4}5^{3}68^{5}9^{6}.4^{3}$},
\scalebox{0.75}{$1^{7}25^{4}6^{3}78.2^{6}3^{2}6^{3}7^{3}8^{2}9.3^{5}4^{3}7^{3}8^{3}9^{2}.4^{4}5^{3}689^{4}$},
\scalebox{0.75}{$1^{7}23^{5}45^{6}.2^{6}3^{2}467^{7}9^{2}.4^{5}56^{2}8^{3}9^{2}A^{6}.6^{4}8^{4}9^{3}AB^{7}$},
\scalebox{0.75}{$1^{7}23^{2}4^{2}56^{2}7^{4}9.2^{6}45^{3}6^{3}78^{5}.3^{5}5^{2}6^{2}7^{2}8^{2}9^{6}.4^{4}5$},
\scalebox{0.75}{$1^{7}23^{2}5^{3}7^{5}.2^{6}4^{2}5^{2}678^{5}9.3^{5}45^{2}6^{2}79^{6}A.4^{4}6^{4}8^{2}A^{6}$},
\scalebox{0.75}{$1^{7}2^{2}345^{3}7^{7}.2^{5}3^{2}5^{4}8^{2}9^{7}B.3^{4}4^{2}6^{5}A^{7}.4^{4}6^{2}8^{5}B^{6}$},
\scalebox{0.75}{$1^{7}2345^{2}6^{2}7^{5}9.2^{6}45^{3}678^{6}A^{2}.3^{6}56^{3}9^{6}AB^{3}.4^{5}5678A^{4}B^{4}$},
\scalebox{0.75}{$1^{7}23^{2}45^{3}6^{5}78.2^{6}34^{3}56^{2}7^{2}8^{4}9^{2}.3^{4}4^{2}5^{3}7^{4}8^{2}9^{4}.49$},
\scalebox{0.75}{$1^{7}2^{3}3^{3}46^{3}7^{7}8^{2}.2^{4}3^{3}5^{6}8^{5}9^{2}A^{5}.34^{5}56^{4}9^{5}A^{2}B^{7}.4$},
\scalebox{0.75}{$1^{7}25^{4}6^{4}79.2^{6}3^{2}6^{3}7^{3}8^{2}.3^{5}4^{3}7^{2}8^{3}9^{3}.4^{4}5^{3}78^{2}9^{3}$},
\scalebox{0.75}{$1^{7}2^{2}345^{4}67^{3}8^{4}.2^{5}3^{2}4^{3}56^{4}78^{3}9^{2}.3^{4}4^{3}5^{2}6^{2}7^{3}9^{5}$},
\scalebox{0.75}{$1^{7}2345^{3}6^{7}.2^{6}45^{2}7^{5}9^{4}A^{2}.3^{6}57^{2}8^{4}9A^{4}B.4^{5}58^{3}9^{2}AB^{6}$},
\scalebox{0.75}{$1^{7}25^{4}6^{2}7^{3}8^{2}A.2^{6}3^{2}6^{4}8^{5}.3^{5}4^{3}7^{4}9^{4}A.4^{4}5^{3}69^{3}A^{5}$},
\scalebox{0.75}{$1^{7}2^{2}35^{2}6^{3}7^{3}.2^{5}34^{2}5^{2}67^{3}8^{2}9^{2}.3^{5}5^{3}78^{5}9.4^{5}6^{3}9^{4}$},
\scalebox{0.75}{$1^{7}23^{2}4^{2}5^{2}6^{4}7^{4}9.2^{6}3^{2}5^{4}67^{3}8^{6}A.3^{3}4^{3}56^{2}89^{6}A^{6}.4^{2}$},
\scalebox{0.75}{$1^{7}2^{2}35^{2}6^{2}7^{2}.2^{5}34^{2}5^{2}6^{2}78^{3}.3^{5}456^{3}78^{2}.4^{4}5^{2}7^{3}8^{2}$},
\scalebox{0.75}{$1^{7}2^{2}35^{5}78^{2}9^{2}.2^{5}34^{2}6^{4}7^{2}9.3^{5}5^{2}67^{3}8^{3}.4^{5}6^{2}78^{2}9^{4}$},
\scalebox{0.75}{$1^{7}25^{7}6^{7}.2^{6}3^{2}7^{6}9^{2}A^{5}.3^{5}4^{3}8^{4}9^{2}B^{7}.4^{4}78^{3}9^{3}A^{2}C^{7}$},
\scalebox{0.75}{$1^{7}235^{3}6^{2}7^{3}8^{2}.2^{6}4^{3}6^{2}78^{4}9^{3}.3^{6}5^{3}7^{2}89^{4}A.4^{4}56^{3}7A^{6}$},
\scalebox{0.75}{$1^{7}2^{6}4^{2}7^{6}8.23^{7}56^{2}8^{5}9^{6}.4^{5}5^{2}6^{2}789A^{2}B^{6}.5^{4}6^{3}A^{5}BC^{7}$},
\scalebox{0.75}{$1^{7}2345^{2}6^{3}78A^{4}.2^{6}35^{3}67^{3}8^{2}9.3^{5}456^{3}8^{3}9^{3}.4^{5}57^{3}89^{3}A^{3}$},
\scalebox{0.75}{$1^{7}24^{4}5^{3}6^{2}78^{4}9A.2^{6}3^{2}5^{3}6^{2}7^{4}9^{5}A.3^{5}4^{2}56^{3}7^{2}8^{3}9A^{5}.4$},
\scalebox{0.75}{$1^{7}2345^{2}6^{2}7^{3}9.2^{6}3^{2}5^{2}6^{2}78^{3}9.3^{4}4^{2}5^{2}67^{2}89^{4}.4^{4}56^{2}78^{3}9$},
\scalebox{0.75}{$1^{7}234^{6}8^{3}A.2^{6}45^{2}6^{3}7^{2}89^{3}.3^{6}56^{2}7^{3}9^{3}A^{3}.5^{4}6^{2}7^{2}8^{3}9A^{3}$},
\scalebox{0.75}{$1^{7}2^{3}4^{2}6^{5}79^{5}.2^{4}3^{5}567^{6}9A^{3}BC.3^{2}4^{5}8^{7}9ABC^{5}D.5^{6}6A^{3}B^{5}CD^{6}$},
\scalebox{0.75}{$1^{7}2345^{3}6^{2}78^{4}.2^{6}4^{2}5^{2}6^{2}78^{3}A^{3}.3^{6}5^{2}7^{4}9^{3}A.4^{4}6^{3}79^{4}A^{3}$},
\scalebox{0.75}{$1^{7}245^{3}6^{2}78^{4}.2^{6}3^{2}56^{3}7^{2}89^{4}.3^{5}45^{2}7^{4}9^{3}A^{3}.4^{5}56^{2}8^{2}A^{4}$},
\scalebox{0.75}{$1^{7}2345^{2}7^{5}9.2^{6}3^{2}5^{2}6^{2}8^{5}A.3^{4}4^{2}5^{2}6^{2}8^{2}9^{5}.4^{4}56^{3}7^{2}9A^{6}$},
\scalebox{0.75}{$1^{7}2^{3}3^{6}4^{3}56^{5}7^{7}8^{5}9^{6}AB^{4}C.2^{4}34^{4}5^{6}6^{2}8^{2}9A^{6}B^{3}C^{6}D^{7}E^{7}$},
\scalebox{0.75}{$1^{7}24^{2}5^{2}6^{2}7^{6}8.2^{6}3^{2}5^{2}6^{2}78^{6}B.3^{5}4^{3}9^{7}A^{4}.4^{2}5^{3}6^{3}A^{3}B^{6}$},
\scalebox{0.75}{$1^{7}2345^{3}67^{2}89^{2}.2^{6}3^{2}5^{2}6^{2}7^{2}9^{2}.3^{4}4^{2}5^{2}7^{3}8^{3}.4^{4}6^{4}8^{3}9^{3}$},
\scalebox{0.75}{$1^{7}24^{5}7^{3}8^{2}9^{2}A.2^{6}3^{2}5^{2}6^{3}8^{2}9^{3}A.3^{5}4^{2}6^{3}78^{2}9^{2}.5^{5}67^{3}8A^{5}$},
\scalebox{0.75}{$1^{7}2^{2}3^{2}4^{2}6^{5}7^{3}9^{4}.2^{5}3^{2}45^{4}67^{3}8^{3}A^{6}.3^{3}4^{2}5^{3}678^{4}9^{3}AB^{7}.4^{2}$},
\scalebox{0.75}{$1^{7}235^{4}7^{2}89^{6}A.2^{6}3^{2}46^{4}8^{2}9A^{4}B^{2}C.3^{4}4^{2}5^{2}7^{5}AB^{5}.4^{4}56^{3}8^{4}AC^{6}$},
\scalebox{0.75}{$1^{7}2^{2}4^{4}6^{3}8A^{6}D.2^{5}3^{3}46^{4}7^{3}AB^{6}.3^{4}4^{2}5^{2}7^{4}8^{3}BC^{6}.5^{5}8^{3}9^{7}CD^{6}$},
\scalebox{0.75}{$1^{7}24^{2}5^{3}6^{2}8^{3}9^{2}B.2^{6}3^{2}5^{2}6^{3}9^{5}AB.3^{5}45^{2}6^{2}7^{3}A^{5}.4^{4}7^{4}8^{4}AB^{5}$},
\scalebox{0.75}{$1^{7}245^{3}6^{3}789^{3}.2^{6}3^{2}56^{2}7^{3}8^{2}A^{4}.3^{5}45^{2}7^{3}8^{2}9^{2}.4^{5}56^{2}8^{2}9^{2}A^{3}$},
\scalebox{0.75}{$1^{7}2345^{4}8^{4}9^{2}AB.2^{6}345^{2}6^{2}7^{2}9^{4}B.3^{5}456^{2}7^{3}89A^{4}.4^{4}6^{3}7^{2}8^{2}A^{2}B^{5}$},
\scalebox{0.75}{$1^{7}234^{5}78^{6}9C.2^{6}3^{2}45^{3}7^{2}9^{3}A^{3}C.3^{4}45^{3}6^{2}789^{2}A^{3}B^{3}.56^{5}7^{3}9AB^{4}C^{5}$},
\scalebox{0.75}{$1^{7}23^{2}456^{2}7^{3}8^{3}9.2^{6}35^{3}67^{2}8^{3}9^{2}A.3^{4}4^{3}6^{3}9^{4}A^{4}B.4^{3}5^{3}67^{2}8A^{2}B^{6}$},
\scalebox{0.75}{$1^{7}23^{2}5^{3}7^{2}8^{3}9^{2}AB.2^{6}4^{2}6^{5}9^{4}A^{2}B^{2}.3^{5}45^{2}7^{5}A^{4}.4^{4}5^{2}6^{2}8^{4}9B^{4}$},
\scalebox{0.75}{$1^{7}2^{3}5^{2}6^{3}8^{2}9^{4}.2^{4}3^{4}456^{2}7^{2}89A^{5}.3^{3}4^{3}56^{2}7^{3}8A^{2}B^{6}.4^{3}5^{3}7^{2}8^{3}9^{2}BC^{7}$},
\scalebox{0.75}{$1^{7}2^{4}35679^{5}AB^{4}.2^{3}3^{4}4^{2}6^{4}78A^{5}BC^{4}.3^{2}4^{5}6^{2}7^{3}8^{2}9BC^{3}D^{5}.5^{6}7^{2}8^{4}9ABD^{2}E^{7}$},
\scalebox{0.75}{$1^{7}234^{3}6^{2}7^{2}9^{4}A^{2}.2^{6}34^{2}6^{3}7^{2}8^{2}A^{3}B^{2}C.3^{5}4^{2}5^{2}7^{2}8^{2}9AB^{4}C^{2}.5^{5}6^{2}78^{3}9^{2}ABC^{4}$},

\end{flushleft}
}

\bibliographystyle{alpha}
\bibliography{lit}

\end{document}